\newtheorem{Xeorem}{Theorem}[section]
\newtheorem{definition}[Xeorem]{Definition}
\newtheorem{result}[Xeorem]{Smallest Contextual Set.}
\newtheorem{resultt}[Xeorem]{\bf Noncontextual
  $\boldsymbol{\longrightarrow}$ Contextual.}
\newtheorem{resulttt}[Xeorem]{\bf Operator
  $\boldsymbol{\leftrightarrow}$ MMP Rule.}
\newtheorem{resultttt}[Xeorem]{\bf Quantum Indeterminacy Postulate.}
\newtheorem{resulttttt}[Xeorem]{\bf Peres-Mermin contradiction.}
\newtheorem{resultttttt}[Xeorem]{\bf Graph $\leftrightarrow$ MMP
           Hypergraph}
\newtheorem{lemma}[Xeorem]{Lemma}
\newtheorem{statistics}[Xeorem]{Hypergraph Statistics}
\newtheorem{discussion}[Xeorem]{Discussion}
\definecolor{Blue}{rgb}{0.3,0.3,0.9}
\def\crta{\vrule height1.41ex depth-1.27ex width0.34em}
\def\dj{d\kern-0.36em\crta}
\def\Crta{\vrule height1ex depth-0.86ex width0.4em}
\def\Dj{D\kern-0.73em\Crta\kern0.33em}
\begin{document}

\title{Quantum Contextuality}

\author{Mladen Pavi{\v c}i{\'c}}
\affiliation{Center of Excellence CEMS, Photonics and Quantum Optics
  Unit, Ru\dj er Bo\v skovi\'c Institute and Institute of Physics,
  Zagreb, Croatia}
\orcid{0000-0003-1915-6702}
\email{mpavicic@irb.hr}
\homepage{http://www2.irb.hr/users/mpavicic}
\thanks{Funded by the Ministry of Science and Education of Croatia
through the Center of Excellence for Advanced Materials and Sensing
Devices (CEMS) funding, and by MSE grants Nos.~KK.01.1.1.01.0001
and 533-19-15-0022.}

\begin{abstract}
  Quantum contextual sets have been recognized as resources
for universal quantum computation, quantum steering and quantum
communication. Therefore, we focus on engineering the sets that
support those resources and on determining their structures and
properties. Such engineering and subsequent implementation rely
on discrimination between statistics of measurement data of
quantum states and those of their classical counterparts. The
discriminators considered are inequalities defined for hypergraphs
whose structure and generation are determined by their basic
properties. The generation is inherently random but with the
predetermined quantum probabilities of obtainable data. Two
kinds of statistics of the data are defined for the hypergraphs
and six kinds of inequalities. One kind of statistics, often
applied in the literature, turn out to be inappropriate and
two kinds of inequalities turn out not to be noncontextuality
inequalities. Results are obtained by making use of universal
automated algorithms which generate hypergraphs with both odd
and even numbers of hyperedges in any odd and even dimensional
space---in this paper, from the smallest contextual set with
just three hyperedges and three vertices to arbitrarily many
contextual sets in up to 8-dimensional spaces. Higher dimensions
are computationally demanding although feasible.
\end{abstract}
  
\keywords{quantum contextuality; hypergraph theory; MMP language;
hypergraph probability; hypergraph statistics; randomness;
noncontextuality inequalities}

\maketitle 

\section{\label{sec:intro}Introduction}

A series of experiments with state-independent
\cite{beng-blan-cab-pla12} contextual sets has been
carried out recently, using photons
\cite{amselem-cabello-09,liu-09,d-ambrosio-cabello-13,ks-exp-03,canas-cabello-8d-14,canas-cabello-14},
neutrons \cite{h-rauch06,b-rauch-09},
trapped ions \cite{k-cabello-blatt-09}, and
solid state molecular nuclear spins  \cite{moussa-09}.

These experiments pave the road for applications of contextual
sets in quantum computation \cite{magic-14,bartlett-nature-14},
quantum steering \cite{tavakoli-20}, and quantum communication
\cite{saha-hor-19} by measuring yes-no outputs of quantum systems  
and contrasting them with predetermined 0--1 values of the 
corresponding classical systems. Such systems might be organized
in contextual sets represented by graphs or hypergraphs and their
properties and features are the main subject of the present paper.

Intuitively speaking, a (hyper)graph is a set of points and a
set of subsets of these points. The points are called the vertices
of the (hyper)graph and the subsets are called the (hyper)edges of
the (hyper)graph. Vertices might be represented by vectors,
operators, subsets, or other objects, and (hyper)edges by a
relation between vertices contained in them such as orthogonality,
inclusion, or geometry. We follow Berge \cite{berge-73,berge-89},
Bretto \cite{bretto-13}, and Voloshin \cite{voloshin-09} in all
details except in several restrictions needed for hypergraph
description of contextual sets which we introduce in
Sec.~\ref{sec:h-lang}. Historically, representations used to
describe and depict contextual sets appeared in several different
forms and definitions, e.g., partial Boolean algebra
\cite{koch-speck}, operator and projectors
\cite{cabello-08,badz-cabel-09}, lists or tables of
vectors and their orthogonalities \cite{peres,planat-saniga-12},
Greechie diagrams
\cite{svozil-tkadlec,svozil-book-ql,svozil-18,cabello-svozil-18,svozil-20,budroni-cabello-rmp-22},
Kochen-Specker (KS) proofs  \cite{planat-12}, parity proofs
\cite{waeg-aravind-jpa-11}, MMP diagrams \cite{pmmm05a,pmmm05a-corr},
graphs with cliques \cite{yu-oh-12}, node-context graphs
\cite{lisonek-14,budroni-cabello-rmp-22}, etc. However,
as shown in this paper, when some discrepancies between these
definitions as well as their possible inner limitations are
smoothed out, all of them boil down to hypergraphs and in this
paper we provide a hypergraph platform for major results and
achievements in the field of quantum contextual sets.  

Connections between contextuality and universal quantum
computation \cite{magic-14} and steering \cite{tavakoli-20}
that have recently been established ask for a quantification
of properties of contextual sets, e.g., robustness to noise
\cite{cabello-bengtsson-12}, size of maximal independent sets
of stabilizer states \cite{magic-14}, or suitability for
implementation in general. It has been shown that
inequalities are an efficient tool for the purpose
\cite{yu-oh-12,beng-blan-cab-pla12,xu-chen-su-pla15,ram-hor-14,cabell-klein-budr-prl-14,pavicic-entropy-19,yu-tong-14,yu-tong-15}.
Yu, Guo, and Tong prove \cite{yu-tong-15} that operator
formulations of KS  contextual sets can always be
converted to state-independent noncontextuality inequalities. The
problem with the inequalities in these references is that either
no definite inequality is given or that they were given for chosen
particular contextual sets previously specified via sets of
vectors/rays, or that they have not been formulated for
probabilities applicable to genuine YES-NO quantum experiments.
As a consequence, while billions of hypergraph-defined contextual
sets are known, a straightforward automated way of generation of
operator-based inequalities from them is missing. On the other
hand, there are operator-defined sets, e.g., the Peres-Mermin
square \cite{peres90,mermin90}, for which a proper underlying
vector set awaits to be defined.

Therefore we broaden the scope of the contextuality so as to
cover both operators and hypergraphs. We compare their features
and their inequalities. We define and/or reconsider six
different kinds of hypergraph inequalities that correspond to
the aforementioned operator inequalities: two are based on
hyperedges, two on vertices, and two are mixed; we compare them
with the known operator inequalities.

Hyperedge-based inequalities are well-known (it stems
directly from the Kochen-Specker theorem) and, essentially, they
boil down to our impossibility of assigning exactly one `1'
to vertices in each hyperedge of a contextual hypergraph. So, there
are always fewer such hyperedges than there are hyperedges altogether
in the set, and the inequalities just confirms this discrepancy
\cite{cabello-08,badz-cabel-09,yu-tong-15}. They correspond to
the operator noncontextuality inequalities.

Some vertex and mixed inequalities rely on two different kinds
of statistics of the outcomes of quantum YES-NO measurements:
raw data statistics and postprocessed data statistics. These
yield four different kinds of inequalities: the original
Gr{\"o}tschel-Lov{\'a}sz-Schrijver (GLS) inequality, the quantum
forms of the GLS inequality, and inequalities that we call v- and
e-inequalities. The original GLS inequality holds for any graph
or hypergraph for variable probabilities within each clique or
hyperedge, respectively. However, these probabilities are not
variable but constant within any quantum YES-NO measurements and
under them arbitrary many contextual graphs and hypergraphs
violate the GLS inequality, i.e., the GLS inequality is not a
noncontextuality inequality. The v- and e-inequalities are
satisfied for all contextual (hyper)graphs and violated for all
noncontextual ones and unlike the GLS-like inequalities they
correspond to the existing operator-based inequalities.

The aforementioned types of inequalities are determined by
structural properties of the hypergraphs that define them.
Structural properties we obtain characterize contextual as well
as noncontextual hypergraphs and are relevant for application of
contextual sets in quantum computation and quantum communication.
The properties serve us to
\begin{enumerate}[label=(\alph*)]
  \item characterize the hypergraphs themselves;
  \item analyze hypergraphs probability and randomness characteristic
    for obtaining small contextual sets from big master sets;
  \item obtain hypergraphs from elementary vector components in any
    odd or even dimensional space (in this paper in 3- to 8-dim
    spaces);
  \item obtain contextual hypergraphs from noncontextual ones
    by deleting a certain number of vertices from them;
  \item establish a correspondence between hypergraph and operator
    approaches;
  \item obtain state independent hypergraph inequalities
    where operator approach gives state dependent ones;
  \item introduce new hypergraph-defined measurements based on
    multiplicity of vertices and postprocessing of multiple
    detection at the ports of the gates;
  \item obtain the smallest critical contextual hypergraph with
    just 3 edges and 3 vertices;
  \item prove that one of the graphs which are considered to be
    a source of quantum computer’s power is a subhypergraph of a
    non-critical KS hypergraph;
  \item derive a vector-hypergraph underlying the 3x3
    Peres-Mermin operator square.
\end{enumerate}

An outline of the paper is given by the following organisational
flow.

In Sec.~\ref{sec:h-lang} we give the definitions of a general
hypergraph and of its McKay-Megill-Pavi\v ci\'c (MMP) hypergraph
restriction; then we introduce notation, language, algorithms,
and programs for MMP hypergraphs and compare them with other
notations and formalisms of contextual sets from the literature.

In Sec.~\ref{sec:h-ext}, we state the Kochen-Specker
and Bell theorems and introduce several generalizations of
theirs.

In Sec.~\ref{subsec:methods}, we present three methods of
MMP hypergraph generation we make use of in this paper. 

In Sec.~\ref{sec:op-ineqal}, we review the operator-based
inequalities from the literature some of which we correlate with
our results in subsequent sections.

In Sec.~\ref{sec:hyp-op}, we compare the MMP hypergraph and
operator approaches to contextual sets using the example of a
3-dim pentagon set.

In Sec.~\ref{sec:structure}, we analyze the structure of MMP
hypergraphs and introduce notions and theorems and lemmas that
characterize them; we consider two kinds of quantum statistics:
the raw data and postprocessed data statistics and six kinds of
inequality: GLS-like-, v-, e$_{Max}$-, and e$_{min}$-inequalities;
we also compare operator and hypergraph approach to the introduced
notions and features.

In Sec.~\ref{sec:structure-e}, we present several examples of
MMP hypergraph structure introduced in Sec.~\ref{sec:structure}. 

In Sec.~\ref{sec:exampl}, we apply the results and notions
obtained in the previous sections to the MMP hypergraph
multiplicity in Sec.~\ref{subsec:4dm}, to the 3-dim MMP
hypergraphs in Sec.~\ref{subsec:3d}, to chosen 4-dim MMPs
in Sec.~\ref{subsec:4d}, to $\Gamma$ set that has recently
been used to prove that contextuality is the source of quantum
computer's power in Sec.~\ref{subsec:magic}, to the
Peres-Mermin square Sec.~\ref{subsec:mermin}, and to the
5- to 8-dim contextual MMP hypergraphs in
Secs.~\ref{subsec:5d}, \ref{subsec:6d}, and \ref{subsec:78d},
respectively.

In Sec.~\ref{sec:impl} we consider possible general
implementation schemes.

In Sec.~\ref{sec:disc} we discuss the obtained results.

In the Appendices we give a comparison of historical as well as
contemporary hypergraph formalisms from the literature with the
MMP hypergraphs language as well as strings and coordinatizations
of bigger MMP hypergraphs to avoid visual clutters in the main
body of the paper.

\section{\label{sec:h-lang}MMP hypergraph language}

In this section, we start with a general definition of a
hypergraph, which we then narrow down to the MMP hypergraph.
After giving specifics of the MMP hypergraph language which
will be the language of our presentation) we review other
formalisms that have been used for generation of contextual
sets in the literature and show that they all reduce to the
MMP hypergraph formalism.

A general {\em hypergraph\/} is defined as follows
\cite{berge-73,berge-89,voloshin-09,bretto-13}.
Let $V=\{v_1,v_2,\dots ,v_k\}$ be a finite set of elements called
{\em vertices} and let $E=\{e_1,e_2,\dots ,e_l\}$ be a family of
subsets of $V$ called {\em hyperedges}. The pair ${\cal{H}}=(V,E)$
is called a {\em hypergraph\/} with {\em vertex set\/} $V$ also
denoted by $V({\cal{H}})$, and {\em hyperedge set\/}
$E$ also denoted by $E({\cal{H}})$. A hypergraph ${\cal{H}}$ may
be drawn as a set of points representing the vertices subsets of
which represent hyperedges as follows: a hyperedge $e_j$ is
represented by a continuous curve joining two elements if the
cardinality (number of elements, vertices) within the hyperedge
is $|e_j|=2$, by a loop if $|e_j|=1$, and by a closed curve
enclosing the elements if $|e_j|>2$. Numerically they are
represented by the incidence matrices \cite[p.~2,Fig.~1]{berge-89}
in which columns are hyperedges and rows are vertices.
Intersection of hyperedge columns with vertex rows contained in
hyperedges are assigned `1' and those not contained are assigned `0'.

The number of vertices within a hypergraph ($k$), i.e.,
the cardinality of $V$ ($|V|$), is called the {\em order\/} of
a hypergraph, and the number of hyperedges within a hypergraph
($l$), i.e., the cardinality of $E$ ($|E|$), is called the
{\em size\/} of a hypergraph. 

To arrive at MMP hypergraphs we restrict the
general hypergraphs numerically and graphically. Numerically,
we substitute ASCII characters for vectors, operators, or
elements within tables and matrices from the literature and attach 
these ASCII characters to vertices. Mutually related vertices
are collected in one-line strings representing hyperedges. The
relation might be orthogonality, inclusion, geometry, etc.
Thus, numerically, an MMP hypergraph, defined in
Def.~\ref{def:MMP-string}, is a string of characters corresponding
to vertices which are organized in substrings separated by commas
(``,'') corresponding to hyperedges; the string ends with a period
(``.''). Graphically, vertices are dots and hyperedges are lines or
curves connecting vertices by passing through them; we dispense with
hyperedges of cardinality 0 and 1 ($|e_j|=0,1$) and since each
contextuality contradiction occurs within a single connected set we
do not have unconnected subhypergraphs and we do not have hyperedges
attached to the main body of an MMP hypergraph at only one vertex.
Also, because the Hilbert space in which contextual sets reside when
equipped with a coordinatization must have at least 3 dimensions
(3-dim), we introduce the {\em hypergraph-dimensionality\/} $n\ge 3$.
Thus we arrive at the following formal definition of an MMP hypergraph.

\begin{definition}\label{def:MMP-string}
  An {\rm MMP} hypergraph is a connected {\em hypergraph}
  ${\cal{H}}=(V,E)$ (where $V=\{V_1,V_2,\dots ,V_k\}$ is a set of
  {\em vertices} and $E=\{E_1,E_2,\dots ,E_l\}$ sets of
  {\em hyperedges}) of {\em hypergraph-dimension} $n\ge 3$ in which
\begin{enumerate}
  \item Every vertex belongs to at least one hyperedge;
  \item Every hyperedge contains at least $2$ and at most $n$
    vertices;
  \item No hyperedge shares only one vertex with another
    hyperedge;
  \item Hyperedges may intersect each other in at most $n-2$
    vertices.
  \item Graphically, vertices are represented as dots and
    hyperedges as (curved) lines passing through them.
\end {enumerate}
\end{definition}

\begin{definition}\label{def:MMP-dim}{\bf MMP hypergraph-dimension}
  $n$ is a predefined (for an assumed task or purpose) maximal
  possible number ($n$) of vertices within a hyperedge even when
  none of the actually processed hyperedges include $n$ vertices.
\end{definition}

This is operationally requested for any
implementation since a full coordinatization of vertices
turn hypergraph-dimension $n$ into a dimension of a Hilbert
space determined by vectors each vertex is assigned to. 
But until we invoke a coordinatization of an MMP hypergraph we
can handle it solely by means of Defs.~\ref{def:MMP-string}
and \ref{def:MMP-dim}.

Notice that in our previous papers we did not have the condition
3 in the definition of the MMP hypergraph but all our results
in that papers were obtained by excluding the corresponding
hyperedges from the calculations explicitly or implicitly.
Therefore, we need not introduce a different name for the MMP
hypergraph from Def.~\ref{def:MMP-string} and from now on we
shall assume that the condition 3 holds in the definition of
the MMP hypergraph. Another formulation of the condition would
be that all hyperedges of an MMP hypergraph with two or more
of hyperedges must share at least two vertices, i.e., that no
hyperedge should be attached to the main body of the MMP
hypergraph at just one vertex. A single hyperedge is
therefore an MMP hypergraph, as well as two hyperedges that
share two or more vertices.

We encode MMP hypergraphs with the help of ASCII characters
\cite{pmmm05a,pmmm05a-corr}. Vertices are denoted by one of the
following 90 characters:
{{\tt 1 2 \dots\ 9 A B \dots\ Z a b
 \dots\ z !\ " \#} {\$} \% \& ' ( ) * - / : ; \textless\ =
\textgreater\ ? @ [ {$\backslash$} ] \^{} \_ {`} {\{}
{\textbar} \} $\sim$}\ \ \cite{pmmm05a,pmmm05a-corr}. A 91st character
`+', is used for the following purpose: when all aforementioned 
characters are exhausted, we reuse them prefixed by `+', then again
by `++', and so on (See Appendices). An $n$-dim contextual
hypergraph with $k$ vertices and $l$ hyperedges, a hypergraph
of order $k$ and size $l$, we denote as a $k$-$l$ hypergraph.
There is no limit on the size of an MMP hypergraph.

In Fig.~\ref{fig:st-hyp} we illustrate the difference between the
standard and the MMP hypergraph formalism. In the standard hypergraph
formalism, hyperedges between two vertices are represented by
straight lines as taken over from the graph theory ($e_3,e_7$).
Hyperedges containing three or more vertices encircle the vertices
($e_1,e_2,e_4,e_8$). Hyperedges containing only one vertex have two
representations: $e_5$ \cite{voloshin-09,bretto-13} and 
$e_6$ \cite{berge-73,berge-89}. In the MMP hypergraph formalism
$e_3$ hyperedge is represented as the line (or curve) connecting
vertices {\tt 1} and {\tt 6}: $E_3={\tt 16}$. Hyperedges $e_1,e_2$
and $e_4$ are represented as curves (or lines) passing through
vertices they contain: $E_1={\tt 2345}$, $E_2={\tt 1236}$, and
$E_4={\tt 1456}$, respectively. Vertex corresponding to vertex 10 in
Fig.~\ref{fig:st-hyp}(a) does not exist in an MMP representation due
to Def.~\ref{def:MMP-string}.1; the same is with $e_5$,$e_6$,$e_9$
due to Def.~\ref{def:MMP-string}.2., and with hyperedges $e_7$,$e_8$
and vertices {\tt 7,8,9} due to Def.~\ref{def:MMP-string}.1.\/
\&\ Def.~\ref{def:MMP-string}.3.

\begin{figure}[h]
\begin{center}
  \includegraphics[width=0.8\textwidth]{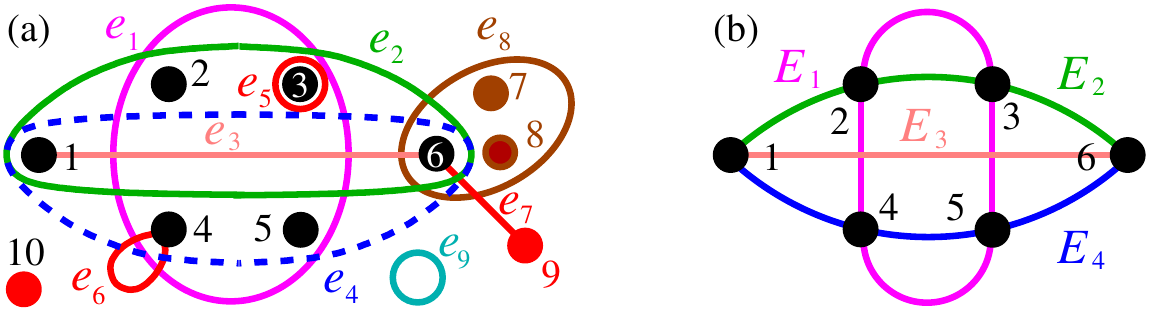}
\end{center}
\caption{(a) Representation of a general hypergraph
  \cite{berge-73,berge-89,voloshin-09,bretto-13};
  (b) representation of a corresponding MMP hypergraph whose ASCII
  string is {\tt 16,1236,6541,2354.}}
\label{fig:st-hyp}
\end{figure}

Taken together, an MMP hypergraph is a special kind of a general
hypergraph in which none of the aforementioned points
({\em 1.-5.}\/) holds within its definition.

\begin{resultttttt}\label{result:g-h}
  We turn a graph whose every edge contains just two vertices
  into an {\rm MMP} hypergraph so as to substitute hyperedges
  for (interwoven) cliques of related vertices and for isolated
  edges. Conversely, we turn an MMP hypergraph into a graph so
  as to substitute a cliques of edges for MMP hyperedges, where
  vertices within a clique correspond to three related vertices
  within a hyperedge, until we exhaust all related triples within
  the hyperedge; isolated graph edges are substituted for
  the hypergraphs of cardinality 2.
\end{resultttttt}

We generate, deal with, and handle MMP hypergraphs by means
of automated algorithms implemented into the programs
{\textsc{One, Mmpstrip, Vecfind, States01}}, and others which
have MMP strings as their inputs and outputs. 
In contrast, there are no such automated algorithms and programs
for incidence matrices of the general hypergraph formalisms known
to us. We stress here, that for a $k$-$l$ $n$-dim KS hypergraph in
a general hypergraph representation, its incidence matrix contains
$k$ vertex rows each $l$ columns long, while an MMP hypergraph is
represented by a single line containing $l$ $n$-tuples of vertices.
E.g., the incidence matrix of the original 4-dim 192-118
Kochen-Specker hypergraph contains 118 hyperedge columns and 192
vertex rows/lines ($192\times118$ matrix), while its single
MMP string (one line) contains 118 triples of ASCII characters
\cite[Supp.~Material]{pavicic-pra-22}.

MMP hypergraph language has been developed over the last 20 years
with the goal of making handling and generation of contextual
sets as efficient as possible. Here we give a comparison of
historical as well as contemporary formalisms with the MMP language. 

\begin{itemize}
\item Partial Boolean algebra used in \cite{koch-speck} generates
  graphs with cliques whose edges contain only two vertices and
  whose computer and graphical processing is therefore more
  demanding than those of MMP hypergraphs to which they can be
  straightforwardly reduced; the same problem applies to all
  other graphs with clique approaches
  \cite{yu-oh-12,cabello-klein-port-prl-14}. Graphical
  representations of such graphs, especially big ones, are
  often unintelligible---compare Figs.~\ref{fig:gr-hyp}(b) and
  \ref{fig:gr-hyp}(c) and Figs.~\ref{fig:graph}(e) and
  \ref{fig:graph}(g).
\item Operators or projectors used to generate contextual
  inequalities are mostly constructed manually by means of
  states/vectors/vertices of chosen contextual hypergraphs,
  meaning that they make use of already known hypergraphs
  \cite{cabello-08,badz-cabel-09} which can serve us to obtain
  those operators and their inequalities in an automated way.
\item A direct treatment of lists or tables of vectors and their
  orthogonalities \cite{peres,planat-saniga-12} as well as the
  diagrams of KS-proofs \cite{planat-12} are notoriously
  cumbersome. A paradigmatic example is Peres' 24-24 set
  \cite{peres}. Peres himself tried to obtain smaller sets via
  a computer program but failed \cite[p.~199]{peres-book}.
  It took three years until Kernaghan obtained one smaller set
  \cite{kern} and two more years until Cabello, Estebaranz, and
  Garc{\'\i}a-Alcaine obtained a second one \cite{cabell-est-96a};
  a straightforward translation of Peres' 24-24 set into an MMP
  24-24 hypergraph string \cite{mpglasgow04-arXiv-0} immediately
  enables one to obtain all 1,233 KS MMP subhypergraphs on any
  laptop in seconds \cite{pmm-2-09},
  \cite[22:00]{pavicic-paris-video-2019};
  a graphical representation of the MMP 24-24
  \cite{mpglasgow04-arXiv-0} even enables one to obtain desired
  subhypergraphs by hand hand in minutes
  \cite[24:00]{pavicic-paris-video-2019}.
\item Parity proofs, developed by Aravind and Waegell and applied
  to contextual sets, read off particular polytopes 
  \cite{waeg-aravind-jpa-11}. However, they exist only for sets
  with an odd number of hyperedges. Still, they are very efficient
  and fast. Their data lists and tables, for both even and odd
  number of hyperedges, can be straightforwardly and automatically
  mapped to MMP hypergraph strings via our programs to enable
  further processing. Notice that for sets with even number of
  hyperedges the MMP hypergraph algorithms remain the only tool.
\item MMP diagrams \cite{pmmm05a,pmmm05a-corr} are predecessors of MMP
  hypergraphs; they required that all hyperedges have the
  cardinality equal to the dimension of the space in which the
  hypergraph vertices reside.
\item Nodes, rays, tests, or vertices in contexts, bases, or edges
  within set or graph approaches are introduced in a series of
  papers but they are vaguely defined and are at odds with the
  standard terminology. In 2014 Lison{\v e}k, Badzi{\c a}g,
  Portillo, and Cabello defined a context as a ``subset of
  jointly measurable tests;'' then as a ``number of bases''
  \cite{lisonek-14}. They graphically present their set
  in \cite[Fig.~1]{lisonek-14}, and we can recognize that a context
  is a (hyper)edge and that a node or a test
  \cite{d-ambrosio-cabello-13} or a ray is a vertex. They do make
  use of the term vertex, but not of the term (hyper)edge. In
  \cite[Fig.~1]{cabello-08} MMP hypergraph (diagram) representation
  from \cite{pmmm05a,pmmm05a-corr} is being used but not cited. In
  \cite[Fig.~1]{cabello-severini-winter-14} MMP hypergraph (diagram)
  representation from \cite{pmmm05a,pmmm05a-corr} are being used but are called
  a ``simplified representation of a graph'' where ``events are
  represented by vertices'' and instead of making use of the term
  hyperedge, they just write that ``sets of pairwise exclusive
  events are represented by vertices in the same straight line or
  circumference rather than by cliques.'' Amaral and Cuncha 
  even mix up graph and MMP hypergraph representations in the same
  figure \cite[Fig.~A.8,~p.115]{amaral-cunha-18}---see
  Fig.~\ref{fig:gr-hyp}(d).
  \begin{figure}[ht]
    \begin{center}
  \includegraphics[width=\textwidth]{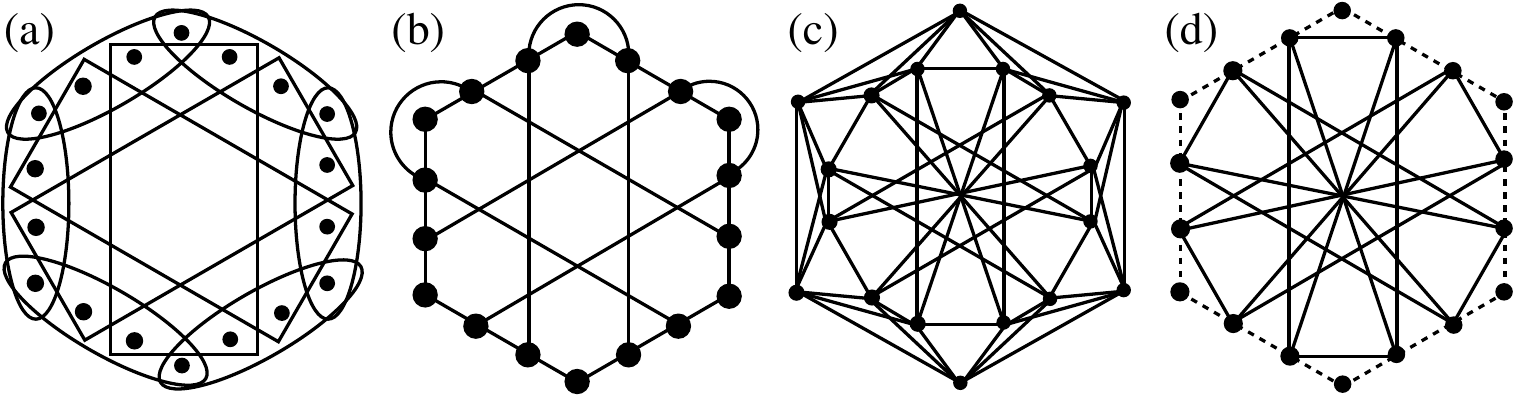}
\end{center}
\caption{(a) General hypergraph representation of the 18-9 KS set
  found in 1996 by Cabello, Estebaranz and Garc{\'\i}a-Alcaine
  \cite{cabell-est-96a} as given in
  \cite[Fig.~A.7,~p.115]{amaral-cunha-18}; (b) the smallest
  (18-9) of all exhaustively generated MMP hypergraphs with the
  (0,-1,1) coordinatization as given in
  \cite[Fig.~3(a)]{pmmm05a,pmmm05a-corr}; it is isomorphic to
  the first 18-9 KS
  set shown in (a); (c) graph representation of the 18-9 set; 
  (d) mixed graph-MMP-hypergraph representation of the set
  as given in \cite[Fig.~A.8,~p.115]{amaral-cunha-18}
  (``The dashed edges correspond to a clique of size 4'').}
\label{fig:gr-hyp}
\end{figure}
  In \cite[Fig.~A.11,~p.117]{amaral-cunha-18} the whole 21-7 6-dim
  MMP hypergraph is called a ``simplified version'' of a graph and
  an MMP hyperedge is said to ``correspond to a clique of  size 6''
  but the term MMP hyperedges is neither mentioned nor cited.
  Budroni, Cabello, G{\"u}hne, Kleinmann and Larsson
  \cite{budroni-cabello-rmp-22} make use of both terms, graphs
  and hypergraphs, interchangeably. They state that ``contexts can
  be represented as graphs, or more generally hypergraphs''
  \cite[p.~30]{budroni-cabello-rmp-22}. Still, they do not mention any
  contextual MMP hypergraph paper published in the last ten years
  \cite{bdm-ndm-mp-fresl-jmp-10,pmm-2-09,mp-7oa,pavicic-pra-17,pm-entropy18,pavicic-entropy-19,pwma-19}
  where billions of contextual 3- to 32-dim MMP hypergraphs were
  generated. Such an approach is deleterious since most known
  contextual vector sets in whatever other formalism turn out to
  be definable by and reducible to MMP hypergraphs or generated by
  them and since nothing comparable has been achieved by any other
  formalism apart from the parity proofs for the KS sets with an
  odd number of hyperedges by Aravind and Waegell in the 4-dim
  Hilbert spaces as derived from polytopes and Lie algebras
  \cite{aravind10,waeg-aravind-pra-13,waeg-aravind-fp-14,waeg-aravind-jpa-15,waeg-aravind-megill-pavicic-11}. Disadvantageously, less than
  5\%\ of all known MMP hypergraphs have parity proofs
  \cite{pavicic-pra-17}.
\item {\em Greechie diagrams\/} have recently been used as a name
  for what are actually MMP hypergraphs
\cite{svozil-18,cabello-svozil-18,svozil-20,budroni-cabello-rmp-22}.
This is a misnomer since Greechie diagrams---connected {\em Hasse
  diagrams\/}---belong to the field of partially ordered sets and
  can represent neither graphs, nor general hypergraphs, nor MMP
  hypergraphs. A Hasse diagram is a graphical representation of a
  poset (partially ordered set)---a collection of whose elements is
  called a {\em block}---where an element $y$ is drawn above an
  element $x$ if and only if $y>x$ ($y$ covers $x$). In a poset with
  the least element 0 an {\em atom\/} is an element that covers it.
  The orthogonality $x\perp y$ is defined as $y'>x$, where $y'$ is
  an orthocomplement of $y$; in a 3-atom poset $y'=x\vee z$;
  $y\vee y'=x\vee y\vee z=1$; $y\wedge y'=x\wedge y\wedge z=0$.
  A Greechie diagram is a shorthand notation for a collection of
  connected Hasse diagrams in which atoms within each block are
  represented as dots and blocks as lines or smooth curves
  connecting them. The following conditions must be
  satisfied: ($\alpha$) All blocks share common 0 and 1;
  ($\beta$) If an atom $x$ belongs to an intersection of blocks and,
  therefore, to both of them, then the blocks also share $x'$;
  ($\delta$) Blocks contain three or more atoms; ($\epsilon$) Two
  blocks may not share more than one atom; ($\zeta$)
  Diagrams cannot contain loops of order 2 or 3
  \cite{greechie71,kalmb74,mp-7oa}. In Fig.~\ref{fig:greechie} we
  give several examples of diagrams that were called Greechie
  diagrams in the literature although they are not Greechie
  diagrams but MMP hypergraphs.
  \begin{figure}[ht]
\begin{center}
  \includegraphics[width=0.97\textwidth]{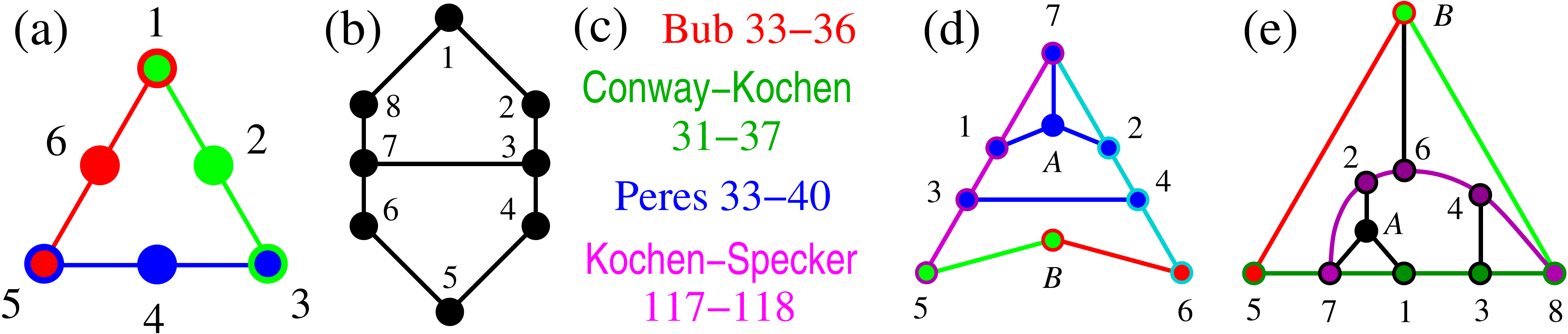}
\end{center}
\caption{(a) \cite[Fig.~2(a)]{svozil-18},
  \cite[Fig.~3(a)]{svozil-20} is not a Greechie diagram since it
  violates condition ($\zeta$); (b) ``bug''
  \cite[Fig.~2(a)]{budroni-cabello-rmp-22} is not a Greechie
  diagram since it violates condition $\delta$---e.g., block
  {\tt 12} contains only two atoms; (c) some 3-dim KS sets with
  vertices that belong to only one hyperedge dropped
  \cite{svozil-tkadlec,svozil-book-ql,budroni-cabello-rmp-22},
  are not Greechie diagrams because they violate condition
  $\delta$; (d) \cite[Fig.~4(a)]{cabello-svozil-18} is not a
  Greechie diagram since it violates conditions $\delta$ and
  $\zeta$---e.g., block {\tt 5B} contains only two atoms and
  loop {\tt 1-A-7} is of order 3; similarly, the ``orthogonality
  hypergraph'' \cite[Fig.~1]{svozil-21} cannot be said to
  represent a ``hypergraph introduced by Greechie''
  \cite[Sec.~II]{svozil-21} for the same $\delta$ and $\zeta$
  reasons; (e) \cite[Fig.~5(a)]{cabello-svozil-18} violates
  conditions $\delta$, $\epsilon$, and $\zeta$---e.g., block
  {\tt 5B} contains only two atoms, blocks {\tt 57138} and
  {\tt 72648} share two atoms ({\tt 7} and {\tt 8}), and loops
  {\tt 2-7-A}, {\tt 1-A-7}, and {\tt 3-8-4} are of order 3.}
  \label{fig:greechie}
\end{figure}
Notice that all MMP hypergraphs from Fig.~\ref{fig:greechie} are
contextual. 

As for Fig.~\ref{fig:greechie}(c), note that full 3-dim KS sets,
Bub's 49-36, Conway-Kochen's 51-37, Peres' 57-40, and
Kochen-Specker's 192-118 are all genuine Greechie diagrams
\cite{mp-7oa}.
\item The term {\em graph of orthogonality\/} has recently been
  used to rename the MMP hypergraph or to avoid using it
  \cite[Supp.~Material, Figs.~3,7,8,9]{cabello-21}, while
  the term {\em orthogonality hypergraph\/} has been used to denote
  the MMP hypergraph arguing that the latter kind of a hypergraph
  is actually a general hypergraph \cite[Sec.~2.4]{bretto-13}
  ``introduced by Greechie'' \cite[Sec.~2.3]{svozil-22}. For
  instance, the KS 10-5 MMP critical hypergraph shown in
  \cite[Fig.~2(b)]{pmmm05a,pmmm05a-corr} Fig.~\ref{fig:pent-cab-6d}(c)---which
  is equivalent to the graph Fig.~\ref{fig:pent-cab-6d}(a,b) given
  in \cite[Supp.~Material, Fig.~7]{cabello-21}---is not
  referred to in \cite{cabello-21}. This set apparently does not
  have a coordinatization in the 4-dim space. An over-complicated
  6-dim coordinatization from 
  $\{0,\pm 1,2,\pm\sqrt{2},\pm\sqrt{3}\}$ components for just 10
  vertices in Fig.~\ref{fig:pent-cab-6d}(a-c) is offered in
  \cite[Eq.~(7)]{cabello-13a} and from
  $\{0,1,-\frac{1}{2}\pm i\frac{\sqrt{3}}{2}\}$ in
  \cite[Supp.~Material, Eq.~(4)]{cabello-21}, but it
  can actually be generated from $\{0,\pm 1,2\}$ components
  (the 6-dim MMP hypergraph itself is shown in
  Fig.~\ref{fig:pent-cab-6d}(d). The latter components   
  yield the following coordinatization: {\tt 1}=(0,0,0,0,0,1),{\tt 2}=(0,0,0,0,1,0),{\tt 3}=(0,1,0,1,0,-1),\break  {\tt 4}=(0,1,0,0,0,1), {\tt 5}=(1,0,0,0,-1,0), {\tt 6}=(0,0,0,1,0,0), {\tt 7}=(0,1,0,-1,0,0), {\tt 8}=(0,0,1,0,0,0), {\tt 9}=(1,0,1,0,0,0), {\tt A}=(1,0,-1,0,1,0), {\tt B}=(0,1,0,0,0,0), {\tt C}=(1,0,0,0,0,0), {\tt D}=(0,1,0,1,0,2),\break  {\tt E}=(1,0,-1,0,0,0), {\tt F}=(0,-1,0,2,0,1),{\tt G}=(1,0,0,0,1,0), {\tt H}=(0,1,0,0,0,-1), {\tt I}=(-1,0,1,0,2,0), {\tt J}=(0,1,0,1,0,0), {\tt K}=(1,0,2,0,1,0).
\begin{figure}[ht]
  \begin{center}
  \includegraphics[width=0.83\textwidth]{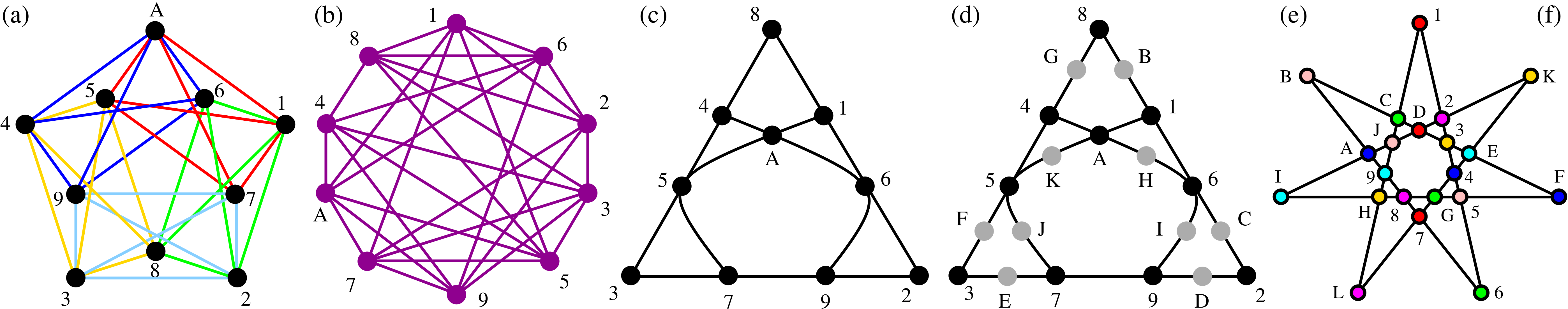}
  \includegraphics[width=0.16\textwidth]{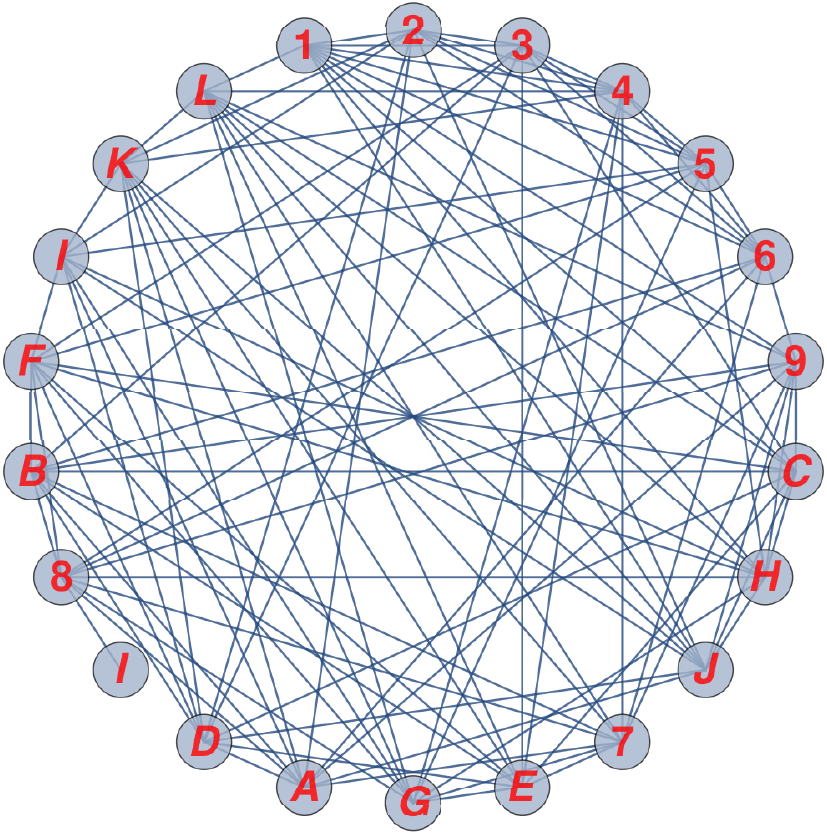}
\end{center}
\caption{(a) Cabello's graph of orthogonality which he also
  recognizes as Johnson graph $J(5,2)$
  \cite[Supp.~Mat., Fig.~7]{cabello-21}; (b) the
  same Johnson graph standardly represented in an automated way
  (e.g., by means of Mathematica's $GraphPlot$) via circular  
  embedding; (c) the same graph represented as an MMP hypergraph
  and first given in \cite[Fig.~2(b)]{pmmm05a,pmmm05a-corr} but
  not referred to neither in \cite{cabello-13a} nor in
  \cite{cabello-21}; the set 10-5 apparently does not have a
  coordinatization; (d) 6-dim 20-5 noncontextual MMP hypergraph
  needed to accommodate the 6-dim
  coordinatization offered in \cite[Eq.~(7)]{cabello-13a} and
  \cite[Supp.~Mat., Eq.~(4)]{cabello-21} to just
  10 vertices from the 10-5 set; (e) a 6-dim 21-7 KS MMP
  hypergraph improperly called a ``graph of orthogonality'' in
  \cite[Supp.~Mat., Figs.~8,9]{cabello-21}; (f) 21-7
  graph of the previous 21-7 MMP hypergraph that should have
  been used in \cite[Supp.~Material, Figs.~8,9]{cabello-21};
  here obtained by Mathematica's $GraphPlot$.}
\label{fig:pent-cab-6d}
\end{figure} 
  Also, the relevance of the recognition of the 10-5 graph
  (a truncated 20-5 graph) as the Johnson graph $J(5,2)$ should
  have been explained.

  On the other hand, the graphics
  presented in \cite[Supp.~Material, Fig.~3]{cabello-21} is
  called a ``graph of orthogonality'' but it is neither a graph nor
  a hypergraph as explained in Sec.~\ref{subsub:3d} and
  Fig.~\ref{fig:3d-bud-ks}. Finally, in \cite[Supp.~Material,
  Figs.~8,9]{cabello-21} the 21-7 MMP hypergraph is shown
  (Fig~\ref{fig:pent-cab-6d}(e)) but improperly called a
  ``graph of orthogonality.'' Its proper graph is shown in
  Fig~\ref{fig:pent-cab-6d}(f). The 21-7 is in
  \cite{cabello-21} recognized to be equivalent to the Johnson
  graph $J(7,2)$, but it is not clear what is the relevance of this
  recognition, since the graph is only one of many from the 216-153
  and 834-1609 KS classes \cite{pm-entropy18}, none of which, apart
  from the 21-7 itself, is a Johnson graph. 

  As for the ``orthogonality hypergraph,'' the term unfolds in
  \cite[Secs.~II,III]{svozil-21} where a graphical appearance
  of Greechie diagrams is offered as a vindication for calling
  MMP diagrams ``orthogonality hypergraphs'' and the MMP language
  the ``hypergraph nomenclature.'' We have already shown in the
  previous bulleted item above that not all MMP hypergraphs
  correspond to Greechie diagrams and that therefore the
  Greechie diagrams cannot be used as their substitutes or
  transliterations. 
\end{itemize}

The most important aspect of MMP hypergraph representation is not its
striking visualization, but its encoding together with its algorithms
and programs that enable their automated generation, handling, and
manipulation. The results obtained with their help gives the MMP
hypergraph language a favorable margin over competitive formalisms. 

\section{\label{sec:obj-met}Objectification and
generation of contextual sets}

As we pointed out in the Introduction, the contextual sets
juxtapose measurement outputs of quantum systems with
those of classical ones. We consider them as represented
by MMP hypergraphs and such a representation can be given
several different objectifications starting with the well
known Kochen-Specker one, that in turn can be generated by
several methods in any $n$-dim space as abundant as needed. 

\subsection{\label{sec:h-ext}The {K}ochen-{S}pecker
  theorem and its Extensions}

In this section we consider the Kochen-Specker (KS) theorem,
its equivalent operator-defined Bell theorem, and its three
extensions: the KS MMP hypergraph theorem, the Weakened Bell
theorem, and the Non-Binary MMP hypergraph definition.

\smallskip
\parindent=0pt
\begin{Xeorem}
  {\bf Kochen-Specker} {\bf (KS)}
  {\rm \cite{koch-speck,zimba-penrose,fine-teller-78}.}
  In ${\cal H}^n$, $n\ge 3$, there are sets of $n$-tuples of
  mutually orthogonal vectors to which it is impossible to
  assign $1$s and $0$s in such a way that
\begin{enumerate}[label=(\roman*)]
\item No two orthogonal vectors are both assigned the value $1$;
\item In no $n$-tuple of  mutually orthogonal vectors, all of
the vectors are assigned the value $0$.
\end{enumerate}
\label{th:KS}
\end{Xeorem}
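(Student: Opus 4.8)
\medskip\noindent\emph{Proof idea.}\ The statement is an existence claim, so the plan is to exhibit, for every $n\ge 3$, an explicit finite family of $n$-tuples of mutually orthogonal unit vectors in $\mathcal{H}^n$ admitting no assignment that obeys (i) and (ii), and then to verify non-assignability either by a finite case analysis or by a counting argument. The first step is the elementary remark that (i) and (ii) together constrain any admissible assignment very rigidly: inside an $n$-tuple of mutually orthogonal vectors every pair is orthogonal, so (i) permits at most one $1$ while (ii) forbids all $0$s; hence \emph{every} $n$-tuple must receive exactly one $1$. This turns the problem into a purely combinatorial colouring problem on the hypergraph whose vertices are the chosen rays and whose hyperedges are the chosen $n$-tuples --- the hypergraph-colouring formulation underlying the MMP-hypergraph approach of Sec.~\ref{sec:h-lang}.

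Second, I would reduce the general case to dimensions $3$, $4$ and $5$. Given a non-colourable configuration $\mathcal{K}_d$ in $\mathcal{H}^d$, write $\mathcal{H}^n=\mathcal{H}^d\oplus\mathcal{H}^{n-d}$ with $n-d\ge 3$, take a non-colourable configuration $\mathcal{K}_{n-d}$ in the second summand, and let the $n$-tuples of the enlarged set be the unions $B\cup C$ of a context $B$ of $\mathcal{K}_d$ with a context $C$ of $\mathcal{K}_{n-d}$ (these have $d+(n-d)=n$ mutually orthogonal vectors). A short case split shows that any valid assignment on the enlarged set, once one checks whether some context of $\mathcal{K}_{n-d}$ is assigned all $0$s or not, restricts to a valid assignment of $\mathcal{K}_d$ or of $\mathcal{K}_{n-d}$, contradicting non-colourability there. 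Iterating, it suffices to treat $n\in\{3,4,5\}$.

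Third, I would supply those base cases explicitly. For $n=4$ the cleanest choice is the $18$-ray, $9$-context configuration of Cabello, Estebaranz and Garc{\'\i}a-Alcaine (the $18$-$9$ hypergraph of Fig.~\ref{fig:gr-hyp}): non-colourability is immediate from the identity ``(number of contexts)$\;=\sum_{\text{contexts}}(\text{number of }1\text{s in that context})=\sum_{\text{rays}\ r}v(r)\,d(r)$'', where $d(r)$ is the number of contexts through $r$; since each ray here lies in exactly $d(r)=2$ contexts the right-hand side is even, while the left-hand side equals $9$, a contradiction. Dimension $5$ is handled by an explicit set of the same type (such small KS sets are well known and appear among those of Sec.~\ref{subsec:5d}). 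For $n=3$ one falls back on a classical explicit orthogonal configuration --- Kochen and Specker's original one, or a smaller set found later --- and proves non-colourability by propagating the ``exactly one $1$ per triad'' constraint through the overlapping triads until a forced contradiction appears.

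The main obstacle is precisely this $n=3$ case. Unlike $n\ge 4$, three-dimensional KS sets do not admit the one-line counting argument above --- that argument needs an incidence structure with an odd number of contexts in which every ray lies in an even number of contexts, a situation that does not arise for the known $3$-dim sets --- so the verification must instead be carried out either by the intricate but elementary hand argument of Kochen and Specker or, as in the present paper, by an exhaustive automated search over all $\{0,1\}$ vertex assignments of the explicit MMP hypergraph, finite once a coordinatization is fixed but resting on the language and algorithms of Sec.~\ref{sec:h-lang}.
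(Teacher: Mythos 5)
Your outline is correct, but it takes a genuinely different route from the paper, which in fact offers no proof of Theorem~\ref{th:KS} at all: the theorem is stated as a cited classical result, and the paper's position is that each exhibited KS set is itself a constructive proof, with non-colourability verified by the automated MMP-hypergraph programs separately in every dimension treated. What you assemble instead is the classical self-contained skeleton: the observation that (i) and (ii) together force exactly one $1$ per $n$-tuple, the direct-sum reduction $\mathcal{H}^n=\mathcal{H}^d\oplus\mathcal{H}^{n-d}$ down to the base cases $n\in\{3,4,5\}$ (essentially the recursive proof of Cabello, Estebaranz and Garc{\'\i}a-Alcaine \cite{cabell-est-05}, which the paper cites only in passing), the parity argument for the 18-9 set of \cite{cabell-est-96a} in dimension~4, and the original Kochen--Specker argument or an exhaustive search in dimension~3. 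Your reduction step is sound; indeed it is slightly easier than your case split suggests, because in the direct-sum configuration every vector of $\mathcal{K}_d$ is orthogonal to every vector of $\mathcal{K}_{n-d}$, so rule (i) already confines all the $1$s to a single summand, which then inherits a valid colouring and yields the contradiction at once. The one soft spot is the $n=5$ base case, which you dispose of by citation: the 5-dim sets of Sec.~\ref{subsec:5d} have an even number of contexts (e.g.\ the 29-16) or vertices of odd multiplicity, so the parity shortcut is not automatic there and one genuinely needs either the computer check or a bespoke hand argument. What your approach buys is a proof readable without the paper's software; what the paper's approach buys is uniformity and scale, namely arbitrarily many independent constructive witnesses in each dimension rather than three hand-picked base cases.
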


\smallskip
\parindent=9pt
These sets are called KS {\em sets\/} and the vectors
KS {\em vectors\/}. Since KS sets are constructive
counterexamples that prove the KS theorem, some
authors in the literature call them KS {\em proofs}, e.g.,
\cite{waegell-aravind-fp11,waeg-aravind-pra-13,waeg-aravind-megill-pavicic-11,yu-tong-15}.

The first extension of the KS theorem and KS sets is the one which
makes use of MMP hypergraphs whose vertices are not represented by
vectors.

\smallskip
\parindent=0pt
\begin{Xeorem}
  {\bf KS MMP hypergraphs}. There are {\rm MMP} hypergraphs of
  $\,${\em hypergraph-dimension} $n\ge3$ (Def.~\ref{def:MMP-string})
  to whose vertices it is impossible to assign $1$s and $0$s in
  such a way that
\begin{enumerate}[label=(\roman*$\>'\!$)]
\item No two vertices from the same hyperedge are both assigned
  the value $1$;
\item In no hyperedge, each containing $n$ vertices, all of
  the vertices are assigned the value $0$.
\end{enumerate}
\label{th:h-KS}
\end{Xeorem}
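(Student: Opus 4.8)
The plan is to reduce Theorem~\ref{th:h-KS} to the ordinary Kochen--Specker theorem (Theorem~\ref{th:KS}) through the correspondence of Result~\ref{result:g-h}. First I would take a KS set $S$ in $\mathcal{H}^n$, whose existence is guaranteed by Theorem~\ref{th:KS}: a finite family of $n$-tuples of mutually orthogonal unit vectors admitting no $\{0,1\}$-assignment obeying (i)--(ii). From $S$ I build an MMP hypergraph $\mathcal{H}_S$ of hypergraph-dimension $n$ by taking the vectors of $S$ as vertices, each $n$-tuple of $S$ as an $n$-vertex hyperedge, and --- as Result~\ref{result:g-h} prescribes for the ``isolated edges'' of an orthogonality graph --- one $2$-vertex hyperedge for every orthogonal pair of vectors of $S$ not already jointly contained in an $n$-tuple. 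Then I check Definition~\ref{def:MMP-string}: condition~2 holds since hyperedges have either $n$ or $2$ vertices and $n\ge 3$; condition~4 holds because two distinct $n$-tuples of orthogonal vectors of $\mathcal{H}^n$ cannot agree on $n-1$ of their members (the remaining vectors would then be parallel), so they meet in at most $n-2$ vertices; and conditions~1 and~3 are arranged by discarding any vector lying in no hyperedge and trimming pendant configurations, restricting to a connected component. (A more economical route is simply to exhibit one explicit MMP hypergraph of known coordinatization --- e.g.\ the $4$-dim $18$-$9$ hypergraph of Fig.~\ref{fig:gr-hyp}(b) or Kochen--Specker's own $3$-dim $192$-$118$ hypergraph --- and verify Definition~\ref{def:MMP-string} and non-colourability for it directly.)

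\textbf{Core step.} I would then argue that a $\{0,1\}$-labelling $f$ of $V(\mathcal{H}_S)$ satisfying (i$'$)--(ii$'$) is \emph{the same object} as a labelling of the vectors of $S$ satisfying (i)--(ii). The equivalence (ii)$\iff$(ii$'$) is immediate, since the $n$-vertex hyperedges of $\mathcal{H}_S$ are exactly the $n$-tuples of $S$, and the qualifier ``each containing $n$ vertices'' in (ii$'$) correctly leaves the adjoined $2$-vertex edges unconstrained --- matching the fact that (ii) speaks only of $n$-tuples. For (i)$\iff$(i$'$) the key observation is that \emph{every} orthogonal pair of vectors of $S$ lies together inside some hyperedge of $\mathcal{H}_S$: either inside a common $n$-tuple or else inside the isolated edge adjoined for it; hence ``no two orthogonal vectors both receive $1$'' is literally ``no two vertices of a common hyperedge both receive $1$.'' Therefore $\mathcal{H}_S$ has a labelling satisfying (i$'$)--(ii$'$) if and only if $S$ has one satisfying (i)--(ii); since the latter is impossible by Theorem~\ref{th:KS}, so is the former, and $\mathcal{H}_S$ witnesses Theorem~\ref{th:h-KS}.

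\textbf{Main obstacle.} The delicate point is the faithful encoding of condition~(i): if one kept only the $n$-tuples as hyperedges, (i$'$) would constrain only \emph{within}-basis pairs, which is strictly weaker than (i) --- a labelling may place a single $1$ in each basis yet still assign $1$ to two orthogonal vectors sharing no basis --- so the isolated edges of Result~\ref{result:g-h} are genuinely needed, and one must confirm they respect Definition~\ref{def:MMP-string}.3 (each such edge has both endpoints already in bases, hence is attached to the body of $\mathcal{H}_S$ at two vertices, not one). The remaining work is bookkeeping: trimming $\mathcal{H}_S$ to meet conditions~1 and~3 without destroying non-colourability, which is harmless because removing a vertex that lies in no hyperedge, or a pendant basis whose constraint can always be satisfied locally given the rest of the labelling, changes neither the existence nor the non-existence of an admissible $f$. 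Quoting a concrete critical MMP KS hypergraph instead avoids both issues, at the price of not exhibiting the general mechanism that all KS vector sets descend to KS MMP hypergraphs.
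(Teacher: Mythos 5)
Your argument is correct as a proof of existence, but it takes a genuinely different route from the paper, which in fact offers no proof beyond pointing to a witness: immediately after the statement it cites the $4$-dim $6$-$3$ hypergraph {\tt 1234,4561,2356.}, whose non-colourability is a two-line parity argument (conditions (i$'$) and (ii$'$) force exactly one $1$ per hyperedge, so the $1$s counted over hyperedges number $3$; but every vertex lies in exactly two hyperedges, so that count must be even), and which moreover admits \emph{no} coordinatization. That choice of witness is deliberate: it shows Theorem~\ref{th:h-KS} is a \emph{strict} extension of Theorem~\ref{th:KS}, i.e., that there are KS MMP hypergraphs that are not KS sets. Your reduction from Theorem~\ref{th:KS} instead shows that every KS vector set descends to a non-colourable MMP hypergraph---essentially one direction of the unlabelled lemma ``KS MMP vs.~non-binary MMP'' in Sec.~\ref{sec:h-ext}---which proves existence but, being built from a coordinatized KS set, cannot exhibit the strictness that motivates introducing the notion. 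Your parenthetical ``more economical route'' is exactly what the paper does.

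One caveat on your main construction. You insist the adjoined $2$-vertex orthogonality edges are ``genuinely needed.'' Logically they are needed only if the contradiction in the chosen KS set uses cross-basis orthogonalities; for every concrete KS set in the paper (the $18$-$9$, $24$-$24$, $192$-$118$, \dots) the within-basis constraints alone are already non-colourable, so the bases-only hypergraph suffices. More importantly, the paper's intended reading of (ii$'$)---confirmed by the sentence preceding Definition~\ref{def:n-b} (``the extension\dots simply consists in allowing particular hyperedges to contain less than $n$ vertices'') and by the lemma just mentioned---is that a KS MMP hypergraph has \emph{every} hyperedge of cardinality $n$. Under that reading your $\mathcal{H}_S$, with its $2$-vertex edges, falls into the non-binary class of Definition~\ref{def:n-b} rather than the class the theorem is about. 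You should therefore either drop the $2$-edges and verify (as holds for the standard examples) that the $n$-tuples alone are non-colourable, or simply fall back on your explicit-example route.
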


\smallskip
\parindent=9pt

That means that there might be KS MMP hypergraphs that are not
KS sets, as e.g., {\tt 1234,4561,2356.} \cite[Fig.~1]{pm-entropy18}.
It is not a KS set because vectors that would represent its vertices
do not exist, i.e., this KS MMP hypergraph does not have a
coordinatization (vector representation). Our algorithms and
programs are partly based on Theorem \ref{th:h-KS} meaning that they
can detect the contextuality of an MMP hypergraph no matter whether
its coordinatization is given (or even existent) or not. Handling KS
MMP hypergraphs without taking their coordinatization into account
give us a computational advantage over handling them with a
coordinatization because processing bare hypergraphs is faster than
processing them via vectors assigned to their vertices.

In Refs.~\cite{yu-oh-12,cabello-bengtsson-12,pavicic-entropy-19}
$n$-dim contextual sets with hyperedges containing less than $n$
vertices that still violate the rules (i) and (ii) above are
considered. They are not KS sets \cite{pavicic-pra-17}, though,
since the KS theorem \ref{th:KS} assumes that each hyperedge
contains $n$ vertices. We call such sets {\em non-binary}
MMP hypergraphs (see Def.~\ref{def:n-b}).

The original KS theorem holds for vectors and defines a KS set as
its constructive proof. On the other hand, Bell proved the so-called
Bell theorem \cite{bell66} as a corollary to the Gleason theorem
\cite{gleason} which is a projector formulation of the KS theorem
\cite{fine-teller-78,marzlin-15}.

\smallskip

\parindent=0pt
\begin{Xeorem}
  {\bf Bell}. In ${\cal H}^n$, $n\ge 3$, there is no valuation
  function $v$ defined on the projectors $P_i$ on the
  one-dimensional subspaces such that
\begin{enumerate}[label=(\roman*$\>''\!$)]
\item $v(P_i) = 0$ or $1$ for each $i$ and
\item $\sum_{i\in B}v(P_i)=1$ for each orthonormal basis $B$ of
the space ${\cal H}^n$.
\end{enumerate}
\label{th:bell}
\end{Xeorem}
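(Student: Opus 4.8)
The plan is to deduce the Bell theorem as a corollary of the Kochen--Specker theorem (Theorem~\ref{th:KS}), which we are allowed to assume, by exploiting the bijection between rank-one projectors $P$ on $\mathcal{H}^n$ and the rays (equivalently, unit vectors up to phase) they project onto. Under this identification a valuation $v$ on the $P_i$ is the same datum as a $\{0,1\}$-valued assignment $f$ on unit vectors, and the frame condition $\sum_{i\in B}v(P_i)=1$ for every orthonormal basis $B$ says precisely that $f$ selects exactly one vector in each such basis. So the content of (i$''$)--(ii$''$) is exactly a noncontextual deterministic value assignment of the type the KS theorem rules out.

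First I would argue \emph{by contradiction}: assume a valuation $v$ obeying (i$''$) and (ii$''$) exists on all one-dimensional projectors, and let $S$ be a KS set of $n$-tuples of mutually orthogonal vectors in $\mathcal{H}^n$ whose existence is asserted by Theorem~\ref{th:KS}. Normalising the vectors of $S$ and setting $f(w)=v(P_w)$, where $P_w$ is the projector onto $\mathrm{span}(w)$, I claim $f$ satisfies conditions (i) and (ii) of Theorem~\ref{th:KS}. For (ii): an $n$-tuple of mutually orthogonal nonzero vectors of $\mathcal{H}^n$ is an orthonormal basis after normalisation, so by (ii$''$) the $f$-values on it sum to $1$; being $0$ or $1$ valued they cannot all be $0$ (indeed exactly one equals $1$). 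For (i): if $w\perp w'$ in $S$, extend $\{w,w'\}$ to an orthonormal basis $B$ of $\mathcal{H}^n$ (Gram--Schmidt, valid for any $n\ge2$); then $f(w)+f(w')\le\sum_{e\in B}v(P_e)=1$ by (ii$''$), so $f(w)$ and $f(w')$ are not both $1$. Hence $f$ is exactly the kind of assignment that Theorem~\ref{th:KS} forbids on $S$, a contradiction, so no such $v$ exists.

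An alternative route, the one suggested by the reference to Gleason in the text, is to note that a $\{0,1\}$-valued $v$ with property (ii$''$) is a \emph{frame function} of weight $1$ on $\mathcal{H}^n$; Gleason's theorem (valid exactly in the regime $n\ge3$) then forces $v(P)=\mathrm{Tr}(\rho P)$ for some density operator $\rho$, whence $P\mapsto v(P)$ is continuous on the connected manifold of rank-one projectors and so cannot be two-valued unless constant --- but a constant frame function is either $0$ (contradicting $\sum_{i\in B}=1$) or has every basis sum equal to $n>1$ (again contradicting (ii$''$)). Both derivations are short. The only points needing care are the faithful translation of the single frame condition (ii$''$) into both clauses of the KS assignment problem (the $\perp$-extension step is the one place where this is not purely formal), and --- if one prefers the Gleason argument --- the fact that its one genuinely nontrivial ingredient, Gleason's theorem itself, is precisely where the hypothesis $n\ge3$ is consumed. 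I would present the first route, since it needs nothing beyond the already-stated Theorem~\ref{th:KS} and the elementary fact that every orthonormal set in $\mathcal{H}^n$ extends to an orthonormal basis.
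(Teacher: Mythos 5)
Your argument is correct, but it is worth noting that the paper does not actually prove Theorem~\ref{th:bell} at all: it states it and attributes it to Bell, who obtained it as a corollary of Gleason's theorem (this is exactly your second, ``alternative'' route, including the point that $n\ge 3$ is consumed precisely by Gleason). Your preferred first route --- deducing the statement from the already-stated Theorem~\ref{th:KS} --- is sound and genuinely different from the route the paper cites: a global valuation $v$ on all rank-one projectors restricts to a $\{0,1\}$ assignment on any KS set, your Gram--Schmidt extension of an orthogonal pair to a basis correctly converts the single frame condition (ii$''$) into both KS clauses (i) and (ii), and the contradiction with Theorem~\ref{th:KS} is immediate. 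What each approach buys: your KS-based derivation is elementary and self-contained given the paper's Theorem~\ref{th:KS}, but it inverts the historical and logical economy --- the KS theorem is the stronger, finite, constructive statement (so Bell's no-valuation claim follows trivially from it), whereas Bell's original Gleason-based argument is what one needs if no finite KS set is in hand, and it is the derivation the paper has in mind when it calls the Bell theorem ``a corollary to the Gleason theorem.'' Either write-up would be acceptable as a proof of the statement as posed; just be explicit that the first route presupposes the constructive content of Theorem~\ref{th:KS}, while the second presupposes Gleason.
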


\parindent=9pt
\smallskip
Fine and Teller gave the following extensions of the Bell theorem
for general observables $A$ and $B$ instead of projectors via
the following rules \cite{fine-teller-78}.

\smallskip

\parindent=0pt
\begin{Xeorem}
  {\bf Weakened Bell}. In ${\cal H}^n$, $n\ge 3$, there exists no
  valuation function $v$ defined on the general observables
  $A,B,\dots$ such that
\begin{enumerate}[label=(\roman*$\>'''\!$)]
\item $v(A)$ is an eigenvalue of $A$ {\em (spectrum rule)}; and either
\item $v(A+B)=v(A)+v(B)$, for commuting $A$ and $B$
  {\em (finite sum rule)}, or
\item $v(AB)=v(A)v(B)$, \quad for commuting $A$ and $B$
   {\em (finite product rule)}.
\end{enumerate}
\label{th:bell-sum-pro}
\end{Xeorem}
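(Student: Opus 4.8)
\medskip
\noindent\textbf{Proof plan.} I would derive a contradiction with Theorem~\ref{th:bell} (equivalently with Theorem~\ref{th:KS}) in each horn of the disjunction. First, the common consequences of the spectrum rule~(i$'''$): every projector $P$ has ${\rm spec}(P)\subseteq\{0,1\}$, so $v(P)\in\{0,1\}$; and ${\rm spec}(I)=\{1\}$, ${\rm spec}(0)=\{0\}$, so $v(I)=1$ and $v(0)=0$. Hence $v$ restricted to the rank-one projectors already meets clause~(i$''$) of Theorem~\ref{th:bell}, and everything reduces to forcing the normalization clause~(ii$''$): $\sum_{P\in B}v(P)=1$ for every orthonormal basis $B$ (viewed as a resolution of the identity into rank-one projectors).

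\medskip
\noindent\emph{Sum-rule horn.} Let $B=\{P_1,\dots,P_n\}$ be such a basis. The $P_i$ commute pairwise, $P_1+\dots+P_n=I$, and each partial sum $P_1+\dots+P_j$ commutes with $P_{j+1}$ (since $P_iP_{j+1}=0$ for $i\le j$). Iterating the finite sum rule~(ii$'''$) gives $1=v(I)=v(P_1+\dots+P_n)=\sum_{i=1}^n v(P_i)$, so exactly one $v(P_i)$ equals $1$. Thus $v$ restricted to the rank-one projectors is precisely a valuation of the type excluded by Theorem~\ref{th:bell}: contradiction.

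\medskip
\noindent\emph{Product-rule horn.} Here I would attach to each rank-one projector $P$ its reflection $R_P:=I-2P$; since ${\rm spec}(R_P)=\{-1,+1\}$ (using $n\ge2$), the spectrum rule forces $v(R_P)\in\{-1,+1\}$. For a basis $\{P_1,\dots,P_n\}$ the $R_{P_i}$ commute and, because $\sum_i P_i=I$, a one-line induction gives $R_{P_1}\cdots R_{P_n}=I-2\sum_i P_i=-I$; iterating the finite product rule~(iii$'''$) yields $\prod_i v(R_{P_i})=v(-I)=-1$. Putting $w(P):=\tfrac12\big(1-v(R_P)\big)\in\{0,1\}$, this reads $\sum_{P\in B}w(P)\equiv1\pmod 2$ for every orthonormal basis $B$, while $R_P\,P=-P$ and $R_P(I-P)=I-P$ tie $w$ back to $v$ on projectors ($v(P)=1\Rightarrow w(P)=1$, and $w(P)=1\Rightarrow v(I-P)=0$). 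It then remains to contradict the existence of such an odd-parity $\{0,1\}$-labelling on a state-independent Kochen--Specker configuration (Theorem~\ref{th:KS}). The cleanest witness lives in ${\cal H}^4$: the Peres--Mermin square of Sec.~\ref{subsec:mermin} provides nine observables $M_{ij}$ with ${\rm spec}(M_{ij})=\{-1,+1\}$, the three rows and three columns being commuting triples whose products equal $+I$ for all three rows and for two of the columns and $-I$ for the third column; applying~(iii$'''$) to these six operator identities and multiplying them gives $\prod_{i,j}v(M_{ij})^2=1$ on one side and $-1$ on the other, a contradiction. For general $n\ge3$ one runs the same reflection/parity reduction on a parity-proof Kochen--Specker set supplied by Theorem~\ref{th:KS}.

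\medskip
\noindent\emph{Main obstacle.} The sum-rule horn is essentially free. The real work is the product-rule horn: multiplicativity alone does \emph{not} normalize $v$ on a single maximal commuting family --- within one commutative subalgebra, $v\equiv0$ on all proper projectors is perfectly consistent with the spectrum rule together with~(iii$'''$) --- so the product rule only yields a parity constraint (modulo~$2$) per context, not ``exactly one $1$ per hyperedge.'' One is therefore forced to re-use the genuinely non-commutative Kochen--Specker/Bell obstruction together with an explicit witness (a parity proof, or the Peres--Mermin square), and securing such a witness uniformly in every dimension $n\ge3$, rather than only in ${\cal H}^4$, is the point that needs care.
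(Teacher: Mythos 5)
The paper does not actually prove this theorem: it states it as Fine and Teller's extension of the Bell theorem and cites \cite{fine-teller-78}, then explicitly declines to pursue the sum/product rules further. So there is no in-paper argument to compare against, and your proposal has to stand on its own. Your sum-rule horn does: the spectrum rule pins $v(P)\in\{0,1\}$ on projectors and $v(I)=1$, iterating the sum rule over a resolution of the identity (partial sums commute with the next projector) forces $\sum_{P\in B}v(P)=1$ for every basis $B$, and Theorem~\ref{th:bell} finishes it. That horn is complete for all $n\ge 3$.

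The product-rule horn has a genuine gap, which you half-acknowledge but underestimate. Your reduction to the parity constraint $\sum_{P\in B}w(P)\equiv 1\pmod 2$ per basis is correct (via $R_{P_1}\cdots R_{P_n}=-I$ and $v(-I)=-1$), and the Peres--Mermin identities do kill it in ${\cal H}^4$. But the closing sentence --- ``for general $n\ge3$ one runs the same reflection/parity reduction on a parity-proof Kochen--Specker set supplied by Theorem~\ref{th:KS}'' --- does not work. Theorem~\ref{th:KS} supplies KS sets, not parity proofs; parity proofs are a special subclass requiring an odd number of full bases with every vertex of even multiplicity, and by the paper's own Vertex-Hyperedge Lemma ($\sum_i m(i)=nl$, Eq.~(\ref{eq:theorem})) such a configuration cannot exist when $n$ is odd: $n$ odd and $l$ odd make $nl$ odd, while all-even multiplicities make $\sum_i m(i)$ even. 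So the reflection/parity route is structurally blocked in ${\cal H}^3$ (and every odd dimension), which is precisely the base case $n\ge 3$ the theorem claims. You correctly observe that within a single maximal commuting family the product rule plus spectrum rule admit $v\equiv 0$ on all proper projectors, so no per-context normalization is available to fall back on; as written, the product-rule horn is therefore unproven for $n=3$, and a genuinely different idea (not a parity witness) is needed there.
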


\parindent=9pt

Therefore, the statement by Yu, Guo, and Tong that the rules
``are usually called the KS theorem'' \cite{yu-tong-15} seems to
be incorrect. On the other hand, they do claim that the operator
formulation of the Kochen-Specker theorem via sum and product
rules is more general than the vector/ray/hypergraph one but they
do not give a rigorous proof of the claim, such as providing us
with a sum- or product-rule KS set that cannot be obtained from
the standard KS theorem. In any case, if the Bell theorem were
``weakened'' by the rules, then it would cease to be equivalent
to the KS theorem and we would possibly lose the universal 0-1
valuation of the valuation function for arbitrary observables
of considered contextual sets.

Since we want to keep the 0-1 valuations for computational purposes,
we shall not pursue the Bell sum/product extension further. Instead,
we introduce another extension of KS sets and MMP hypergraphs based
on hypergraph vertices and their 0-1 valuation. That circumvents the
eigenvalue problem and gives us structural properties of MMP
hypergraphs as well as their measures, valuations, and inequalities
via automated procedures, in contrast to many other valuations and
inequalities obtained in the literature, mostly by hand, for
particular vectors and projectors.

Our hypergraph extension applies to the MMP hypergraph conditions
(i$'$) and (ii$'$) from Theorem \ref{th:h-KS}. It simply consists in
allowing particular hyperedges to contain less than $n$ vertices, as,
e.g., those in\ \ \cite{klyachko-08,yu-oh-12,pavicic-entropy-19}.
Notice that the extension covers both KS and non-KS MMP hypergraphs.
In such extended $n$-dim MMP hypergraphs, hyperedges need not always
contain $n$ vertices but they might still satisfy or violate the two
rules (i$'$) and (ii$'$) from Theorem \ref{th:h-KS}. However, then
we cannot call them the KS rules. Equally so, we cannot call their
inequalities the KS inequalities. Instead, we consider hypergraphs
with truncated vertices in the following way
(cf.~\cite{pavicic-entropy-19}).

\smallskip
\parindent=0pt
\begin{definition} {\bf Non-binary MMP hypergraph.}
  A $k$-$l$ {\rm MMP} hypergraph of hypergraph-dimen\-sion $n\ge 3$
  with $k$ vertices and $l$ hyperedges, whose $i$-th hyperedge
  contains $\kappa(i)$ vertices\break  $(2\le\kappa(i)\le n$,\
  $i=1,\dots,l)$ to which it is impossible to assign {\rm 1}s and
  {\rm 0}s in such a way that the following {\em hypergraph rules}
  hold
\begin{enumerate}[label=(\roman*)]
\item No two vertices within any of its hyperedges are both
assigned the value $1$;
\item In any of its hyperedges, not all of the vertices are
assigned the value $0$.
\end{enumerate}
is called a {\em non-binary hypergraph} ({\rm NBMMP} hypergraph).
\label{def:n-b}
\end{definition}

\parindent=9pt

The MMP hypergraph above is defined without a coordinatization,
i,e., neither their vertices nor their hyperedges are related to
either vectors or operators. We say that an MMP hypergraph is in
an MMP-hypergraph-$n$-dim space (we call it an MMP hypergraph space)
when either all its hyperedges contain $n$ vertices or when we 
add vertices to hyperedges so that each contains $n$ vertices. Many
of our programs handle MMP hypergraphs without any reference to
either vectors or projectors. In an MMP hypergraph with a
coordinatization an $n$-dim MMP hypergraph space becomes an $n$-dim
Hilbert space spanned by a maximal number of vectors within
hyperedges. Whether we speak about an MMP hypergraph with or
without a coordinatization will be clear from the context.

\smallskip
\parindent=0pt
\begin{definition} {\bf Binary MMP hypergraph}
  An {\rm MMP} hypergraph to which it is possible to assign $1$s and
  $0$s so as to satisfy the rules {\rm(i)} and {\rm(ii)} of
  Def.~\ref{def:n-b} is called a {\em binary hypergraph}
  ({\rm BMMP} hypergraph).
\label{def:bin}
\end{definition}

\smallskip
\parindent=9pt

\smallskip
\parindent=0pt
\begin{definition} {\bf Critical MMP hypergraph} is a non-binary
  {\rm MMP} hypergraph which is minimal in the sense that removing
  any of its hyperedges turns it into a binary MMP hypergraph.
\label{def:critical}
\end{definition}

\smallskip
\parindent=9pt

Critical MMP hypergraphs represent non-redundant blueprints for
their implementation since bigger MMPs that contain them only add
orthogonalities that do not change their non-binary property.

An $n$-dim non-binary $k$-$l$ MMP hypergraph ${\cal{H}}$ need not
have a coordinatization, but when it does, the vertices in every
hyperedge have definite mutually orthogonal vectors assigned to
them. That means that each hyperedge $E_j$, $j=1,\dots,l$ should
have not only $\kappa(j)$ vectors corresponding to its $\kappa(i)$
vertices specified, but also $n-\kappa(j)$ ones that must exist by
the virtue of orthogonality in the $n$-dim space so as to form an
orthogonal basis of the space. Such an extended ${\cal{H}}$ we
call a {\em filled\/} ${\cal{H}}$.

In order to handle the pentagon hypergraph {\tt 12,23,34,45,51.}~we
first have to assign a coordinatization of the filled pentagon
{\tt 162,273,384,495,5A1.}~so as to be able to  implement the
pentagon. Once we determined all its vectors/states we can discard
the vertices {\tt 6,7,8,9,A} from further consideration and
processing. 

The following Lemma follows straightforwardly.

\smallskip
\parindent=0pt
\begin{lemma}{\bf KS MMP vs.~non-binary MMP}. \ An $n$-dim
  non-binary {\rm MMP} hypergraph---in which each hyperedge
  contains $n$ vertices---with a coordinatization is a
  {\rm KS MMP} hypergraph.
\end{lemma}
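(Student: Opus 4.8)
\noindent\emph{Proof plan.} My plan is to unwind the two definitions involved and then note what the coordinatization hypothesis contributes. Let ${\cal H}=(V,E)$ be the given $k$-$l$ MMP hypergraph: non-binary in the sense of Definition~\ref{def:n-b}, with $\kappa(i)=n$ for every hyperedge $E_i$, $i=1,\dots,l$, and equipped with a coordinatization. First I would observe that, since every hyperedge already contains $n$ vertices, ${\cal H}$ is already a ``filled'' hypergraph and the hypergraph rules (i) and (ii) of Definition~\ref{def:n-b} coincide \emph{word for word} with rules (i$'$) and (ii$'$) of Theorem~\ref{th:h-KS}: ``no two vertices within a hyperedge are both assigned $1$'' is (i$'$), and ``in any hyperedge not all vertices are assigned $0$'' is (ii$'$), the clause ``each containing $n$ vertices'' in (ii$'$) being automatically met. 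Hence the defining property of a non-binary ${\cal H}$---that no $0$--$1$ assignment on $V$ satisfies both (i) and (ii)---is literally the impossibility asserted in Theorem~\ref{th:h-KS}, so ${\cal H}$ is a KS MMP hypergraph.

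That already proves the statement; the coordinatization hypothesis is what additionally promotes ${\cal H}$ to a bona fide Kochen--Specker \emph{set}, and I would spell this out. By Definition~\ref{def:MMP-dim} and the remark following it, fixing a full coordinatization turns the hypergraph-dimension $n$ into the dimension of the Hilbert space ${\cal H}^n$ spanned by the assigned vectors. Each hyperedge then carries $n$ pairwise orthogonal nonzero vectors of ${\cal H}^n$, and pairwise orthogonality of $n$ nonzero vectors in an $n$-dimensional space forces linear independence, hence spanning; so every hyperedge is (up to normalization) an orthonormal basis of ${\cal H}^n$. Thus the coordinatized ${\cal H}$ is precisely a set of $n$-tuples of mutually orthogonal vectors of the kind Theorem~\ref{th:KS} concerns. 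Finally, any $0$--$1$ valuation of those vectors obeying conditions (i) and (ii) of Theorem~\ref{th:KS} would a fortiori obey the weaker within-hyperedge rules (i$'$), (ii$'$)---condition (i) of Theorem~\ref{th:KS} constrains \emph{all} orthogonal pairs, a superset of the pairs sharing a hyperedge---so no such valuation can exist either, and ${\cal H}$ is in fact a KS set.

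I do not expect a genuine obstacle: the lemma is essentially a matching of definitions, which is why it can be said to follow straightforwardly. The only point deserving a line of care is excluding degenerate coordinatizations, and it is where one actually uses the definition of a coordinatization rather than just its name. Within a single hyperedge this is automatic, since mutually orthogonal nonzero vectors are pairwise non-proportional, so a hyperedge contributes $n$ distinct rays spanning ${\cal H}^n$. If a coordinatization assigned the same ray to two vertices lying in \emph{different} hyperedges, one passes to the quotient hypergraph identifying such vertices: a $0$--$1$ valuation of the resulting vector set pulls back to an assignment on $V$ that is constant on fibres, hence to one already excluded by non-binarity, so the contradiction is inherited. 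Modulo this routine verification, the argument is complete.
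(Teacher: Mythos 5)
Your proof is correct and is exactly the definition-matching argument the paper has in mind: it states the lemma "follows straightforwardly" and gives no proof beyond the gloss that a standard KS set is a coordinatized non-binary MMP hypergraph with all hyperedges of size $n$. Your unwinding of conditions (i),(ii) of Def.~3.6 against (i$'$),(ii$'$) of Theorem~3.2, and the observation that a coordinatization turns each hyperedge into an orthogonal basis so that the stronger global condition (i) of Theorem~3.1 is a fortiori unsatisfiable, simply spells out that intended argument.
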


\parindent=9pt
In other words, a standard Kochen-Specker set is a non-binary
MMP hypergraph in which the size of each hyperedge is $n$ and
in which every one of $k$ vertices corresponds to a vector/state.   

\smallskip

\subsection{\label{subsec:methods}Methods of
  MMP hypergraph generations}

To generate NBMMPHs in $n$-dim spaces we mostly rely on
three methods---{\bf M1-M3} which make use of algorithms and
programs developed in
\cite{pavicic-pra-17,pavicic-entropy-19,pwma-19,pavicic-pra-22}.

{\bf M1} consists in an automated dropping of vertices contained
in single hyperedges (multiplicity $m=1$; see Def.~\ref{def:qh-mi})
of BMMPHs and a possible subsequent stripping of their hyperedges.
The obtained smaller MMPHs are often NBMMPH although never KS.  

{\bf M2} consists in an automated random addition of hyperedges to
MMPHs so as to obtain larger ones which then serve us to generate
smaller KS MMPHs by stripping hyperedges randomly again;

{\bf M3} consists in combining simple vector components so as to
exhaust all possible collections of $n$ mutually orthogonal $n$-dim
vectors. These collections form big {\em master\/} MMPHs which
consist of single or multiple MMPHs of different sizes. Master
MMPHs may or may not be NBMMPH, what we find out by applying
filters to them. NBMMPHs serve us to massively generate a
{\em class} of smaller MMPHs via our algorithms and programs.

The algorithms the methods employ are applicable to any
$n$-dim space but the computational barrier of the present day
supercomputers allowed our programs based on these algorithms
to reach no further than a 32-dim space. In this paper we
limit ourselves to 6-dim spaces. 

\section{\label{sec:discr}Discriminators of the
  contextuality: Noncontextuality 
  inequalities}

In general, the noncontextuality inequality as a
distinguisher between contextual and noncontextual MMP
hypergraphs is defined as follows:

\begin{definition}
  {\bf Noncontextuality inequality} defined for an
  arbitrary\/ {\rm MMP} hypergraph ${\cal{H}}$
  \begin{eqnarray}
    \mathrm{A}<\Omega,
  \label{def:non-c-i-s}
  \end{eqnarray}
  where $\mathrm{A}$ and $\Omega$ are some terms defined on
  ${\cal{H}}$, is an inequality whose satisfaction implies that
  ${\cal{H}}$ is a contextual non-binary\/ {\rm MMP} hypergraph
  (Def.~\ref{def:n-b}), and whose violation:
  \begin{eqnarray}
    \mathrm{A}\ge\Omega,
  \label{def:non-c-i-v}
  \end{eqnarray}
  implies that ${\cal{H}}$ is a noncontextual binary {\rm MMP}
  hypergraph (Def.~\ref{def:bin}).
\label{def:non-c-i}
\end{definition}

Noncontextuality inequalities exist in operator- and
hypergraph-based forms.

\subsection{\label{sec:op-ineqal}Operator-based
  inequalities}

Before we dwell on our hypergraph- and vector-evaluation-based
MMP structures and their inequalities let us first briefly present
how the operator-based ones are defined in the literature. There
are three approaches:

\begin{enumerate}[label=(\roman*)]
\item{\em hyperedge-based approach} considers the operators
  defined via vertices organized within hyperedges of MMP
  hypergraphs; such are the majority of operators in the present
  section; noncontextuality inequalities generated by these
  operators correspond to the KS MMP hypergraph hyperedge
  inequalities, abbreviated e-inequalities and defined by
  Defs.~\ref{def:ein} and \ref{def:einm} in
  Sec.~\ref{sec:structure};
\item{\em vertex-based approach} considers the operators
  defined via vertices of an MMP hypergraph by particular algorithms
  and independently of the organization of vertices within
  hyperedges; see Yu-Oh's operators below; their contextuality
  corresponds to the contextuality of the underlying MMP hypergraph; 
  the inequalities generated by these operators correspond to the
  KS MMP hypergraph the $\alpha_r^*$-inequality given by
  Eq.~(\ref{eq:alpha-alpha}); 
\item{\em direct approach} considers operators defined
  without a reference to either vertices or their coordinatization;
  see Peres-Mermin's operators below; we design a
  vertex-hyperedge structure for them in Sec.~\ref{subsec:mermin}.
\end{enumerate}

Most of the operators are defined so as to have eigenvalues
$\pm 1$; their classical counterparts are classical observables
with noncontextual values $\pm 1$. Thus, in the conditions (i) and
(ii) of the KS theorem, the value 1 assigned to vertices/vectors of
a KS set corresponds to an operator-defined variable value 1 (or -1)
and value 0 corresponds to value -1 (or 1), meaning that in (i) and
(ii) we would have 1 (or -1) assigned to one of the vertices and -1
(or 1) to all the others.

Cabello defines 4-dim operators by means of KS states/vectors
$A_{ij}$ \cite[Eq.~(2)]{cabello-08}
\begin{eqnarray}
  A_{ij}=2|v_{ij}\rangle\langle v_{ij}|-I
\label{eq:a-cab}
\end{eqnarray}
via vector coordinatization of the 4-dim KS 18-9 hypergraph
shown in \cite[Fig.~1]{cabello-08} and then he shows that the
inequality defined in \cite[Eq.~(1)]{cabello-08}
\begin{eqnarray}
  -\langle A_{12} A_{16}A_{17} A_{18}\rangle-\dots-\langle A_{29}
  A_{39} A_{59} A_{69}\rangle\le\ 7,
\label{eq:a-cab2}
\end{eqnarray}
defined on the smallest 4-dim KS set 18-9 \cite{cabell-est-96a},
is violated by probabilities of the outcomes of quantum
measurements which give 9 at the right hand side of the
inequality. Value 7 in the inequality is the maximum value
we obtain when we interpret the observables $A_{ij}$ as
classical variables.

Yu and Oh use a similar operator defined for 13 vectors from
a 25-16 non-binary MMP hypergraph (which is a non-KS set, though)
and define the following inequality for them \cite[Eq.~(4)]{yu-oh-12}
\begin{eqnarray}
  \sum_i^{13}\langle A_i\rangle-
  \frac{1}{4}\sum_{i,j}^{13}\Gamma_{ij}\langle A_iA_j\rangle\le 8
\label{eq:yu-oh}
\end{eqnarray}
where $\Gamma$ is a weight function. It is violated by quantum
measurements which yield $25/3=8.\dot3$ for the left hand side
of the inequality.

Badzi{\'a}g, Bengtsson, Cabello, and Pitowsky define
$n$-dim operators by means of states/vectors of a $k$-$l$
KS hypergraph (with $k$ vertices and $l$ hyperedges).
\begin{eqnarray}
  A_i^j=I-2|v^{j,i}\rangle\langle v^{j,i}|,
\label{eq:b-cab2}
\end{eqnarray}
where $\langle v^{j,i}|v^{j,i'}\rangle=\delta_{ii'}$ for every
$1\le j\le l$ \cite[Eq.~(5)]{badz-cabel-09}. They
calculate the following operator expression
\begin{eqnarray}
\beta_q(n,l)=\sum_{j=1}^l\left\langle\left(\sum_{i=1}^nA_i^j-
  \prod_{i=1}^n(I+A_i^j)\right)\right\rangle=\langle l(n-2)I\rangle=l(n-2),
\label{eq:b-cab22}
\end{eqnarray}
which is the result one obtains by a quantum measurement
\cite[Eq.~(8)]{badz-cabel-09}. Classical observables must
satisfy the inequality \cite[Eq.~(1)]{badz-cabel-09}
\begin{eqnarray}
\beta_c(n,l)\le l(n-2)-2,
\label{eq:b-cab-c}
\end{eqnarray}
which is violated by the quantum operators. Badzi{\'a}g,
Bengtsson, Cabello, and Pitowsky then calculate the
classical $\beta_c(n,l)$ for Peres' 24-24 KS set and obtain
Max$[\beta_c(4,24)]=40$ which is clearly violated by the quantum
mechanical $\beta_q(4,24)=24(4-2)=48$.

Yu, Guo, and Tong define noncontextuality KS inequalities
\cite[Eqs.~(3,7,10)]{yu-tong-15} for operators and projectors
which are implicitly defined via vectors of KS sets, but they
do not specify any of them. We can only say that their
inequality \cite[Eq.~(7)]{yu-tong-15} is equivalent to
our e$_{Max}$-inequality in Def.~\ref{def:ein}.

Peres and Mermin's \cite{peres90,mermin90} set was used
\cite{cabello-08} to yield noncontextuality inequalities for
operators which are not constructed with the help of
vectors/states that might underlie them. Instead, one makes use
of the tensor products of the 2-dim Pauli operators given by
Eq.~(\ref{eq:mermin}) and defines the noncontextuality
inequality as follows. The operators $\Sigma_i$, $i=1,\dots 9$,
satisfy the equation
\begin{eqnarray}
  \Sigma=\Sigma_1\Sigma_2\Sigma_3+\Sigma_4\Sigma_5\Sigma_6+
  \Sigma_7\Sigma_8\Sigma_9
  +\Sigma_1\Sigma_4\Sigma_7
  +\Sigma_2\Sigma_5\Sigma_8-\Sigma_3\Sigma_6\Sigma_9=6I
\label{eq:square-S}
\end{eqnarray}
while their classical counterparts $S_i$,
$i=1,\dots 9$ (observables with two possible results
$\pm 1$) satisfy
\begin{eqnarray}
   S= S_1S_2S_3+ S_4S_5S_6+
   S_7S_8S_9
  + S_1S_4S_7+
   S_2S_5S_8- S_3S_6S_9\le 4
\label{eq:square-s}
\end{eqnarray}
Thus, the noncontextuality inequality should read
\begin{eqnarray}
  \langle S\rangle\le 4\le\langle\Sigma\rangle=6.
\label{eq:square-sin}
\end{eqnarray}

Taken together, most operator-based inequalities in the literature
rely on coordinatizations of vertices/states of MMP hypergraphs by
means of which the operators are defined and then measured through
an application on arbitrary states. In contrast, in the next
sections, we consider the inequalities which are defined directly by
means of the coordinatizations of vertices of hypergraphs which are
measured directly.

\subsection{\label{sec:hyp-op}Hypergraph-based
  inequalities; Hypergraphs vs.~operators}

The approaches in Sec.~\ref{sec:op-ineqal} consider vertices either
directly (Yu and Oh), or via their inclusion in hyperedges (other
approaches). In Sec.~\ref{sec:structure} we connect the
operator-based approach vertex structure of contextual sets with a
hypergraph-based approach. But in order to introduce particular
vertex-based features of the latter approach, in this section we
reconsider a simple set---Klyachko, Can, Binicio{\u g}lu, and
Shumovsky's pentagon \cite{klyachko-08,cabello-13}---to pinpoint
the required features and to serve us as an introduction to
Sec.~\ref{sec:structure}. 

As we pointed out in \cite{pavicic-entropy-19} the 3-dim 
10-5 MMP pentagon from whose hyperedges the vertices that belong to
just one hyperedge ({\tt 162,273,384,495,5A1.}, shown in
Fig.~\ref{fig:pentagon}(a)), are dropped is a contextual non-binary
5-5 MMP pentagon ({\tt 12,23,34,45,51.}). 

\begin{figure}[h]
  \begin{center}
  \includegraphics[width=1\textwidth]{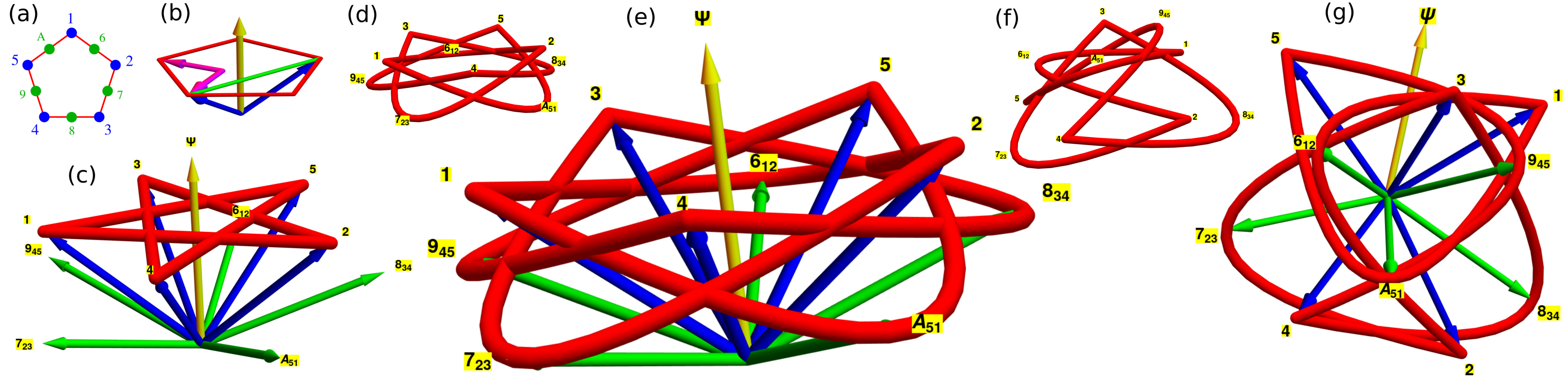}
\end{center}
\caption{(a) MMP hypergraph representation of the pentagon;
  (b) impossible vector representation of a planar regular pentagon;
  (c) impossible star-shaped planar vector-pentagon;
  (d) Proper non-planar vector-pentagon;
  (e) its vector representation;
  (f) bare form of a pentagon obtained from $\{0,\pm 1,2\}$
  vector components; see text;
  (g) its vector representation;
  see text.}
\label{fig:pentagon}
\end{figure}

Operator vs.~hypergraph approaches to the pentagon will serve us
to introduce a distinction between operator-based measurements
and direct measurements of quantum systems exiting quantum ports
determined via vertices within each hyperedge of an MMP hypergraph.

We can assign vectors to the pentagon vertices in many ways.
Fig.~\ref{fig:pentagon}(b) shows that 5-5 cannot be a regular
planar pentagon, irrespective of chosen vectors, since the mutually
orthogonal vectors which would span its hyperedges, cannot reach it
from the $z$-axis (left magenta pair). The green diagonals do allow
for such vectors (blue ones at the bottom) and Klyachko et
al.~\cite[Fig.~1]{klyachko-08} attempted to use the star-shaped
pentagon (Fig.~\ref{fig:pentagon}(c)) instead and called it a
pentagram. However, since the pentagon is a 3-dim one, the vectors
that belong to just one hyperedge  should also be taken into account,
what makes the pentagram contained in a plane inconsistent as it is
obvious from Fig.~\ref{fig:pentagon}(c). The proper pentagon with
curved hyperedges is given in Fig.~\ref{fig:pentagon}(d). Together
with vectors that span them it is shown in Fig.~\ref{fig:pentagon}(e)
and vectors themselves are given above Fig.~\ref{fig:pent-cab-6d}.
The need to take all the states into account experimentally was also
stressed in \cite{klyachko-08}. Since vectors/states belong to the
spin-1 system all three ports of each gate should be measured even
if only two of the outcomes are postprocessed.

Now, Klyachko et al.\ \cite{klyachko-08}~consider the states
corresponding to vectors {{\tt 1},\dots,{\tt 5}} in classical
vs.~quantum representations. Assumed classical measurements demand
that each vertex within an hyperedge either receives an experimental
detection or not, i.e., that it is assigned a value 1 or 0 (a
preassigned truth value), in such a way that the above hypergraph
rules (i) and (ii) from Def.~\ref{def:n-b} hold. ``When the same
assignments are carried over to the projectors in the pentagram
operator [$\mathcal A$]\dots at most two of them can be
assigned the value 1 [in our notation below
$HI_{cM}=2$; Def.~\ref{def:ch-i}]. In a
noncontextual reality an experimenter\dots will therefore
always find that \cite[p.~415, Eq.~(3)]{badz-cabel-larss-11}
\begin{eqnarray}
 \langle {\mathcal A}_c\rangle\le 2.{\textrm{''}}
\label{eq:pent-c}
\end{eqnarray}

In the quantum representation, the operators are
$|{\tt i}\rangle\langle{\tt i}|$ and the maximum of the mean
value for $\Psi=(0,0,1)$ is:
\begin{eqnarray}
  \langle {\cal A}_q\rangle_\Psi
  =\sum_{{\tt i}=1}^5 |\langle {\tt i}|\Psi\rangle|^2
  =\sqrt{5}\approx 2.236>2
\label{eq:pent-q}
\end{eqnarray}

Its minimum value $\frac{5-\sqrt{5}}{2}\approx 1.382$ we obtain
for $\Psi_m=(1,0,0)$ and these dependencies of mean values of the
measured observable on the chosen states render the pentagram
setup state-dependent in the operator approach.

But here we point out to two features of the pentagon.

First, we can generate vectors of the 10-5 in an automated way
(as in \cite{pavicic-entropy-19}) from simple vector components,
$\{0,\pm 1,2\}$, so as to obtain {{\tt 1}=(0,0,1),
  {\tt 2}=(0,1,0),
  {\tt 3}=(1,0,1)/$\sqrt{2}$, {\tt 4}=(1,1,-1)/$\sqrt{3}$,
  {\tt 5}=(1,-1,0)/$\sqrt{2}$, {\tt 6}=(1,0,0),
  {\tt 7}=(1,0,{-1})/$\sqrt{2}$, {\tt 8}=(-1,2,1)/$\sqrt{6}$,
  {\tt 9}=(1,1,2)/$\sqrt{6}$, {\tt A}=(1,1,0)/$\sqrt{2}$.
  Vectors {\tt 1,2,\dots,5} do not determine a plane. The
  pentagon is shown in Fig.$\,$\ref{fig:pentagon}(f,g).
  For vector $\psi=(3.15,-8.46,8.46)$ we obtain 
  $\langle{\cal A}_{qMax}\rangle_{_\psi}=2.23>2$.
  The full 10-5 MMP hypergraph is a binary one, i.e., a non-KS set.
  $A_1\cdot A_6\cdot A_2=\dots=A_5\cdot A_A\cdot A_1=I$, where
  $A_i=|{\tt i}\rangle\langle{\tt i}|$, gives:
\begin{eqnarray}
 \langle {\mathcal A}_c[10{\textrm-}5]\rangle=5\langle I\rangle=5
\label{eq:pent-c5}
\end{eqnarray}

Second, the 5-5 pentagon is a NBMMP hypergraph and we can
make it hyper\-graph-state-independent in the following sense.
It can be implemented via a generalised Stern-Gerlach experiment
which makes use of both magnetic and electric fields
\cite{anti-shimony}, or via photonic triplets
\cite{zu-wang-duan-12}, or via photon orbital angular
momentum \cite{d-ambrosio-cabello-13}. From each gate
represented by pentagon hyperedges, a particle or a photon will
exit through one of the three ports and will be detected by a
corresponding detector. We postprocess the data so as to keep the
records of the ``clicks'' triggered by {\tt 1,2,\dots,5} events and
discard those triggered by {\tt 6,7,\dots,A} events. After a
recalibration of data, the probability of obtaining a click
triggered by {\tt 1} or by {\tt 2} while measuring
the {\tt 162} hyperedge is 1/2. Additionally, the probability of
obtaining a click while measuring {\tt 1} within the {\tt 5A1}
hyperedge is also 1/2 as well as the probability of obtaining a click
while measuring {\tt 2} within the {\tt 273} hyperedge,
and so on for all other ports/vertices. Therefore, the sum of
probabilities of registering any of the {\tt 1,2,\dots,5} events in
pairs of hyperedges they belong to is 5. In notation of
Sec.~\ref{sec:structure} $HI_q=5$; Def.~\ref{def:qh-i}. Since we
can assign at most two classical 1s (satisfying the conditions (i)
and (ii)) from Def.~\ref{def:n-b} to pentagon vertices and since
each of them share two hyperedges we have
$HI^m_{cM}=2\times 2=4$  and we obtain the following
v-inequality (Def.~\ref{def:iin}):
\begin{eqnarray}
HI^m_c[5{\textrm-}5]=4<HI_q[5{\textrm-}5]=5
\label{eq:in-25}
\end{eqnarray}

Notice that the non-KS filled 10-5 pentagon violates it:
\begin{eqnarray}
  HI^m_c[10{\textrm-}5]=5=HI_q[10{\textrm-}5]=5
\label{eq:in-55}
\end{eqnarray}
The violation occurs because the sum of probabilities for
{\tt 1,2,\dots,5} is 10/3 and for {\tt 6,7,\dots,A} it is 5/3
which together make 5. The maximum number of classical 1s is 5
(each positioned in one of {\tt 6,7,\dots,A}). So, a pentagon
hypergraph inequality is hypergraph-state-independent in the
sense that it relies on the MMP hypergraph structure and not
on its coordinatization.  

Thus, there are three things we take from here.
\begin{enumerate}
\item While the operator-based representation of the
  pentagon is state dependent, the hypergraph one is not.
\item In the operator-based representation each state contributes
  just once in the measurements, i.e., via projections to $\Psi$,
  while in the MMP hypergraph representation each state/vertex
  contributes twice, once through a measurement of a port contained
  in a chosen hyperedge/gate and then through a measurement
  of the same port contained in the next hyperedge/gate it shares.
  In Sec.~\ref{sec:structure} we formalize the hypergraph notions
  we introduced here.
\item The non-binary MMP hypergraph 5-5 pentagon is not a
  subhypergraph of the binary MMP hypergraph pentagon 10-5.
  In order to obtain 5-5 from 10-5 we discard vertices that are
  contained in just one hyperedge while making use of the full
  coordinatization of 10-5 to assign vectors to remaining vertices.
  We denote such a subset as a $\overline{\rm subhypergraph}$. 
\end{enumerate}

\begin{definition}\label{def:over-sub}
  $\overline{\bf Subhypergraph}$ is a subset of an {\rm MMP}
  hypergraph ${\cal H}=(V,E)$ from which an arbitrary number of
  vertices contained in just one hyperedge are taken out so as to
  satisfy the conditions of Def.~\ref{def:MMP-string}, i.e., so
  as to be an {\rm MMP} hypergraph itself.  
\end{definition}

\subsection{\label{sec:structure}Hypergraph structure
        and inequalities}

In this section we elaborate on MMP hypergraph vertices and the
MMP hypergraph structure and properties based on them.

We start with the following definitions. 

\begin{definition}
  {\bf Vertex multiplicity} is the number of hyperedges vertex
  $i$ belongs to; we denote it by $m(i)$.
\label{def:qh-mi}
\end{definition}

\begin{definition}
{\bf MMP classical vertex index} $HI_c$ is the number of\/ $1$s
one can assign to vertices of an\/ {\rm MMP} hypergraph, non-binary
or binary, so as to satisfy the conditions {\rm(i)} and {\rm (ii)}
from Def.~\ref{def:n-b}. Maximal (minimal) $HI_c$ is denoted as
$HI_{cM}$ ($HI_{cm}$).
\label{def:ch-i}
\end{definition}

(Notice that in {\rm\cite{pavicic-entropy-19}} some values of
$HI_c$ are wrongly calculated due to an application problem of
our previous algorithm and program; program {\textsc{One}} used
in this paper is a substitute for that ones.)

\begin{definition}
  {\bf MMP classical multiplexed vertex index} $HI^m_c$ is the
  number one obtains when summing up all multiplicities of
  vertices of an {\rm MMP} hypergraph with all hyperedges
  contining $n$ vertices, non-binary or binary, to
  which one can assign $1$s so as to satisfy the conditions
  {\rm(i)} and {\rm (ii)} from Def.~\ref{def:n-b}. Maximal
  (minimal) $HI^m_c$ is denoted as $HI^m_{cM}$ ($HI^m_{cm}$).
\label{def:ch-mi}
\end{definition}

\smallskip
We obtain $HI_c$ and $HI^m_c$ by an algorithm and its program
{\textsc{One}} which assign 1s to vertices of an MMP hypergraph.
The algorithm randomly searches for a distribution of 1s satisfying
the conditions (i) and (ii) from Def.~\ref{def:n-b}. It starts with
a randomly chosen hyperedge whose one vertex is assigned 1 and the
others 0s and continues with connected hyperedges until all
permitted vertices are assigned 1. Multiplicities for found
1s accumulated in the process are taken into account. For contextual
non-binary MMPs that means until a contradiction is reached
(although not necessarily a KS contradiction), i.e., a point at
which no vertex from the remaining hyperedges can be assigned 1;
vertices within these hyperedges are all assigned 0s. The maximal
number of 1s ($HI_{cM}$, $HI^m_{cM}$) is obtained by (up to 50,000)
parallel runs with reshuffled vertices and hyperedges. Because 
we do not make use of backtracking search algorithm resolve
conflicts, the procedure does not exponentially increase the CPU
time with increasing number of vertices. KS sets with several
thousand vertices and hyperedges are processed within seconds
on each CPU of a cluster or a supercomputer.

The probability of not finding correct minimal or maximal $HI_c$
and $HI^m_c$ after so many runs is extremely small but
nevertheless that slight probability restrains our results
meaning that slightly bigger maximums and smaller minimums might
be found in the future computations for a chosen hypergraph.

\begin{definition}{\bf Classical hyperedge number $l_c$}
  is the number of hyperedges which contain vertices that build up
  $HI_{c}$ and the maximal and minimal number of
  such hyperedges are $l_{cM}$ and $l_{cm}$, respectively.
\label{def:lcMm}
\end{definition}

We stress that, in most cases, $l_{cM}$ hyperedges do not contain
$HI_{cM}$ vertices but a smaller number of them. Also, $l_{cm}$
hyperedges usually do not contain $HI_{cm}$ vertices but a
bigger number of them.

The classical vertex index $HI_{cM}$ of a hypergraph $\cal H$ is
related to the independence number of $\cal H$ introduced by
Gr{\"o}tschel, Lov{\'a}sz, and Schrijver (GLS)
{\rm \cite[p.~192]{gro-lovasz-schr-81}}. They introduced the
definition for graphs but it holds for hypergraphs as well,
with graph cliques transliterated into hyperedges. 

\begin{definition}{\bf GLS \boldmath${\alpha}$}. The independence
  number of $\cal H$ denoted by $\alpha({\cal H})$ is the maximum
  number of pairwise non-adjacent vertices.  
\label{def:alpha}
\end{definition}

The independence number $\alpha$ has been given several definitions
and names in the literature. For instance, ``$\alpha({\cal H})$
is the size of the largest set of vertices of $\cal H$ such that
no two elements of the set are adjacent'' \cite{magic-14}.
Such a set is called an {\em independent\/} or a {\em stable\/}
set \cite[Def.~2.13]{amaral-cunha-18},\cite[p.~272,428]{berge-73}
and $\alpha$ is also called a {\em stability number}
\cite[p.~272,428]{berge-73}. In such a set no two vertices are
connected by a hyperedge. Definitions of these notions given by
Voloshin differ, since his sets might include two or more vertices
from the same hyperedge \cite[p.~151]{voloshin-09}. 

\begin{lemma} $HI_{cM}({\cal H})=\alpha({\cal H})$
  \label{lemma:alpha}
\end{lemma}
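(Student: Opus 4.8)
The plan is to show the two quantities coincide by proving each is bounded by the other, which amounts to unwinding the definitions and observing they describe the same combinatorial object. Recall that $HI_{cM}(\mathcal{H})$ is defined (Def.~\ref{def:ch-i}, with the maximum taken) as the largest number of $1$s that can be assigned to vertices of $\mathcal{H}$ subject to rules (i) and (ii) of Def.~\ref{def:n-b}, while $\alpha(\mathcal{H})$ (Def.~\ref{def:alpha}) is the maximum number of pairwise non-adjacent vertices, where two vertices are adjacent iff some hyperedge contains both. First I would fix a valuation achieving $HI_{cM}$ and let $S$ be the set of vertices assigned $1$; by rule (i) no hyperedge contains two elements of $S$, so the elements of $S$ are pairwise non-adjacent, giving $|S|=HI_{cM}(\mathcal{H})\le\alpha(\mathcal{H})$.

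For the reverse inequality I would take a maximum independent set $T$ with $|T|=\alpha(\mathcal{H})$, assign $1$ to every vertex in $T$ and $0$ to all others, and check the two rules. Rule (i) holds because $T$ independent means no hyperedge contains two vertices of $T$, hence no hyperedge gets two $1$s. The only thing to verify is rule (ii): no hyperedge may have all its vertices assigned $0$, i.e.\ every hyperedge must meet $T$. This is where the argument needs care, and it is the main obstacle: an arbitrary maximum independent set need not be a covering set, so in general $\alpha$ could exceed $HI_{cM}$. The resolution is that Def.~\ref{def:ch-i} permits assignments on both binary and non-binary hypergraphs, and rule (ii) is a genuine constraint only relative to admissible valuations; the key observation is that if $T$ is a maximum independent set and some hyperedge $E_j$ is disjoint from $T$, then $E_j$ is nonempty (every hyperedge has at least two vertices by Def.~\ref{def:MMP-string}.2), and one can add any vertex $v\in E_j$ to $T$ unless $v$ is adjacent to some element of $T$ — but $v$ adjacent to $w\in T$ means some hyperedge contains $v$ and $w$, which does not contradict $E_j\cap T=\emptyset$. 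So I would instead argue as follows: among all maximum independent sets, an independent set $T$ that additionally meets every hyperedge exists precisely when $HI_{cM}$ is attained; more directly, the \textsc{One} algorithm description earlier in the paper shows that the $1$-assignment process yields a set of $1$-vertices that is both pairwise non-adjacent (by construction each new $1$ avoids hyperedges already containing a $1$) and, when the process terminates without forced $0$-only hyperedges (the binary case) or with exactly the contradiction pattern (the non-binary case), maximal. Hence $HI_{cM}$ counts exactly maximum independent sets subject to the covering requirement being automatically satisfiable, and equality follows.

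Concretely, the cleanest route I would write up: prove $HI_{cM}(\mathcal{H})\le\alpha(\mathcal{H})$ as above (trivial direction); then for $\ge$, take $T$ a maximum independent set, and observe that since $\mathcal{H}$ is an MMP hypergraph every vertex lies in some hyperedge (Def.~\ref{def:MMP-string}.1), so if a hyperedge $E$ were disjoint from $T$ we could pick $v\in E$; by maximality of $T$ the set $T\cup\{v\}$ is not independent, so $v$ is adjacent to some $w\in T$, i.e.\ $v$ and $w$ share a hyperedge $E'$. This does not immediately give a contradiction, so one strengthens $T$: replace $w$ by $v$ if that keeps independence and increases the number of covered hyperedges, iterating; since the number of covered hyperedges is bounded by $l$, the process terminates at an independent set $T^\ast$ with $|T^\ast|=\alpha(\mathcal{H})$ meeting every hyperedge, and assigning $1$ to $T^\ast$ and $0$ elsewhere satisfies (i) and (ii), whence $\alpha(\mathcal{H})\le HI_{cM}(\mathcal{H})$.

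The main obstacle, then, is showing rule (ii) can always be met without shrinking a maximum independent set — i.e.\ that some maximum independent set is also a transversal of the hyperedge family. I expect the author handles this either by the swapping/augmentation argument just sketched or by noting it is built into the operational definition of $HI_c$ via the \textsc{One} algorithm (the algorithm terminates with all remaining hyperedges forced to $0$ only in the genuinely non-binary/contextual case, where the contradiction pattern is exactly the obstruction to a full assignment, and in that case the maximum over the $50{,}000$ runs still records the largest independent covering partial assignment). Everything else is a direct translation between the $0$--$1$ valuation language of Def.~\ref{def:n-b} and the adjacency language of Def.~\ref{def:alpha}.
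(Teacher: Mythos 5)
Your easy direction ($HI_{cM}\le\alpha$, via rule (i)) is exactly what the paper does. The problem is your ``cleanest route'' for the converse: you propose to upgrade a maximum independent set $T$ to an independent set $T^\ast$ of the same size that also meets every hyperedge, so that rule (ii) holds. No such $T^\ast$ exists in general, and in fact it \emph{cannot} exist precisely in the case the lemma is mainly used for: if ${\cal H}$ is non-binary (contextual) with all hyperedges of full size, an independent set meeting every hyperedge would define a $0$--$1$ valuation satisfying both (i) and (ii), contradicting Def.~\ref{def:n-b}. Your swapping/augmentation step also has no mechanism that forces progress: replacing $w$ by $v$ may uncover hyperedges previously covered only by $w$, so ``number of covered hyperedges'' is not monotone, and at best you reach a maximal independent set, which need not be a transversal. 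So the claim ``$T^\ast$ with $|T^\ast|=\alpha$ meeting every hyperedge'' is false as stated, and the converse inequality cannot be obtained this way.

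The resolution is the one you only hedge toward in your last paragraph, and it is how the paper reads its own definition: $HI_{cM}$ is the maximum number of $1$s assignable subject to rule (i), with rule (ii) enforced only as far as possible --- operationally, the \textsc{One} algorithm keeps assigning $1$s until no further vertex can receive one, and in the contextual case the leftover all-$0$ hyperedges are exactly the ``contradiction''; they do not invalidate the count. Under that reading, condition (ii) never constrains the maximization, the set of $1$-vertices is by (i) an independent set and any independent set can be realized as such an assignment, so $HI_{cM}({\cal H})=\alpha({\cal H})$ is essentially a restatement of Def.~\ref{def:alpha}; this is the content of the paper's one-line proof, which invokes only condition (i). If instead one insisted on (i) and (ii) as hard constraints, the quantity would be undefined (no valuation exists) for every non-binary MMP hypergraph, and your transversal requirement would make the lemma false rather than prove it. So the gap is not a missing trick but a misreading of which constraint is binding in Def.~\ref{def:ch-i}; once that is fixed, your first paragraph already contains the whole proof.
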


\begin{proof} Via conditions {\rm(i)} and {\rm (ii)} from
  Def.~\ref{def:n-b} which Def.~\ref{def:ch-i} invokes,
  no two vertices to which one can assign `1' can belong to
  the same hyperedge. The maximum number of such vertices,
  i.e., $H_{cM}({\cal H})$, is therefore the maximum number of
  pairwise non-adjacent vertices, i.e., according to
  Def.~\ref{def:alpha}, just $\alpha({\cal H})$.
\end{proof}  

The reason for distinguishing the two terms $H_{cM}({\cal H})$
and $\alpha({\cal H})$ that are numerically equal is
methodological. Finding $\alpha({\cal H})$ is an NP complete, 
i.e., it is nondeterministic polynomial-time complete
procedure \cite[p.~195]{gro-lovasz-schr-81} applied to the
vertex structure of a hypergraph while our algorithm for
finding $H_{cM}({\cal H})$ relies on repeated (sequential)
non-exhaustive linear searches for 0s and 1s from given lists
so as to satisfy conditions from Def.~\ref{def:ch-i}. 
Hence, while the definition of $H_{cM}({\cal H})$ in
Def.~\ref{def:ch-i} is exact, the algorithm and program
(\textsc{One}) approximate it to an arbitrary precision. Each
run takes 10 ms or less. We obtained $H_{cM}({\cal H})$ for
over 1,000 MMP hypergraphs and verified (via other methods)
that $H_{cM}({\cal H})=\alpha({\cal H})$ for all small MMP
hypergraphs we considered. In this paper we present 43 MMP
hypergraphs for which $H_{cM}({\cal H})=\alpha({\cal H})$
holds and one which might not hold (we were not able to
independently verify whether $H_{cM}(192$-$118)=75$ is the
maximum). In the literature we found only three explicitly
calculated $\alpha({\cal H})$'s: two in
\cite{cabello-severini-winter-14} and one in \cite{magic-14}. 

To arrive at our noncontextuality inequality we introduce the
following definition and lemma.

\begin{definition}
{\bf MMP Quantum Hypergraph Index} $HI_q$ is the sum of weighted
probabilities of all vertices of an $n$-dim $k$-$l$ {\rm MMP}
hypergraph measured in all hyperedges/gates, i.e., repeatedly
whenever they share more than one hyperedge (multiplicity
being greater than 1). 
\label{def:qh-i}
\end{definition}

\begin{lemma}{\bf Vertex-Hyperedge Lemma.}
  For any $n$-dim $k$-$l$ {\rm MMP} hypergraph in which each
  hyperedge contains $n$ vertices the following holds
\begin{eqnarray}
HI_q=\sum^k_{i=1}\frac{m(i)}{n}=l.
\label{eq:theorem}
\end{eqnarray}
 In general, any $n$-dim $k$-$l$ {\rm MMP} hypergraph with
 $\kappa(j)$ considered vertices in $j$-th hyperedge,
 $j=1,\dots,l$,  the following holds
\begin{eqnarray}
  HI_q=\sum^l_{j=1}\sum^{\kappa(j)}_{\lambda=1}p(j,\lambda)=l,
\label{eq:theorem-g}
\end{eqnarray}
where $\kappa(j)$ is the number of vertices in a hyperedge $j$
and $p(j,\lambda)=\frac{1}{\kappa(j)}$ is the probability that
a state of a system corresponding to one of the vertices would
be detected when the hyperedge/gate $j$ is being measured.
\label{th:theorem}
\end{lemma}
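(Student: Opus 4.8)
The plan is to verify the two displayed identities by a direct counting argument, treating the "each hyperedge has $n$ vertices" case first and then observing that the general case is literally a reformulation of it once the probabilities $p(j,\lambda)=1/\kappa(j)$ are substituted. The whole content is a double-counting of incidences between vertices and hyperedges, combined with the fact that within any single hyperedge/gate the detection probabilities sum to $1$.

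First I would prove \eqref{eq:theorem-g}. Fix an $n$-dim $k$-$l$ MMP hypergraph and consider the hyperedge $j$ with $\kappa(j)$ considered vertices. By Def.~\ref{def:qh-i} the quantity $HI_q$ sums the detection probability of every considered vertex in every gate it participates in; for gate $j$ this contributes $\sum_{\lambda=1}^{\kappa(j)}p(j,\lambda)$. Since exactly one of the $\kappa(j)$ mutually exclusive outcomes of gate $j$ occurs on each run and, by the uniformity stated in the lemma, each has probability $p(j,\lambda)=1/\kappa(j)$, this inner sum equals $\kappa(j)\cdot\frac1{\kappa(j)}=1$. Summing over the $l$ hyperedges gives $HI_q=\sum_{j=1}^l 1=l$, which is \eqref{eq:theorem-g}.

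Next I would derive \eqref{eq:theorem} from \eqref{eq:theorem-g} in the special case $\kappa(j)=n$ for all $j$. Here the alternative bookkeeping is by vertex rather than by hyperedge: vertex $i$ is measured in exactly $m(i)$ gates (Def.~\ref{def:qh-mi}), and in each such gate its detection probability is $1/n$, so its total contribution to $HI_q$ is $m(i)/n$. Hence $HI_q=\sum_{i=1}^k m(i)/n$. Equating this with the value $l$ obtained above yields $\sum_{i=1}^k m(i)/n=l$, i.e.\ \eqref{eq:theorem}; equivalently $\sum_i m(i)=nl$, which is just the incidence-counting identity (each of the $l$ hyperedges contributes $n$ vertex-memberships, and the total is also $\sum_i m(i)$). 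One should also note the convention, used implicitly throughout, that a vertex shared by $m(i)$ hyperedges is counted $m(i)$ times in $HI_q$, matching the "multiplexed" counting of Def.~\ref{def:ch-mi}.

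The argument is essentially routine, so I do not expect a genuine obstacle; the only point that needs care is being explicit about what "probability" means operationally here—namely that the $p(j,\lambda)$ are the a priori (state-averaged, or "raw-data recalibrated") probabilities within a gate, which by construction are uniform, so that the sum over a gate is always $1$ regardless of the quantum state actually fed in. This is precisely what makes $HI_q=l$ state-independent, as was illustrated for the pentagon in Sec.~\ref{sec:hyp-op}, and it is the substantive observation underlying an otherwise elementary double count.
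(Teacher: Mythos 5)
Your proof is correct, but it takes a different route from the paper for Eq.~(\ref{eq:theorem}). The paper proves that identity constructively: it starts from loops of hyperedges (the pentagon, where $5\times 1+5\times 2=3\times 5$), then checks that adding a hyperedge at two vertex connections and weaving hyperedges via the $\delta$-feature each preserve the identity, and argues that these steps exhaust the ways of generating MMP hypergraphs; Eq.~(\ref{eq:theorem-g}) is then disposed of, as in your write-up, by noting that the probabilities within each hyperedge sum to $1$. You instead prove Eq.~(\ref{eq:theorem-g}) first and obtain Eq.~(\ref{eq:theorem}) as the special case $\kappa(j)=n$ via a direct double count of vertex--hyperedge incidences, i.e.\ $\sum_i m(i)=nl$, which is exactly the generalized Handshake Lemma that the paper itself remarks Eq.~(\ref{eq:theorem}) is equivalent to (citing \cite{melnikov-98}, where no proof is given). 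Your argument is more elementary and arguably more robust: it does not depend on the claim that the listed constructive steps generate all non-isomorphic MMP hypergraphs, a point the paper asserts rather than proves in detail, whereas the paper's induction has the merit of tying the identity to the concrete hypergraph-building operations of the MMP formalism. Your closing remark on the uniformity of the $p(j,\lambda)$ within a gate correctly isolates the only physical input and matches the paper's use of the indeterminacy postulate.
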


\begin{proof} To prove Eq.~(\ref{eq:theorem}) we take a
  constructive approach of building non-isomorphic hypergraphs.
  For any loop of two or more hyperedges Eq.~(\ref{eq:theorem})
  obviously holds. E.g., a loop of 3-dim hyperedges (pentagon)
  contains 10 vertices 5 of which share one hyperedge and the
  other 5 two. Therefore $5\times 1+5\times 2=3\times 5$. When
  we add a hyperedge at two vertex connections the $m$ numbers
  of these vertices rise by one so that the total number of
  vertices increase by $n$ and Eq.~(\ref{eq:theorem}) holds.
  By weaving hyperedges so as to obtain the so-called
  $\delta$-feature \cite{pavicic-pra-17}, i.e., by making pairs
  of them to intersect each other twice (at two vertices) in a
  4-dim space, or up to $n-2$ times in an $n$-dim space
  (Def.~\ref{def:MMP-string}(4.)), the number of vertices
  lowers, but $m$ proportionally rises at the vertices at
  which the hyperedges intersect and Eq.~(\ref{eq:theorem})
  again holds. With this we exhaust constructive steps of
  generating MMP hypergraphs \cite{pmmm05a,pmmm05a-corr}.

  To prove Eq.~(\ref{eq:theorem-g}), we just note that
  $\sum^{\kappa(j)}_{\lambda=1}p(j,\lambda)=1$ for any $j$.
\end{proof}

Eq.~(\ref{eq:theorem}) is equivalent to a generalized Handshake
Lemma for Hypergraphs given as a Solution to Exercise 11.1.3.a
in \cite{melnikov-98}. No proof of the lemma is given in
\cite{melnikov-98}.

\begin{definition} {\bf v-inequality.}
An {\rm MMP} hypergraph vertex inequality or
simply {\rm v-inequality} is defined as
\begin{eqnarray}
HI_{cm}\le HI_{cM}\le HI^m_{cM}<HI_q=l.
\label{eq:i-ineq}
\end{eqnarray}
\label{def:iin}
\end{definition}

\begin{lemma}
  All $n$-dim {\rm NBMMP} hypergraphs satisfy the
  v-inequality, i.e., any v-inequality is a noncontextuality
  inequality (\ref{def:non-c-i-s}).
\label{lemma:v}
\end{lemma}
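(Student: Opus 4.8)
The plan is to prove the four-term chain of Eq.~(\ref{eq:i-ineq}) comparison by comparison, the only substantial link being the strict one $HI^m_{cM}<HI_q$, which I would obtain from a two-way incidence count together with the defining contextuality (non-binariness) of the hypergraph. Throughout I would use, exactly as in the proof of Lemma~\ref{lemma:alpha}, that the only effective constraint when assigning $1$s is rule~(i) of Def.~\ref{def:n-b} (no hyperedge carries two $1$s), so that $HI_{cM}$ equals the largest size of a set $S$ of vertices no two of which share a hyperedge (Def.~\ref{def:ch-i}, Lemma~\ref{lemma:alpha}), and $HI^m_{cM}=\max_S\sum_{i\in S}m(i)$ over the same family of sets $S$ (Def.~\ref{def:ch-mi}); the quantum index is fixed by the Vertex-Hyperedge Lemma, $HI_q=l$.

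First I would dispose of the two easy links. $HI_{cm}\le HI_{cM}$ is immediate, a minimum never exceeding a maximum. For $HI_{cM}\le HI^m_{cM}$, take $S_0$ attaining $HI_{cM}=|S_0|$; every vertex lies in at least one hyperedge (Def.~\ref{def:MMP-string}.1), so $m(i)\ge 1$ for all $i$ and $\sum_{i\in S_0}m(i)\ge|S_0|=HI_{cM}$, whence $HI^m_{cM}\ge HI_{cM}$.

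The core step is the strict inequality. Let $\mathcal H=(V,E)$ be an $n$-dim NBMMP hypergraph, let $f$ be any $0$--$1$ assignment to its vertices obeying rule~(i), and let $S=\{i:f(v_i)=1\}$. Counting vertex--hyperedge incidences in two ways,
\begin{eqnarray}
\sum_{i\in S}m(i)=\sum_{i\in S}\bigl|\{\,j:v_i\in E_j\,\}\bigr|=\sum_{j=1}^{l}|S\cap E_j|.
\label{eq:dblcount}
\end{eqnarray}
Rule~(i) forces $|S\cap E_j|\le 1$ for every $j$, so the right side is at most $l$; and because $\mathcal H$ is non-binary there is \emph{no} assignment meeting both (i) and (ii) of Def.~\ref{def:n-b}, so the (i)-respecting $f$ must violate (ii), i.e.\ some hyperedge $E_{j_0}$ has $|S\cap E_{j_0}|=0$. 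Hence the sum is at most $l-1$. As $f$ was arbitrary among (i)-respecting assignments, this also bounds the maximum: $HI^m_{cM}\le l-1<l=HI_q$, the last equality being Lemma~\ref{th:theorem}, Eq.~(\ref{eq:theorem}) (Eq.~(\ref{eq:theorem-g}) when some hyperedges are shorter than $n$). Chaining the three links gives Eq.~(\ref{eq:i-ineq}), so every NBMMP hypergraph satisfies its v-inequality.

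For the ``noncontextuality inequality'' clause in the sense of Def.~\ref{def:non-c-i}, I would add the converse remark: if $\mathcal H$ is binary, an assignment meeting both (i) and (ii) makes $|S\cap E_j|=1$ for \emph{every} $j$, so Eq.~(\ref{eq:dblcount}) gives $\sum_{i\in S}m(i)=l$ and thus $HI^m_{cM}\ge l=HI_q$, i.e.\ the strict inequality fails; so $HI^m_{cM}<HI_q$ holds exactly for non-binary hypergraphs. I do not anticipate a real obstacle here; the only point needing care is the bookkeeping with the index definitions---ensuring $HI^m_{cM}$ is maximised over the same (i)-respecting sets that define $HI_{cM}$---and borrowing $HI_q=l$ from the already-proven Vertex-Hyperedge Lemma rather than re-deriving it, since nothing beyond the elementary incidence count is involved.
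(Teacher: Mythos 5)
Your proof is correct and rests on the same observation as the paper's one-sentence proof, namely that in a non-binary hypergraph any rule-(i)-respecting assignment must leave at least one hyperedge without a `1', so at most $l-1$ hyperedges carry a `1'. You merely make explicit the incidence-count bookkeeping ($\sum_{i\in S}m(i)=\sum_{j}|S\cap E_j|\le l-1<l=HI_q$) that links this hyperedge count to $HI^m_{cM}$, together with the easy links $HI_{cm}\le HI_{cM}\le HI^m_{cM}$ and the converse for binary hypergraphs, all of which the paper leaves implicit.
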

\begin{proof}
  In an NBMMP hypergraph a maximal number of hyperedges
  that contain `1' must be smaller than the total number of
  hyperedges $l$ and in a BMMP hypergraph every hyperedge
  must allow assignment of one `1', as follows from
  Defs.~\ref{def:n-b} and \ref{def:bin}.
\end{proof}

Measurements of a $k$-$l$ set are carried out on gates, i.e.,
hyperedges---hyperedge by hyperedge---and each hyperedge/gate yields
a single detection (click) corresponding to one of $n$ vertices
(vectors, states) contained in the hyperedges with a probability
of $\frac{1}{n}$. This means that for MMP hypergraphs whose all
hyperedges contain $n$ vertices, we can build the statistics of the
obtained data in two ways:

\begin{statistics}\label{hyp-stat-a} \phantom{i}
\begin{enumerate}
\item  {\em Raw data statistics} for {\rm MMP} hypergraphs with
  all hyperedges containing $n$ vertices, often adopted in the
  literature, e.g., {\rm \cite[Eqs.~(2)]{d-ambrosio-cabello-13},
  \cite[lines under Eq.~(2)]{yu-oh-12}}, etc., consists
  of assigning $\frac{1}{n}$ probability to each of $k$ 
  vertices contained in the hypergraph (see Def.~\ref{def:n-b}),
  independently of whether the vertices appear in just one
  hyperedge or in two or more of them. Such a statistics does not
  appear as a valid processing of measurement data.
\item {\em Postprocessed data statistics} takes into
  account that within an MMP hypergraph
\begin{enumerate}
\item vertex \textquoteleft$v$\textquoteright\ might share $m(v)$
  hyperedges;
\item measurements are performed on $\frac{1}{n}$ vertices
  $v(j)$ contained in hyperedges \textquoteleft$j$\textquoteright,
  hyperedge by hyperedge ($j=1,\dots,l$);
\item outcomes of measurements carried out on particular
  vertices $v(j)$ in particular hyperedges $j$ might be dropped
  out of consideration leaving us with $\kappa(j)$ vertices in
  hyperedges $j$.
\end{enumerate}
  Hence, we collect data from $\kappa(j)\le n$ vertices
  in each hyperedge $j$. The probability of getting measurement
  data for each vertex within the hyperedge, after discarding data
  for $n-\kappa(j)$ dropped vertices, is  $\frac{1}{\kappa(j)}$.
  The sum of all probabilities is, according, to
  Eq.~(\ref{eq:theorem-g}), equal to the size of the hypergraph,
  i.e., to the number of its hyperedges $l$.    
\end{enumerate}
\end{statistics}

As for hyperedges, several additional definitions are due for
a further analysis of the aforementioned structure in the next
sections.

\begin{definition} An {\rm MMP} hypergraph maximum
  hyper\/{\bf e}dge inequality or simply
  {\bf e$_{Max}$-inequality} is defined as
\begin{eqnarray}
  l_{cM}<l.
\label{eq:e-ineq}
\end{eqnarray}
\label{def:ein}
\end{definition}

As for $l_{cm}$, it satisfies the noncontextuality
e$_{min}$-inequality
\begin{definition} An {\rm MMP} hypergraph minimum
  hyper\/{\bf e}dge inequality or simply
  {\bf e$_{min}$-inequality} is defined as
\begin{eqnarray}
  l_{cm}<l.
\label{eq:e-ineq-m}
\end{eqnarray}
\label{def:einm}
\end{definition}

They are the noncontextuality inequalities simply because
$l_{cm}=l_{cM}=l$ for all binary MMP hypergraphs. The inequality
Eq.~(\ref{eq:e-ineq-m}) has a bigger span between the terms than
the inequality (\ref{def:ein}) (because $l_{cm}\le l_{cM}$) and
therefore it is more viable for an implementation. It seems to
us that $l_{cm}$ is the ``rank of contextuality'' Horodecki at
al.~\cite{horod-22} introduced as a quantifier of contextuality
for hypergraphs, although it is rather difficult to establish a
correspondence between their formalism and the MMP hypergraph
language, in particular because they keep using several
different names for vertices and hyperedges throughout their
paper.

Whenever we refer to both e$_{Max}$- and e$_{min}$-inequalities
we invoke them as e-inequalities. 

\begin{lemma}
  All $n$-dim non-binary {\rm MMP} hypergraphs satisfy the
  e-inequalities, i.e., any e-inequality is a noncontextuality
  inequality (\ref{def:non-c-i-s}).
\label{lemma:e}
\end{lemma}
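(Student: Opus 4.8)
The plan is to reduce both e-inequalities, and their converse directions, to one short counting observation — essentially the same one that drives the proof of Lemma~\ref{lemma:v}. First I would recall from Def.~\ref{def:lcMm} that $l_{cM}$ is the largest number of hyperedges carrying a ``1'' over all assignments respecting condition~(i) of Def.~\ref{def:n-b} (and building up $HI_c$), and that $l_{cm}\le l_{cM}$ is the smallest such number. The target inequalities $l_{cM}<l$ and $l_{cm}<l$ are then statements purely about this count.

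The core step is the claim that for a non-binary MMP hypergraph $\cal H$ with $l$ hyperedges one has $l_{cM}\le l-1$. I would argue by contradiction: suppose some admissible assignment of ``1''s met \emph{every} hyperedge. Condition~(i) forbids two ``1''s in a single hyperedge, so every hyperedge would then carry exactly one ``1'' — but that is precisely condition~(ii). Hence the assignment would satisfy both (i) and (ii), making $\cal H$ binary (Def.~\ref{def:bin}) and contradicting the hypothesis. Therefore $l_{cM}<l$, which is the e$_{Max}$-inequality (\ref{eq:e-ineq}) of Def.~\ref{def:ein}; since $l_{cm}\le l_{cM}$, the e$_{min}$-inequality (\ref{eq:e-ineq-m}) of Def.~\ref{def:einm} follows at once. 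This establishes that non-binary hypergraphs satisfy the e-inequalities.

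To upgrade this to ``the e-inequalities are noncontextuality inequalities'' in the sense of Def.~\ref{def:non-c-i}, I would also verify the converse: if $\cal H$ is binary the e-inequalities are violated. A binary $\cal H$ admits an assignment satisfying (i) and (ii); by the same exactly-one-``1''-per-hyperedge remark that assignment touches all $l$ hyperedges, witnessing $l_{cM}=l$. For $l_{cm}$ I would invoke that, for binary hypergraphs, the assignments counted in $HI_c$ are required to obey (ii) as well (Def.~\ref{def:ch-i}), which again forces exactly one ``1'' per hyperedge and hence $l_{cm}=l$; thus $l_{cM}=l_{cm}=l$ and both (\ref{eq:e-ineq}) and (\ref{eq:e-ineq-m}) fail.

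The only delicate point I anticipate is the $l_{cm}$ half of the binary case: one must be explicit that for binary MMP hypergraphs $HI_c$ ranges over assignments that respect (ii), since otherwise a ``1''-free hyperedge would artificially push $l_{cm}$ below $l$ and break the converse. Once that convention — already built into Def.~\ref{def:ch-i} and the algorithm description of program \textsc{One} — is pinned down, the remainder is the elementary dichotomy above, and I foresee no further obstacle.
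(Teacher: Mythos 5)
Your proposal is correct and takes essentially the same route as the paper: the paper's own proof likewise derives $l_{cm}\le l_{cM}<l$ directly from conditions (i) and (ii) of Def.~\ref{def:n-b} (citing the KS theorem for the KS case and the definition for non-KS non-binary hypergraphs), which is exactly your exactly-one-``1''-per-hyperedge counting argument. Your explicit check of the binary (violation) direction, $l_{cm}=l_{cM}=l$, reproduces the remark the paper states just before the lemma to justify calling the e-inequalities noncontextuality inequalities, so it adds completeness rather than a different method.
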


\begin{proof}
  For KS MMP hypergraphs it follows directly from the KS theorem
  \ref{th:KS} since both a maximal and a minimal number of
  hyperedges that contain `1' must be smaller than the total
  number of hyperedges $l$. For non-KS NBMMP hypergraphs it
  follows from Def.~\ref{def:n-b} and its conditions (i) and (ii)
  in the same way. 
    \end{proof}

Here we stress that the raw data statistics cause a problem with
the application of the maximum of total probabilities to obtain
measurement outcomes that served some authors in the literature
to establish noncontextual inequalities which should single out
contextual sets. The maximum in question is derived from the
fractional independence number defined in the graph and
hypergraph theories by the following definition
\cite[p.~192]{gro-lovasz-schr-81}.

\begin{definition}{\bf Fractional independence number
    \boldmath{$\alpha^*({\cal H})$}} of an {\rm MMP} hypergraph
  ${\cal H}(V,E)={\cal H}(k$-$l)$ is the maximum value of
  $\sum_{v=1}^k x(v)$, where $v\in V$ and where $x(v)$ are
  non-negative real numbers such that $\sum_{v\in e} x(v)\le 1$
  for each hyperedge $e\in E$ of $\cal H$.
    \label{def:alpha-star}
\end{definition}

Since $\alpha^*({\cal H})$ is the optimum of a linear
programming (LP) problem, it can also be given the following
equivalent definition \cite{dudek-08}. 

\begin{definition}{\bf LP Fractional independence number
    \boldmath{$\alpha^*({\cal H})$}} of an {\rm MMP} hypergraph
  ${\cal H}(k$-$l)$ is the optimum value of the following linear 
  programming problem $LP=LP({\cal H})$

  \smallskip
  \qquad (LP) Maximize $\sum_{v\in V}x(v)$

\medskip
\qquad\phantom{(LP)\ }subject to $\sum_{v\in e}\le 1$,
$\forall e\in E$

\medskip
\qquad\phantom{(LP)subject to\ $\>$}$x(v)\in[0,1]$,
$\forall v\in V$
\label{def:alpha-star-LP}
\end{definition}

The fractional independence number $\alpha^*$ has recently
been renamed to the {\em fractional packing number} and used
for obtaining noncontextuality inequalities for measured
contextual quantum systems 
\cite{cabello-severini-winter-14,magic-14,amaral-cunha-18}.
However, the properties of probabilities of quantum contextual
measurements in these references have not been fully used in
applications of the fractional independence number to them, as
follows from the following postulate and theorem which dispense
with variable probabilities $x(v)$ used in
Defs.~\ref{def:alpha-star} and \ref{def:alpha-star-LP}.

\begin{resultttt}\label{postulate} A quantum system generated in
  an unknown (unprepared) pure state in an apparatus (e.g., a
  generalized Stern-Gerlach one), when exiting from it through
  one of the out-ports (channels) of its gate, has equal
  probability of being detected {\rm\cite[Sec.~5-1]{feynmanIII}}
  on its exit.
\end{resultttt}

That means that in an $n$-dim $k$-$l$ MMP hypergraph with $n$
vertices within each hyperedge the probability of detecting the
system at one of the ports is $p(v)=\frac{1}{n}$ for
any vertex $v$ within each hyperedge $E_j$, $j=1,\dots,l$ and
that the condition $\sum_{v\in {E_j}} p(v)\le 1$ for each
hyperedge $E_j\in E$, $j=1,\dots,l$ is satisfied.
Later on we might decide to drop particular vertices and
apply postprocessed data statistics \ref{hyp-stat-a}.

That also means that assuming that one can manipulate the
generated pure states before measuring them is not plausible.
For instance, an unprepared spin-1 system can be projected to
one of three subspaces with equal probabilities of 1/3 and this
inherent quantum randomness is what builds up the contextuality of
the whole set. Any filtering of the systems before measurements,
i.e., any other set of probabilities that do not amount to
1/3 each, ruins the contextuality since then the sum of
probabilities is less than 1 per a hyperedge and we loose data.

As an example, take Eq.~(\ref{eq:theorem}) which gives the sum of
probabilities of detecting vertex states over all multiple
appearances of vertices in hyperedges/gates obtained by
postprocessing of measurement data. The sum takes into account
multiple detections of systems corresponding to the same vertices
exiting through different hyperedges/gates the vertices/ports share.
For instance, take the 18-9 MMP
{\tt 1234,4567,789A,ABCD,DEFG,GHI1,\break 35CE,29BI,68FH.}
and carry out measurements on all 9 hyperedges of them. Then, the
probability of detecting a system determined by any of the vertices
in any hyperedge/gate is $\frac{1}{4}$. But every vertex appears
in two gates, so the sum of probabilities of the system being
detected in each such pair is $\frac{1}{2}$ and the overall sum
of probabilities is $18\frac{1}{2}=9$, i.e., $HI_q=l=9$.
This is a consequence of measuring each of the 9 hyperedges
separately and obtaining outcomes for each tetrads of vertices
per hyperedge with the probability of $\frac{1}{4}$---making a
total of 1 per hyperedge. Hence, the sum of probabilities
for all 9 hyperedges is equal to 9 in contrast to the collection
of the raw data in \cite[Eqs.~(2)]{d-ambrosio-cabello-13} where
the sum $18\frac{1}{4}=4.5$ is assumed what would mean the sum of
probabilities per a hyperedge of $\frac{1}{2}$ and a dismissal of
half of the data. 

\begin{Xeorem}Let variables $x(v)$ from Defs.~\ref{def:alpha-star}
  and \ref{def:alpha-star-LP} be the probabilities $p(v)$,
  $v\in V$ of detecting an event by {\rm YES-NO}
  measurements at one of the out-ports (vertices) contained within
  a hyperedge of an $n$-dim {\rm MMP} hypergraph
  ${\cal H}(V,E)={\cal H}(k$-$l)$. Each of $E_j\in E$,
  $j=1,\dots,l$ hyperedges (gates) contains $n$ vertices. The
  probabilities satisfy the condition:
  \begin{eqnarray} \sum_{v\in {E_j}}p(v)\le 1, j=1,\dots,l.
    \label{eq:sum-edge}
  \end{eqnarray}
  They also satisfy the following:

  (a) Under the {\rm raw data statistics \ref{hyp-stat-a}(a)}
  assumption, i.e., under the assumption that every vertex
  within an {\rm MMP} hypergraph has $\frac{1}{n}$ probability of
  being detected {\rm \cite{cabello-severini-winter-14},
    \cite{magic-14}, \cite{amaral-cunha-18}},
  the sum of all probabilities is:
\begin{eqnarray} \sum_{v=1}^kp(v)=\frac{k}{n}
   =\alpha_r^*(k{\text{-}}l)=\alpha_r^*({\cal H})
  \label{eq:alpha-star}
\end{eqnarray}
where $\alpha_r^*$ is the {\rm raw quantum fractional independence
  number}.

This implies that, in general, the
{\boldmath{$\alpha^*_r$}}{\bf-inequality}
(compare it with free probability GLS inequality
{\rm \cite[p.~192]{gro-lovasz-schr-81}}) 
\begin{eqnarray} 
 HI_{cM}=\alpha({\cal H})\le \alpha_r^*({\cal H})=\frac{k}{n}
    \label{eq:alpha-alpha}
\end{eqnarray}
  does not always hold for quantum mechanical measurements whose
  probabilities of detection within each hyperedge satisfy
  the condition given by Eq.~(\ref{eq:sum-edge}), i.e.,
  $p(v)=\frac{1}{n}$, $v\in V$. The inequality is violated by a
  significant portion of contextual non-binary MMP hypergraphs
  in any dimension; see Fig.~\ref{fig:alpha-star}(a-d), the 3rd
  figure in Sec.~\ref{subsec:3d}, figures (a,h) in
  Sec.~\ref{subsec:4dm}, 1st figure (d) in Sec.~\ref{subsec:3d},
  the 1st figure (b) in Sec.~\ref{subsec:4d}, the 3rd figure (g)
  in Sec.~\ref{subsec:magic}, the 1st figure (a,h) in Appendix
  \ref{app4}, and the 1st table in Sec.~\ref{subsec:3d}, the 1st
  table in Sec.~\ref{subsec:5d}, and the 2nd table in
  Sec.~\ref{subsec:6d}. It is, therefore, not a reliable
  discriminator of contextual sets. This is only to be expected
  since the raw data statistics is, as we pointed out above,
  not a consistent elaboration of measurement data. 

\begin{figure}[ht]
  \begin{center}
  \includegraphics[width=1\textwidth]{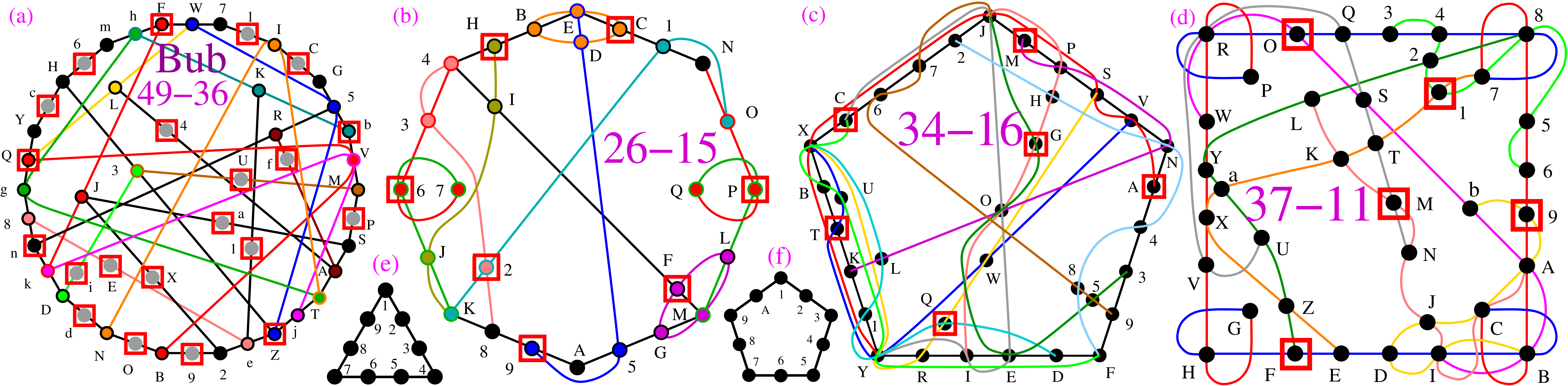}
\end{center}
\caption{3-,4-, 6-, and 8-dim KS MMP hypergraphs that violate the
  inequality $\alpha < \alpha_r^*$ given by
  Eq.~(\ref{eq:alpha-alpha}); the vertices that belong to the
  independent (stable) set and contribute to $\alpha=HI_{cM}$ are
  squared in red; 
  (a) 3-dim; $\alpha=21>\alpha_r^*=\frac{49}{3}\approx 16.\dot{3}$;
  (b) 4-dim; $\alpha=7>\alpha_r^*=\frac{26}{4}=6.5$;
  (c) 6-dim $\alpha=6>\alpha_r^*=\frac{34}{6}\approx 5.\dot{6}$;
  (d) 8-dim; $\alpha=5>\alpha_r^*=\frac{37}{8}=4.625$;
  (e) 4-dim; $\alpha=3>\alpha_r^*=\frac{9}{4}=2.25$;
  (f) 3-dim; $\alpha=5>\alpha_r^*=\frac{10}{3}=3.\dot{3}$;
  (a-d) MMP hypergraphs are KS non-binary critical contextual
  sets while (e) and (f) are 9-3 4-dim and 10-5 3-dim binary
  noncontextual MMP hypergraphs, respectively.}
\label{fig:alpha-star}
\end{figure}

(b) Under the {\rm postprocessed data statistics}
  \ref{hyp-stat-a}(b) assumption, i.e., under the assumption
  that every vertex $v$ within an {\rm MMP} hypergraph has 
  $\frac{m(v)}{n}$ probability of being detected, the sum of
  all probabilities is:
  \begin{eqnarray} \sum_{v=1}^kp(v)
    =\sum_{v=1}^k\frac{m(v)}{n}
    =l=\alpha_p^*(k{\text{-}}l)=\alpha_p^*({\cal H})
  \label{eq:alpha-star-b}
\end{eqnarray}
where $\alpha_p^*$ is called the {\rm postprocessed quantum
  fractional independence number}.

This implies that the {\boldmath{$\alpha^*_p$}}{\bf-inequality}
  {\rm \cite[p.~192]{gro-lovasz-schr-81}}, 
\begin{eqnarray} 
  HI_{cM}=\alpha({\cal H})< \alpha_p^*({\cal H})=l=HI_q,
    \label{eq:alpha-alpha-b}
\end{eqnarray}
which follows from the Vertex-Hyperedge Lemma \ref{eq:theorem-g},
is another form of the v-inequality (\ref{eq:i-ineq}) and is
therefore a noncontextuality inequality and a
reliable discriminator of contextual sets. 
\label{th:alpha-star}
\end{Xeorem}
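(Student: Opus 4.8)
The plan is to treat the two displayed identities as bookkeeping — one a trivial count, the other an instance of the Vertex-Hyperedge Lemma~\ref{th:theorem} — and then to establish the two logical conclusions on quite different grounds: for (a) by exhibiting contextual hypergraphs on which the purported bound fails, for (b) by identifying the bound with the already-certified v-inequality. First I would record Eq.~(\ref{eq:sum-edge}): by the Quantum Indeterminacy Postulate~\ref{postulate}, an unprepared pure state leaving a gate $E_j$ with $n$ out-ports is detected at each port with probability $\tfrac1n$, so $\sum_{v\in E_j}p(v)=n\cdot\tfrac1n=1$; if some ports are later discarded, the retained ones still carry probabilities $\tfrac1{\kappa(j)}$ summing to $1$, so in every case $\sum_{v\in E_j}p(v)\le 1$, which is (\ref{eq:sum-edge}). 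In particular the uniform vector $x(v)=\tfrac1n$ is a feasible point of the linear program of Def.~\ref{def:alpha-star-LP}, whence $\tfrac kn\le\alpha^*(\mathcal H)$ always — but the two need not be equal, and that gap is exactly what drives part (a).

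For part (a): the hypergraph has exactly $k$ vertices, each carrying the raw probability $\tfrac1n$, so $\sum_{v=1}^k p(v)=\tfrac kn$, which we adopt as the definition of $\alpha_r^*(\mathcal H)$. It then remains to show that $\alpha(\mathcal H)=HI_{cM}\le\alpha_r^*(\mathcal H)=\tfrac kn$ may fail, and a single counterexample suffices. The $3$-dim critical KS MMP hypergraph of Fig.~\ref{fig:alpha-star}(a) is one: there $\alpha=HI_{cM}=21$ while $\alpha_r^*=\tfrac{49}{3}\approx 16.\dot 3<21$, so the classical index (which by Lemma~\ref{lemma:alpha} is the genuine left-hand member) strictly exceeds the quantum value $\tfrac kn$. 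A strict excess of the classical side over the quantum side is a violation of the candidate noncontextuality inequality (\ref{def:non-c-i-s}) in the \emph{wrong} direction, and the figures and tables cited in the statement supply many more such cases in dimensions $3$ to $8$; hence $\alpha_r^*$ is not a discriminator. The structural reason is the gap noted above: the physically available datum under raw statistics is $\tfrac kn$, not the LP optimum $\alpha^*(\mathcal H)$ for which the GLS bound $\alpha(\mathcal H)\le\alpha^*(\mathcal H)$ does hold~\cite[p.~192]{gro-lovasz-schr-81}.

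For part (b): each vertex $v$ is measured once in each of the $m(v)$ hyperedges it belongs to, contributing $\tfrac1n$ each time, so its accumulated postprocessed probability is $\tfrac{m(v)}n$; summing and applying Eq.~(\ref{eq:theorem}) of the Vertex-Hyperedge Lemma~\ref{th:theorem} gives $\sum_{v=1}^k p(v)=\sum_{v=1}^k\tfrac{m(v)}n=l$, which we name $\alpha_p^*(\mathcal H)$ and which equals $HI_q$ by Def.~\ref{def:qh-i}. For a contextual non-binary MMP hypergraph, condition~(i) of Def.~\ref{def:n-b} allows at most one $1$ in each hyperedge, so distinct $1$-vertices occupy disjoint families of hyperedges and $HI_{cM}\le l_{cM}$; Lemma~\ref{lemma:e} then gives $l_{cM}<l$, hence $HI_{cM}<l$ — equivalently, this is already the v-inequality of Lemma~\ref{lemma:v}. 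With $HI_{cM}=\alpha(\mathcal H)$ (Lemma~\ref{lemma:alpha}) we conclude $\alpha(\mathcal H)=HI_{cM}<l=HI_q=\alpha_p^*(\mathcal H)$, so (\ref{eq:alpha-alpha-b}) is merely a restatement of the v-inequality (\ref{eq:i-ineq}) and, by Lemma~\ref{lemma:v}, a bona fide noncontextuality inequality and reliable discriminator.

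The step I expect to be the real obstacle is not in part (b), which is a clean concatenation of prior lemmas, but in pinning down part (a): one must keep $\alpha_r^*$ firmly attached to the quantum datum $\tfrac kn$ rather than to the linear-programming optimum $\alpha^*(\mathcal H)$, since $\alpha(\mathcal H)\le\alpha^*(\mathcal H)$ is a theorem whereas $\alpha(\mathcal H)\le\tfrac kn$ is not — and the whole content of the claim is producing contextual hypergraphs that realise that separation.
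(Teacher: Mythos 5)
Your proposal is correct and follows essentially the same route as the paper's own proof: the two identities are obtained directly from the raw and postprocessed statistics (the latter via the Vertex-Hyperedge Lemma~\ref{th:theorem}), part (a) is settled by exhibiting the violating examples of Fig.~\ref{fig:alpha-star}, and part (b) is reduced to the v-inequality via Lemmas~\ref{lemma:alpha}, \ref{lemma:v}/\ref{lemma:e}. Your added observations --- that $\tfrac{k}{n}$ is merely a feasible point of the GLS linear program (so $\alpha_r^*\le\alpha^*$ with possible strict gap) and the explicit chain $HI_{cM}\le l_{cM}<l$ --- are correct refinements of what the paper leaves implicit, not a different argument.
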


\begin{proof}
  Quantum YES-NO measurements of states determined by MMP
  hypergraphs are carried out either by letting the quantum
  system through gates, e.g., Stern-Gerlach devices or via
  projecting their states on unit vectors. According to the
  quantum indeterminacy postulate \ref{postulate} that
  makes the probabilities of their detection constant.

  (a) Within the {\em raw data statistics} \ref{hyp-stat-a}(a)
  one assumes, according to the Postulate \ref{postulate}, that the
  probability of detecting a state that corresponds to a vertex
  $v\in V$ is equal to the probability of detecting that state
  within any of hyperedges the vertex might belong to, i.e.,
  $p(v)=\frac{1}{n}$ for any $v\in V$. That yields
  Eq.~(\ref{eq:alpha-star}). Examples of such an approach in the
  literature are: $\alpha_r^*(5,5)=\frac{5}{2}$ for the induced
  4-dim pentagon ($5\times\frac{1}{2}$)
  \cite[p.3, top]{cabello-severini-winter-14} and 
  $\alpha_r^*(18,9)=4.5$ for the 4-dim 18-9 MMP 
  ($18\times\frac{1}{4}$) \cite[Eq.~(2)]{d-ambrosio-cabello-13}.
  These examples do satisfy the inequality (\ref{eq:alpha-alpha}).
  The others that do not are given in the Theorem
  \ref{th:alpha-star}(a). 

  (b) Within the {\em postprocessed data statistics}
  \ref{hyp-stat-a}(b) every vertex $v\in V$ is taken into account
  $m(v)$ times, yielding the probability $p(v)=\frac{m(v)}{n}$
  (Cf.~Eq.~(\ref{eq:theorem})). This gives
  Eq.~(\ref{eq:alpha-star-b}) and the inequality 
  (\ref{eq:alpha-alpha-b}). Examples of such an approach are given
  for a pentagon in Sec.~\ref{sec:hyp-op}, Eq.~(\ref{eq:in-25}):
  $\alpha_p^*(5,5)=HI_q[5,5]=5$ and for the 18-9 MMP hypergraph
  in Sec.~\ref{sec:op-ineqal} below Eq.~(\ref{eq:a-cab2}) and in
  Sec.~\ref{sec:structure} below the Postulate \ref{postulate}:
  $\alpha_p^*(18,9)=HI_q[18,9]=9$.
\end{proof}

Notice that since the theorem asserts that a contextual non-binary
MMP hypergraph might or might not satisfy the raw quantum
fractional independence number inequality given by
Eq.~(\ref{eq:alpha-alpha}) and which is therefore not a
noncontextuality inequality, the only known unequivocal
noncontextuality inequalities that hold for every MMP hypergraph
are v- and e-inequalities (and hence also $\alpha_p$ inequalities).
Still, for a contextual $k$-$l$ MMP hypergraph the
$\alpha^*_r$-inequality has had a greater span (smaller $\alpha$)
than for a noncontextual $k$-$l$ MMP hypergraph for roughly 1,000
randomly chosen $k$-$l$ MMP hypergraphs. 

\smallskip
If v- and e-inequalities were satisfied, an MMP hypergraph would
be contextual. If not, it wouldn't. So, the v- and e-inequalities are
noncontextuality inequalities. On the other hand, as we stressed
above, $\alpha_r^*$-inequalities, are not such
direct measures of the quantum contextuality since many contextual
MMP hypergraphs do not satisfy them. All contextual MMP hypergraphs
satisfy the inequality $l_{cm}\le l_{cM}<l$, while the 
noncontextual MMP hypergraphs satisfy $l_{cm}=l_{cM}=l$. That means
that a noncontextual MMP hypergraph is structurally  different from
a contextual MMP hypergraph. 

Note that both $\alpha_r^*$- and $\alpha_p^*$-inequalities
given by Eqs.~(\ref{eq:alpha-alpha}) and (\ref{eq:alpha-alpha-b}),
respectively, assume the validity of the quantum indeterminacy
postulate \ref{postulate}. Notwithstanding the plausibility of
the postulate, some authors apply the original GLS inequality
\cite[Result 1]{cabello-severini-winter-14} 
\begin{eqnarray}\label{eq:alpha-cabelllo}
\alpha({\cal H})\le\alpha^*({\cal H}),   
\end{eqnarray}
(where $\alpha({\cal H})$ is defined by Def.~\ref{def:alpha}
and $\alpha^*({\cal H})$ by Defs.~\ref{def:alpha-star} and
\ref{def:alpha-star-LP}), to contextual non-binary MMP hypergraphs
and claim \cite[Results 1 \&\ 2]{cabello-severini-winter-14}
that the inequality (\ref{eq:alpha-cabelllo}) is a noncontextuality
inequality. In \cite{cabello-severini-winter-14} there is also the
weight of the probabilities at each hyperedge which, according to
the indeterminacy postulate \ref{postulate}, should be equal to 1. 

The discrepancy comes from the fact that the inequality 
Eq.~(\ref{eq:alpha-cabelllo}) is correct provided $p$ is not a
constant (as it would be under the assumption of the quantum
indeterminacy postulate) but a free variable which is determined
as a solution of the linear programming problem given in
Def.~\ref{def:alpha-star-LP}. In \cite{cabello-severini-winter-14}
it is even stated that finding $\alpha^*$ is NP hard, what is
correct for the GLS inequality.

For a pentagon, the raw data statistics and LP approaches give
the same result $\alpha^*=\frac{5}{2}$. 

A difference emerges already for a very simple MMP hypergraph
9-3 given in Fig.~\ref{fig:alpha-star}(e), though. For a free
$p$ we have:

LP[\{-1,-1,-1,-1,-1,-1,-1,-1,-1\},\{\{1,1,1,1,0,0,0,0,0\},\{0,0,0,1,1,1,1,0,0\},\{1,0,0,0,0,0,1,1,1\}\},\break \{\{1,-1\},\{1,-1\},\{1,-1\}\}]

Out:=\{0,1,0,0,1,0,0,1,0\}, i.e, $\alpha^*=3$. Since $\alpha=3$,
inequality (\ref{eq:alpha-cabelllo}) is satisfied.

However, for $p=\frac{1}{4}$ we get

LP[\{-1,-1,-1,-1,-1,-1,-1,-1,-1\},\{\{1,1,1,1,0,0,0,0,0\},\{0,0,0,1,1,1,1,0,0\},\{1,0,0,0,0,0,1,1,1\}\},\break \{\{1,-1\},\{1,-1\},\{1,-1\}\},\{$\!$\{$\frac{1}{4}$,1\},\{$\frac{1}{4}$,1\},\{$\frac{1}{4}$,1\},\{$\frac{1}{4}$,1\},\{$\frac{1}{4}$,1\},\{$\frac{1}{4}$,1\},\{$\frac{1}{4}$,1\},\{$\frac{1}{4}$,1\},\{$\frac{1}{4}$,1\}$\!$\}$\!$]

Out:=\{$\frac{1}{4}$,$\frac{1}{4}$,$\frac{1}{4}$,$\frac{1}{4}$,$\frac{1}{4}$,$\frac{1}{4}$,$\frac{1}{4}$,$\frac{1}{4}$,$\frac{1}{4}$\},
i.e., $\alpha^*=\frac{9}{4}=2.25$, which violates
inequality (\ref{eq:alpha-cabelllo}) as well as (\ref{eq:alpha-alpha})
($\alpha_r^*=\frac{9}{4}$).

Hence, $\alpha^*\ne\alpha_r^*$, meaning that $\alpha_r^*$ is a
special case of $\alpha^*$; the former $\alpha$ applies to variable
probabilities and the latter to fixed probabilities of YES-NO
quantum measurements implying that Eq.~(\ref{eq:alpha-cabelllo})
fails for arbitrary many quantum measurements and that the
probabilities must be equal and constant at all ports of a quantum
gate as a consequence of quantum indeterminacy postulate
\ref{postulate}, i.e., of a genuine quantum randomness.

\begin{discussion}\label{th:discussion} {\bf Non-maximal number
    of vertices within hyperedges and their probabilities.}
There are {\rm MMP} hypergraphs for which we should yet decide what
an optimal approach to form a proper statistics of their measurement
should be and those are the $n$-dim {\rm MMP} hypergraphs
whose hyperedges do not all have a maximal number of vertices,
i.e., $n$ vertices. Consider, for example, the 9-3 {\rm MMP} shown
in Fig.~\ref{fig:alpha-star}(e) from which the vertices {\tt 8} and
{\tt 9} are dropped from the consideration. We are left with 7-3
{\rm MMP}: {\tt 1234,4567,71.} When we detect particles at outgoing
ports of {\tt 1234,4567} hyperedges/gates the probability of
their detection is $\frac{1}{4}$. The same is with the {\tt 7891}
in the 9-3 {\rm MMP}, but in the 7-3 {\rm MMP} we discard two
outcomes---those at {\tt 8} and those {\tt 9}. After the dismissal
of {\tt 8} and {\tt 9} data, the {\tt 7}- and {\tt 1}-detection
have the probability of $\frac{1}{2}$ each. But the question
remains about the overall probability of detections at {\tt 7}
and {\tt 1} within the 7-3 {\rm MMP} hypergraph. In
{\rm\cite{pavicic-entropy-19}} we proposed that the probability
be the arithmetic mean of the probabilities the vertex has in all
hyperedges it shares. For instance, vertex {\tt 7} within
the {\tt 4567} hyperedge would have the probability $\frac{1}{4}$
and within the {\tt 71} hyperedge it would have the probability of
$\frac{1}{2}$. The overall probability for {\tt 7} to occur,
i.e., the arithmetic mean of these probabilities, would be
$(\frac{1}{4}+\frac{1}{2})/2=\frac{3}{8}$. (Notice also that it
would be plausible to assume twice as many measurements for the
{\tt 7891} than for the other two hyperedges, if we wanted to
drop data for {\tt 8} and {\tt 9} and claim the probability
$\frac{1}{2}$ for {\tt 1} and {\tt 7}.)
We provide some further examples and a discussion in
Secs.~{\rm\ref{subsec:3d}} and {\rm\ref{subsec:magic}}. 
\end{discussion}

\subsection{\label{sec:structure-e}Hypergraph structure
   exemplified}

To get a better insight into the introduced notions and features,
we consider two examples: a complex 4-dim 21-11 KS MMP hypergraph,
shown in Fig.~\ref{fig:21c}, that is not a subset of Peres' 24-24
KS set \cite{peres} unlike the real 4-dim KS 18-9 that is
\cite{cabello-08} and Yu-Oh's 3-dim 13-16 non-binary non-KS
MMP hypergraph, shown in Fig.~\ref{fig:yu-oh-mmp}, that is a
$\overline{subhypergraph}$ of a binary 25-16 which is
itself a subhypergraph of Peres' 57-40 KS MMP hypergraph. 

\begin{figure}[h]
\begin{center}
  \includegraphics[width=0.98\textwidth]{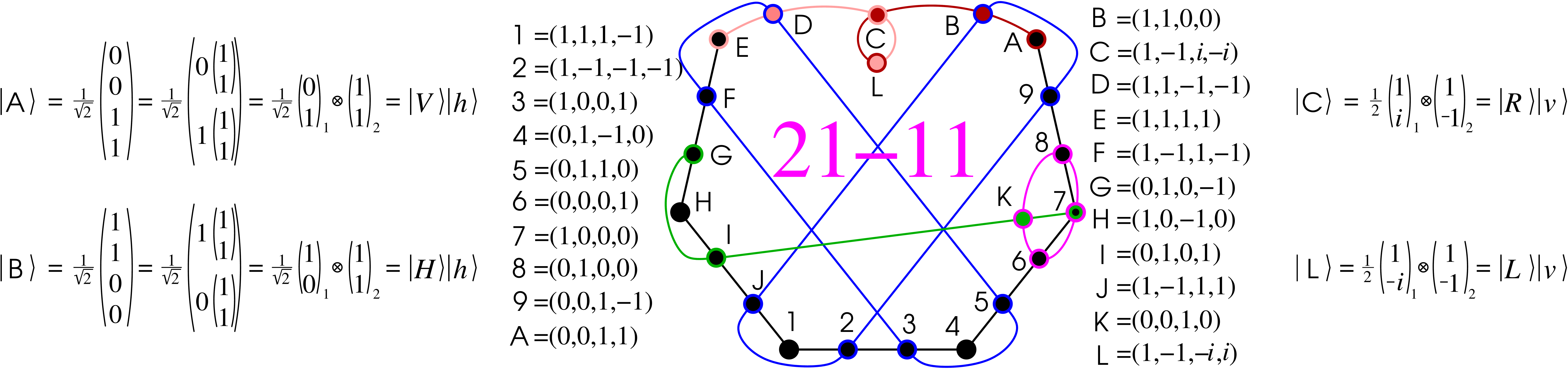}
\end{center}
\caption{21-11 KS set from the 60-105 KS class
  \cite[Fig.5]{pavicic-pra-17} with a coordinatization and
  2-qubit states (polarization + OAM) on single photons at the
  hyperedge {\tt ABCL}. See text. Notice the orthogonality: e.g.,
  {\tt C}$\cdot{\overline{\tt L}}=(1,-1,i,-i)\cdot\overline{(1,-1,-i,i)}=0$}
\label{fig:21c}
\end{figure}

We establish a relation between the hypergraph-based features
introduced in the previous section and the operator-based features
introduced in Sec.~\ref{sec:op-ineqal}, in particular with respect
to inequalities (\ref{eq:a-cab2}) and (\ref{eq:yu-oh}).

We can implement the 21-11 set by means of two qubits mounted
on single photons via spin and angular momentum
\cite{barreiro-kwiat-08,barreiro-wei-kwiat-10} states defined as
follows
\begin{eqnarray}
  |H,V\rangle\!=\!
\begin{pmatrix}
      1,0\\
      0,1\\
\end{pmatrix}_{\!1}\!\!, 
|D,A\rangle\!=\!\frac{1}{\sqrt{2}}\!\left(
     \begin{matrix}
      \pm 1\\
      1\\
    \end{matrix}
  \right)_{\!1}\!\!, 
  |R,L\rangle\!=\!\frac{1}{\sqrt{2}}\!\left(
     \begin{matrix}
      1\\
     \pm i\\
    \end{matrix}
  \right)_{\!1}\!\!, 
|\pm2\rangle\!=\!\left(
     \begin{matrix}
      1,0\\
      0,1\\
    \end{matrix}
  \right)_{2\!}\!\!, 
  |h,v\rangle\!=\!\frac{1}{\sqrt{2}}\!\left(
     \begin{matrix}
      1\\
     \pm 1\\
    \end{matrix}
  \right)_{\!2}\!\!,
\nonumber
\end{eqnarray}
where $H,V$ are horizontal, vertical, $D,A$ diagonal,
anti-diagonal, and $R,L$ right, left circular polarizations,
while $\pm 2$ are Laguerre-Gauss modes carrying
$\pm 2\hbar$ units of orbital angular momentum (OAM) and
$h,v$ are their $\pm$ superposition, respectively. Indices `1'
and `2' refer to the 1st and 2nd qubit mounted on the system,
respectively. Four states building the hyperedge $\tt ABCL$
are given in Fig.~\ref{fig:21c}. Other states have similar
expressions and they enable us to obtain the analogues
of Cabello's states defined by Eq.~(\ref{eq:a-cab}). Since our
vectors are complex, our bras are hermitian conjugates of our
kets: ${\cal O}_{\tt i}=2|{\tt i}\rangle\langle {\tt i}|^\dag-I$.
The matrix forms of the operators of our four states read:
\begin{eqnarray}
  {\cal O}_{\tt A}=
  \begin{pmatrix}
    \bar{1} & 0 & 0 & 0 \\
     0 &\bar{1} & 0 & 0\\
     0 & 0 & 0 & 1 \\
     0 & 0 & 1 & 0
  \end{pmatrix}\!,\, 
  {\cal O}_{\tt B}=
  \begin{pmatrix}
    0 & 1 & 0 & 0\\
    1 & 0 & 0 & 0\\
    0 &0 & \bar{1} & 0\\
    0 &0 & 0 & \bar{1}
  \end{pmatrix}\!,\,
  {\cal O}_{\tt C}=\frac{1}{2}\!
  \begin{pmatrix}
    \bar{1} & \bar{1} & \bar{i} & i\\
    \bar{1} & \bar{1} & i & \bar{i}\\
    i & \bar{i} & \bar{1} & \bar{1}\\
    \bar{i} & i & \bar{1} & \bar{1}
    \end{pmatrix}\!,\,
  {\cal O}_{\tt L}=\frac{1}{2}\!
  \begin{pmatrix}
    \bar{1} & \bar{1} & i & \bar{i}\\
    \bar{1} & \bar{1} & \bar{i} & i\\
    \bar{i} &  i & \bar{1} & \bar{1}\\
     i & \bar{i} & \bar{1} & \bar{1}\\
   \end{pmatrix}\!,
\nonumber
\end{eqnarray}
where $\bar{1}$ stands for $-1$ and $\bar{i}$ for $-i$. 

We can verify that any of
$|{\tt A}\rangle$,$|{\tt B}\rangle$,$|{\tt C}\rangle$,$|{\tt L}\rangle$
is an eigenvector of any of ${\cal O}_{\tt A,B,C,L}$ with eigenvalues
$\pm1$, and that
${\cal O}_{\tt A}{\cal O}_{\tt B}{\cal O}_{\tt C}{\cal O}_{\tt L}=-I$
holds. We can also verify that these relations hold for any
hyperedge. Actually, we conjecture that they hold for any hyperedge
of any critical KS MMP hypergraph in any dimension. That yields:
\begin{eqnarray}
  P_q[k,l]=\mp\sum^l_{e=1}\langle{\cal O}[e]\rangle=l;
 \qquad {\rm for\ }{\rm our\ Fig.~\ref{fig:21c}\ set}:\
  P_q[21,11]=-\sum^{11}_{e=1}\langle{\cal O}[e]\rangle=11,\qquad
\label{eq:O-I}
\end{eqnarray}
where $\mp$ signs are for even/odd dimensions, respectively,
and where ${\cal O}[e]$ stands for
${\cal O}_{1e}{\cal O}_{2e}$ $\cdots{\cal O}_{ne}$, where $je$ refers
to the $j$-th vertex on the hyperedge $e$. With respect
to the aforementioned eigenvalues we assume that classical
counterparts $O_{je}$ of quantum ${\cal O}_{je}$ have two possible
results $O_{je}=1$ and $O_{je}=-1$. Maximal values of
the classical analogues of Eq.~(\ref{eq:O-I}) is
given by  Eq.~(\ref{eq:O-Ic}).
\begin{eqnarray}
  P_c[k,l]=\mp\sum^l_{e=1}O[e]=l-2; \qquad
  {\rm for\ }{\rm our\ set:}\qquad
P_c[21,11]=-\sum^{11}_{e=1}O[e]=9,\qquad
\label{eq:O-Ic}
\end{eqnarray}
where $O[e]$ stands for $O_{1e}O_{2e}\cdots O_{ne}$. We confirmed
the special case 21-11 result by Mathematica.

Equations (\ref{eq:O-I}) and
(\ref{eq:O-Ic}) yield the noncontextuality
inequalities
\begin{eqnarray}
P_c[k,l]\le P_q[k,l]; \qquad
  {\rm for\ our\ }{\rm set:}\qquad
P_c[21,11]=9 < P_q[21,11]=11. \qquad
\label{eq:O-Iin}
\end{eqnarray}
  
These results correspond to Cabello's \cite[Eqs.~(1,2)]{cabello-08}
($P_c[18,9]=7 < P_q[18,9]=9$) referred to by Eqs.~(\ref{eq:a-cab})
and (\ref{eq:a-cab2}) above.

Now, let us establish the correspondence of these operator-based
results with our hyper\-graph-based approach.
$P_q[k,l]=l$ given by Eq.~(\ref{eq:O-I}) in the operator-based
approach is equivalent to $HI_q[k,l]=l$ given by
Eq.~(\ref{eq:theorem}) of the hypergraph-based approach. In
accordance with this, Cabello \cite[p.~2, top]{cabello-08}
obtains $P_c[18,9]=7$, $P_q[18,9]=9$, and the noncontextuality
inequality $7<9$. In other words, in \cite{cabello-08} he adopts
the postprocessed data statistics while 
in \cite[Eqs.~(2)]{d-ambrosio-cabello-13} the authors adopt
the raw data statistics and have $P_q[18,9]=4.5$ (with 
operators $|i><i|$, not $2|i><i|-I$, but the result should be
the same). In the former approach each vertex state shares two
hyperedges (has multiplicity $m=2$) and is therefore measured twice,
once within measurements carried out on the first hyperedge and the
second time within those carried out on the second hyperedge. Since
all vertices in the 18-9 MMP share two hyperedges one is tempted to
apply the raw data approach, but in the 21-11 MMP the vertex {\tt 7}
shares four hyperedges, i.e., its multiplicity is $m({\tt 7})=4$
and we should take into account that it is measured four times while
all the other vertices are measured only twice when we measure all
hyperedges in turn.

A correlated approach is given by Badzi{\'a}g, Bengtsson, Cabello,
and Pitowsky who obtain $\beta_{CM}(n,l)\le l(n-2)-2$ and
$\beta_{QM}(n,l)=l(n-2)$ \cite[Eqs.~(1,8)]{badz-cabel-09}
corresponding to our $2l_{cM}$ and $2l$ for $n=4$, respectively,
due to the way they define the operators
\cite[Eqs.~(3)]{badz-cabel-09}. It is, therefore, rather surprising
that they get puzzling results for simple cases. For instance,
they claim (in 2009) that the Peres' 24-24 MMP hypergraph
``generate[s] 96 (critical) 20-observable [20 vertices] and 16
(critical) 18-observable [18 vertices] proofs of the KS theorem,''
while it was proved (in 2005) that it contains only two
non-isomorphic critical MMP hypergraphs with 20 vertices (20-11)
and a single critical with 18 vertices (18-9)
\cite[Fig.~3, Figs.~4(b,c)]{pmmm05a,pmmm05a-corr}. Do they
refer to isomorphic instances of these MMP hypergraphs?
Because it was proved in \cite[Table 1]{pmm-2-09} that Peres'
24-24 MMP hypergraph contains only one MMP hypergraph with  
18 vertices and 7 (non-isomorphic) ones with 20 vertices
(including two 20-11 criticals). 

Yu-Oh's operator approach is different \cite{yu-oh-12}. They make
use of the inequality given by Eq.~(\ref{eq:yu-oh}) to prove the
operator contextuality, but the underlying MMP hypergraph is
itself contextual. See Fig.~\ref{fig:yu-oh-mmp}. 

\begin{figure}[h]
\begin{center}
  \includegraphics[width=0.97\textwidth]{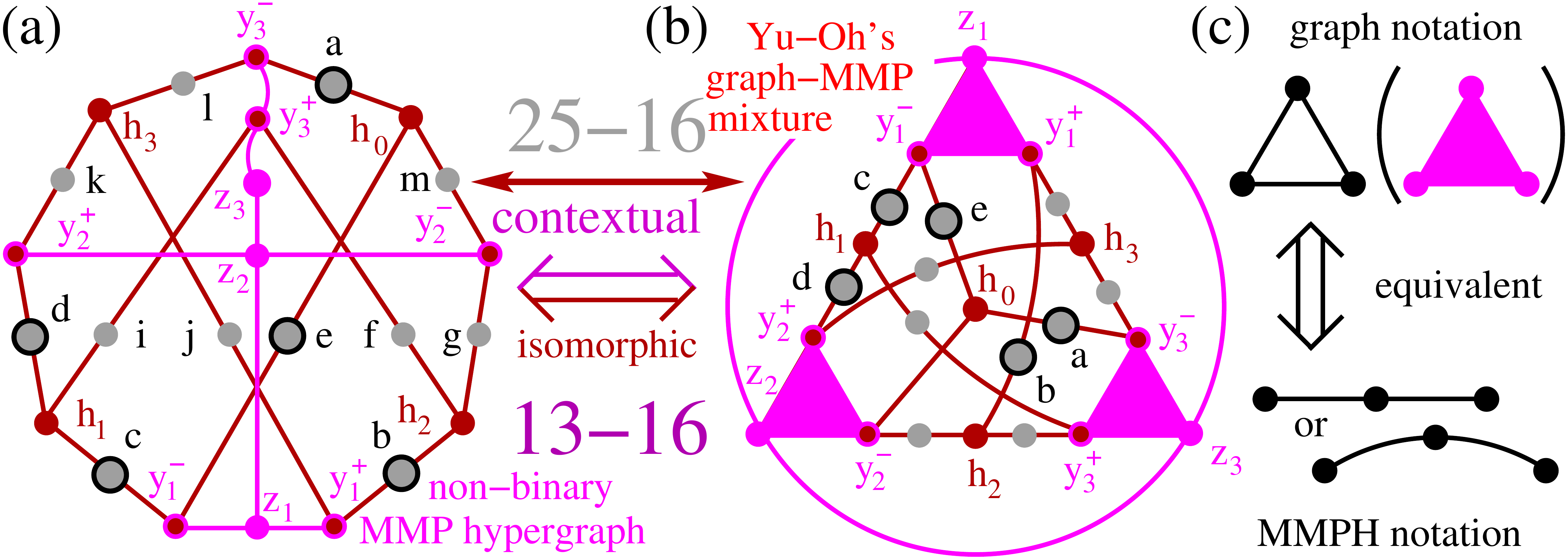}
\end{center}
\caption{(a) MMP hypergraph representation of Yu-Oh's
  graph-hypergraph mixture; vertices with $m=1$ are shown as gray
  dots; all vertices together build 25-16 binary MMP hypergraph;
  MMP hypergraph with the gray vertices dropped build a non-binary
  13-16 MMP hypergraph; (b) Yu-Oh's graph-MMP-hypergraph graphical
  presentation of their set; (c) graph clique vs.~MMP hyperedge.}
\label{fig:yu-oh-mmp}
\end{figure}

More specifically, they build their operators $A_i$ in
Eq.~(\ref{eq:yu-oh}) by means of vectors/states assigned to
vertices of their 13-16 MMP hypergraph; e.g.,
$A_1=A_{z_1}=I-2|z_1\rangle\langle z_1|$, where
$|z_1\rangle=(1,0,0)$ \cite[Eq.~(1) and Appendix]{yu-oh-12}.
All 13 vectors are eigenvectors of $A_i$, $i=1,\dots,13$.
Therefore ``the outcomes for observables $A_i$ are either $+1$
or $-1$, depending on whether there is a photon click (or no
click) in the corresponding photon
detector'' \cite[Supp.~Material]{zu-wang-duan-12}. The operators
violate inequality (\ref{eq:yu-oh}) for any state, i.e., the
violation is state independent.

Before we proceed with a further analysis of Yu-Oh's set we would
like to point out that there is the following problem with the
violation of inequality (\ref{eq:yu-oh}). In
\cite{pavicic-entropy-19} we tested it on 50 different non-binary
MMP hypergraphs and found no violation. That means that the
inequality is unsuitable for application on an arbitrary MMP
hypergraph. 

On the other hand, Yu-Oh's MMP hypergraph does satisfy the
v- ($HI_{cM}=\alpha=5<HI^m_{cM}=14<l=16$), e- ($l_{cM}=14<l=16$),
and even $\alpha_r^*$-inequality ($5=\alpha<\alpha_r^*=5.\dot{6}$).
(Note that $l_{cM}\ne l-1$ because 13-16 is not critical;
see the 3rd figure from Sec.~\ref{subsec:3d}.)
Thus, in addition to its operator implementation, Yu-Oh's set,
as any non-KS non-binary MMP hypergraph, can be straightforwardly
and instantaneously identified as such via our programs and
implemented with the help of YES-NO measurements of vertex states
exiting the hyperedge gates. 

Yu and Oh arrived at their 13-16 non-binary MMP hypergraph by
removing $m=1$ vertices from the 25-16 binary MMP hypergraph
(shown as gray dots in Fig.~\ref{fig:yu-oh-mmp}(a)) which is
itself a subhypergraph of Peres' 57-40 MMP hypergraph (figure (c)
in Appendix \ref{app3d}). Actually, just five gray vertices
({\tt a,\dots,e}) suffice to turn the 13-16 into a binary 18-16
MMP hypergraph. Its parameters are: $HI^m_{cM}=l_{cM}=l=16$ and
v- and e-inequalities are violated. If we then remove any of
the five gray vertices ({\tt a,\dots,e}), the MMP hypergraph
becomes a contextual non-binary 17-16 one and
$HI^m_{cM}=l_{cM}=14<l=16$, i.e., v- and e-inequalities are
satisfied. It should be stressed here that in a 3-dim space
all vector triples should be implementable, i.e., that 25-16
should have a coordinatization. The $\{0,\pm1\}$ components
suffice for only 13 vectors of the ``magic cube'' shown in
\cite[Fig.~1]{yu-oh-12}. For a proper implementation of the
13-16 one should make use of, e.g., $\{0,\pm1,2\}$ components
to allow an implementation of the 25-16 as well. In doing so
one also changes the assignments of vectors to the original
13 vertices. This is analogous to the pentagon (e)-resolution
of the (c)-attempt in Fig.~\ref{fig:pentagon}. Vectors
{\tt a,\dots,m} in Fig.~\ref{fig:yu-oh-mmp}(a) contain the
component `2', while the vectors assigned to the original 13
vertices do not.

Taken together, operator-based measurements of contextual states
differ from hypergraph-based ones in the following way. To measure
the mean values of observables/operators we have to first measure
correlations between observables/operators defined by
vertices/vectors of an MMP hypergraph. To prove the state
independence we have to carry out measurements with different
input states. Thus, the number of measurements grows
exponentially with the size and with the dimension of the set.

In the hypergraph-based approach the input states are the states
from the coordinatization of an MMP, as in Fig.~\ref{fig:21c}, and
we verify them by detecting output states at the ports of each
gate/hyperedge. The number of measurements grows linearly with the
size of MMP hypergraphs and with their dimension.

Of course, each approach has its own application. When an MMP
hypergraph is a part of a quantum network which requires projections
to specified states, then we use the operator-based approach, and
when it is a part of a quantum computation which has to
distinguish contextual loops from noncontextual ones, then we
use a hypergraph approach.

\bigskip\bigskip

\section{\label{sec:exampl}Analysis of MMP
  hypergraph features in diversified dimensions}

\medskip
\subsection{\label{subsec:4dm}MMP hypergraph
  multiplicity}

So far we have seen that the multiplicity of vertices plays
significant roles in determining the features of MMP hypergraphs.
Here we consider two such features shown in Fig.~\ref{fig:for17}
and Table \ref{T:masters} (odd number of hyperedges) and in the
Appendix \ref{app4} (even number of hyperedges).

\begin{figure}[H]
\includegraphics[width=1\textwidth]{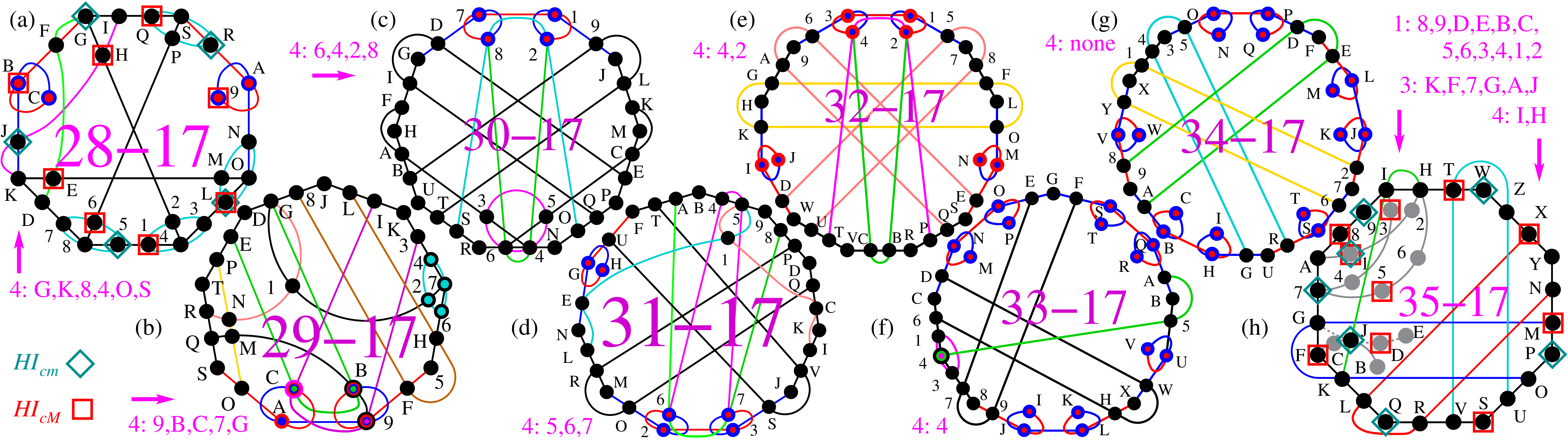}
\caption{4-dim KS criticals with 17 hyperedges from the 156-249
  class. $m\ne 2$ are stated for each set. (a)-(g) have only $m=2$
  and $m=4$. Distributions (for (a) and (h)) of the maximal and
  minimal numbers of ``classical 1s'' are given by
  squares and diamonds, respectively; (a)-(g) have parity proofs;
  (a) $\alpha=8>\alpha_r^*=7$; (h) $\alpha=10>\alpha_r^*=8.75$.}
\label{fig:for17}
\end{figure}

    \renewcommand{\arraystretch}{1.4}
\begin{table}[!htbp]
\caption{Multiplicities $m$ of master KS sets. The 3-dim 81-52 KS
  master is vector-generated from vector components
  $\{0,\pm 1,\pm\sqrt 2,3\}$ which build vectors of Peres' 57-40 sets
  \cite{peres}. Master 81-52 has only one critical set---Peres'
  57-40.}
  \centering
\setlength{\tabcolsep}{4pt}
\begin{tabular}{|c|c|cccc|ccc|} 
  \hline
$n$&3-dim&\multicolumn{4}{c}{4-dim}&\multicolumn{3}{|c|}{6-dim}
  \\
  \hline
\multirow{2}{*}{master}\ \ &\multirow{2}{*}{81-52}&24-24&60-75&60-105&148-265&81-162&216-153&834-1609\\
& &\cite{peres,pmm-2-10}&\cite{waeg-aravind-megill-pavicic-11}&\cite{waeg-aravind-jpa-11,pavicic-pra-17,pwma-19}&\cite{pavicic-pra-17,pwma-19}&\cite{pwma-19}&\cite{pm-entropy18,pm-paris19}&\cite{pwma-19}
  \\
\multirow{2}{*}{$m$}
&8($\times 3$),
&\multirow{2}{*}{4 ($\times 24$)}
&\multirow{2}{*}{5 ($\times 60$)}
&\multirow{2}{*}{7 ($\times 60$)}
&13($\times 4$),
&\multirow{2}{*}{12($\times 54$)}
&33($\times 6$), 
&193($\times 6$),
\\
&3, 2, 1
&
&
&
&7($\times 144$)
&
&4, 3
&12, 4
\\
\hline
\end{tabular}
\label{T:masters}
\end{table}
\renewcommand{\arraystretch}{1}

First, for thousands of 4-dim MMP hypergraphs we checked, it turns
out that those with odd number of hyperedges predominantly have
vertices with even multiplicities. The program {\textsc{One}} gives
vertex multiplicities $m$. For smaller sets, they can be verified
by hand (see, e.g., figures in \cite{pavicic-pra-17}), but for
the bigger ones, it would be a really demanding endeavour. So, as
an example we consider a subclass with 17 hyperedges from the 4-dim
class 156-249 [27] shown in Fig.~\ref{fig:for17}. We also contrast
it with a subclass with 18 hyperedges from the same class shown
in the Appendix \ref{app4} which exhibits a prevalent number of odd
multiplicities, once $m=2$ (dominant in all MMPs) is excluded.
Notice that any KS MMP with a parity proof must have an odd number
of edges.

Second, multiplicities of vertices uniquely characterize master MMP
hypergraphs we use to generate all known MMP hypergraphs classes
from. Master sets that are generated from symmetric geometry or
from symmetric polytopes or from symmetric vector-generated MMPs
exhibit large and unique multiplicities $m$, while with
asymmetric vector-generated ones we have $n$ (=dimension)
bigger $m$'s followed by multiple occurrences of one or two smaller
$m$'s, as shown in Table \ref{T:masters}. We can see that 4-dim
master 24-24 consist of 24 vertices all of which have multiplicity
$m=4$, 60-75 of 60 vertices with $m=5$, 60-105 of 60 vertices with
$m=7$, etc. The bigger the asymmetric vector-generated MMP
hypergraphs are, the more $m$'s they contain. E.g., 4-dim the KS MMP
hypergraph master 1132-2460 (not shown in Table \ref{T:masters})
contains $m=79$ four times, and then 42, 36, etc, down to 1
(altogether 16 different $m$'s) in multiple occurrences. The KS MMP
hypergraph master 1132-2460 contains the 60-75 master.

\bigskip
\subsection{\label{subsec:3d}3-dim MMP hypergraphs}

In \cite{pavicic-pra-17} we gave figures and strings of 3-dim
Bub \cite{bub}, Conway-Cohen \cite{peres-book}, Peres
\cite{peres}, and original Kochen-Specker \cite{koch-speck}
critical MMPs: 49-36, 51-37, 57-40, and 192-118, respectively.
Renewed figures are given in the figure in Appendix \ref{app3d}.
New 3-dim MMP hypergraphs, mostly obtained in 
\cite{pavicic-pra-22}, are given in Fig.~\ref{fig:3dup1}. 
Their properties are in Table \ref{T:3d}.
As for the $\omega$ components in Table \ref{T:3d},
$\omega=e^{2\pi i/3}=(-1+i\sqrt{3})/2$.
Note that proving ortogonalities between vertices containing
complex vectors require complex conjugate dot products.

\begin{figure}[h]
  \flushleft
  \includegraphics[width=1\textwidth]{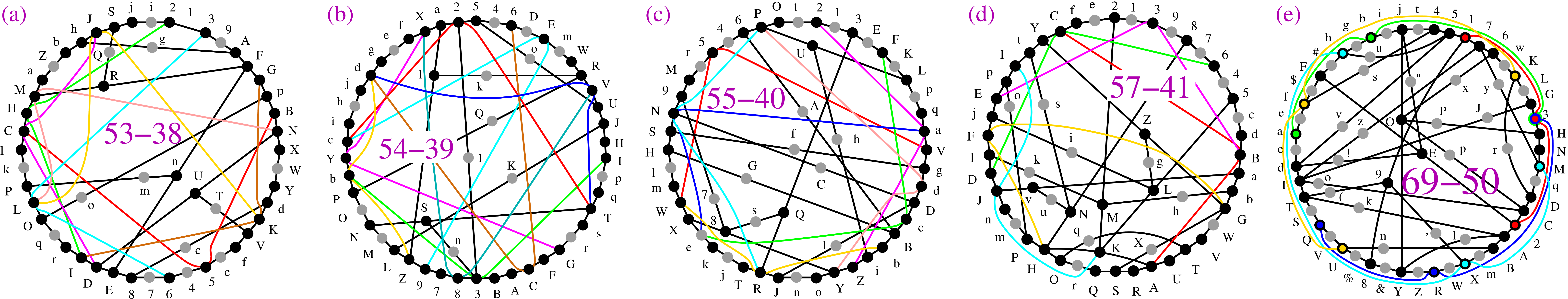}
\caption{(a-c) Critical 3D MMPHs generated by the components
  $\{0,\pm 1,\pm 2,5\}$; (a) the only 53-38; 22-gon; (b) one of the
  eight 54-39s; 23-gon; (c) the only 55-40; 22-gon; (d) the only
  57-41---the smallest MMPH generated by
  $\{0,\pm1,2,\pm5,\pm\omega,2\omega\}$; 21-gon; (e) that smallest
  MMP hypergraph 69-50 generated by
  $\{0,\pm\omega,2\omega,\pm\omega^2,2\omega^2\}$; 24-gon.}
\label{fig:3dup1}
\end{figure}

As explained in
\cite{larsson,pmmm05a,pmmm05a-corr,held-09,pavicic-pra-17,pavicic-entropy-19},
in order to be KS sets the aforementioned original MMP hypergraphs
must have 49, 51, 57, and 192 vertices/vectors, respectively, not 33,
31, 33, and 117 as often stated in the literature and even in the
original papers. The latter versions of the sets are those with
$m=1$ vertices dropped. They are not KS sets but are contextual
non-binary MMP hypergraphs. The same holds for all the other MMP
hypergraphs, e.g., for those in Fig.~\ref{fig:3dup1}.

\renewcommand{\arraystretch}{1.2}
\begin{table}[h]
\caption{Terms for the inequalities of 3-dim
  KS sets: $\alpha_r^*$-inequality: $\alpha\le\alpha_r^*$,
  v-inequality: $HI_{cM}<l$ and e-inequality: $l_{cM}<l$;
  $m_M$ is the maximal $m$. Notice that the
  $\alpha_r^*$-inequality is violated for all MMP hypergraphs.}
  \center
\setlength{\tabcolsep}{2.7pt}
\begin{tabular}{ccccccccccc} 
\hline
\multirow{2}{*}{dim}&\multirow{2}{*}{KS hypergraphs}&$HI_{cM}$&\multirow{2}{*}{$HI_{cm}$}&\multirow{2}{*}{$l_{cM}$}&\multirow{2}{*}{$l_{cm}$}&\multirow{2}{*}{$l$}&\multirow{2}{*}{$m_M$}&\multirow{2}{*}{$\alpha_r^*$}&\multirow{2}{*}{crit.}&vector
\\
&&$\leftrightarrow\alpha$&&&&&&&&components
\\
\hline
\multirow{11}{*}{\rotatebox{90}{3-dim MMP hypergraphs}}
&Bub's 49-36 \cite{bub} 
&21
&11
&35
&24
&36
&4
&$16.\dot{3}$
&yes
&$\{0,\pm 1,\pm 2,5\}$
\\                    
&Conway-Kochen's 
&\multirow{2}{*}{22}
&\multirow{2}{*}{13}
&\multirow{2}{*}{36}
&\multirow{2}{*}{26}
&\multirow{2}{*}{37}
&\multirow{2}{*}{4}
&\multirow{2}{*}{17}
&\multirow{2}{*}{yes}
&\multirow{2}{*}{$\{0,\pm 1,\pm 2,5\}$}
  \\
&51-37 \cite{peres-book}  
&
&
&
&
&
&
&
&
  \\
& 53-38\cite{pavicic-pra-22} 
&21
&12
&37
&27
&38
&4
&$17.\dot{6}$                                                             &yes
&$\{0,\pm 1,\pm 2,5\}$
\\
& 54-39\cite{pavicic-pra-22} 
&23
&13
&38
&27
&39
&4
&18
&yes
&$\{0,\pm 1,\pm 2,5\}$
\\
   & 55-40\cite{pavicic-pra-22} 
&23
&13
&39
&27
&40
&4
&$18.\dot{3}$
&yes
&$\{0,\pm 1,\pm 2,5\}$
\\
&Peres' 57-40 \cite{peres}
&27
&15
&39
&31
&40
&4
&19
&yes
&$\{0,\pm 1,\pm\sqrt 2,3\}$ 
\\
& 57-41\cite{pavicic-pra-22} 
&24
&13
&40
&29
&41
&5
&19
&yes
&$\{0,\pm 1,2,\pm 5,\pm\omega,2\omega\}$
\\
& 69-50 
&36
&21
&49
&40
&50
&4
&23
&yes
&$\{0,\pm\omega,2\omega,\pm\omega^2,2\omega^2\}$
\\
&Kochen-Specker's
&\multirow{2}{*}{75}
&\multirow{2}{*}{63}
&\multirow{2}{*}{116}
&\multirow{2}{*}{99}
&\multirow{2}{*}{118}
&\multirow{2}{*}{9}
&\multirow{2}{*}{64}
&\multirow{2}{*}{yes}
&24 components 
\\
&192-118 \cite{koch-speck}
&
&
&
&
&
&
&
&
&$\to$ Ref.~\cite{pavicic-pra-22}
\\
\hline
\end{tabular}
\label{T:3d}
\end{table}
\renewcommand{\arraystretch}{1}

Their $HI_{cM}$, $HI_{cm}$, $l_{cm}$, $l_{cM}$, and $m_M$ are
given in Table \ref{T:3d}. The e$_{Max}$-inequalities are
trivial for the critical MMP hypergraphs for which we have
$l_{cM}=l-1$.  For the KS 192-118, in 100,000 runs on a
supercomputer, we obtained $l_{cM}=l-2=116$. But our program
{\textsc{One}} for finding $l_{c}$ is probabilistic
and an exhaustive search would not allow parallel computation what
means too lengthy a computation. Their e$_{min}$-inequalities read
$24<36$, $26<37$, \dots, $99<118$. They would allow for a
more robust implementation. Cf.~7-dim case at the end of
Sec.~\ref{subsec:78d}.

In the figure (d) from Appendix \ref{app3d} we give the 192-118
KS MMP hypergraph. Notice that the original figure of Kochen and
Specker \cite[p.~69]{koch-speck} is neither a graph nor a hypergraph.
Its points {\tt a} and {\tt p}$_0$, {\tt b} and {\tt q}$_0$, {\tt c}
and {\tt r}$_0$ \cite[p.~69]{koch-speck} are actually single
vertices, respectively, and lines between them are not edges but
only indications of merged dots what makes their figure
together with comments in its caption just a set of instructions
on how to design a proper hypergraph, what we did in
\cite[Fig.~6]{pmmm05a,pmmm05a-corr} and \cite[Fig.~19]{pavicic-pra-17}
and here. 

Surprisingly, Budroni, Cabello, G{\"u}hne, Kleinmann and Larsson
\cite[Fig.~1]{budroni-cabello-rmp-22} copied the main part of
the figure from \cite[Fig.~7.8]{svozil-book-ql}, or
\cite[Fig.~6]{pmmm05a,pmmm05a-corr}, or \cite[Fig.~19]{pavicic-pra-17} 
(without citing the sources) and cut off parts of twelve of its
hyperedges thus making their Kochen-Specker figure inconsistent---it
is, like the original Kochen-Specker's figure, neither a graph nor a
hypergraph. In the caption of their figure, they call it a graph.
However, in the figure itself they substituted the MMP hypergraph
version of $\Gamma_0$ from \cite[Figs.~7.5,7.8]{svozil-book-ql} for
a graph version from \cite[Fig.~on p.~68]{koch-speck} shown in the
figure (d) from Appendix \ref{app3d} as $\Gamma_0'$ and
$\Gamma_0''$. So, \cite[Fig.~1]{budroni-cabello-rmp-22} shown
here in Fig.~\ref{fig:3d-bud-ks}(a) should be an MMP hypergraph,
but it is not. To see this, let us look at two red hyperedges in
Fig.~\ref{fig:3d-bud-ks}(b)) {\tt 2-10-9} and {\tt 2-12-13}.
The caption of \cite[Fig.~1]{budroni-cabello-rmp-22}
(here: Fig.~\ref{fig:3d-bud-ks}(a)), in effect, reads:
``node 2 is orthogonal to all nodes connected to the red
{\bf edge}s. Similarly for the green and [blue] nodes.''
{\em Nodes\/} are hypergraph vertices, but that what the
{\em nodes\/} (e.g., the {\em node\/} {\tt 2}) are ``connected'' to
(e.g., {\tt 9-10} or {\tt 12-13} in Fig.~\ref{fig:3d-bud-ks}(a))
are neither graph ``edges'' nor hypergraph ``hyperedges.''
They are just lines connecting dot {\tt 9} with dot {\tt 10}, etc.
All that confuses the reader who, after more than 50 years of the
first appearance of the iconic KS set, deserves references to its
unambiguous hypergraph presentation as given in
\cite{svozil-book-ql,pmmm05a,pmmm05a-corr,pavicic-pra-17} and here in
Fig.~\ref{fig:3d-bud-ks}(b).

\begin{figure}[h]
  \begin{center}
  \includegraphics[width=0.9\textwidth]{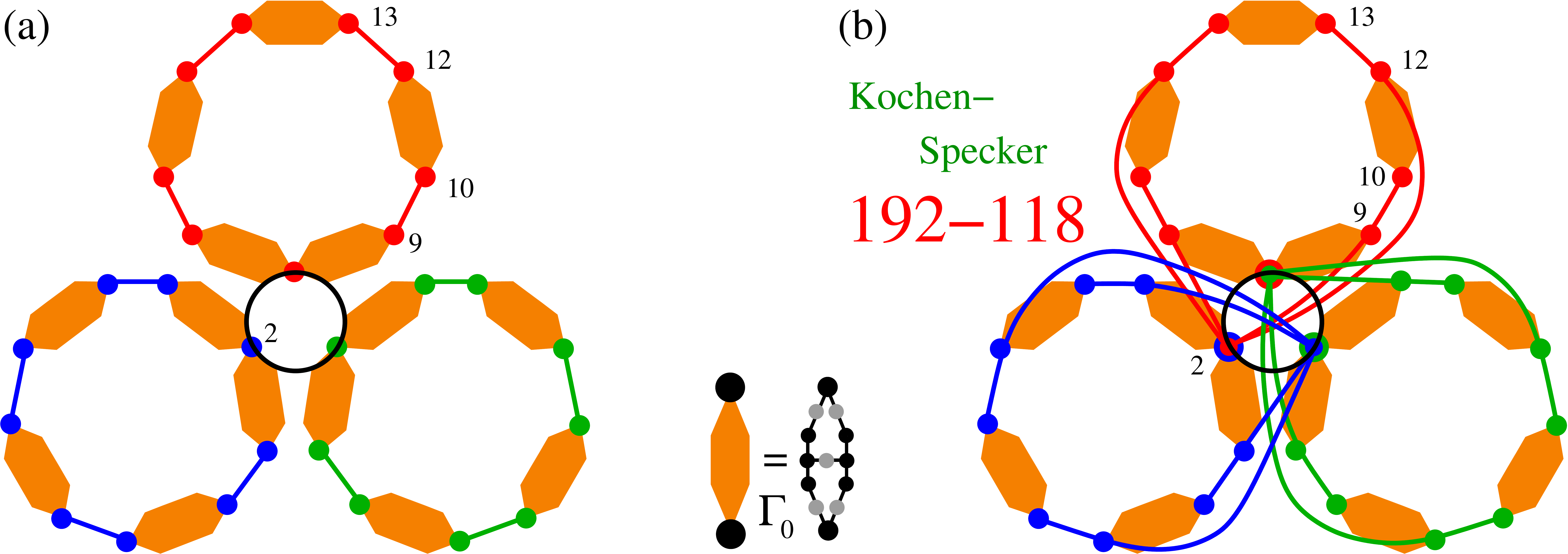}
\end{center}
\caption{(a) Graphics according to
  \cite[Fig.~1]{budroni-cabello-rmp-22}; lines {\tt 9-10}
  and {\tt 12-13} represent neither edges nor hyperedges; they
  are just pieces of hyperedges {\tt 2-9-10} and {\tt 2-12-13},
  respectively; (b) KS MMP hypergraph according to
  \cite[Fig.~7.8]{svozil-book-ql}, or \cite[Fig.~6]{pmmm05a,pmmm05a-corr}, or
  \cite[Fig.~19]{pavicic-pra-17}, or Fig.~\ref{fig:3dup}(d).}
\label{fig:3d-bud-ks}
\end{figure}

In the figure (a,b,c) from Appendix \ref{app3d}, vertices with $m=1$,
are shown as gray dots. E.g., Conway-Kochen's 51-37 hypergraph has 20
such vertices, and this is why Conway-Kochen's 51-37 is often called
a KS set with 31 vertices ($51-20=31$) in the literature.

\medskip
Peres' 57-40 KS is characterised by its coordinatization derived
from the $\{0,\pm 1,\pm\sqrt 2,3\}$ vector components.
By means of vector components $\{0,\pm 1,\pm\sqrt 2\}$, used by
Peres \cite{peres}, in a 3-dim space we can only build 49 vectors,
while in the 57-40 KS MMP hypergraph  there are 57 of them, meaning
that eight vertices cannot have a vector representation at all and
that Escher's ``impossible Waterfall''
\cite{mermin93,penrose-02,cass-gal-05} geometry (mapping of Peres'
set onto the configuration of three interpenetrating cubes) cannot
represent it. To build a KS set, all three vertices in every
hyperedge/triple must be realisable via 3-dim mutually orthogonal
vectors, irrespective of whether we make use of all three of them
(while postprocessing measurement data) or not. They do live in a
3-dim space and must be there, virtual or actual. 

\medskip
We also stress here that the caption of 
\cite[Table 1]{budroni-cabello-rmp-22} is incorrect and misleading
in the following sense. It reads ``[in] KS proofs [3-dim Bub,
Conway-Kochen, and Peres'] the\dots numbers inside parenthesis
(33,31,33) are the numbers used in the contradiction, numbers
outside (49,51,57) counts all vectors when completing the bases.''
But as we show in
\cite{pmmm05a,pmmm05a-corr,pavicic-pra-17,pavicic-entropy-19} not
the 33-36, 31-37, and 33-40, but the 49-36, 51-37, and 57-40 MMP
hypergraphs are critical KS MMP hypergraphs which are therefore
primarily ``used in the contradiction'' of the KS theorem since
they are the KS sets while the former ones are not. We show
in \cite{pavicic-entropy-19} that many MMP hypergraphs one obtains
from 49-36, 51-37, and 57-40 by removing chosen $m=1$ vertices,
down to 33-36, 31-37, and 33-40 MMPs, are non-binary contextual
MMP hypergraphs. However, they are {\em not\/} KS MMP hypergraphs
by definition and therefore they are {\em not\/} ``KS proofs.''

\medskip
Still, excluding the $m=1$ vertices in a postprocessing of data
generated by measurements provide us with an important method of
obtaining arbitrary many smaller contextual non-binary MMP
hypergraphs from both non-binary and binary MMP hypergraphs.
This is due to an important structural difference between the
MMP hypergraphs with hyperedges containing the maximal number
of vertices per hyperedge and those with less vertices in some
hyperedges. If the former MMPs are critical (as, e.g., all MMPs
from the figure from Appendix \ref{app3d}, then no stripping of
their hyperedges would lead to another non-binary MMP. However,
stripping of their $m=1$ vertices may yield non-critical MMPs
which may generate smaller non-binary critical MMPs which may be
stripped again and may yield even smaller criticals. Of course,
because of the stripping, none of the obtained smaller MMPs is a
proper subhypergraph of an MMP we start with. They are all
$\overline{\rm subhypergraphs}$. 

\medskip
In \cite{pavicic-entropy-19} we generated thousands of smaller
non-binary MMP critical hypergraphs from all four bigger MMPs
given in the figure in Appendix \ref{app3d}, the smallest of which
are shown in Fig.~\ref{fig:alpha-3d}. As a rule, all small critical
non-binary MMP hypergraphs do satisfy the $\alpha_r^*$-inequality.
Notice that the 14-12 MMP hypergraph which does not satisfy it, is
not critical and that the critical 13-11 which it contains does
satisfy the inequality. 

\medskip

\renewcommand{\arraystretch}{1}
\begin{table}[ht]
  \caption{Terms for the inequalities of 3-dim
  contextual non-binary $\overline{\rm subhypergraphs}$ from
  Fig.~\ref{fig:alpha-3d}: $\alpha_r^*$-inequality:
  $\alpha\le\alpha_r^*$, v-inequality: $HI^m_{cM}<l$, and
  e$_{Max}$-inequality: $l_{cM}<l$; $m_M$ is the maximal $m$.}
  \center
\setlength{\tabcolsep}{6pt}
\begin{tabular}{ccccccccccc} 
\hline
\multirow{2}{*}{dim}&\multirow{2}{*}{KS MMPs}&$HI_{cM}$&\multirow{2}{*}{$HI_{cm}$}&\multirow{2}{*}{$l_{cM}$}&\multirow{2}{*}{$l_{cm}$}&\multirow{2}{*}{$l$}&\multirow{2}{*}{$m_M$}&\multirow{2}{*}{$\alpha_r^*$}&\multirow{2}{*}{crit.}&vector
\\
&&$\leftrightarrow\alpha$&&&&&&&&components\\
\hline
\multirow{2}{*}{\rotatebox{90}{3D MMPs\ }}
&8-7
&3
&3
&6
&6
&7
&2
&3.5
&yes
&$\{0,\pm 1\}$
\\
&14-11
&5
&5
&10
&10
&11
&2
&5.5
&yes
&$\{0,\pm 1,2\}$
\\
&11-10
&4
&4
&9
&9
&10
&3
&4.75
&yes
&$\{0,\pm 1,2\}$
\\
&14-12
&6
&5
&11
&11
&12
&3
&$4.91\dot{6}$
&no
&$\{0,\pm 1,2\}$ \\
\hline
\end{tabular}
\label{T:alpha-3d}
\end{table}
\renewcommand{\arraystretch}{1}

\begin{figure}[h]
\begin{center}
  \includegraphics[width=\textwidth]{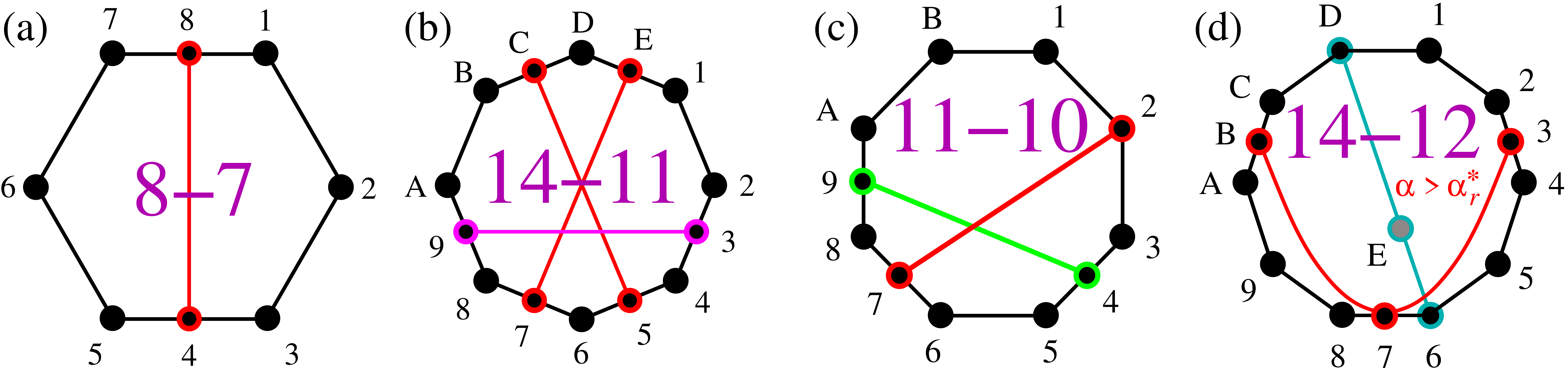}
\end{center}
\caption{(a) 8-7 MMP hypergraph ($\Gamma_0$ from
  Fig.~\ref{fig:3d-bud-ks}) is a $\overline{\rm subhypergraph}$ of
  Bub's 49-36 and Yu-Oh's 13-16 (Fig.~\ref{fig:yu-oh-mmp});
  note that Yu-Oh's 13-16 \cite{yu-oh-12} is not critical and
  that its filled version 25-16 is a subgraph of Peres's 57-40
  \cite{pavicic-entropy-19};
  (b) $\overline{\rm subhypergraph}$ of Bub's 49-36;
  (c) $\overline{\rm subhypergraph}$ of both Bub's 49-36 and
  Conway-Kochen's 49-36; (d) $\overline{\rm subhypergraph}$ of
  Conway-Kochen's 49-36; in contrast to the previous MMPs it
  violates the $\alpha_r^*$-inequality: $6>4.91\dot{6}$; its
  filled MMP can have a coordinatization from the $\{0,\pm 1,2\}$
  component set; (a,b) do have a parity proof, while (c,d) do not;
  (a,b,c) are critical, while (d) is not; 14-12 without the cyan
  hyperedge is a 13-11 critical MMP with a parity proof.}
\label{fig:alpha-3d}
\end{figure}

\subsection{\label{subsec:4d}Small 4-dim MMP hypergraphs
    and the smallest MMP hypergraph that exists}

In Sec.~\ref{subsec:3d} we obtained small 3-dim critical
non-binary MMP hypergraphs from big critical non-binary MMP
hypergraphs. In this section we consider small 4-dim critical
non-binary MMP hypergraphs we generate from big non-binary MMP
hypergraphs by the same method we used in Sec.~\ref{subsec:3d}. 

In Table \ref{T:4d1} we present $HI_{cM}$, $HI_{cm}$, $l_{cM}$ and
$l_{cm}$ values for chosen MMP subhypergraphs of the KS master MMP
hypergraph 636-1657 \cite{pwma-19}. Among billions of them that
we generated in an automated fashion from the 636-1657, we have
chosen a number of MMP hypergraphs some of which were also
previously obtained in the literature via other methods.

None of them contain vertices with multiplicity $m=1$, i.e., they
are structurally {\em dense}. Since one can easily assign ASCII
characters to the vertices we do not show them in Fig.~\ref{fig:22}.
An MMP hypergraph is characterized by its structure, not by a
specification of characters assigned to its vertices.

\begin{figure*}[ht]
\begin{center}
  \includegraphics[width=\textwidth]{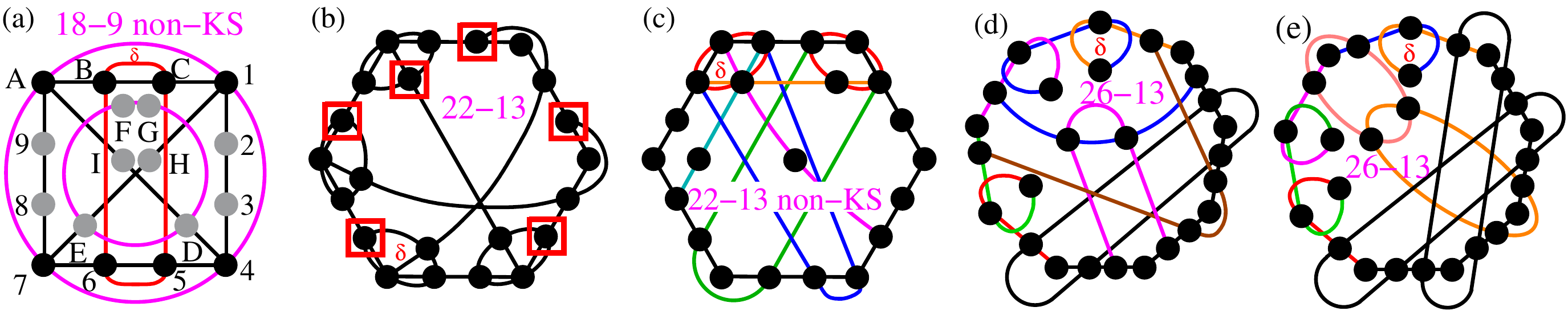}
\end{center}
\caption{(a-d) Figures of MMP hypergraphs from Table \ref{T:4d1};
  (a) 18-9 non-KS---a subhypergraph of Peres' 24-24; (b) critical
  KS 22-13 from the 636-1657 class; vertices that contribute to
  $\alpha=HI_{cM}=6$ are squared in red; note that
  $\alpha_r^*=\frac{22}{4}=5.5<\alpha$; (c) 22-13 non-KS from
  the 60-105 class which is a subclass of the 636-1657 class;
  (d) critical KS 26-13 which belongs to both classes;
  (e) critical KS 26-13 which is from the 636-1657 class but which
  does not belong to the 60-105 class.}
\label{fig:22}
\end{figure*}

All MMP hypergraphs shown in Fig.~\ref{fig:22} exhibit the maximal
level of the so-called $\delta$ feature (pairs of them to intersect
each other twice, at two vertices) which characterizes most of
the KS MMP hypergraphs from the 636-1657 class. Notice that the
$\delta$ feature characterizes all MMP hypergraphs---due to their
definition (Def.~\ref{def:MMP-string}(4.))---and not just KS ones
(cf.~Fig.~\ref{fig:22}(a,c)).

The 636-1657 class, whose critical KS sets overwhelmingly exhibits the
$\delta$ feature, is completely disjoint from two other 4-dim classes
of KS criticals 300-675 and 148-265, which do not exhibit the $\delta$
feature at all, and which are in turn completely disjoint from each
other. Moreover, the non-KS sets and the non-critical KS sets from
the 636-1657 class also possess the $\delta$ feature.

\begin{table}[ht]
\caption{Parameters of the considered 4-dim MMP hypergraphs.
  KS ones are from the 636-1657 class, apart from the 60-75 master
  which is from the 300-675 class. Non-KS ones are from the 24-24 and
  60-105 classes, respectively.}
  \center
\setlength{\tabcolsep}{8pt}
\begin{tabular}{@{}l*{7}{c}@{}} 
\hline
MMP hypergraphs&$HI_{cM}$&$HI_{cm}$&$l_{cM}$&$l_{cm}$&crit.&vec.~compon.\\
\hline
18-9 \cite{cabell-est-96a} 
&4
&3
&8
&6
&yes
&(0,$\pm 1$)
\\
18-9 {\color{red}non-KS} [{\color{Blue}here}]
&6
&4
&9
&7
&no
&(0,$\pm 1$)
\\
20-11 \cite{pmmm05a,pmmm05a-corr}  
&5
&3
&10
&8
&yes
&(0,$\pm 1$)
\\
21-11 \cite{pavicic-book-13,pm-entropy18}  
&5
&3
&10
&8
&yes
&(0,$\pm 1$,$i$)
\\
22-13 \cite{pmmm05a,pmmm05a-corr}
&6
&3
&12
&$8$
&yes
&(0,$\pm 1$)
\\
22-13 {\color{red}non-KS} [{\color{Blue}here}]
&8
&3
&13
&9
&no
&(0,$\pm 1$)
\\
26-13 [{\color{Blue}here}]
&6
&5
&12
&10
&yes
&(0,$\pm 1$,$\pm i$,2)
\\
28-17 [{\color{Blue}here}]
&8
&5
&16
&12
&yes
&(0,$\pm 1$,$\pm i$)
\\
29\dots 34-17 [{\color{Blue}here}]
&8
&5
&16
&12
&yes
&(0,$\pm 1$,$\pm i$)
\\
35-17 [{\color{Blue}here}]
&10
&7
&16
&14
&yes
&(0,$\pm 1$,$\pm i$)
\\
35-17 {\color{red}non-KS} [{\color{Blue}here}]
&11
&7
&17
&13
&yes
&(0,$\pm 1$,$\pm i$)
\\
30-18 [{\color{Blue}here}]
&8
&5
&17
&11
&yes
&(0,$\pm 1$,$\pm i$)
\\
31\dots 36-18 [{\color{Blue}here}]
&11
&5
&17
&12
&yes
&(0,$\pm 1$,$\pm i$)
\\
37-18 [{\color{Blue}here}]
&11
&7
&17
&15
&yes
&(0,$\pm 1$,$\pm i$)
\\
24-24 \cite{peres} (master)
&5
&3
&20
&12
&no
&(0,$\pm 1$)
\\
60-105 \cite{waeg-aravind-jpa-11} (master)
&12
&7
&84
&70
&no
&(0,$\pm 1$,$i$)
\\
60-75 \cite{mp-nm-pka-mw-11} (master)
&13
&9
&65
&45
&no
&(0,$\pm 1$,$\pm\frac{\sqrt{5}\pm 1}{2}$)
\\
\hline
\end{tabular}
\label{T:4d1}
\end{table}

That means that the $\delta$ feature characterises classes of
hypergraphs although it does not determine the contextuality---both,
classes that possess it as well as those that do not are
contextual. The $\delta$ feature determines MMP classes
through their structure and coordinatization, though. For instance:
\begin{enumerate}[label=(\roman*)]
\item it is absent in 3-dim MMP hypergraphs due to their definition; 
  3-dim MMP hypergraphs are equivalent to Greechie diagrams
  \cite{pavicic-entropy-19}, but $n$-dim, $n\ge 4$ MMP hypergraphs
  are not, exactly due to the $\delta$-feature which is not permitted
  to any Greechie diagram due to its definition; also the smallest
  loops in any Greechie diagram in dimension are pentagons by its
  definition \cite{mp-7oa}; the smallest loops in 3-dim MMP
  hypergraphs are pentagons due to their geometry and that is
  the reason why they are equivalent only in the 3-dim space
  \cite{pmmm05a,pmmm05a-corr}; 
\item it allows the smallest hypergraphs in the 636-1657 MMP
  hypergraph class to be smaller than the smallest ones in the
  300-675 and 148-265 MMP hypergraph classes that do not exhibit
  the $\delta$ feature;
\item it is present in the MMP hypergraph which represents the
  exclusivity graph \cite{magic-14} and plays an essential 
  role in the quantum computation theory (See
  Sec.~\ref{subsec:magic});
\item it characterizes all higher dimensional MMP hypergraphs;
\item in the 4-dim Hilbert space it resides in the complex spaces,
  while it is absent in the real ones \cite{pavicic-pra-17,pwma-19}.
\end{enumerate}

Apart from these characteristics, parameters obtained for the MMPs
from the 300-675 and 148-265 classes do not fundamentally differ
from those obtained for the 636-1657 class and therefore we do not
give equivalent set of examples for the former classes.

However, there is another feature of all non-KS MMP
hypergraphs like the 18-9 shown in Fig.~\ref{fig:22}(a). Let us
first analyze the very non-KS 18-9. Its coordinatization is
generated by the $\{0,1,-1\}$ vector components:
{\tt 1}=(0,0,0,1), {\tt 2}=(1,-1,0,0), {\tt 3}=(1,1,0,0), {\tt 4}=(0,0,1,0),\break
{\tt 5}=(1,0,0,1), {\tt 6}=(1,0,0,-1), {\tt 7}=(0,1,0,0), {\tt 8}=(0,0,1,-1), {\tt 9}=(0,0,1,1), {\tt A}=(1,0,0,0), {\tt B}=(0,1,1,0),\break {\tt C}=(0,1,-1,0), {\tt D}=(0,1,0,1), {\tt E}=(1,0,1,0), {\tt F}=(1,1,-1,-1), {\tt G}=(1,-1,-1,1), {\tt H}=(0,1,0,-1), {\tt I}=(1,0,-1,0).

Its $\overline{\rm subhypergraph}$ with all $m=1$ vertices removed
is shown in Fig.~\ref{fig:18-9-non}(a). It is a contextual non-binary
MMP hypergraph (Def.~\ref{def:n-b}) which contains four critical
MMP subhypergraphs shown in Figs.~\ref{fig:18-9-non}(b,f,g,h).
A $\overline{\rm subhypergraph}$ of Fig.~\ref{fig:18-9-non}(b) with
all $m=1$ vertices removed is shown in Fig.~\ref{fig:18-9-non}(c).
It contains a critical MMP subhypergraph 3-3 shown in
Fig.~\ref{fig:18-9-non}(d). It is the smallest contextual non-binary
MMP hypergraph that exists.

\begin{figure}[ht]
\begin{center}
  \includegraphics[width=0.99\textwidth]{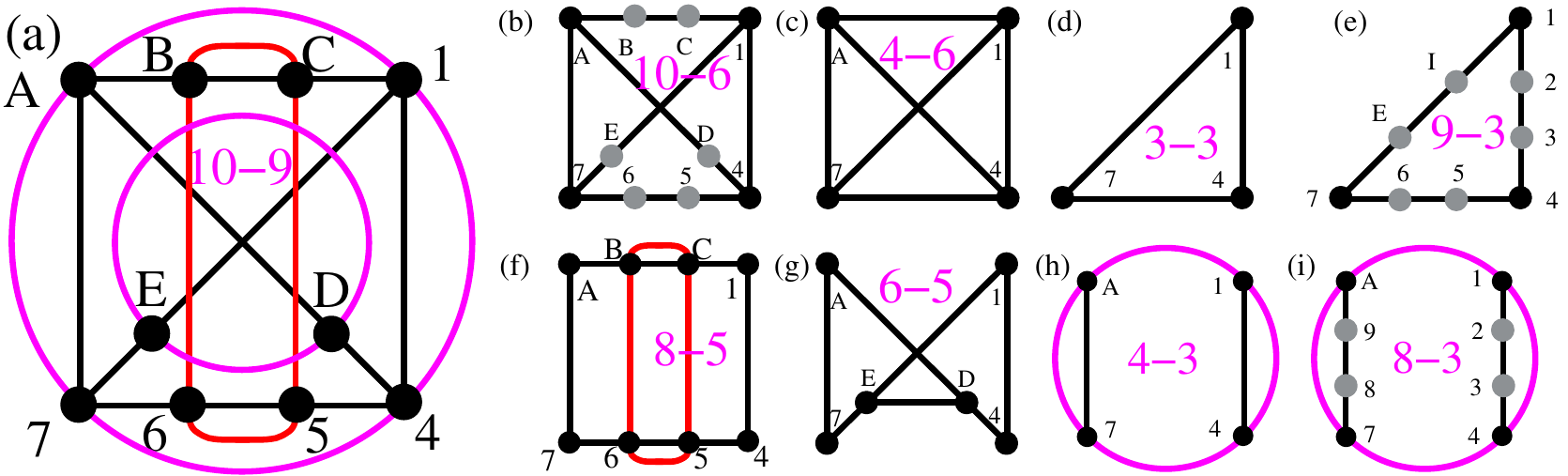}
\end{center}
\caption{(a) $\overline{\rm subhypergraph}$ of 18-9 non-KS MMP
  hypergraph shown in Fig.~\ref{fig:22}(a); (b,f,g,h) critical
  sub-hypergraphs of 10-9; (c) non-critical
  $\overline{\rm subhypergraph}$ of 10-6; (d) a critical
  subhypergraph of 4-6---the smallest MMP hypergraph that
  exists: 3-3; (e) filled 3-3; (i) filled 4-3.}
\label{fig:18-9-non}
\end{figure}
\begin{result}\label{smallest}{
  The critical contextual non-binary 4-dim MMP hypergraph {\rm 3-3}
  with 3 vertices and 3 hyperedges with coordinatization shown in
  Fig.~\ref{fig:18-9-non}(d) is the smallest existing
  contextual hypergraph with a coordinatization in any dimension
  because the 3-dim {\rm 3-3} non-binary {\rm MMP} does not have
  a coordinatization {\rm\cite{pavicic-entropy-19}}}.
\end{result}

\begin{resultt}\label{sub-non}
  $\overline{Subhypergraphs}$ of
    noncontextual binary {\rm MMP} hypergraphs as well as of their
    subhypergraphs and $\overline{subhypergraphs}$ are
    overwhelmingly contextual, i.e., they are mostly non-binary
    {\rm MMP} hypergraphs.
\end{resultt}

We confirmed this feature on thousands of binary MMP hypergraphs.
It enables us to obtain a much greater varieties of contextual sets
than via the KS or the operator generation, including obtaining
a plethora of small sets in any dimension. For the time being,
we have carried out a massive generation neither of binary MMP
hypergraphs nor of non-binary MMP hypergraphs that would follow
from the binary ones via the \ref{sub-non} feature
(noncontextual $\longrightarrow$ contextual).
We leave that for a future project.

\phantom{We leave that for a future project.}

\subsection{\label{subsec:magic}Graphs,
  hypergraphs, contextuality,  experiments, and
  computation}

In this section we review the usage of the graph formalism and the
GLS inequality via the following examples: ``Experimental
Implementation of a Kochen-Specker Set of Quantum Tests'' by
{D'A}mbrosio, Herbauts, Amselem, Nagali, Bourennane, Sciarrino,
and Cabello \cite{d-ambrosio-cabello-13}, ``Contextuality Supplies
the `Magic' for Quantum Computation}'' by Howard, Wallman, Veitech,
and Emerson \cite{magic-14}, and ``Graph-Theoretic Approach to
Quantum Correlations'' by Cabello, Severini, and Winter
\cite{cabello-severini-winter-14}.

In Fig.~\ref{fig:gr-hyp} we see that the graph representation (c)
of the 18-9 KS set has three times as many edges as its MMP
hypergraph representation (b). In higher dimensions and
for more vertices and edges the graph representation gets more and
more complicated and graphically unintelligible. Matrix graph
representation also becomes hardly manageable in comparison with
the MMP string representation. That is why Cabello
\cite[Fig.~1]{cabello-08} first adopted the general hypergraph
representation \cite[Fig.~3(a)]{pmmm05a,pmmm05a-corr} for the 18-9 KS. However,
in \cite[Fig.~1(a)]{d-ambrosio-cabello-13} the authors,
surprisingly, abandoned the hypergraph language and adopted the
graph representation shown in Fig.~\ref{fig:graph}(e) that caused
the following inconsistencies.  

In \cite[Fig.~1(a)]{d-ambrosio-cabello-13} 9 edges were added to
the 18-9 KS graph from Fig.~\ref{fig:graph}(d)---in
Fig.~\ref{fig:graph}(e) they are denoted as 3 green and 6 red ones.
This turns the 18-9 set into an 18-18 set whose MMP hypergraph
representation is shown in Fig.~\ref{fig:graph}(g). The measurements
for vertices on these additional 9 edges were carried out and
provided in \cite[Supp.~Material, Table III]{d-ambrosio-cabello-13}.
For instance, for vertices {\tt 1,2,5,10,15,18}, i.e., edges
{\tt 1-10,\ 2-15} and {\tt 5-18}, the probabilities $p_{1,10}$,
$p_{2,15}$, and $p_{5,18}$ were obtained. However, as shown in
\cite{larsson,pmmm05a,pmmm05a-corr,held-09,pavicic-pra-17,pavicic-entropy-19,budroni-cabello-rmp-22}
not just two but four vertices should be measured and should
have a coordinatization in each green and red hyperedge even when we
do not take all of them into account while postprocessing the data.

\begin{figure}[h]
  \includegraphics[width=1\textwidth]{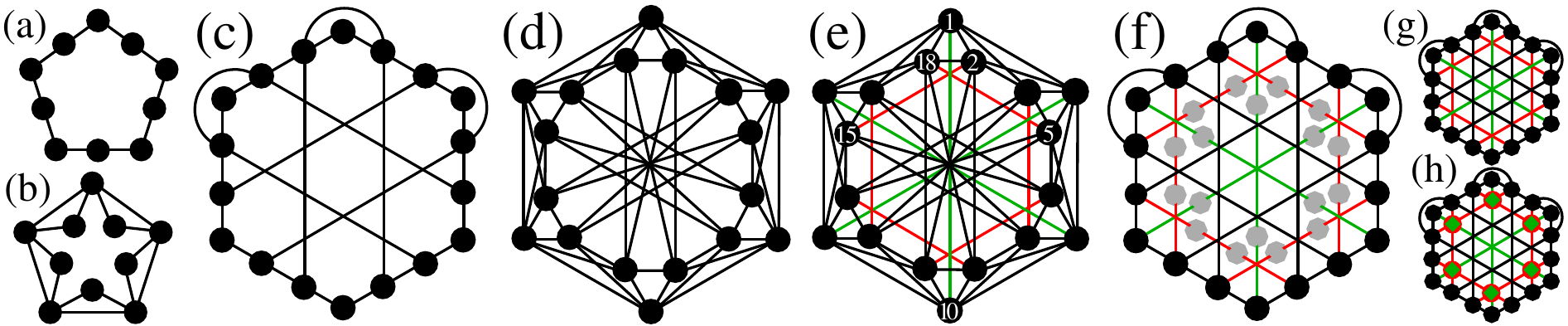}
\caption{(a) MMP representation of the 10-5 pentagon; (b) the
  same set in the graph representation \cite{berge-73}: 10-15;
  (c) the smallest 4-dim KS set in the MMP hypergraph
  representation: 18-9 \cite{pmmm05a,pmmm05a-corr}; (d) the same set in the
  graph representation: 18-54; (e) contextual non-KS set
  implemented in \cite{d-ambrosio-cabello-13} where it is
  misnamed as a KS set---graph representation from
  \cite[Fig.~1(a)]{d-ambrosio-cabello-13}: 18-63; (f) the same
  set, with $m=1$ (gray) vertices added, in the MMP hypergraph
  representation: 36-18; (g) the same set with $m=1$ (gray)
  vertices dropped---non-binary MMP hypergraph 18-18---equivalent
  to 18-63 (e)-graph; (h) the same set with gray $m=1$ vertex
  triples merged into $m=3$ vertices---KS 24-18 MMP
  hypergraph---a subhypergraph of Peres' 24-24 MMP hypergraph.}
\label{fig:graph}
\end{figure}

That means that the 18-vertex set from 
\cite[Fig.~1(a)]{d-ambrosio-cabello-13} is not a ``Kochen-Specker
set'' as claimed in the title of the paper because all edges in a 
4-dim KS set should have 4 vertices and green and red edges have
only two vertices. The missing vertices should be added. In a graph
representation it would be a real mess of lines and therefore
we show them in the MMP hypergraph representation as gray dots in
Fig.~\ref{fig:graph}(f). It should have a coordinatization, but
apparently this filled 36-18 MMP hypergraph (36 vertices and 18
hyperedges) has {\em no\/} coordinatization. We verified, that
MMP 36-18 is not a subhypergraph of any known 4-dim MMP hypergraph
class \cite{pavicic-pra-17,pavicic-entropy-19,pwma-19},
i.e., that there are no known \cite[Tables\ 1\ \&\ 2]{pm-entropy18}
vector components for any available coordinatization. Hence, not
only that the considered set is not a KS set, but the measurement
data themselves in \cite{d-ambrosio-cabello-13} are inconsistent.

One way out of these inconsistencies is to merge triples of
gray vertices at the intersections of hyperedges as shown in
Fig.~\ref{fig:graph}(h), i.e., new measurements should be carried
out for the additional 6 vertices of the new 24-18 MMP hypergraph
which is one of 1233 KS MMP hypergraphs \cite{pmm-2-09}
contained in Peres' 24-24 master set.

Another way out would be to abandon green and red hyperedges and
reduce the implementation to the 18-9 KS MMP hypergraph.

Our second example is the one of Howard, Wallman, Veitech, and
Emerson \cite{magic-14}. They have shown that stabilizer operations
with quantum bits initialized as magic states, i.e., superposition
of states, can be used to purify quantum gates provided they exhibit
contextuality. As a proof that considered sets are contextual
the authors make use of the GLS inequality 
\cite[p.~192]{gro-lovasz-schr-81}.

However, ``There are some subtleties that limit what these results
can say about [qubits] as opposed to larger quantum systems. The
limitation could simply be a vagary of the proof technique used
by the authors''
\cite{bartlett-nature-14}.

Their proof is a kind of a proof by induction and we shall focus
on its first step which elaborates on a two-qubit system, a qubit
being a 2-dim system ($p=2$). The graph $\Gamma$ they make
use of for the purpose is shown in \cite[Fig.~2]{magic-14} and
its MMP hypergraph presentation in Fig.~\ref{fig:magic}(a). Details
on how is $\Gamma$ obtained from a set of entangled projectors
\cite[Eq.~(14)]{magic-14} which are in turn obtained from the
set of stabilizers states \cite[Eq.~(14)]{magic-14} are not
provided. A reference to \cite{cabello-severini-winter-14} is
given instead and we shall come back to it below. 

\begin{figure}[h]
  \includegraphics[width=1\textwidth]{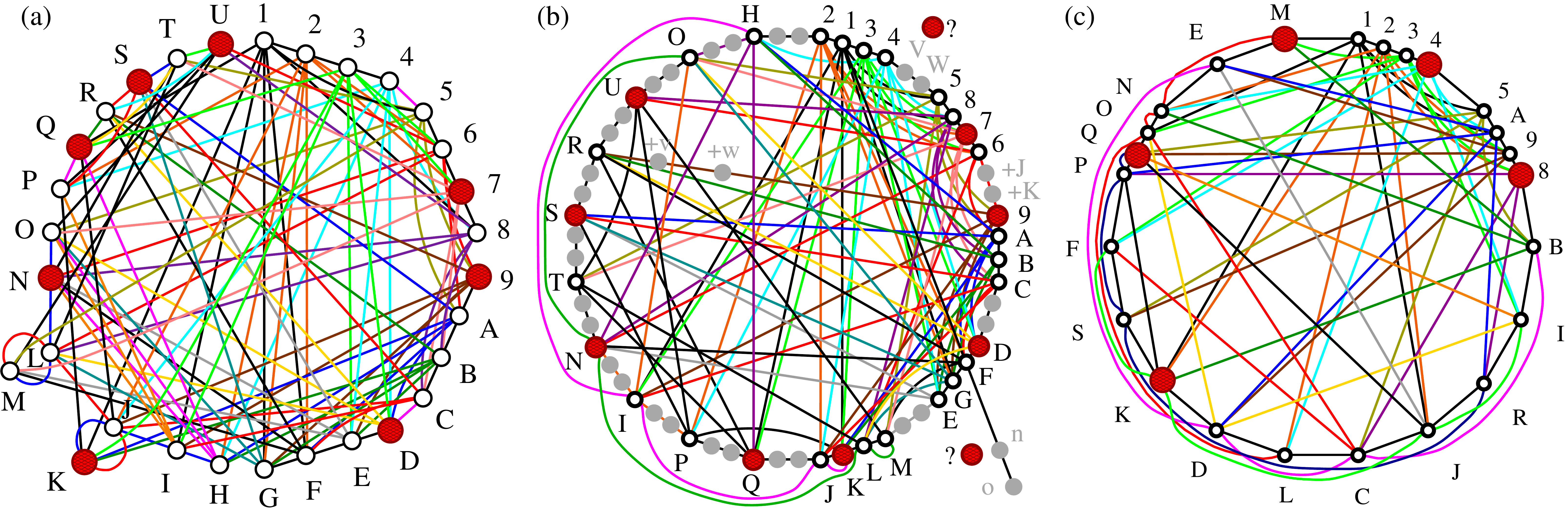}
\caption{(a) $\Gamma$ in MMP notation: 30-108 MMP; (b) filled
  $\Gamma$ with all vertices that belong to only one hyperedge
  ($m=1$); it is a 232-108 KS MMP hypergraph;  vertices {\tt 7,9,D}
  have $m=8$ and {\tt K,N,Q,S,U} have $m=7$; if vertices with $m=1$
  were shown (as, e.g., {\tt +v} and {\tt +w}), that would overcrowd
  the figure. One can avoid such a clutter by extending the hyperedges
  and positioning the vertices outside the loop as, e.g., {\tt n} and
  {\tt o}; (c) the only found critical contained in 232-108 is 152-71
  and when its vertices with $m=1$ are dropped it becomes the 24-71
  MMP shown here; in all three figures, independent vertices are dark
  patterned, red, and enlarged; the ASCII strings for all three
  figures are given in Appendix \ref{app2}.}
\label{fig:magic}
\end{figure}
One can verify that $\alpha(\Gamma)=p^3=8$ holds (e.g., via
\textsc{One}). How $\alpha=8<\alpha^*=9$ (the 2nd line of the proof
of Theorem 1 in \cite{magic-14}) is obtained is not explained in
detail but the approach the authors seem to have applied apparently
runs as follows. $\Gamma$ is a 30-108 non-binary MMP hypergraph
whose string is given in Appendix \ref{app2} and whose graphical
representation is given in Fig.~\ref{fig:magic}(b). The MMP
notation is substituted for the original clique representation of
mutually orthogonal vertices in seven hyperedges which each
contain four vertices ({\tt 1234,5678,9ABC,DEFG,HIJK,JKLM,MLNO}),
altogether 24 vertices, each of which within each of the 7
hyperedges has the probability $p=\frac{1}{4}$ of being detected,
so that their sum of probabilities amounts to 6. The remaining 6
vertices ({\tt P,Q,R,S,T,U}) are apparently assumed to have the
probabilities $p=\frac{1}{2}$ and it is apparently also assumed
that the sum of their $\frac{1}{2}$ probabilities amounts to 3.
That yields the total sum of probabilities equal to
$\alpha^*(\Gamma)=9$ \cite[Proof of Theorem 1]{magic-14}. But
hyperedges that connect two vertices that do not both belong to
the aforementioned 6 vertices, e.g., {\tt 15} or {\tt 1Q}, are
not taken into account in this calculation at all. Let us see
how this can be amended.

The string of the 30-108 MMP given in Appendix \ref{app2} offers us
the following probabilities. Vertex {\tt 1} is in the hyperedge
{\tt 1234} and has the probability of $\frac{1}{4}$ of being
detected. But it is also in the following 7 hyperedges {\tt 15},
{\tt 18}, {\tt 1F}, {\tt 1G}, {\tt 1I}, {\tt 1K}, and {\tt 1Q}
within each of which it has the probability $\frac{1}{2}$ of being
detected. The arithmetic mean of these probabilities is
$(7\frac{1}{2}+\frac{1}{4})/8=\frac{15}{32}$. There are 16
such blocks. The rest are organized in 7-hyperedges blocks:
four $(5\frac{1}{2}+2\frac{1}{4})/7=\frac{3}{7}$,
four $(6\frac{1}{2}+\frac{1}{4})/7=\frac{13}{28}$, and six 
$7\frac{1}{2}/7=\frac{1}{2}$. The total sum of probabilities 
probabilities is $\frac{197}{14}=14.07=\alpha_r^*$, which
differs from $\alpha^*(\Gamma)=9$
\cite[Proof of Th.~1]{magic-14}
  \begin{eqnarray}\label{eq:alpha-alpha-gamma}
    8=\alpha(\Gamma)< 14.07=\alpha_r^*(\Gamma)\ne
    \alpha^*(\Gamma)=9.   
  \end{eqnarray}
The question arises whether $\alpha^*(\Gamma)=\alpha_r^*(\Gamma)$
under another approach. We discus this below. In any case it does
not seem correct to assign the probabilities $\frac{1}{2}$ and
$\frac{1}{4}$ to the aforementioned 6 and 24 vertices based on
their containment in the 2-vertex- and 4-vertex-hyperedges,
respectively, and ignore their containment in 86 hyperedges that
connect vertices in the 2-vertex-hyperedges with those in the
4-vertex-hyperedges or two vertices in different 4-vertex-hyperedges
via 2-vertex-hyperedges. For the time being, let us elaborate on
discarding two states from a tensor product of states of two qubits.

Any two mutually orthogonal vertices from two-vertex-hyperedges
in the 30-108 MMP belong to an edge and therefore to two qubits.
The two-vertex-hyperedge only means that two of four states are
discarded. Thus 202 of 232 vertices are excluded. Each qubit has
a coordinatization and a complete measurement of each edge must
involve all four vertex states, i.e., only complete states (both
vertices from each hyperedges) can build up tensor products of
two qubits. The quantum systems must pass out-ports no matter
whether we take them into account in a later elaboration
of our data or not. Related to that, the claim that $\alpha=8$
should be clarified, because, e.g., in Fig.~\ref{fig:magic} we see
that the vertices {\tt 7,9,D,K,Q,N,S,U,V,n}, etc., are independent
and yield the independence number $\alpha\ge101$ that violates the
inequalities $\alpha<\alpha^*$ and $\alpha<\alpha_r^*$. It remains
to be explained how come that vertices with $m=1$ do not contribute
to the independence set when they actually build up a KS MMP
hypergraph 232-108. If it were a result of the construction of
$\Gamma$, then the role of $\alpha=p^3=8$ should also be explained,
because there are contextual critical non-binary MMP hypergraphs
with higher $\alpha$ which violate Eq.~(\ref{eq:alpha-alpha})
(see Table \ref{T:4d1}).

When we take into account all vertices of all qubits we get 
$\Gamma$(filled) 232-108 KS MMP, whose string is given in
Appendix \ref{app2}. Its parameters are given in Table \ref{T:gamma}.
It is not critical, and the only critical set contained in it, that
we obtained, is the 152-71 critical KS MMP shown in
Fig.~\ref{fig:magic}(d) (where we dropped $m=1$ vertices) whose
hypergraph string is also given in the Appendix \ref{app2}. It
satisfies $64\le\alpha(152$-$71)>\alpha_r^*(152$-$71)=38$.
The string of 152-71 with dropped $m=1$ vertices---24-71 MMP---is
given in the Appendix \ref{app2}. It has 5 independent vertices:
{\tt 4,8,K} ($m=7$) and {\tt Q,M} ($m=6$). Their parameters are
also given in Table \ref{T:gamma}. We obtain
$5=\alpha(24$-$71)<\alpha_r^*(24$-$71)=7.12$.

\renewcommand{\arraystretch}{1.2}
\begin{table}[ht]
\caption{Terms for the  inequalities of 4-dim
  contextual non-binary $\Gamma$ MMP hypergraphs from 
  Fig.~\ref{fig:magic}: $\alpha_r^*$-inequality:
  $\alpha\le\alpha_r^*$ (violated for 232-108 and 152-71),
  v-inequality: $HI^m_{cM}<l$, and e$_{Max}$-inequality: $l_{cM}<l$;
  $m_M$ is the maximal $m$; computer search for vectors formed
  from simple components
  ($0,\pm 1,\pm i,\pm\omega,\pm 2,\pm\sqrt{2},\pm 3,\pm 5$) failed;
  finding of $HI_{cm},l_{cM},l_{cm}$ for 30-180 and 24-71 requires
  tweaking of the program {\textsc{One}} so as to provide us with
  the parameters when some hyperedges (from an $n$-dim space)
  contain less than $n$ vertices, what we have not done as of yet;
  note that non-critical 30-108, 232-108, and 24-71 MMPs generate
  thousands of smaller non-binary $\overline{\rm subhypergraphs}$.}
\center
\setlength{\tabcolsep}{4.5pt}
\begin{tabular}{ccccccccccc} 
\hline
\multirow{2}{*}{dim}&\multirow{2}{*}{KS hypergraphs}&$HI_{cM}$&\multirow{2}{*}{$HI_{cm}$}&\multirow{2}{*}{$l_{cM}$}&\multirow{2}{*}{$l_{cm}$}&\multirow{2}{*}{$l$}&\multirow{2}{*}{$m_M$}&\multirow{2}{*}{$\alpha_r^*$}&\multirow{2}{*}{crit.}&vector
\\
&&$\leftrightarrow\alpha$&&&&&&&&components
\\
\hline
\multirow{2}{*}{\rotatebox{90}{$\Gamma$ MMPs\ \ $\,$}}
&30-108
&8
&-
&-
&-
&108
&8
&14.07
&no
&?
\\
&232-108
&101
&59
&107
&101
&108
&8
&58
&no
&?
\\
&24-71
&5
&-
&-
&-
&70
&7
&7.12
&no
&?
\\
&152-71
&64
&41
&70
&64
&71
&7
&38
&yes
&? \\
\hline
\end{tabular}
\label{T:gamma}
\end{table}
\renewcommand{\arraystretch}{1}

While deriving their fractional independence number inequality
(Eq.~(\ref{eq:alpha-alpha})) for their $\Gamma$ graphs,
Howard, Wallman, Veitech, and Emerson \cite{magic-14} refer
to the paper of Cabello, Severini and Winter
\cite{cabello-severini-winter-14} who claim that the GLS
inequality is a noncontextuality inequality. In contrast,
Theorem \ref{th:alpha-star} shows that for quantum YES-NO
measurements carried out on MMP hypergraphs for which the raw data
statistics \ref{hyp-stat-a}(a) is formed, the $\alpha^*$-inequality
(\ref{eq:alpha-cabelllo}) should reduce to the
$\alpha^*_r$-inequality (\ref{eq:alpha-alpha}) and therefore
cannot be considered a noncontextuality inequality since arbitrarily
many contextual and noncontextual MMP hypergraphs violate it, as
exemplified in Figs.~\ref{fig:alpha-star}(a-e), \ref{fig:for17}(a,h),
\ref{fig:alpha-3d}(d), and \ref{fig:22}(b).

On the other hand, in order to deal with the 30-108 MMP hypergraph
we first have to implement 232-108, measure all 232 vertices within
their hyperedges and only then postselect 30 vertices to form
30-108 MMP hypergraph and prove the contextuality. But the 232-108
KS MMP has far too intricate a coordinatization for an
implementation. We tried to generate it from simple vector
components but did not get anything within months of running our
programs on a supercomputer.

Besides, there are practically arbitrary many simpler non-binary
contextual MMP hypergraphs that can be automatically generated
and whose contextuality can be automatically verified via existing
algorithms and programs which then satisfy the inequalities
(\ref{eq:i-ineq}) or (\ref{eq:e-ineq}). Here, one should only
answer the question of what such an inequality for an MMP
hypergraph offers to a quantum computer once it has already been
verified that it is contextual.

\subsection{\label{subsec:mermin}Peres-Mermin
  non-binary MMP hypergraphs  and the smallest MMP
  hypergraph that exists revisited}

In Sec.~\ref{sec:op-ineqal}, Eq.~(\ref{eq:a-cab2}), we
referred to an operator-based inequality for the 4-dim
KS 18-9 MMP hypergraph and in Sec.~\ref{sec:structure},
Eqs.~(\ref{eq:O-I})-(\ref{eq:O-Ic}), we consider an analogous
operator-based inequality for a general critical MMP hypergraph
and for the 21-11 MMP, in particular. In both cases the operators
are defined via vectors/states/vertices of a given MMP hypergraph.
In contrast, the so-called Peres-Mermin square is defined via
operators alone, i.e., without a vector-defined set underlying
the operator set. The operator set is defined by means of the
following nine operators \cite{mermin90}:
\begin{eqnarray}\label{eq:mermin}
\Sigma_{1}=\sigma_z^{(1)}\!\otimes\! I^{(2)},\
\Sigma_{2}=I^{(1)}\!\otimes\!\sigma_z^{(2)},\
\Sigma_{3}=\sigma_z^{(1)}\!\otimes\!\sigma_z^{(2)},&  \notag\\
\Sigma_{4}=I^{(1)}\!\otimes\!\sigma_x^{(2)},\
\Sigma_{5}=\sigma_x^{(1)}\!\otimes\! I^{(2)},\
\Sigma_{6}=\sigma_x^{(1)}\!\otimes\!\sigma_x^{(2)},& \notag\\
\Sigma_{7}=\sigma_z^{(1)}\!\otimes\!\sigma_x^{(2)},\
\Sigma_{8}=\sigma_x^{(1)}\!\otimes\!\sigma_z^{(2)},\
\Sigma_{9}=\sigma_y^{(1)}\!\otimes\!\sigma_y^{(2)}.&\qquad
\label{eq:sigma-m}
\end{eqnarray}

The Peres-Mermin square schematic is shown in the
figure (a) below. The square has 9 dots and 6 lines
and it is claimed that the Peres-Mermin square which is
``convert[ible] \dots\ to KS vectors''
\cite[p.~8]{budroni-cabello-rmp-22}
``exhibits SIC [state independent contextuality]''
\cite{asadian-15}.

Why, then, do Cabello, Kleinmann and Portillo claim
that ``according to quantum theory, no SIC set with less
than 13 rays exists'' \cite{cabello-klein-port-prl-14}? 

Do they have some particular vector-based set which one can
derive from the operator-based Peres-Mermin square in mind?

Because, it might be argued that the Peres-Mermin square is not a
SIC set and even not a consistent contextual set in the following
sense. Eqs.~(\ref{eq:square-S}) and (\ref{eq:sigma-m}) show that
there are three operators in each row and/or column which multiply
so as to give $\pm I$. But there is no common eigenstate or a
combination of eigenstates $|\psi\rangle$ of operators $\Sigma_j$
which would counterfactually enable
$\Sigma_j|\psi\rangle=\pm|\psi\rangle$. Hence, their classical
counterparts $S_j$ (Eq.~(\ref{eq:square-s})) cannot be assigned
values $\pm 1$, either. In other words, since such counterfactually
assumed clicks of nondestructive measurements carried out via
each of operators $\Sigma_j$ cannot occur, assignments of $\pm 1$
to classical counterparts of these assumed measurements is
ungrounded.

This statement is at odds with the overwhelming acceptance and
acclaim of the Peres-Mermin square as a contextual set and a KS
proof in the literature. Let us dissect it.

\bigskip

\begin{resulttttt}\label{p-m-contr}\  
\begin{enumerate}[label=(\roman*)] 
\item We have 
  \begin{eqnarray}\label{eq:p-m-c}
    \Sigma_1\Sigma_2\Sigma_3=\Sigma_4\Sigma_5\Sigma_6
    =\Sigma_7\Sigma_8\Sigma_9=\Sigma_1\Sigma_4\Sigma_7
    =\Sigma_2\Sigma_5\Sigma_8=I,
       \qquad\Sigma_3\Sigma_6\Sigma_9=-I; 
   \end{eqnarray}
\item all operator/projector-based or hypergraph-based contextual
  or {\rm KS} sets assume {\rm YES-NO} measurements (counterfactual
  or actual) carried out on systems emerging from prepared
  gates; upon leaving a gate determined by operators/projectors
  or vectors/ver\-tices the systems are projected to or detected
  in a particular state or not; so, a quantum system in state
  $|\Psi\rangle$ which enters sequences of three gates, whose
  actions are described by operators $\Sigma_j$ from
  Eq.~(\ref{eq:sigma-m}), should either counterfactually or actually
  emerge from each of the gates either in the state $|\Psi\rangle$
  or in the state $-|\Psi\rangle$;
\item  a classical set of states of a classical system which
  would be a counterpart of the quantum system described in (ii)
  should experience predetermined actions of classical gates
  described by observables $S_j$ which would assign either
  `$1$' or `$-1$' to each state;
\item there is no state $|\Psi\rangle$ for which we would
       have
  \begin{eqnarray}\label{eq:p-m-cc}
    \Sigma_i|\Psi\rangle=|\Psi\rangle\qquad
    {\rm and} \qquad  \Sigma_j|\Psi\rangle=-|\Psi\rangle\qquad
  \end{eqnarray}
for $i,j\in \{1,\dots,9\};\ i\ne j$;  
\item statements (ii) and (iv) contradict each other, so the
  statement (iii) cannot hold either. 
\end{enumerate}
\end{resulttttt}

To be more specific, let us consider the following case.
It is generally assumed that the Peres-Mermin contradiction
is state independent and Eq.~(\ref{eq:p-m-c}) is offered
as a support for the claim. Our point is that a quantum
system in state $|\Psi\rangle$ has to pass three gates in a
succession; say first through $\Sigma_1$, then through
$\Sigma_2$, and finally through $\Sigma_3$ so as to emerge
in states $\pm|\Psi\rangle$. But to which counterfactual (not
performed) quantum measurements (``clicks'') assumed valuations
of classical counterparts $S_i$ might correspond? According to
point $(iv)$ eigenvalues of $\Sigma_i$ cannot play such a role.
For an arbitrary state, say the triplet
$|\Psi^+\rangle=|\uparrow\downarrow\rangle+
|\downarrow\uparrow\rangle$, we obtain:
$\Sigma_1\Psi^+\rangle
=|\uparrow\downarrow\rangle-|\downarrow\uparrow\rangle$ and
$\Sigma_2|\Psi^+\rangle
=|\uparrow\uparrow\rangle+|\downarrow\downarrow\rangle$
The observables change the triplet state into
other states (singlet and another triplet). So, the operators
do not act on the same state and this is the meaning of
point $(ii)$ above. 

Our conclusion is that it is inconsistent to assume the existence
of a classical observable $S_j$ which would assign $\pm 1$ to the
states of a system because there is no quantum state $|\Psi\rangle$
of a system which $\Sigma_j$ would project to states
$\pm|\Psi\rangle$. Since the noncontextuality cannot be formulated
we cannot talk about Peres-Mermin square contextuality either.

On the other hand, Budroni, Cabello, G{\"u}hne, Kleinmann and
Larsson claim \cite[p.~8]{budroni-cabello-rmp-22}: 
``The [Peres-Mermin] magic square can be converted into a standard
proof of the KS theorem with vectors [from] \cite{peres}.''
This is, however, incorrect. The Peres-Mermin square cannot be
converted to the vectors given in \cite[Table 2]{peres}. When
properly organized in hyperedges (not explicitly carried out in
\cite{peres}) those hyperedges yield one of 1,233 KS MMP hypergraphs
\cite{pmm-2-09} contained in the master MMP hypergraph 24-24.
The latter set is obtained by adding further hyperedges
(not provided in \cite{peres}) to the former ones
\cite{mpglasgow04-arXiv-0} ``although [Peres] has most probably
never tried to identify all 24 tetrads for his 24 vectors.''
But there exists no conversion of the Peres-Mermin square given
on p.~L179 of \cite{peres} to or from any of 1,233 KS MMP
hypergraphs generated by vectors from Table 2 on p.~L177 of
\cite{peres}. They are simply independently given in the same
paper.

Let us therefore see whether we can arrive at vectors for which
some plausible linkage with the Peres-Mermin operators would be
possible. The idea is to establish a correspondence with the Pauli
operators $\Sigma$ structure, as a contextual ``square of
orthogonalities'' which would support a postselection of 9
measurements in a $3\times 3$ arrangement of YES-NO measurements
shown in Fig.~\ref{fig:p-m}:

\begin{figure}[ht]
  \begin{center}
  \includegraphics[width=1\textwidth]{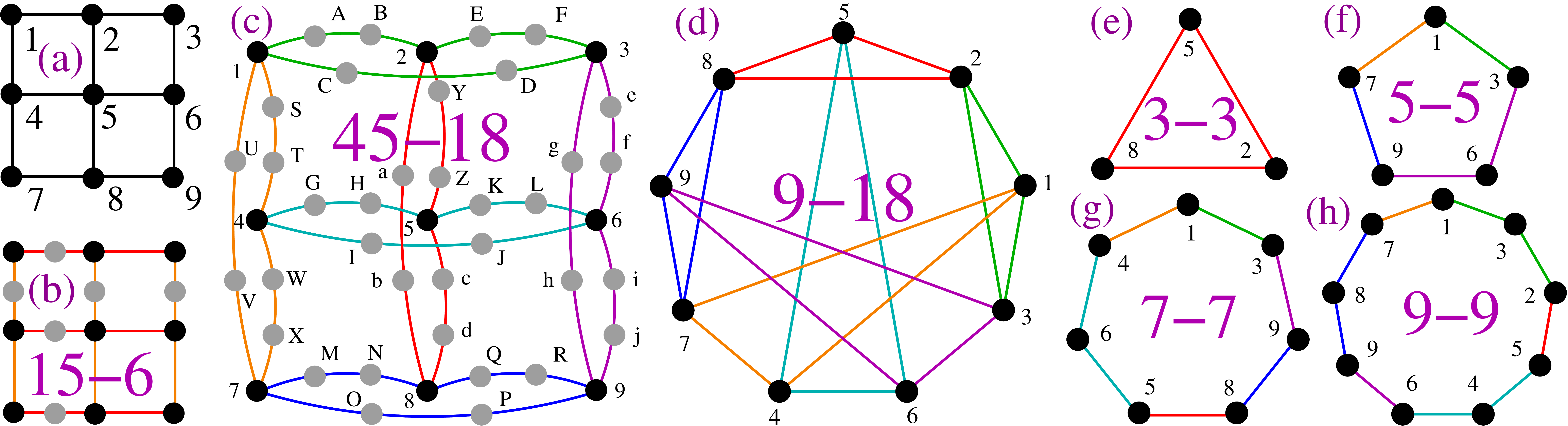}
\end{center}
\caption{(a) Peres-Mermin operator schematics;
  (b,c) filled MMPs (noncontextual);
  (d) filled MMP with extended orthogonalities and gray $m=1$
  vertices (noncontextual); (e) MMP with extended orthogonalities
  and $m=1$ vertices dropped---contextual, but not a KS set
  and not a critical set; (f)-(i) critical subsets of (e).}
\label{fig:p-m}
\end{figure}

\begin{itemize}
\item (a) a direct translation
  {\tt 12,23,45,56,78,89,14,47,25,58,36,69.} does not work
  (hyperedges connect vertices pairwise and consecutively);
  e.g., {\tt 1} is not orthogonal to {\tt 3} but it should be
  because $\Sigma_i$, $i=1,2,3$ mutually commute; 
\medskip
\item (b) 15-6: {\tt 1A23,4B56,7C89,1D47,2E58,3F69.}~is
  noncontextual; hyperedges go through all three vertices;
  e.g., {\tt 1} is orthogonal to {\tt 2} and {\tt 3}
  in the same hyperedge; 9-6 (15-6 with gray vertices
  {\tt A,B,\dots,F} dropped {\tt 123,456,789,147,258,369.}) is
  also noncontextual;
\medskip
\item (c) has 45 vertices/vectors and 18 hyperedges and its
  string is given in Appendix \ref{app3}:
  hyperedges connect vertices pairwise but exhaustively; e.g.,
  {\tt 1} is orthogonal to {\tt 2} via one hyperedge and to
  {\tt 3} via another); it is not contextual; some of its
  properties are:  $HI_{cm}=9$ and $HI_{cM}=18$ and they violate
  the v-inequality:
  \begin{eqnarray}
  HI_{cM}=18=HI_q=18.
  \label{eq:mer-ieq}
  \end{eqnarray}
  $l_{cm}=l_{cM}=18$ and they violate the e-inequalities:
  \begin{eqnarray}
  l_{cM}=18=l=18
  \label{eq:mer-eeq}
  \end{eqnarray}
\medskip
\item (d) dropping all gray vertices with $m=1$ from (c) yields a
  contextual 9-18 non-binary MMP hypergraph
{\tt 12,23,13,45,56,46,78,89,79,14,47,17,25,58,28,36,69,39.}\break
  All the remaining vertices have $m=4$; $HI_{cM}=3$.
  The $\alpha_r^*$-inequality reads 
  \begin{eqnarray}
  3=HI_{cM}=\alpha<\alpha_r^*=\frac{9}{2}=4.5.
  \label{eq:mer-ineq-1}
  \end{eqnarray}
  The e$_{Max}$-inequality reads:
  \begin{eqnarray}
  l_{cM}=12<l=18,
  \label{eq:mer-eneq}
  \end{eqnarray}
  and its span shows us that it is not critical.
  It might be implemented by port-detections at each hyperedge/gate,
  but not via letting systems fly through triple consecutive gates
  as in Peres-Mermin square measurements in the literature.
\end{itemize}

\bigskip

The non-binary 9-18 MMP hypergraph contains the critical
subhypergraphs 3-3, 5-5, 7-7, and 9-9 for whose we have
$l_{cM}=l-1$. These criticals are shown in Fig.~\ref{fig:p-m}(e-h).
They all have $l_{cM}=(l-1)/2$ and satisfy the e-inequalities.
Their filled versions 9-3, 15-5, 21-7, and 27-9 all have
$l_{cM}=l$ and therefore they violate the e-inequalities. 

We can get the coordinatization of these critical sets by
reading it off from Appendix \ref{app3} for the corresponding
vertices. 

The obtained smallest quantum contextual MMP hypergraph 3-3,
which differs from the one obtained in Sec.~\ref{subsec:4d}
(Def.~\ref{smallest}) only by its coordinatization, is not a
SIC set in the standard sense of the word, because it is not an
operator-based set. It is state independent in the sense 
that its contextuality is based on its hypergraph structure,
meaning that it holds for any set of states that can support it,
i.e., build its coordinatization. 

\bigskip\bigskip\bigskip
\subsection{\label{subsec:5d}5-dim KS MMP hypergraphs/sets}

The 5-dim hypergraph spaces are defined by spin-2 systems and cannot
include qubits. So far, to our knowledge, only several MMP
hypergraphs were obtained in
\cite{cabell-est-05},\cite[Supp.~Material]{waeg-aravind-pra-17}
``by hand.'' By our automated generation we obtained up to 30
millions critical non-isomorphic MMP hypergraphs from the vector
components $\{0,\pm 1\}$. Several smallest ones are shown in
Fig.~\ref{fig:5dim}.

\begin{figure}[h]
  \flushleft
  \includegraphics[width=1\textwidth]{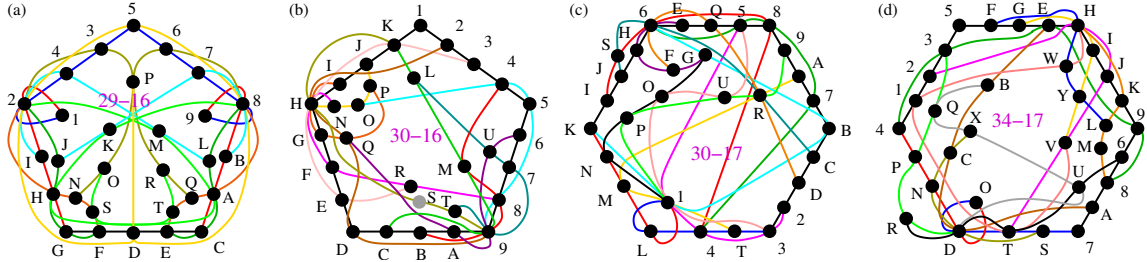}
  \caption{5-dim MMP hypergraphs; (a) the smallest 29-16 exhibits a
    left-right symmetry; its biggest loop is a pentagon; (b) one of
    several 30-16 MMPs; (c) one of several 30-17 MMPs; half of them
    have hexagons as their biggest loops; (d) one of several 34-17
    MMPs with hexagon loops.}
\label{fig:5dim}
\end{figure}

Their structural property parameters are shown in Table
\ref{T:5da}. 

\renewcommand{\arraystretch}{1.2}
\begin{table}[ht]
\caption{Structural properties of KS subhypergraphs of the 5-dim KS
  master set 105-136 generated from $\{0,\pm 1\}$ components.  All
  MMPs violate the $\alpha^*_r$-inequality.}
\center
\setlength{\tabcolsep}{5pt}
\begin{tabular}{@{}l*{7}{c}@{}} 
\hline
Dim&KS MMPs&$HI_{cM}$&$HI_{cm}$&$l_{cM}$&$l_{cm}$&crit&vec.$\,$comp.
\\
\hline
\multirow{2}{*}{\rotatebox{90}{\minibox{Real 5-dim MMP \ \ \ \ \ \  \\hypergraphs}}}
 &29-16 
&7
&3
&15
&10
&yes
&$\{0,\pm 1\}$
\\
&30-16 
&8
&3
&15
&11
&yes
&$\{0,\pm 1\}$
\\
&30-17 
&8
&3
&16
&10
&yes
&$\{0,\pm 1\}$
\\
&34-17 
&8
&3
&16
&11
&yes
&$\{0,\pm 1\}$
\\
&58-40 
&15
&7
&39
&22
&yes
&$\{0,\pm 1\}$
\\
&65-40 
&17
&8
&39
&25
&yes
&$\{0,\pm 1\}$\\
&\multirow{2}{6em}{\hfil 105-136 \\ \quad(master)}
&\multirow{2}{2em}{\hfil 23}
&\multirow{2}{2em}{\hfil 13}
&\multirow{2}{2em}{\hfil 122}
&\multirow{2}{2em}{\hfil 80}
&\multirow{2}{2em}{\hfil no}
&\multirow{2}{4em}{\hfil $\{0,\pm 1\}$}
\\ \\
\hline
\end{tabular}
\label{T:5da}
\end{table}
\renewcommand{\arraystretch}{1}

In Appendix \ref{app5string} we give the ASCII strings and
coordinatization for all MMP hypergraphs from Table \ref{T:5da}
which include the four hypergraphs given in Fig.~\ref{fig:5dim}.

\subsection{\label{subsec:6d}6-dim KS MMP hypergraphs/sets}

In 2014 a star-like 6D 21-7 KS set was found \cite{lisonek-14}
and implemented \cite{canas-cabello-14}. The coordinatization
used was defined by the vector components from the set
$\{0,1,\omega,\omega^2\}$. In \cite{pavicic-pra-17} it
was shown that the set can be given a simpler coordinatization
based only on the components from $\{0,1,\omega\}$ components and
that the star-like graphical representation is isomorphic to the
triangular representation given in \cite[Fig.~11]{pavicic-pra-17}.
In the same reference, a polytope-based class 236-1216 of 6-dim KS
hypergraph was generated but it did not contain the 21-7
star/triangle-like set; its vectors had components from the
following set:
$\{0,\pm\frac{1}{2},\pm\frac{1}{\sqrt{3}},\pm\frac{1}{\sqrt{2}},1\}$.
Also, based on many other failed attempts to generate real
coordinatization of 21-7, we conjecture that it might have only
complex ones.

\begin{figure}[ht]
  \begin{center}
  \includegraphics[width=1\textwidth]{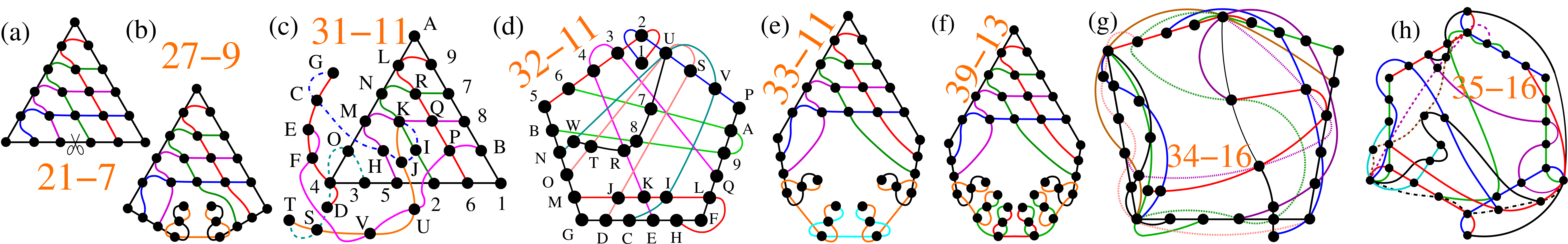}
\end{center}
\caption{(a-f) the smallest critical 6-dim KS MMP hypergraphs
  obtained from $\{0,1,\omega,\omega^2\}$ components; see text;
  (g,h) smallest critical 6-dim KS MMP hypergraphs obtained
  from $\{0,\pm1\}$ components; see text.}
\label{fig:6d-triangl}
\end{figure}

\renewcommand{\arraystretch}{1}
\begin{table}[h]
\caption{Structural properties of the KS subhypergraphs of the
  6-dim KS master set/hypergraph 81-162. Only
  20-5 and 31-11 violate the $\alpha_r^*$-inequality;
  $5>3.\dot3$ and $6>5.17$, respectively.}
\center
\setlength{\tabcolsep}{7pt}
\begin{tabular}{@{}l*{7}{c}@{}} 
\hline
Dim&KS MMPs&$HI_{cM}$&$HI_{cm}$&$l_{cM}$&$l_{cm}$&crit.&vec.$\,$comp.
\\
\hline
\multirow{2}{*}{\rotatebox{90}{\minibox{Complex 6-dim MMP\\hypergraphs}\quad}}
&20-5 Fig.~\ref{fig:pent-cab-6d}(d)  
&5
&4
&5
&5
&$\!\!\!\!\!$non-KS\!$\!\!$
&$\{0,\pm 1,2\}$
\\
&21-7 \cite{lisonek-14} 
&3
&3
&6
&6
&yes
&(0,1,$\omega$)
\\
&27-9 \cite{pm-entropy18,pwma-19}
&4
&4
&8
&8
&yes
&(0,1,$\omega$)
\\
&31-11 \cite{pm-entropy18,pwma-19}
&6
&4
&10
&9
&yes
&(0,1,$\omega$,$\omega^2$)
\\
&32-11 \cite{pwma-19}
&5
&4
&10
&8
&yes
&(0,1,$\omega$,$\omega^2$)
\\
&33-11 \cite{pwma-19}
&5
&3
&10
&8
&yes
&(0,1,$\omega$)
\\
&36-13 \cite{pwma-19}
&6
&4
&12
&10
&yes
&(0,1,$\omega$,$\omega^2$)
\\
&39-13 \cite{pwma-19}
&6
&4
&12
&8
&yes
&(0,1,$\omega$,$\omega^2$)
\\
&\multirow{2}{6em}{\hfil 81-162 \cite{pwma-19} (sub-master)}
&\multirow{2}{2em}{\hfil 11\hfil}
&\multirow{2}{2em}{\hfil 7\hfil}
&\multirow{2}{2em}{\hfil 132\hfil}
&\multirow{2}{2em}{\hfil 84\hfil}
&\multirow{2}{2em}{\hfil no\hfil}
&\multirow{2}{4.5em}{(0,1,$\omega$,$\omega^2$)}
\\ \\
\hline
\end{tabular}
\label{T:6dc}
\end{table}
\renewcommand{\arraystretch}{1}

\renewcommand{\arraystretch}{1}
\begin{table}[h]
\caption{Structural properties of KS subhypergraphs of the 6-dim KS
  master set 236-1216.  All the MMPs violate the
  $\alpha^*_r$-inequality.}
\center
\setlength{\tabcolsep}{7pt}
\begin{tabular}{@{}l*{7}{c}@{}} 
\hline
Dim&KS MMPs&$HI_{cM}$&$HI_{cm}$&$l_{cM}$&$l_{cm}$&crit&vec.$\,$comp.
\\
\hline
\multirow{2}{*}{\rotatebox{90}{\minibox{Real 6-dim MMP\ \ \ \\hypergraphs}}}
 &34-16 \cite{pavicic-pra-17}
&7
&3
&15
&10
&yes
&$\{0,\pm 1\}$
\\
&35-16 \cite{pavicic-pra-17}
&7
&3
&15
&10
&yes
&$\{0,\pm 1\}$
\\
&37-16 \cite{pavicic-pra-17}
&7
&3
&15
&10
&yes
&$\{0,\pm 1\}$
\\
&37-17 \cite{pavicic-pra-17}
&8
&3
&16
&11
&yes
&$\{0,\pm 1\}$
\\
&37-18 \cite{pavicic-pra-17}
&8
&3
&17
&11
&yes
&$\{0,\pm 1\}$
\\
&38-18 \cite{pavicic-pra-17}
&8
&4
&17
&10
&yes
&$\{0,\pm 1\}$
\\
&\multirow{2}{6em}{\hfil 236-1216 \cite{pavicic-pra-17} (sub-master)}
&\multirow{2}{2em}{\hfil 66}
&\multirow{2}{2em}{\hfil 36}
&\multirow{2}{2em}{\hfil 105}
&\multirow{2}{2em}{\hfil 78}
&\multirow{2}{2em}{\hfil no}
&\multirow{2}{4em}{\hfil $\{0,\pm 1\}$}
\\ \\
\hline
\end{tabular}
\label{T:6da}
\end{table}
\renewcommand{\arraystretch}{1}

In \cite{pm-entropy18} we obtain a master KS set 216-153 from
$\{0,1,\omega\}$ components. It contains just three critical sets
21-7, 27-9, and 33-11 (the last one has 8
non-isomorphic instances); Fig.~\ref{fig:6d-triangl}(a,b,e).
The 21-7 and 33-11 strings are given in
\cite[Supp.~Material]{pwma-19}. In \cite{pwma-19} we also generate
two master sets 591-1123 and 81-162 and the corresponding classes
(25 million non-isomorphic critical KS sets) from
$\{0,1,\omega,\omega^2\}$ components. The 31-11 string is given in
\cite[Supp.~Material]{pwma-19}. See Table \ref{T:6dc}.

In this paper, we generate the 332-1408 master from $\{0,\pm1\}$
components. It contains two unconnected sub-masters: a non-KS 96-192
one and a 236-1216 KS one, which turns out to be isomorphic with
the aforementioned polytope based 236-1216 one. 

The structure of the smaller 6-dim MMP hypergraphs from the
81-162 class is different from the 4-dim ones because the
$\delta$ feature allows the 6-dim hypergraphs to be more
interwoven than the 4-dim ones, as presented in
\cite[Fig.~A2]{pm-entropy18} and \cite[Fig.~4]{pwma-19} and
because of the coordinatization with complex vectors. As a
consequence, fewer vertices need to be assigned value 1 to
satisfy the KS conditions (i) and (ii) of the KS theorem.

The KS MMP hypergraphs from the 236-1216 class \cite{pavicic-pra-17}
obtained by means of real vector components have much larger
smallest hypergraphs, as shown in Table \ref{T:6da}. The
$\{0,\pm 1\}$ components yield the 332-1408 MMP master
which consists of two unconnected sub-masters: the KS 236-1216
and a non-KS (noncontextual) 96-192.

Both MMP classes, 81-162 and 236-1216 exhibit the $\delta$ feature
and the distinguishers which determine the sizes of minimal MMPs
are complex vs.~real vectors.

\subsection{\label{subsec:78d}7- and 8-dim KS MMP
  hypergraphs/sets}

Here we present a few examples from the 7- and 8-dim spaces.
The distribution of the 7-dim KS MMP class is provided
in \cite{pavicic-pra-22} and of the 8-dim one in
\cite{pavicic-pra-17}.

\begin{figure}[ht]
  \begin{center}
  \includegraphics[width=1\textwidth]{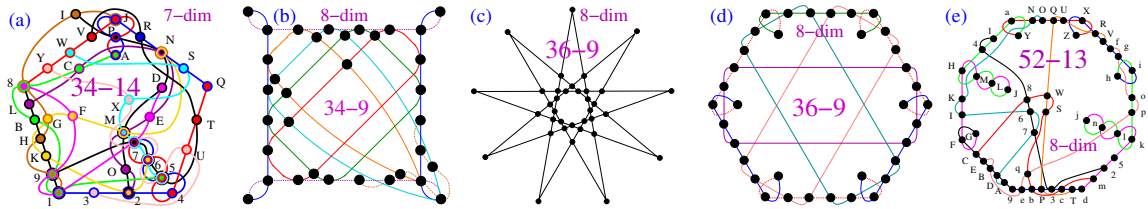}
\end{center}
\caption{(a) The smallest critical 7-dim KS MMP hypergraphs we
  obtained from the 805-9936 master generated by
  $\{0,\pm 1\}$ components in \cite[Fig.~3]{pavicic-pra-22};
  (b-d) smallest critical 8-dim KS MMP hypergraphs obtained we
  obtained from the 3280-1361376 master (more precisely its
  2768-1346016 sub-master) generated by $\{0,\pm1\}$
  components; (c) and (d) are isomorphic---they also have a
  triangular representation (given in \cite[Fig.~14]{pavicic-pra-17})
  as the 6-dim 21-7 in Fig.~\ref{fig:6d-triangl} does; as for (d),
  cf.~4-dim Fig.~\ref{fig:graph}(c); (e) the smallest MMP with
  52 vertices; cf. 52-16 in \cite[Fig.~14]{pavicic-pra-17}.}
  \label{fig:78d}
\end{figure}

We give the structural properties of the examples in Table
\ref{T:78d} and their ASCII strings and coordinatizations in
Appendix \ref{app78string}. The 8-dim MMP hypergraphs given
in Fig.~\ref{fig:78d} are highly symmetrical, so the reader
might easily assign ASCII symbols to vertices in the figure.

\begin{table}[ht]
  \caption{Structural properties of KS MMP subhypergraphs
  of the 7-dim KS master set 805-9936 and 8-dim KS (sub-)master 
  2768-1346016, both obtained from $\{0,\pm 1\}$ components. The
  7-dim 34-14 and 202-97 and 8-dim 37-11 and 52-14 violate the
  $\alpha^*_r$-inequality, while 8-dim 34-9 and 36-9 satisfy it.}
  \center
\setlength{\tabcolsep}{8pt}
\begin{tabular}{|c|ccccccc|}
    \hline
    & 
  &                       
  & 
  & 
  & 
  &
  & \multirow{2}{*}{Vector} \\
dim
    &  KS MMPHs
    & $HI_{cM}$
  & $HI_{cm}$
  & $l_{cM}$
  & $l_{cm}$
   & critical
   &  \multirow{2}*{components} \\
 & &  &  &  &  &  &  \\
  \hline
  \multirow{2}{*}{7-dim} & 34-14 & 7 & 3 & 13 & 8 & yes & $\{0,\pm 1\}$ \\
  & 202-97 & 44 & 21 & 94 & 68 & yes & $\{0,\pm 1\}$ \\
  \hline
\multirow{6}{*}{8-dim}
    & 34-9  & 4 & 3  &  8 &  8 & yes & $\{0,\pm 1\}$\\
    & 36-9  & 4 & 4  &  8 &  8 & yes & $\{0,\pm 1\}$\\
    & 37-11 & 5 & 3  & 11 &  8 & yes & $\{0,\pm 1\}$ \\
    & 52-13 & 8 & 6  & 12 & 10 & yes & $\{0,\pm 1\}$ \\
    & \multirow{2}{6em}{\hfil 120-2024 \ (sub-master)}
  & \multirow{2}{2em}{\hfil 8}
  & \multirow{2}{2em}{\hfil 3}
  & \multirow{2}{2em}{\hfil 1080}
  & \multirow{2}{2em}{\hfil 404}
  & \multirow{2}{2em}{\hfil no}
  & \multirow{2}{3.6em}{\hfil $\{0,\pm 1\}$} \\
 & &  &  &  &  &  &  \\
  \hline
  \end{tabular}
\label{T:78d}
\end{table}

In \cite{pavicic-pra-17} we obtained the 36-8(c) by hand and
34-9 and 36-9(d) from the master 120-2024 generated by
the Lie algebra E8. Coordinatizations of 34-9 and 36-9(d) in
\cite{pavicic-pra-17} therefore differ from the one obtained
here via our program {\textsc{Vecfind}}, i.e., from $\{0,\pm 1\}$
components. Since any 8-dim MMP hypergraphs with the latter
vector components is a subhypergraph of the 8-dim KS master
MMP hypergraph 3280-1361376 generated from $\{0,\pm 1\}$ components,
the 34-9 and 36-9 are subhypergraphs of both the 120-2024 and 
the 3280-1361376 master sets and have coordinatizations from
both $\{0,\pm 1\}$ and Lie E8 components. Actually, the 120-2024
itself, as proven in \cite{pm-entropy18}, can have both
coordinatizations and is therefore a sub-hypergraph/sub-master of
the 3280-1361376 master. The latter master consists of two larger
unconnected sub-masters: a KS 2768-1346016 one and a binary
(noncontextual) 512-15360 one. This means that all 6,925,540
MMP critical hypergraphs \cite[Fig.~12]{pavicic-pra-17} obtained
from the Lie E8 components of the 120-2024 master are also
subhypergraphs of the 2768-1346016 master with $\{0,\pm 1\}$-based
coordinatizations.

In Table \ref{T:78d} we obtain $l_{cM}=94$ instead of expected
96 (for a critical MMP hypergraph). This might be due to the
algorithm and program \textsc{One} imperfections: for such highly
interwoven hyperedges some assignments of 1s to vertices might
have failed. Recall a similar outcome for the original
Kochen-Specker MMP hypergraph discussed below Table \ref{T:3d}.
An independent algorithm and program which would check on these
discrepancies is needed.

\section{\label{sec:impl}Implementation}

The experiments cited in Sec.~\ref{sec:intro} were all focused
on proving that contextual sets really are contextual. In
particular, they mostly carried out repeated measurements with
operators acting on different states to prove their
state-independence (SIC). However, we entered the realm of
generation of arbitrary many contextual sets of any structure
in any dimension via automated algorithms and programs and
the next stage of their application should be a direct
implementation of the sets from the data base we obtained.
Instead of proving that contextual sets really are contextual,
we should simply accept that they are and start using them in
quantum computation and quantum communication.

For instance, generalised Stern-Gerlach experiment
which makes use of both magnetic and electric fields
\cite{anti-shimony} can generate any quantum state in any
dimension and therefore provide us with implementation of any
hypergraph which would yield these states.

On the other hand, there is a universal two-qubit quantum gate
by means of which one can build any quantum network/gates in a
$2^n$-dim space that can be realized by QED, nuclear spins,
quantum dots, trapped ions, or photon-photon coupling in an
all-optical realization \cite{weinf95}.

In a projector formulation, all KS MMP hypergraph with a
coordinatization are state-independent and for non-KS
non-binary MMP hypergraphs one still has to find a general
automated approach. However, in addition to numerous already
known small non-KS non-binary MMP hypergraphs
\cite{pavicic-entropy-19} we do have an abundance of KS MMP
hypergraphs of any size and structure in odd
\cite{pavicic-pra-22} and even
\cite{pavicic-pra-17,pm-entropy18,pwma-19} dimensional spaces.

Taken together, future research in the field should be focused
on finding general automated algorithms and programs for
implementation of arbitrary contextual sets in quantum gate
networks. 

\section{\label{sec:disc}Discussion}

In this paper we elaborate on particular approaches to features of
quantum contextual sets that determine their generation, usage,
applications, implementations, and perspectives of future
research.

\subsection{\label{subsec:disc1}Operator-based
  vs.~MMP-hypergraph-based contextuality}

In Secs.~\ref{sec:h-lang}, \ref{sec:op-ineqal}, and
\ref{sec:structure} we compare operator-based and
MMP-hypergraph-based approaches to contextual sets and show that the
former one relies on the latter. In the literature only a handful
of smallest operator-based contextual set have been analyzed while
there are billions of contextual MMP hypergraphs
\cite{bdm-ndm-mp-fresl-jmp-10,pmm-2-09,mp-7oa,pavicic-pra-17,pm-entropy18,pavicic-entropy-19,pwma-19}.

MMP hypergraph language is introduced in Sec.~\ref{sec:h-lang}
and contrasted with obsolete and/or inappropriate graph and
general hypergraph language throughout the paper. The latter
approaches are vividly graphically presented in
Figs.~\ref{fig:st-hyp}, \ref{fig:pentagon}, \ref{fig:graph},
\ref{fig:magic}, and \ref{fig:p-m} and their
disadvantages discussed in the text surrounding them. 

In Sec.~\ref{sec:h-ext} we consider several extensions of contextual
Kochen-Specker (KS) vector sets. The most important is a non-binary
contextual MMP hypergraph extension given by Def.~\ref{def:n-b} in
which we dispense with vectors (coordinatization), i.e., states, and
rely only on the very structure of MMP hypergraphs.
In this sense they are state-independent. To make use of
states/vectors or to define operators we attach a compatible
coordinatization to them using simple vector components.

In Sec.~\ref{sec:op-ineqal} we consider three approaches to
obtain operator-based contextual sets, two of which
(hyperedge (i) and vertex (ii) ones) generate operators
directly from MMP hypergraphs $k$-$l$, mainly via projectors
$P=|v_i\rangle\langle v_i|$, $i=1,\dots,k$, where $k$ is the number
of vertices and $l$ the number of hyperedges. We conjecture that
the following rule universally holds. 

\begin{resulttt}\label{r:o-m-r} Every {\rm MMP} hypergraph which
  might serve for a construction of an operator-based contextual
  set via its states/vectors is itself a non-binary contextual
  {\rm MMP} hypergraph.   
\end{resulttt}

We give a number of examples to this rule in
Secs.~\ref{sec:op-ineqal}, \ref{sec:hyp-op}, and
\ref{sec:structure}. We would like to single out Yu-Oh's 13-16 set
shown in Fig.~\ref{fig:yu-oh-mmp}; see the text above it.

An important notion we introduce in order to generate MMP
hypergraphs is the multiplicity of vertices, $m$, given by
Def.~\ref{def:qh-mi} that tells us how many hyperedges each vertex
shares. In relation to it, we can induce/generate contextual MMP
hypergraphs from both non-binary (contextual) and binary
(noncontextual) MMP hypergraphs in two ways:
\begin{itemize}
\item by dropping data obtained by measuring states of systems
  related to vertices with $m=1$ as well as vertices themselves
  from the MMP hypergraphs; this works for non-binary MMPs
  hypergraphs (e.g., for 3-dim ones shown in fig.~\ref{fig:3dup}
  and extensively elaborated on in \cite{pavicic-entropy-19}) as
  well as for the binary ones (e.g., Peres-Mermin's 45-18 MMP
  shown in Fig.~\ref{fig:p-m}); thousands of such
  ${\rm \overline{subhypergraphs}}$ which serve the purpose
  are generated in \cite[Sec.~II.D]{pavicic-entropy-19};
\item by finding smaller contextual non-binary MMP
  ${\rm \overline{subhypergraphs}}$ (Def.~\ref{def:over-sub})
  contained in binary MMP hypergraphs with $m\ne 1$, as carried
  out on two examples in \cite{cabello-severini-winter-14}.  
\end{itemize}

\subsection{\label{subsec:disc2}Raw- and postselected data
  statistics and their inequalities;
  the Gr{\"o}tschel-Lov{\'a}sz-Schrijver (GLS)
  {\boldmath{$\alpha^*$}}-inequality
  is not a noncontextuality inequality}

The role of the multiplicity $m$ (Def.~\ref{def:qh-mi}) with which
vertices share hyperedges in every hypergraph is characterized by
Eq.~(\ref{eq:theorem}) and Lemma \ref{th:theorem}. As we 
explain in Secs.~\ref{sec:hyp-op} and \ref{sec:structure} it
enables us to distinguish the standard Hypergraph Statistics
based on raw measurement data (\ref{hyp-stat-a}(a)) from the one
based on postprocessed measurement data (\ref{hyp-stat-a}(b)).
It also determines the structure of the hypergraphs as shown in
Sec.~\ref{subsec:4dm}, Fig.~\ref{fig:for17}, and
Table \ref{T:masters}. 

A standard tool for discriminating contextual from noncontextual
sets has lately been claimed to be noncontextuality inequalities
(Def.~\ref{def:non-c-i-v}), in particular the
operator/projector-based ones
\cite{klyachko-08,cabello-08,badz-cabel-09,beng-blan-cab-pla12,kurz-kasz-12,kurz-cabello-14,ram-hor-14,cabell-klein-budr-prl-14,cabello-severini-winter-14,yu-tong-14,xu-chen-su-pla15,yu-tong-15}. We review them
in Secs.~\ref{sec:op-ineqal} and \ref{sec:hyp-op}. They are mostly
defined by states/vectors of contextual MMP hypergraphs what means
that the MMP hypergraph structure together with its coordinatization
serves us to build operator/projector structure.

Vertex multiplicity enables us to introduce a new kind of
hypergraph-based vertex v-inequalities (Def.~\ref{def:iin}) and
relate the operator-based inequalities with the hypergraph-based
(hyper){\bf e}dge e-inequalities
(Defs.~\ref{def:ein}, \ref{def:einm}). 
We also consider the $\alpha^*$-  (Eq.~(\ref{eq:alpha-cabelllo})),
and $\alpha_r^*$- (Eq.~(\ref{eq:alpha-alpha}), and
$\alpha_p^*$- (Eq.~(\ref{eq:alpha-alpha-b})) inequalities. 
The $\alpha$ in them is the maximum number of pairwise
non-adjacent vertices (Def.~\ref{def:alpha}), i.e., the maximum
number of vertices to which one can assign `1'
(Lemma \ref{lemma:alpha}); it is called the independence
number and also the stability number. Table \ref{T:list} provides
us with a list of inequalities.

\begin{itemize}
\item The v-inequality, relates the maximal classical multiplexed
  vertex indices $HI_{cm}$, $HI_{cM}$, $HI^m_{cM}$, Def.~\ref{def:ch-mi}
  (the total number of 1s we can assign to vertices each
  multiplied by its multiplicity or not) to quantum hypergraph
  index $HI_q$, Def.~\ref{def:qh-i} (the sum of the
  probabilities of getting quantum measurement clicks within
  hyperedges:\break $HI_{cm}\le HI_{cM}\le HI^m_{cM}<HI_q$,
  Def.~\ref{def:iin}); see Tables
  \ref{T:3d}, \ref{T:4d1}, \ref{T:6dc}, and \ref{T:6da}.
\item The e$_{Max}$- and e$_{min}$-inequalities quantifies the KS
  theorem generalisation (Def.~\ref{def:n-b}), according to which
  we cannot assign 1 to all hyperedges of a non-binary MMP
  hypergraph, i.e., they simply relate the maximum and minimum
  number, respectively, of hyperedges which can contain 1
  ($l_{cM},l_{cm}$; Def.~\ref{def:lcMm}) with the actual number of
  hyperedges of a considered MMP hypergraph $k$-$l$ containing $k$
  vertices and $l$ hyperedges. The e$_{Max}$-inequality
  (\ref{eq:e-ineq}) corresponds to Badzi{\'a}g, Bengtsson, Cabello,
  and Pitowsky's $\beta$-inequality \cite{badz-cabel-09}, given in
  Eq.~(\ref{eq:b-cab-c}), and $l_{cM}$ corresponds to
  $\beta$; e$_{Max}$-inequality has a trivial form $l_{cM}+1=l$ for
  critical KS MMP hypergraphs (except perhaps for the original KS
  one---see Sec.~\ref{subsec:3d}). They become relevant ($l_{cM}$
  becomes significantly smaller than $l$) for non-critical and
  master KS MMP hypergraphs as shown in Tables \ref{T:4d1},
  \ref{T:6dc}, and \ref{T:6da}. The e$_{min}$-inequality
  (\ref{eq:e-ineq-m}) is more suitable for an implementation
  because of its greater span between its terms. We conjecture that
  $l_{cm}$ is the ``rank of contextuality'' recently introduced by
  Horodecki at al.~\cite{horod-22}.
\item The $\alpha_r^*$- (\ref{eq:alpha-alpha}) and
  $\alpha_p^*$- (\ref{eq:alpha-alpha-b}) inequalities, with
  constant/fixed probabilities of detecting (within quantum YES-NO
  measurements) a quantum particle in a particular state 
  (assigned to vertices within each hyperedge), are special cases
  of the GLS  $\alpha^*$-inequality which, in contrast to the
  former ones, rely to variable/free probabilities
  (Defs.~\ref{def:alpha-star}, \ref{def:alpha-star-LP}, and
  Eq.~(\ref{eq:alpha-cabelllo})): $\alpha\le\alpha^*$
  (\ref{eq:alpha-cabelllo}). The latter
  inequality is valid for any graph or hypergraph, contextual or
  not, with variable/free probabilities assigned to vertices
  within each hyperedge. Linear programming or
  any other algorithms for solving linear optimization problems
  then determines which values must the probabilities have within
  each hyperedge (where their sum must be $\le 1$).
  The $\alpha_r^*$-inequality is based on the Raw data statistics
  \ref{hyp-stat-a}(1) and it is not a noncontextuality inequality. 
  The $\alpha_p^*$-inequality is based on the Postprocessed data
  statistics \ref{hyp-stat-a}(2); it is just another another form of
  the v-inequality (\ref{eq:i-ineq}) and is therefore a
  noncontextuality inequality. For example, a spin-1 particle passing
  through a Stern-Gerlach gate/hyperedge has the probability of
  $\frac{1}{3}$ to exit from any of its ports along any of its 3
  vertices/ports. But then, for arbitrary many contextual non-binary
  MMP hypergraphs, the $\alpha_r^*$-inequality fails
  (Cf.~Figs.~\ref{fig:alpha-star}(a-d), \ref{fig:alpha-3d}(d),
  \ref{fig:22}(b)).
\end{itemize}

\renewcommand{\arraystretch}{1.2}
\begin{table}[ht]
\caption{List of the inequalities elaborated on in the paper.}
  \center
\setlength{\tabcolsep}{7pt}
\begin{tabular}{@{}l*{4}{c}@{}} 
\hline
  \multicolumn{2}{c}{\multirow{2}{9em}{\quad Inequality\hfill}}&\multirow{2}{4em}{\quad\ Eq.}&\multirow{2}{7.7em}{Noncontextuality inequality
(\ref{def:non-c-i-s})}
  \\
\\
  \hline
  v
&$HI_{cm}\le HI_{cM}\le HI^m_{cM}<l$
&(\ref{eq:i-ineq})
  &yes
\\  
  e$_{Max}$
&$l_{cM}<l$
&(\ref{eq:e-ineq})
  &yes
\\
e$_{min}$
&$l_{cm}<l$
&(\ref{eq:e-ineq-m})
&yes
\\
$\alpha^*_r$
&$\alpha\le\alpha^*_r=\frac{k}{n}$
&(\ref{eq:alpha-alpha})
&no
\\
$\alpha^*_p$
&$\alpha <\alpha^*_p=l$
&(\ref{eq:alpha-alpha-b})
&yes
\\  
GLS
&$\alpha\le\alpha^*$
&(\ref{eq:alpha-cabelllo})
&no
\\
\hline
\end{tabular}
\label{T:list}
\end{table}
\renewcommand{\arraystretch}{1}

Hypergraph v-, e-, $\alpha_r^*$, and $\alpha_p^*$-inequalities
can be generated via automated procedures directly from
non-binary MMP hypergraphs. The hypergraphs themselves can be
generated in automated ways from simple vector components
such as $\{0,\pm 1,i\}$ (Tables \ref{T:4d1},\ref{T:6da}) or
$\{0,1,\pm\omega\}$ (Table \ref{T:6dc}), etc. The operator
approach is suitable for forming quantum gates which can be
applied to arbitrary states to generate inequalities for
evaluating the contextual measurements, while the
latter automated hypergraph approach is suitable for testing a
level of contextuality of hypergraph states by postprocessing
measurements carried out at out-ports of gates determined by
hyperedges as well as for verifying contextual properties of a
chosen hypergraph.

The v-, e-, and $\alpha_p$ inequalities are the only genuine
noncontextuality inequalities (Def.~\ref{def:non-c-i}).

\subsection{\label{subsec:disc3}Structure and features
  of particular MMP hypergraphs}

The MMP hypergraph language applied to several well-known
contextual sets yields the following results.

\subsubsection{\label{subsub:mult}{\rm MMP} vertex multiplicity}

Throughout the paper we show that the multiplicity of vertices
plays significant roles in determining the features of $n$-dim
MMP hypergraphs $k$-$l$. In particular, by the Vertex-Hyperedge
Lemma \ref{th:theorem} we show that the sum of multiplicities is
equal to $nl$; in Sec.~\ref{subsec:4dm} we show that MMP
hypergraphs with odd number of hyperedges predominantly have
vertices with even multiplicities (see Fig.~\ref{fig:for17}
and the figure in Appendix \ref{app4}) and in Table \ref{T:masters}
that the multiplicities of vertices uniquely characterize master MMP
hypergraphs we use to generate all known MMP hypergraphs classes
from. 

\subsubsection{\label{subsub:3d}3-dim {\rm MMP} hypergraphs;
    Graph vs.~{\rm MMP} hypergraph representations}

In Sec.~\ref{subsec:3d} we discuss (see Fig.~\ref{fig:3dup}) the
four previously known 3-dim KS MMP hypergraphs: Bub's 49-36,
Conway-Kochen's 51-37, Peres' 57-40, and Kochen-Specker's 192-118
and point out that they are critical, i.e., that none of them
contains any smaller KS sets. By removing vertices with $m=1$ from
these KS MMP hypergraphs, we obtain, via the method presented in
Sec.~\ref{subsec:disc1}, the non-binary contextual 33-36, 31-37,
33-40, and 117-118 sets, respectively, but they are not KS sets,
contrary to what \cite[Table 1, p.~21]{budroni-cabello-rmp-22}
might mislead the reader in.

In Fig.~\ref{fig:3dup1} we give five new 3-dim critical MMP
hypergraphs obtained among thousands of such new 3-dim ones
in \cite{pavicic-pra-22}. 

In Sec.~\ref{subsec:3d} we show that Kochen and Specker's
original presentation of their 192-118 set or Budroni, Cabello,
G{\"u}hne, Kleinmann and Larsson's
\cite[Fig.~1]{budroni-cabello-rmp-22} ``simplification'' of
that set is neither a graph, nor a general hypergraph, nor an
MMP hypergraph. 

Note that no definite vectors for the KS set 192-118 have
been given prior to the ones provided in \cite{pavicic-pra-22}. 

None of the 3-dim KS sets satisfies the $\alpha_r^*$-inequality,
but most of their smaller $\overline{\rm subhyper}$-
$\overline{\rm graphs}$ do, in
particular the critical ones \cite{pavicic-entropy-19}. Some are
shown in Fig.~\ref{fig:alpha-3d} and Table \ref{T:alpha-3d}.

\subsubsection{\label{subsub:4d}The smallest non-binary
  {\rm MMP} hypergraph that exists and other small
  4-dim {\rm MMP}s}

In Sec.~\ref{subsec:4d} we review several chosen small 4-dim MMP
hypergraphs and give their parameters in Table \ref{T:4d1} and
their figures in Fig.~\ref{fig:22}. Binary 18-9, non-binary critical
22-13, and binary 22-13 shown in Figs.~\ref{fig:22}(a,b,c),
respectively, violate the $\alpha_r^*$-inequality. When the $m=1$
vertices are dropped from the binary
18-9 one obtains a non-binary MMP $\overline{\rm subhypergraph}$
10-9 shown in Fig.~\ref{fig:18-9-non}(a). One of its critical
$\overline{\rm subhypergraphs}$ is the non-binary 3-3 MMP
hypergraph shown in Fig.~\ref{fig:18-9-non}(d)---the smallest
4-dim MMP hypergraph with coordinatization that exists. Since
the 3-dim 3-3 MMP hypergraph does not have a coordinatization,
the obtained 4-dim 3-3 is the smallest contextual set that
exists in any dimension. This is the Result \ref{smallest}.

Another result obtained in that section is the Result
\ref{sub-non}.

\subsubsection{\label{subsub:4dgh}4-dim:
    Graph vs.~{\rm MMP} hypergraph case}

In Sec.~\ref{subsec:magic} we analyze a recent experiment
\cite{d-ambrosio-cabello-13} and show how and why graph
representation of contextual sets lead to wrong experimental and
theoretical results.  

In particular they claim that all 18 vectors they consider contribute
with an equal weight and that therefore their implementation is a
proper KS set. We show their graph in Fig.~\ref{fig:graph}(e). Their
red and green edges contain only two vertices, though. For example,
vertices {\tt 1,2,5,10,15,18}, i.e., edges {\tt 1-10,\ 2-15} and
{\tt 5-18}, the probabilities $p_{1,10}$, $p_{2,15}$, and $p_{5,18}$
from measurement data were obtained. We show their set in the MMP
hypergraph representation in Fig.~\ref{fig:graph}(g). It is an MMP
18-18, but it should have a coordinatization, i.e., $m=1$ vertices
should be added while performing the experiment and they are shown
in Fig.~\ref{fig:graph}(f) as gray dots---we end up with an MMP with
36 vertices and 18 hyperedges---and that leads us into a
contradiction: 36-18 MMP has {\em no\/} coordinatization. So,
not only that the set is not a KS set, but the measurement data
themselves are inconsistent. 

There are two possible remedies for the contradiction:
\begin{itemize}
\item merge triples of gray vertices at the intersections of red
  and green hyperedges as shown in Fig.~\ref{fig:graph}(h); new
  measurements should be carried out for the additional six
  vertices of the new 24-18 MMP hypergraph; it is one of 1233 KS
  MMP hypergraphs \cite{pmm-2-09} contained in Peres' 24-24
  master set;
\item abandon green and red hyperedges altogether and reduce the
  implementation to the 18-9 KS MMP hypergraph
  (Fig.~\ref{fig:graph}(c));
\end{itemize}

\subsubsection{\label{subsub:alpha}$\alpha^*$-inequality
    vs.~quantum computation and quantum indeterminacy}

As presented in details in Sec.~\ref{subsec:magic} Howard, Wallman,
Veitech, and Emerson have shown that stabilizer operations with
quantum bits initialized superposition of states (``magic states''),
can be used to purify quantum gates provided they exhibit the
contextuality \cite{magic-14}. As a proof that considered sets are
contextual the authors make use of the GLS inequality
\cite[p.~192]{gro-lovasz-schr-81} by invoking
Ref.~\cite{cabello-severini-winter-14}.

In the latter reference, two simple examples are given for
inducing small contextual non-binary MMP hypergraphs from bigger
noncontextual binary ones. This is
essentially the second procedure we referred to at the end of
Sec.~\ref{subsec:disc1}. A generalization of the procedure is
offered, which would consist in a recognition of the GLS inequality
\cite{gro-lovasz-schr-81} as a noncontextuality inequality
(Def.~\ref{def:non-c-i}). A similar approach is carried out by
Cabello \cite[Supp.~Material, Sec.~IV]{cabello-21}. However,
as we show in Theorem \ref{th:alpha-star}, the assumption of
variable probabilities of detecting outputs from hyperedges would
clash with the postulate of quantum indeterminacy \ref{postulate}
and therefore the $\alpha_r^*$-inequality given by
Eq.~(\ref{eq:alpha-alpha}) should be used, instead. The theorem
then states that $\alpha_r^*$-inequality is not a noncontextuality
inequality since arbitrary many contextual and noncontextual MMP
hypergraphs violate it. 

In Ref.~\cite{magic-14} the same problem emerges. The exclusivity
graph---``a source of quantum computer’s power''
\cite{bartlett-nature-14,magic-14}---is a non-binary 30-108 MMP
hypergraph and a $\overline{\rm subhypergraph}$ of a non-critical
KS 232-108 MMP hypergraph as analyzed in Sec.~\ref{subsec:magic}.
In Fig.~\ref{fig:magic}(b) we see that we can extend the original
30-108 MMP hypergraph to the KS 232-108 one by adding $m=1$
vertices to the former one. These added vertices enable
us to identify additional independent vertices in addition to the
original 8 thus exceeding the upper bound. We obtain
$101\le\alpha>\alpha_r^*=58$ (Table \ref{T:gamma}), and it is an
open problem to prove that that is not relevant for the proof that
the appropriate inequality is $\alpha=2^3=8<\alpha^*=2^3+1=9$ as
given in Ref.~\cite{bartlett-nature-14}. On the other hand, for
the 30-108 MMP hypergraph itself, a calculation which takes into
account all its edges and their 30 vertices yields
$\alpha_r^*=14.07$---see Eq.~(\ref{eq:alpha-alpha-gamma}).

Taken together, if the only reason for invoking the GLS inequality
was to prove that exclusivity MMP hypergraphs suitable for quantum
computation are contextual, then the more efficient approach would
be to directly check the obtained measurement data on contextuality,
e.g., via e-inequalities. 

\subsubsection{\label{subsub:peres-mermin}Peres-Mermin
             square: operators vs.~{\rm MMP} hypergraphs}

The Peres-Mermin square has received a great deal of attention
both theoretically and experimentally
\cite{peres90,mermin90,mermin93,cabello-08,k-cabello-blatt-09,pavicic-book-13,hofer-20}. So far it has been formulated only via operators
(the claim that it ``can be converted into a standard
proof of the KS theorem with vectors''
\cite[p.~8]{budroni-cabello-rmp-22} is incorrect as explained
in Sec.~\ref{subsec:mermin}), and in this paper we arrive at an MMP
hypergraph representation of the basic features of the Peres-Mermin
operators so as to examine possible candidates for such a
representation. We find that a non-KS 45-18 MMP hypergraph,
shown in Fig.~\ref{fig:p-m}(d), serves the purpose. This is because
it contains contextual 9-18 non-binary MMP hypergraph
(Fig.~\ref{fig:p-m}(e)) which we obtain by dropping all vertices
with multiplicity $m=1$ from the 45-18 MMP (Fig.~\ref{fig:p-m}(d)).
Notice that there are no states which would satisfy the conditions
given by Eqs.~(\ref{eq:p-m-c}) and (\ref{eq:p-m-cc}), so there is
no MMP contradiction which would correspond to the operator
Peres-Mermin contradiction. Notice also that vertices belonging
to rows and columns of the original operator-based schematics
(Fig.~\ref{fig:p-m}(d)) are mutually orthogonal in correspondence
to mutual commutations of operators in the same rows and columns.

The 9-18 MMP hypergraph is not critical. It contains 3-3, 5-5, 7-7,
and 9-9 criticals as shown in Figs.~\ref{fig:p-m}(e-h). Their
possible experimental implementations might use Peres-Mermin
operators but not in the square arrangements in which systems run
through triple gates as depicted in Fig.~\ref{fig:p-m}(a). Instead,
one of the Fig.~\ref{fig:p-m}(c-h) arrangements might be used.

As for the critical MMP subhypergraph 5-5, it is isomorphic to the
pentagon from Ref.~\cite{klyachko-08} but they are not equivalent
since they live in two different spaces, 4-dim and 3-dim,
respectively. That is why the coordinatization of complete, filled,
MMP hypergraphs are so essential. For example, while 5-5 can be
represented in both 3-dim and 4-dim spaces, 3-3 or 4-4 cannot,
because coordinatizations for a filled 3-3 (e.g.,
{\tt 1A2,2B3,3C1.}) or a filled 4-4 (e.g.,
{\tt 1A2,2B3,3C4,4D1.}) do not exist in a 3-dim space.

To sum up, the original operator formulation of the Peres-Mermin
square is inconsistent because classical observables $S_j$ which
would assign $\pm 1$ to the states have no quantum counterparts
since there is no quantum state $|\Psi\rangle$ of a system which
$\Sigma_j$ might project to states $\pm|\Psi\rangle$ (eigenstates
with $\pm 1$ eigenvalues). The fact that a correlated
noncontextuality cannot be formulated and that therefore the
Peres-Mermin square contextuality is void of its meaning we
presented as the Peres-Mermin contradiction \ref{p-m-contr}. 

\subsubsection{\label{subsub:pentagon}The pentagon case}

In Sec.~\ref{sec:hyp-op} we make use of different
coordinatizations to compare hypergraph inequalities with the
operator ones, on the example of Klyachko, Can, Binicio{\u g}lu,
and Shumovsky's 3-dim pentagon. They consider particular
coordinatization with vectors/states which an operator projects
to a chosen state $\Psi$ so as to give the maximum quantum mean
value $\sqrt 5$ as in Eq.~(\ref{eq:pent-q}) and the operator
inequality $2<\sqrt 5$, i.e., the inequality is state-dependent.
The hypergraph approach, on the other hand, gives the quantum
hypergraph index $HI_q=5$, the e- and v-inequalities $4<5$
(Eq.~(\ref{eq:in-25})), and the $\alpha_r^*$ inequality
$2=\alpha<\alpha_r^*=\frac{5}{2}$. 
The inequalities arise from the structure of the MMP hypergraph
alone independently of the states that build its coordinatization
and that makes the MMP hypergraph state independent in the sense
that its contextuality holds for any set of states that can build
its coordinatization and even when there are no such states.
(Note that no projections of or to any states are involved.)
Still, the coordinatization plays a role in the geometric
representation of MMP hypergraphs; e.g., a 3-dim
pentagon can never be planar even when all five of its
$m>1$ vectors span a plane.

\subsection{\label{subsec:disc4}Higher dimensions}

In Secs.~\ref{subsec:5d}, \ref{subsec:6d} and \ref{subsec:78d} we
give give figures and structural properties of the smallest
MMP hypergraphs from 6- to 8-dim spaces, respectively. 

\subsection{\label{subsec:disc5}Implementation and 
  research perspectives}

In Sec.~\ref{sec:impl} we presented existing general implementation
schemes for implementation of arbitrary MMP hypergraphs and a broad
class of a qubit gate network. This opens up a research road
to possible universal automated preparation algorithms for
implementation of arbitrary contextual sets in contrast to
existing approach in the literature which usually limits itself
to experiments with the smallest contextual sets and theoretical
constructs based on them which however have not been exemplified
on arbitrarily chosen contextual sets. Actually, such
small-set-approach often yields inconsistent results, some
of which we discussed in the paper in an attempt to purify the
handling of the contextual sets and find their general features.
An abundance of available contextual sets would support achieving
this goal and therefore we also generated billions of them in odd
(3 to 9; \cite{pavicic-pra-22,pavicic-entropy-19}) and even
(4 to 32; \cite{pavicic-pra-17,pm-entropy18,pwma-19}) dimensional
spaces and stored them in a freely available data base wherefrom
one can download sets with required structure and complexity for
any application.

\vspace{6pt}

\bigskip
The programs are available at
  {\tt http://puh.srce.hr/s/Qegixzz2BdjYwFL}

\acknowledgments{Computational support was provided by the
  cluster Isabella of the Zagreb University Computing Centre.
  The support of Emir Imamagi\'c from Isabella and CRO-NGI to
  the technical work is gratefully acknowledged. Reading of
  and commenting on the paper by late Norman D.~Megill was
  particularly appreciated.}

\bigskip\bigskip
\centerline{\huge Appendices}

\appendix

\section{\label{app4}MMP hypergraph multiplicity
   with even number of hyperedges (Fig.~\ref{fig:for18})}

\begin{figure}[ht]
  \begin{center}
  \includegraphics[width=1\textwidth]{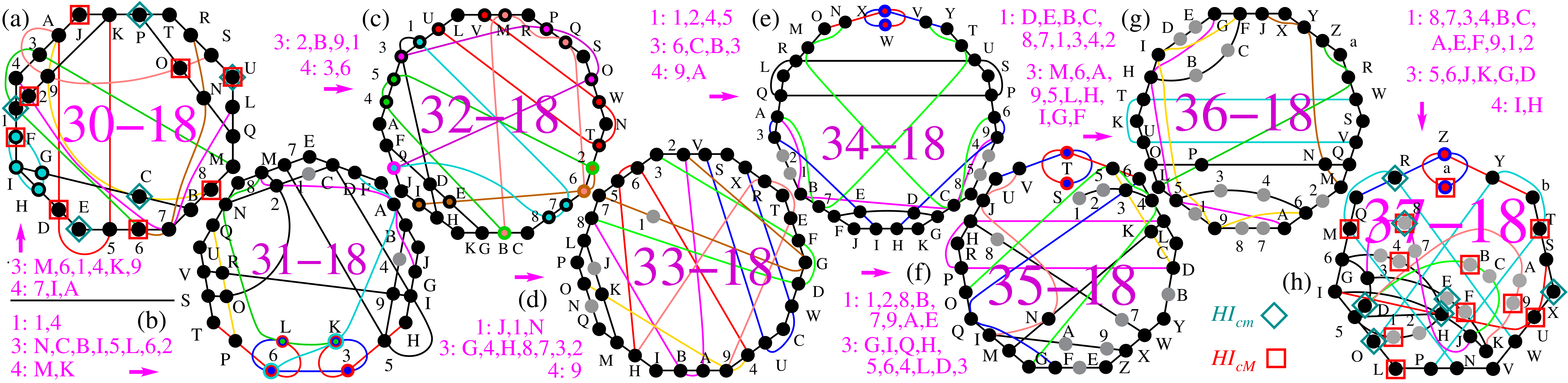}
\caption{Samples of 4-dim KS criticals with 18 hyperedges from
  the 156-249 class, from the lowest to highest number of vertices.
  Vertex multiplicities $m$ different from 2 are indicated for each
  set. There are odd multiplicities $m$ in all sets. Examples of
  distributions of the maximal and minimal numbers of ``classical 1s''
  (red squares and cyan diamonds, respectively) are shown for
  (a) 30-18 and (h) 37-18. None of the sets has a parity proof.}
\label{fig:for18}
\end{center}
\end{figure}

\section{\label{app3d}Historical 3-dim KS sets in
   a renewed MMP hypergraph representation}

 In Fig.~\ref{fig:3dup} we give the four historically known
3-dim sets in MMP hypergraph graphical presentation. 
 
\begin{figure}[ht]
  \flushleft
  \includegraphics[width=1\textwidth]{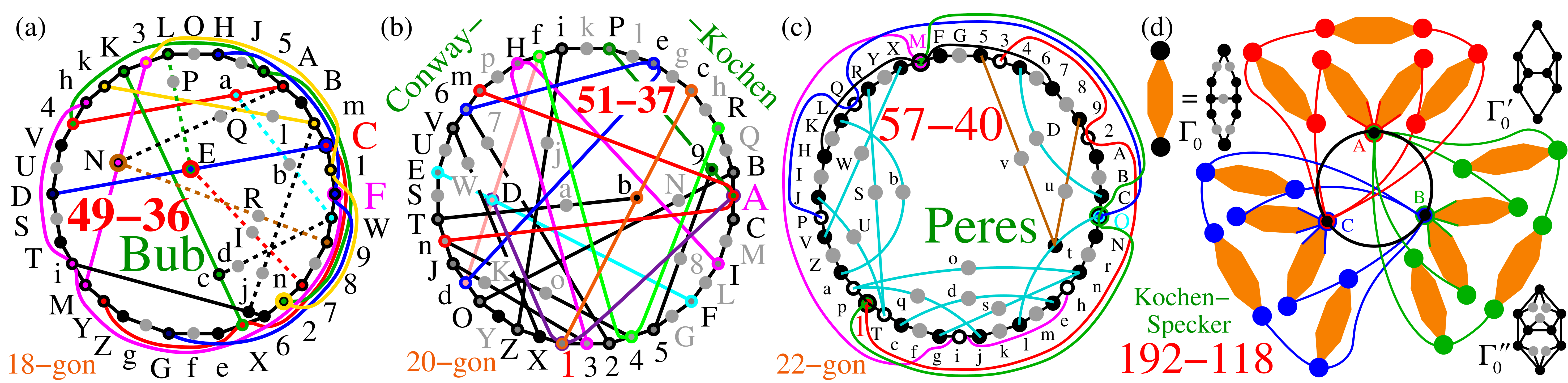}
\caption{(a) Bub's 49-36 MMP hypergraph; $m({\tt C})=4$ and
  $m({\tt F})=4$; gray dots denote vertices that belong to just
  one hyperedge, i.e., $m=1$; its maximal loop is an 18-gon;
  (b) Conway-Kochen's 51-37; $m({\tt 1})=4$ and $m({\tt A})=4$;
  (c) Peres' 57-40; it has three $m=4$ multiplicities:
  $m({\tt 1})=4$, $m({\tt M})=4$, and $m({\tt O})=4$;
  (d) the original Kochen-Specker KS set
  \cite[Fig.~on p.~69]{koch-speck} 192-118  in the MMP
  hypergraph notation \cite{pavicic-pra-17} with 15 $\Gamma_0$
  contextual non-binary MMP hexagons; $\Gamma_0'$ is a graph
  representation of $\Gamma_0$ with $m=1$ gray vertices dropped;
  $\Gamma_0''$ is a graph representation of $\Gamma_0$;
  $m({\tt A})=9$, $m({\tt B})=9$, and $m({\tt C})=9$; in
  \cite[Appendix]{pavicic-pra-22}, apparently the only
  existing coordinatization in the literature is provided by
  means of 24 components.}
\label{fig:3dup}
\end{figure}

\bigskip
\section{\label{app3string}ASCII strings and
  coordinatizations of 3-dim MMP
 hypergraphs given in Fig.~\ref{fig:3dup1}}

  \parindent=0pt

  {\bf 53-38} {\tt 213,39A,AFG,GpB,BNX,XWY,YdK,KVf,fe5,546,678,8ED,DIr,rqO,OLP,Pkl,lCH,\break HMa,aZb,bhJ,JSj,ji2,BC5,HI2,JEC,KIG,L63,MNL,LKJ,QRS,TUV,cd8,ghA,mnP,opO,\break nFE,UND,RMF.} {\tt 1}=(5,2,1);  {\tt 2}=(-1,2,1); {\tt 3}=(0,1,-2); {\tt 4}=(5,-1,2); {\tt 5}=(1,1,-2); {\tt 6}=(0,2,1);\break  {\tt 7}=(5,1,-2); {\tt 8}=(1,-1,2); {\tt 9}=(5,-2,-1); {\tt A}=(1,2,1); {\tt B}=(1,1,1); {\tt C}=(1,-1,0); {\tt D}=(-1,1,1); {\tt E}=(1,1,0); {\tt F}=(1,-1,1); {\tt G}=(1,0,-1); {\tt H}=(1,1,-1); {\tt I}=(1,0,1); {\tt J}=(0,0,1); {\tt K}=(0,1,0); {\tt L}=(1,0,0); {\tt M}=(0,1,1); {\tt N}=(0,1,-1); {\tt O}=(0,1,2); {\tt P}=(0,2,-1); {\tt Q}=(2,1,5); {\tt R}=(2,1,-1); {\tt S}=(1,-2,0); {\tt T}=(-2,5,-1); {\tt U}=(2,1,1); {\tt V}=(1,0,-2); {\tt W}=(2,5,-1); {\tt X}=(-2,1,1); {\tt Y}=(1,0,2); {\tt Z}=(-2,1,5); {\tt a}=(2,-1,1); {\tt b}=(1,2,0); {\tt c}=(1,5,2); {\tt d}=(2,0,-1); {\tt e}=(-1,5,2); {\tt f}=(2,0,1); {\tt g}=(-1,-2,5); {\tt h}=(2,-1,0); {\tt i}=(1,-2,5); {\tt j}=(2,1,0);\break  {\tt k}=(5,-1,-2);$\,${\tt l}=(1,1,2);$\,${\tt m}=(5,1,2);$\,${\tt n}=(-1,1,2);$\,${\tt o}=(5,2,-1);$\,${\tt p}=(1,-2,1);$\,${\tt q}=(5,-2,1);$\,${\tt r}=(1,2,-1).

\medskip  
{\bf 54-39} {\tt 546,6DE,EmW,WRV,VUJ,JHI,Ipq,qTs,srG,GFC,CAB,B38,879,9ZL,LMN,NOP,PbY,\break Yci,ihj,jdg,gef,fXa,a25,123,KLJ,QRP,STN,XYG,bI3,cE9,cT2,dC6,dbZ,XV8,dVT,\break klR,nSB,oU5,laZ.} {\tt 1}=(-1,1,2); {\tt 2}=(1,1,0); {\tt 3}=(1,-1,1); {\tt 4}=(5,1,-2); {\tt 5}=(1,-1,2); {\tt 6}=(0,2,1); {\tt 7}=(1,-2,1); {\tt 8}=(1,0,-1); {\tt 9}=(1,1,1); {\tt A}=(5,-2,-1); {\tt B}=(1,2,1); {\tt C}=(0,1,-2); {\tt D}=(5,-1,2);\break  {\tt E}=(1,1,-2); {\tt F}=(5,2,1);{\tt G}=(-1,2,1); {\tt H}=(-2,5,1); {\tt I}=(2,1,-1); {\tt J}=(1,0,2); {\tt K}=(2,5,-1); {\tt L}=(-2,1,1); {\tt M}=(2,-1,5); {\tt N}=(1,2,0); {\tt O}=(-2,1,5); {\tt P}=(2,-1,1); {\tt Q}=(2,5,1); {\tt R}=(1,0,-2); {\tt S}=(2,-1,0); {\tt T}=(0,0,1); {\tt U}=(2,0,-1); {\tt V}=(0,1,0); {\tt W}=(2,0,1); {\tt X}=(1,0,1); {\tt Y}=(1,1,-1); {\tt Z}=(0,1,-1); {\tt a}=(-1,1,1); {\tt b}=(0,1,1); {\tt c}=(1,-1,0); {\tt d}=(1,0,0); {\tt e}=(5,-2,1); {\tt f}=(1,2,-1); {\tt g}=(0,1,2); {\tt h}=(5,-1,-2); {\tt i}=(1,1,2);\break  {\tt j}=(0,2,-1); {\tt k}=(-2,5,-1); {\tt l}=(2,1,1); {\tt m}=(-1,5,2); {\tt n}=(-1,-2,5); {\tt o}=(1,5,2); {\tt p}=(2,1,5);\break  {\tt q}=(1,-2,0); {\tt r}=(1,-2,5); {\tt s}=(2,1,0). 

\medskip
{\bf 55-40} {\tt 213,3EF,FKL,Lpq,qaV,Vgd,dDc,cBb,biZ,ZYo,onJ,JRT,Tjk,keX,XWm,mlH,HSN,\break N9M,Mr5,546,6PO,Ot2,789,AB6,CD9,GHB,IJD,QR3,UPL,VW5,XY8,Za2,ecF,eaN,dYP,\break bWR,RPN,fgS,hiU,sQ8.} {\tt 1}=(5,-1,2), {\tt 2}=(1,1,-2), {\tt 3}=(0,2,1); {\tt 4}=(2,5,1), {\tt 5}=(2,-1,1),\break  {\tt 6}=(1,0,-2), {\tt 7}=(-1,-2,5), {\tt 8}=(1,2,1), {\tt 9}=(2,-1,0), {\tt A}=(-2,5,-1), {\tt B}=(2,1,1), {\tt C}=(1,2,5),\break  {\tt D}=(1,2,-1), {\tt E}=(5,1,-2); {\tt F}=(1,-1,2), {\tt G}=(-2,-1,5), {\tt H}=(1,-2,0), {\tt I}=(5,-2,1), {\tt J}=(0,1,2), {\tt K}=(1,5,2), {\tt L}=(2,0,-1), {\tt M}=(1,2,0), {\tt N}=(0,0,1), {\tt O}=(2,0,1), {\tt P}=(0,1,0), {\tt Q}=(0,1,-2), {\tt R}=(1,0,0), {\tt S}=(2,1,0), {\tt T}=(0,2,-1), {\tt U}=(1,0,2), {\tt V}=(1,1,-1), {\tt W}=(0,1,1), {\tt X}=(1,-1,1), {\tt Y}=(1,0,-1), {\tt Z}=(1,1,1), {\tt a}=(1,-1,0), {\tt b}=(0,1,-1); {\tt c}=(-1,1,1), {\tt d}=(1,0,1), {\tt e}=(1,1,0), {\tt f}=(1,-2,5), {\tt g}=(-1,2,1), {\tt h}=(2,5,-1), {\tt i}=(-2,1,1), {\tt j}=(5,1,2), {\tt k}=(-1,1,2), {\tt l}=(2,1,5), {\tt m}=(2,1,-1); {\tt n}=(5,2,-1), {\tt o}=(1,-2,1), {\tt p}=(-1,5,-2), {\tt q}=(1,1,2), {\tt r}=(-2,1,5), {\tt s}=(5,-2,-1), {\tt t}=(-1,5,2)

\medskip
{\bf 57-41} {\tt 213,398,876,645,5cd,dBa,abG,GWV,VTU,UAR,RSQ,QrO,OHP,Pmn,nJD,DlF,FjE,\break EpI,ItY,YCf,fe2,ABC,DC6,EB3,FGH,IJK,LK5,MK2,NH8,KHA,XSG,YZG,gZL,hbM,ijL,\break klM,opP,qTO,stN,uvN,vaJ.} {\tt 1}=(1,2,5); {\tt 2}=(2,-1,0); {\tt 3}=(1,2,-1); {\tt 4}=(-1,2,5); {\tt 5}=(2,1,0);\break  {\tt 6}=(-1,2,-1); {\tt 7}=(5,2,-1); {\tt 8}=(0,1,2); {\tt 9}=(-5,2,-1); {\tt A}=(0,1,0); {\tt B}=(1,0,1); {\tt C}=(1,0,-1); {\tt D}=(1,1,1); {\tt E}=(-1,1,1); {\tt F}=(0,1,-1); {\tt G}=(0,1,1); {\tt H}=(1,0,0); {\tt I}=(1,1,0); {\tt J}=(1,-1,0); {\tt K}=(0,0,1); {\tt L}=(-1,2,0); {\tt M}=(1,2,0); {\tt N}=(0,2,-1); {\tt O}=(0,-1,2); {\tt P}=(0,2,1); {\tt Q}=(-1,2$\omega$,$\omega$); {\tt R}=(1,0,$\omega$); {\tt S}=(1,$\omega$,-$\omega$);\break  {\tt T}=(1,2$\omega$,$\omega$); {\tt U}=(1,0,-$\omega$); {\tt V}=(1,-$\omega$,$\omega$); {\tt W}=(2,$\omega$,-$\omega$); {\tt X}=(2,-$\omega$,$\omega$); {\tt Y}=(1,-1,1); {\tt Z}=(2,1,-1);\break  {\tt a}=(1,1,-1); {\tt b}=(2,-1,1); {\tt c}=(-1,2,-5); {\tt d}=(-1,2,1); {\tt e}=(1,2,-5); {\tt f}=(1,2,1); {\tt g}=(2,1,5);\break  {\tt h}=(2,-1,-5); {\tt i}=(2,1,-5); {\tt j}=(2,1,1); {\tt k}=(2,-1,5); {\tt l}=(2,-1,-1); {\tt m}=(5,-1,2); {\tt n}=(-1,-1,2);\break  {\tt o}=(-5,-1,2); {\tt p}=(1,-1,2); {\tt q}=(-5,2$\omega$,$\omega$); {\tt r}=(5,2$\omega$,$\omega$); {\tt s}=(5,1,2); {\tt t}=(-1,1,2); {\tt u}=(-5,1,2);\break  {\tt v}=(1,1,2).

\medskip
{\bf 69-50} \ \ {\tt 451,176,6wK,KLG,G3H,HNM,MqD,DC2,2AB,BmX,XWR,RZY,Y\&8,8\%U,UVQ,QST,TId,\break dca,aef,f\$F,F\#h,hgb,bij,jt4,123,189,2EF,GIJ,HOP,QR3,ab3,ATk,AYl,BUn,CIo,\break COp,DKr,4fs,5hu,5dv,6Ox,M7y,MXh,Iz7,KUf,OYj,Ed!,Ej",X9$'$,T(9.} {\tt 1}=(0,0,$\omega$);\break  {\tt 2}=(0,$\omega$,0); {\tt G}=(0,$\omega$,$\omega$); {\tt H}=(0,$\omega$,-$\omega$); {\tt Q}=(0,$\omega$,$\omega^2$); {\tt R}=(0,$\omega$,-$\omega^2$); {\tt a}=(0,$\omega^2$,$\omega$); {\tt b}=(0,$\omega^2$,-$\omega$); {\tt 3}=($\omega$,0,0); {\tt A}=($\omega$,0,$\omega$); {\tt B}=($\omega$,0,-$\omega$); {\tt C}=($\omega$,0,$\omega^2$); {\tt D}=($\omega$,0,-$\omega^2$); {\tt 4}=($\omega$,$\omega$,0); {\tt 5}=($\omega$,-$\omega$,0);\break  {\tt 6}=($\omega$,$\omega^2$,0); {\tt M}=($\omega$,$\omega^2$,$\omega^2$); {\tt I}=($\omega$,$\omega^2$,-$\omega^2$);{\tt z}=($\omega$,$\omega^2$,2$\omega^2$); {\tt 7}=($\omega$,-$\omega^2$,0); {\tt K}=($\omega$,-$\omega^2$,$\omega^2$);\break  {\tt O}=(-$\omega$,$\omega^2$,$\omega^2$); {\tt x}=($\omega$,-$\omega^2$,2$\omega^2$); {\tt g}=(2$\omega^2$,-$\omega^2$,-$\omega$); {\tt r}=($\omega$,2$\omega^2$,$\omega^2$); {\tt p}=($\omega$,2$\omega^2$,-$\omega^2$);\break  {\tt c}=(2$\omega^2$,-$\omega^2$,$\omega$); {\tt w}=(-$\omega$,$\omega^2$,2$\omega^2$); {\tt y}=(-$\omega$,-$\omega^2$,2$\omega^2$); {\tt o}=(-$\omega$,2$\omega^2$,$\omega^2$); {\tt q}=(-$\omega$,2$\omega^2$,-$\omega^2$);\break  {\tt e}=(2$\omega^2$,$\omega^2$,-$\omega$); {\tt i}=(2$\omega^2$,$\omega^2$,$\omega$); {\tt P}=(2$\omega$,$\omega^2$,$\omega^2$); {\tt L}=(2$\omega$,$\omega^2$,-$\omega^2$); {\tt W}=(2$\omega^2$,-$\omega$,-$\omega^2$);\break  {\tt J}=(2$\omega$,-$\omega^2$,$\omega^2$); {\tt N}=(2$\omega$,-$\omega^2$,-$\omega^2$); {\tt S}=(2$\omega^2$,-$\omega$,$\omega^2$); {\tt E}=($\omega^2$,0,$\omega$); {\tt F}=($\omega^2$,0,-$\omega$); {\tt 8}=($\omega^2$,$\omega$,0); {\tt X}=($\omega^2$,$\omega$,$\omega^2$); {\tt T}=($\omega^2$,$\omega$,-$\omega^2$); {\tt (}=($\omega^2$,$\omega$,2$\omega^2$); {\tt 9}=($\omega^2$,-$\omega$,0); {\tt U}=($\omega^2$,-$\omega$,$\omega^2$); {\tt Y}=(-$\omega^2$,$\omega$,$\omega^2$);\break  {\tt \&}=($\omega^2$,-$\omega$,2$\omega^2$); {\tt n}=($\omega^2$,2$\omega$,$\omega^2$); {\tt l}=($\omega^2$,2$\omega$,-$\omega^2$); {\tt h}=($\omega^2$,$\omega^2$,$\omega$); {\tt d}=($\omega^2$,$\omega^2$,-$\omega$); {\tt v}=($\omega^2$,$\omega^2$,2$\omega$); {\tt f}=($\omega^2$,-$\omega^2$,$\omega$); {\tt j}=(-$\omega^2$,$\omega^2$,$\omega$); {\tt t}=($\omega^2$,-$\omega^2$,2$\omega$); {\tt \$}=($\omega^2$,2$\omega^2$,$\omega$); {\tt "}=($\omega^2$,2$\omega^2$,-$\omega$);\break  {\tt \%}=(-$\omega^2$,$\omega$,2$\omega^2$); {\tt $'$}=(-$\omega^2$,-$\omega$,2$\omega^2$); {\tt k}=(-$\omega^2$,2$\omega$,$\omega^2$); {\tt m}=(-$\omega^2$,2$\omega$,-$\omega^2$); {\tt s}=(-$\omega^2$,$\omega^2$,2$\omega$);\break  {\tt u}=(-$\omega^2$,-$\omega^2$,2$\omega$); {\tt !}=(-$\omega^2$,2$\omega^2$,$\omega$); {\tt \#}=(-$\omega^2$,2$\omega^2$,-$\omega$); {\tt V}=(2$\omega^2$,$\omega$,-$\omega^2$); {\tt Z}=(2$\omega^2$,$\omega$,$\omega^2$).

\section{\label{app1a}\quad
 {\boldmath$\alpha\le\alpha_r^*$} \ 
     violations}

ASCII strings of MMP hypergraphs from Fig.~\ref{fig:alpha-star}    
that violate Eq.~(\ref{eq:alpha-alpha}) from Theorem
\ref{th:alpha-star}, i.e., for which the inequality
$\alpha>\alpha_r^*$ holds.

\parindent=0pt   
(a) {\bf Bub's 49-36} {\tt 7lI,ICG,G5b,bVM,MPS,SAT,TjZ,Ze2,29B,BON,NdD,Dkn,n8g,gQY,YcH,\break H6m,mhF,FW7,aJS,Ke1,VQB,JkF,E8e,Z5W,hgT,U3M,kVj,iD3,LQW,NIT,hKb,2XJ,fRA,\break n5R,4AL,ZH3.}

$\alpha$-vertices: {\tt l,a,b,c,1,9,n,P,Q,Z,F,6,E,U,i,C,X,f,d,4,O}

(b) {\bf 26-15} {\tt NOQP,PQLM,M5GA,A89K,KJ76,6734,4BHE,E1CN,HIJK,FGLM,BCDE,59DE,2348,\break 12KO,4FIM.}

(c) {\bf 34-16} {\tt JMPSVN,N349AF,FDEIRY,Y1BKTX,X267CJ,TUVWXY,QRSWXY,KLMNOP,GHIJOP,\break BCFUXY,AJSVXY,56789J,CEIJOY,248FHM,1DLQXY,35EGJO.}

(d) {\bf 37-11} {\tt 789A56CB,BCDEFIHG,GHWXYVRP,PROQ3487,12345678,JKLMNIAC,STUVQRMN,\break ZaYULF28,ZaXTKE17,bJDI9ABC,bWSORIAB.}

(e) {\bf 9-3} {\tt 1234,4567,7891.}

(f) {\bf 10-5} {\tt 123,345,567,789,9A1.}

\vbox to 10pt{\vfill}
\section{\label{app2}\quad
 {\boldmath$\Gamma$} \ MMP
     hypergraphs}

Howard, Wallman, Veitech, and Emerson's exclusivity graph with
cliques (Fig.~\ref{fig:magic}(a)) has the following MMP hypergraph
string representation where hyperedges substitute the cliques:

\parindent=0pt
{\bf 30-108} \ \ {\tt 1234,45,5678,9ABC,CD,DEFG,HIJK,KJLM,LMNO,PQ,QR,RS,ST,TU,15,18,1F,\break 1G,1I,1K,1Q,26,27,2F,2G,2H,2J,2P,36,37,3D,3E,3I,3K,3Q,48,4D,4E,4H,4J,4P,59,\break 5A,5M,5O,5T,69,6A,6L,6N,6U,7B,7C,7M,7O,7T,8B,8C,8L,8N,8U,9E,9G,9I,9J,9R,AD,\break AF,AH,AK,AS,BE,BG,BH,BK,BR,CF,CI,CJ,CS,DL,DO,DR,EM,EN,ES,FM,FN,FR,GL,GO,GS,\break HN,HO,HQ,IN,IO,IP,JQ,KP,LT,MU,NT,OU,PS,PU,QT,RU.}
\parindent=9pt

\smallskip
Its filled KS 223-108 (Fig.~\ref{fig:magic}(b)) MMP hypergraph
ASCII string with all $m=1$ vertices is

\smallskip
\parindent=0pt
{\bf 232-108} {\tt 1234,4VW5,5678,9ABC,CXYD,DEFG,HIJK,KJLM,LMNO,PZaQ,QbcR,RdeS,SfgT,\break ThiU,1jk5,1lm8,1noF,1pqG,1rsI,1tuK,1vwQ,2xy6,2z!7,2"\#F,2\$\%G,2\&'H,2()J,\break 2*$-$P,3/:6,3;\textless 7,3=\textgreater D,3?@E,3[\textbackslash I,3]\textasciicircum K,3\_`Q,4\{$|$8,4\}$\sim$D,4+1+2E,4+3+4H,4+5+6J,\break 4+7+8P,5+9+A9,5+B+CA,5+D+EM,5+F+GO,5+H+IT,6+J+K9,6+L+MA,6+N+OL,6+P+QN,\break 6+R+SU,7+T+UB,7+V+WC,7+X+YM,7+Z+aO,7+b+cT,8+d+eB,8+f+gC,8+h+iL,8+j+kN,\break 8+l+mU,9+n+oE,9+p+qG,9+r+sI,9+t+uJ,9+v+wR,A+x+yD,A+z+!F,A+"+\#H,A+\$+\%K,\break A+\&+'S,B+(+)E,B+*+$-$G,B+/+:H,B+;+\textless K,B+=+\textgreater R,C+?+@F,C+[+\textbackslash I,C+]+\textasciicircum J,C+\_+`S,\break D+\{+$|$L,D+\}+$\sim$O,D++1++2R,E++3++4M,E++5++6N,E++7++8S,F++9++AM,F++B++CN,\break F++D++ER,G++F++GL,G++H++IO,G++J++KS,H++L++MN,H++N++OO,H++P++QQ,I++R++SN,\break I++T++UO,I++V++WP,J++X++YQ,K++Z++aP,L++b++cT,M++d++eU,N++f++gT,O++h++iU,\break P++j++kS,P++l++mU,Q++n++oT,R++p++qU.}
\parindent=9pt

\smallskip
ASCII string of the only critical KS MMP hypergraph we found in the
232-108 KS MMP hypergraph is:

\smallskip
\parindent=0pt
{\bf 152-71} {\tt 1234,4567,589A,BCDE,EFGH,FIJK,LMNO,ONPQ,PQRS,15jk,1Alm,1Jno,1Kpq,\break 1Mrs,1Otu,28xy,29z!,2J"\#,2K\$\%,2L\&',2N(),38/:,39;\textless,3F=\textgreater,3I?@,3M[\textbackslash ,3O]\textasciicircum ,4A{|,\break 4F}$\sim$,4I+1+2,4L+3+4,4N+5+6,5B+9+A,5C+B+C,5Q+D+E,5S+F+G,8B+J+K,8C+L+M,8P+N+O,\break 8R+P+Q,9D+T+U,9E+V+W,9Q+X+Y,9S+Z+a,AD+d+e,AE+f+g,AP+h+i,AR+j+k,BI+n+o,\break BK+p+q,BM+r+s,BN+t+u,CF+x+y,CJ+z+!,CL+"+\#,CO+\$+\%,DI+(+),DK+*+$-$,DL+/+:,\break DO+;+\textless,EJ+?+@,EM+[+\textbackslash,EN+]+\textasciicircum,FP+{+|,FS+}+$\sim$,IQ++3++4,IR++5++6,JQ++9++A,\break JR++B++C,KP++F++G,KS++H++I.}
\parindent=9pt

\smallskip
This critical MMP with $m=1$ vertices dropped, i.e., the non-binary
24-71 MMP diagram (Fig.~\ref{fig:magic}(c)) is :

\smallskip
\parindent=0pt
{\bf 24-71} {\tt 1234,45,59A8,8B,BI,IR,RJ,JC,CL,LD,DK,KS,SF,FP,POQN,NE,EM,M1,BCDE,EF,\break FIJK,LMNO,PQRS,15,1A,1J,1K,1O,28,29,2J,2K,2L,2N,38,39,3F,3I,3M,3O,4A,4F,4I,\break 4L,4N,5B,5C,5Q,5S,8C,8P,8R,9D,9E,9Q,9S,AD,AE,AP,AR,BK,BM,BN,CF,CO,DI,DO,EJ,\break IQ,JQ,KP.}
\parindent=9pt

\section{\label{app3}Vector representation of the
Peres-Mermin square}

\parindent=0pt
{\bf 45-18} {\tt 1AB2,2CD3,1EF3,4GH5,5IJ6,4KL6,7MN8,8OP9,7QR9,1ST4,4UV7,1WX7,2YZ5,\break 5ab8,2cd8,3ef6,6gh9,3ij9.}

{\tt 1}=(0,0,0,1); {\tt 2}=(0,0,1,0); {\tt 3}=(0,1,0,0); {\tt 4}=(-1,i,2,0); {\tt 5}=(1,i,0,2); {\tt 6}=(2,0,1,-1); {\tt 7}=(1,i,0,0); {\tt 8}=(i,1,0,0); {\tt 9}=(0,0,1,1); {\tt A}=(1,1,0,0); {\tt B}=(1,-1,0,0); {\tt C}=(1,0,0,1); {\tt D}=(1,0,0,-1); {\tt E}=(1,0,1,0); {\tt F}=(1,0,-1,0); {\tt G}=(1,i,0,-1); {\tt H}=(i,1,i,0); {\tt I}=(1,i,-3,-1); {\tt J}=(i,3,-i,i); {\tt K}=(1,-i,1,3); {\tt L}=(i,-3,-i,i); {\tt M}=(0,0,1,i); {\tt N}=(0,0,i,1); {\tt O}=(1,i,1,-1); {\tt P}=(1,i,-1,1); {\tt Q}=(i,1,i,-i); {\tt R}=(i,1,-i,i); {\tt S}=(0,2,i,0);\break  {\tt T}=(5,i,2,0); {\tt U}=(i,1,i,i); {\tt V}=(1,-i,1,-3); {\tt W}=(-1,i,1,0); {\tt X}=(1,-i,2,0); {\tt Y}=(0,2,0,i); {\tt Z}=(5,-i,0,-2); {\tt a}=(-i,1,i,i); {\tt b}=(1,i,3,-1); {\tt c}=(1,i,0,1); {\tt d}=(1,i,0,-2); {\tt e}=(1,0,0,2); {\tt f}=(-2,0,5,1); {\tt g}=(1,1,-1,1); {\tt h}=(1,-3,-1,1); {\tt i}=(0,0,1,-1); {\tt j}=(1,0,0,0)

\parindent=9pt

\section{\label{app5string}ASCII strings and
  coordinatizations of 5-dim MMP
 hypergraphs given in Fig.~\ref{fig:5dim}}

  \parindent=0pt

{\bf 29-16} {\tt 12345,56789,98ABC,CDEFG,GHI21,JBK78,LIM42,NOP32,QRP68,SAOM2,THRK8,\break HSFN2,TAEQ8,HFG2J,AECL8,DP528.} {\tt T}=(1,-1,1,0,-1), {\tt H}=(1,0,-1,0,0), {\tt S}=(1,-1,1,1,0),\break  {\tt A}=(0,1,1,0,0), {\tt D}=(0,0,1,0,0), {\tt E}=(1,0,0,0,1), {\tt F}=(0,1,0,1,0), {\tt C}=(1,0,0,0,-1),\break {\tt G}=(0,1,0,-1,0), {\tt Q}=(1,1,-1,0,-1), {\tt R}=(1,1,1,0,1), {\tt N}=(1,1,1,-1,0), {\tt O}=(1,1,-1,1,0), {\tt P}=(1,-1,0,0,0), {\tt 9}=(1,-1,1,0,1), {\tt 6}=(0,0,1,0,-1), {\tt 1}=(1,-1,1,-1,0), {\tt 3}=(0,0,1,1,0), {\tt 5}=(1,1,0,0,0), {\tt L}=(0,1,-1,0,0), {\tt I}=(1,1,1,1,0), {\tt M}=(1,0,0,-1,0), {\tt 4}=(1,-1,-1,1,0), {\tt 2}=(0,0,0,0,1), {\tt J}=(1,0,1,0,0), {\tt B}=(1,1,-1,0,1), {\tt K}=(0,1,0,0,-1), {\tt 7}=(-1,1,1,0,1), {\tt 8}=(0,0,0,1,0)

\smallskip

{\bf 30-16} {\tt 12345,56789,9ABCD,DEFGH,HIJK1,CLKM9,JNOPH,B8M94,QIGPH,8O945,\break RF8NH,REHK3,DQH92,STHK9,UT7L9,AUQ69.} {\tt A}=(0,0,1,0,-1), {\tt U}=(-1,1,1,0,1),\break  {\tt S}=(1,0,0,0,0), {\tt T}=(0,1,-1,0,0), {\tt R}=(1,-1,1,-1,0), {\tt D}=(1,-1,0,0,0), {\tt E}=(1,1,-1,-1,0), {\tt F}=(1,1,1,1,0), {\tt Q}=(1,1,0,0,0), {\tt I}=(1,-1,1,1,0), {\tt G}=(0,0,1,-1,0), {\tt B}=(1,1,1,0,1), {\tt 6}=(1,-1,1,0,1), {\tt 7}=(1,1,1,0,-1), {\tt 8}=(1,0,-1,0,0), {\tt J}=(1,1,-1,1,0), {\tt N}=(0,1,0,-1,0), {\tt O}=(1,0,1,0,0), {\tt P}=(-1,1,1,1,0), {\tt H}=(0,0,0,0,1), {\tt C}=(1,1,-1,0,-1), {\tt L}=(1,0,0,0,1), {\tt K}=(0,1,1,0,0), {\tt M}=(1,-1,1,0,-1), {\tt 9}=(0,0,0,1,0), {\tt 2}=(0,0,1,0,0), {\tt 3}=(1,0,0,1,0), {\tt 1}=(1,0,0,-1,0), {\tt 4}=(0,1,0,0,-1), {\tt 5}=(0,1,0,0,1)

\smallskip

{\bf 30-17} {\tt 89A7B,BCD23,3T41L,L1MNK,KIJH6,6QE58,12345,67845,EFGH6,OPNG1,\break QRFD6,SRJC6,TRMA1,URP91,SI684,UO135,KGB61.} {\tt 1}=(0,0,0,0,1), {\tt 2}=(0,1,0,-1,0),\break  {\tt 3}=(0,1,0,1,0), {\tt 4}=(1,0,1,0,0), {\tt 5}=(1,0,-1,0,0), {\tt 6}=(0,0,0,1,0), {\tt 7}=(0,1,0,0,-1), {\tt 8}=(0,1,0,0,1), {\tt 9}=(1,0,0,-1,0), {\tt A}=(1,0,0,1,0), {\tt B}=(0,0,1,0,0), {\tt C}=(1,0,0,0,1), {\tt D}=(1,0,0,0,-1), {\tt E}=(1,1,1,0,-1), {\tt F}=(1,1,-1,0,1), {\tt G}=(1,-1,0,0,0), {\tt H}=(0,0,1,0,1), {\tt I}=(1,-1,-1,0,1), {\tt J}=(1,-1,1,0,-1), {\tt K}=(1,1,0,0,0), {\tt L}=(1,-1,-1,1,0), {\tt M}=(1,-1,1,-1,0), {\tt N}=(0,0,1,1,0), {\tt O}=(1,1,1,-1,0), {\tt P}=(1,1,-1,1,0), {\tt Q}=(1,-1,1,0,1), {\tt R}=(0,1,1,0,0), {\tt S}=(1,1,-1,0,-1), {\tt T}=(1,1,-1,-1,0), {\tt U}=(1,-1,1,1,0) 

\smallskip

{\bf 34-17} {\tt 41235,5FGEH,HIJK9,968A7,7STDO,ODNP4,BCDEA,LMKH8,QRPD3,UTRD6,\break VTJH2,WTMH1,E8934,XSNCD,XUQBD,YVIGH,YWLFH.} {\tt 1}=(1,0,0,0,-1), {\tt 2}=(1,0,0,0,1),\break {\tt 3}=(0,1,0,1,0), {\tt 4}=(0,1,0,-1,0), {\tt 5}=(0,0,1,0,0), {\tt 6}=(1,1,0,0,0), {\tt 7}=(1,-1,0,0,0), {\tt 8}=(0,0,1,0,1), {\tt 9}=(0,0,1,0,-1), {\tt A}=(0,0,0,1,0), {\tt B}=(0,1,1,0,0), {\tt C}=(0,1,-1,0,0), {\tt D}=(0,0,0,0,1), {\tt E}=(1,0,0,0,0), {\tt F}=(0,0,0,1,1), {\tt G}=(0,0,0,1,-1), {\tt H}=(0,1,0,0,0), {\tt I}=(1,0,1,1,1), {\tt J}=(1,0,-1,1,-1), {\tt K}=(1,0,0,-1,0), {\tt L}=(1,0,1,1,-1), {\tt M}=(1,0,-1,1,1), {\tt N}=(-1,1,1,1,0), {\tt O}=(1,1,-1,1,0), {\tt P}=(1,0,1,0,0), {\tt Q}=(1,1,-1,-1,0), {\tt R}=(1,-1,-1,1,0), {\tt S}=(1,1,1,-1,0), {\tt T}=(0,0,1,1,0), {\tt U}=(1,-1,1,-1,0), {\tt V}=(1,0,1,-1,-1),\break  {\tt W}=(1,0,1,-1,1), {\tt X}=(1,0,0,1,0),{\tt Y}=(1,0,-1,0,0)

\smallskip

{\bf 58-40} {\tt CBDAE,EAJKI,InPSg,gShQY,YroHZ,ZpmOM,MwlVU,UVvqN,NaXR5,5bWGC,\break 12345,6789A,FGHI5,LMHI5,MI345,NO89A,PQRSD,TUVWK,XYNOA,ZYSOA,cdefZ,ijVMD,\break LK79A,aX245,kZCA5,lC69A,mnokZ,pqrZC,stjVW,efZYM,hC145,utlVH,hYWSO,dfZRK,\break vwshV,vwTUV,baNF5,YNJBA,qkcfZ,qnkZC.} {\tt 1}=(-1,1,1,1,0), {\tt 2}=(1,1,-1,1,0),\break  {\tt 3}=(1,1,1,-1,0), {\tt 4}=(1,-1,1,1,0), {\tt 5}=(0,0,0,0,1), {\tt 6}=(1,-1,1,0,1), {\tt 7}=(1,1,-1,0,1), {\tt 8}=(1,1,1,0,-1), {\tt 9}=(-1,1,1,0,1), {\tt A}=(0,0,0,1,0), {\tt B}=(1,-1,-1,0,1), {\tt C}=(1,1,0,0,0), {\tt D}=(0,0,1,0,1), {\tt E}=(1,-1,1,0,-1), {\tt F}=(1,1,1,1,0), {\tt G}=(1,-1,1,-1,0), {\tt H}=(0,1,0,-1,0), {\tt I}=(1,0,-1,0,0), {\tt J}=(1,1,1,0,1), {\tt K}=(0,1,0,0,-1), {\tt L}=(1,0,1,0,0), {\tt M}=(0,1,0,1,0), {\tt N}=(0,1,-1,0,0), {\tt O}=(1,0,0,0,1), {\tt P}=(1,0,1,-1,-1), {\tt Q}=(1,0,-1,-1,1), {\tt R}=(1,0,0,1,0), {\tt S}=(0,1,0,0,0), {\tt T}=(0,1,-1,1,1), {\tt U}=(0,1,1,-1,1), {\tt V}=(1,0,0,0,0), {\tt W}=(0,0,1,1,0), {\tt X}=(0,1,1,0,0), {\tt Y}=(1,0,0,0,-1), {\tt Z}=(0,0,1,0,0), {\tt a}=(1,0,0,-1,0), {\tt b}=(1,-1,-1,1,0), {\tt c}=(1,1,0,1,-1), {\tt d}=(-1,1,0,1,1), {\tt e}=(1,-1,0,1,1), {\tt f}=(1,1,0,-1,1), {\tt g}=(1,0,1,1,1), {\tt h}=(0,0,1,-1,0), {\tt i}=(0,1,-1,-1,1), {\tt j}=(0,1,1,-1,-1), {\tt k}=(1,-1,0,0,0), {\tt l}=(0,0,1,0,-1), {\tt m}=(1,1,0,-1,-1), {\tt n}=(0,0,0,1,-1), {\tt o}=(1,1,0,1,1), {\tt p}=(1,-1,0,1,-1), {\tt q}=(0,0,0,1,1), {\tt r}=(1,-1,0,-1,1), {\tt s}=(0,1,0,0,1), {\tt t}=(0,1,-1,1,-1), {\tt u}=(0,1,1,1,1), {\tt v}=(0,1,1,1,-1), {\tt w}=(0,-1,1,1,1) (the first ten hyperedges form a decagon).

\smallskip

{\bf 65-40} {\tt 8679A,ABCDE,EG254,45c3d,d\#RUT,TUQNu,ujIMZ,ZMaPb,blnkm,mk!f8,12345,\break FGH95,IJKLM,NOPH5,QRSTU,VWXYU,efLF5,ghijk,on125,pqrnA,odUM5,raKMG,STUJ9,\break XYUJE,daIMD,ocOF5,stucA,vwxyM,z!kcJ,"\textbackslash \#\textbackslash \$kG,\$ikfD,xyuMG,wydM8,vyPJM,\#hkUA,\break BC67A,vwrjM,rWYUN,\#oVYU,kfUN5.} \hfil (the first ten hyperedges form a decagon)\break {\tt 1}=(1,1,0,1,1), {\tt 2}=(1,-1,0,1,-1), {\tt 3}=(1,-1,0,-1,1), {\tt 4}=(1,1,0,-1,-1), {\tt 5}=(0,0,1,0,0),\break  {\tt 6}=(0,1,-1,1,-1), {\tt 7}=(0,1,1,1,1), {\tt 8}=(0,0,1,0,-1), {\tt 9}=(0,1,0,-1,0), {\tt A}=(1,0,0,0,0), {\tt B}=(0,1,-1,-1,1), {\tt C}=(0,1,1,-1,-1), {\tt D}=(0,0,1,0,1), {\tt E}=(0,1,0,1,0), {\tt F}=(1,1,0,1,-1), {\tt G}=(1,0,0,0,1), {\tt H}=(-1,1,0,1,1), {\tt I}=(1,0,-1,-1,1), {\tt J}=(1,0,1,0,0), {\tt K}=(1,0,-1,1,-1), {\tt L}=(0,0,0,1,1), {\tt M}=(0,1,0,0,0), {\tt N}=(1,1,0,0,0), {\tt O}=(1,-1,0,1,1), {\tt P}=(0,0,0,1,-1), {\tt Q}=(1,-1,1,1,0), {\tt R}=(1,1,1,-1,0), {\tt S}=(1,1,-1,1,0), {\tt T}=(-1,1,1,1,0), {\tt U}=(0,0,0,0,1), {\tt V}=(1,1,1,1,0), {\tt W}=(1,-1,1,-1,0), {\tt X}=(1,1,-1,-1,0), {\tt Y}=(1,-1,-1,1,0), {\tt Z}=(1,0,1,1,1),\break {\tt a}=(1,0,1,-1,-1), {\tt b}=(1,0,-1,0,0), {\tt c}=(0,1,0,0,1), {\tt d}=(1,0,0,1,0), {\tt e}=(1,1,0,-1,1), {\tt f}=(1,-1,0,0,0), {\tt g}=(1,-1,1,0,1), {\tt h}=(0,1,1,0,0), {\tt i}=(1,1,-1,0,1), {\tt j}=(1,0,0,0,-1), {\tt k}=(0,0,0,1,0), {\tt l}=(1,-1,1,0,-1), {\tt m}=(1,1,1,0,1), {\tt n}=(0,1,0,0,-1), {\tt o}=(1,0,0,-1,0), {\tt p}=(0,1,-1,1,1), {\tt q}=(0,1,1,-1,1), {\tt r}=(0,0,1,1,0), {\tt s}=(0,1,1,1,-1), {\tt t}=(0,-1,1,1,1), {\tt u}=(0,0,1,-1,0), {\tt v}=(1,0,-1,1,1), {\tt w}=(1,0,1,-1,1), {\tt x}=(1,0,1,1,-1), {\tt y}=(-1,0,1,1,1), {\tt z}=(1,-1,-1,0,1), {\tt !}=(1,1,-1,0,-1), {\tt "}=(-1,1,1,0,1), {\tt \#}=(0,1,-1,0,0),\break {\tt \$}=(1,1,1,0,-1) .

\smallskip

{\bf 105-136} {\tt 12345,12367,12489,12AB5,134CD,13EF5,1GH45,1GH67,1G6IJ,1GKL7,1H6MN,\break 1HOP7,1EF89,1E8IP,1E9KN,1F8ML,1F9OJ,1ABCD,1ACJP,1ADLN,1QRST,1QUVW,1XYSZ,\break 1XabW,1BCMK,1BDOI,1cYVd,1caeT,1fRbd,1fUeZ,1OIJP,1MKLN,234gh,23ij5,2kl45,\break 2kl67,2k6mn,2kop7,2l6qr,2lst7,2ij89,2i8mt,2i9or,2j8qp,2j9sn,2ABgh,2Agtn,\break 2Ahpr,2uvSw,2uxyW,2z!S",2z\#\$W,2Bgoq,2Bhsm,2\%!yd,2\%\#ew,2\&v\$d,2\&xe",2smtn,\break 2oqpr,'(345,'(367,'(489,'(AB5,'36)*,'3$-$/7,'48:;,'4\textless =9,'A\textgreater ?5,'@[B5,(36\textbackslash ],\break (3\textasciicircum \_7,(48`\{,(4$|$\}9,(A$\sim$+15,(+2+3B5,3ijCD,3iC\_*,3iD$-$\textbackslash ,3jC/],3jD\textasciicircum ),3EFgh,\break 3Eg\_),3Eh/\textbackslash ,3+4+5Vw,3+4+6yT,3+7+8V",3+7+9\$T,3Fg$-$],3Fh\textasciicircum *,3+A+8yZ,3+A+9bw,\break 3+B+5\$Z,3+B+6b",3\textasciicircum \_)*,3$-$/\textbackslash ],kl4CD,klEF5,k4C\};,k4D\textless `,kE+3?5,kF@$\sim$5,l4C=\{,\break l4D$|$:,lE[+15,lF+2\textgreater 5,GH4gh,GHij5,G4g\}:,G4h=`,Gi+3\textgreater 5,Gj[$\sim$5,+C4+5Rx,+C4+6vU,\break +C+7X\%5,+C+Azc5,+D4+8R\#,+D4+9!U,+D+4X\&5,+D+Buc5,H4g\textless \{,H4h$|$;,Hi@+15,Hj+2?5,\break +E4+8va,+E4+9Yx,+E+4zf5,+E+BQ\%5,+F4+5!a,+F4+6Y\#,+F+7uf5,+F+AQ\&5,4$|$\}:;,\break 4\textless =`\{,+2+3\textgreater ?5,@[$\sim$+15.} {\tt 1}=(0,0,0,0,1), {\tt 2}=(0,0,0,1,0), {\tt 3}=(0,0,1,0,0), {\tt 4}=(0,1,0,0,0),\break {\tt 5}=(1,0,0,0,0), {\tt 6}=(1,1,0,0,0), {\tt 7}=(1,-1,0,0,0), {\tt 8}=(1,0,1,0,0), {\tt 9}=(1,0,-1,0,0), {\tt A}=(0,1,1,0,0), {\tt B}=(0,1,-1,0,0), {\tt C}=(1,0,0,1,0), {\tt D}=(1,0,0,-1,0), {\tt E}=(0,1,0,1,0), {\tt F}=(0,1,0,-1,0), {\tt G}=(0,0,1,1,0), {\tt H}=(0,0,1,-1,0),$\,${\tt I}=(1,-1,-1,1,0),$\,${\tt J}=(1,-1,1,-1,0),$\,${\tt K}=(1,1,1,-1,0),$\,${\tt L}=(1,1,-1,1,0),$\,${\tt M}=(-1,1,1,1,0),\break  {\tt N}=(1,-1,1,1,0), {\tt O}=(1,1,1,1,0), {\tt P}=(1,1,-1,-1,0), {\tt Q}=(0,1,1,1,0), {\tt R}=(1,0,1,-1,0), {\tt S}=(1,1,-1,0,0), {\tt T}=(1,-1,0,1,0), {\tt U}=(1,0,-1,1,0), {\tt V}=(1,1,0,-1,0), {\tt W}=(1,-1,1,0,0), {\tt X}=(0,1,1,-1,0), {\tt Y}=(1,0,1,1,0), {\tt Z}=(-1,1,0,1,0), {\tt a}=(-1,0,1,1,0), {\tt b}=(1,1,0,1,0), {\tt c}=(0,1,-1,1,0), {\tt d}=(-1,1,1,0,0), {\tt e}=(1,1,1,0,0), {\tt f}=(0,-1,1,1,0), {\tt g}=(1,0,0,0,1), {\tt h}=(1,0,0,0,-1), {\tt i}=(0,1,0,0,1), {\tt j}=(0,1,0,0,-1), {\tt k}=(0,0,1,0,1), {\tt l}=(0,0,1,0,-1), {\tt m}=(1,-1,-1,0,1), {\tt n}=(1,-1,1,0,-1), {\tt o}=(1,1,1,0,-1), {\tt p}=(1,1,-1,0,1),\break  {\tt q}=(-1,1,1,0,1), {\tt r}=(1,-1,1,0,1), {\tt s}=(1,1,1,0,1), {\tt t}=(1,1,-1,0,-1), {\tt u}=(0,1,1,0,1), {\tt v}=(1,0,1,0,-1), {\tt w}=(1,-1,0,0,1), {\tt x}=(1,0,-1,0,1), {\tt y}=(1,1,0,0,-1), {\tt z}=(0,1,1,0,-1), {\tt !}=(1,0,1,0,1), {\tt "}=(-1,1,0,0,1), {\tt \#}=(-1,0,1,0,1), {\tt \$}=(1,1,0,0,1), {\tt \%}=(0,1,-1,0,1), {\tt \&}=(0,-1,1,0,1), {\tt '}=(0,0,0,1,1), {\tt (}=(0,0,0,1,-1), {\tt )}=(1,-1,0,1,-1), {\tt *}=(1,-1,0,-1,1), {\tt $-$}=(1,1,0,1,-1), {\tt /}=(1,1,0,-1,1), {\tt :}=(1,0,-1,1,-1),\break  {\tt ;}=(1,0,-1,-1,1), {\tt \textless }=(1,0,1,1,-1), {\tt =}=(1,0,1,-1,1), {\tt \textgreater }=(0,1,-1,1,-1), {\tt ?}=(0,1,-1,-1,1),\break  {\tt @}=(0,1,1,1,-1), {\tt [}=(0,1,1,-1,1), {\tt \textbackslash }=(1,-1,0,1,1), {\tt ]}=(-1,1,0,1,1), {\tt \textasciicircum }=(1,1,0,1,1),\break {\tt \_}=(1,1,0,-1,-1), {\tt `}=(1,0,-1,1,1), {\tt \{}=(-1,0,1,1,1), {\tt $|$}=(1,0,1,1,1), {\tt \}}=(1,0,1,-1,-1), {\tt $\sim$}=(0,1,-1,1,1), {\tt +1}=(0,-1,1,1,1), {\tt +2}=(0,1,1,1,1), {\tt +3}=(0,1,1,-1,-1), {\tt +4}=(0,1,0,1,1), {\tt +5}=(1,0,0,1,-1),\break {\tt +6}=(1,0,0,-1,1),  {\tt +7}=(0,1,0,1,-1), {\tt +8}=(1,0,0,1,1), {\tt +9}=(-1,0,0,1,1), {\tt +A}=(0,1,0,-1,1),\break {\tt +B}=(0,-1,0,1,1),  {\tt +C}=(0,0,1,1,1), {\tt +D}=(0,0,1,1,-1), {\tt +E}=(0,0,1,-1,1), {\tt +F}=(0,0,-1,1,1)

\section{\label{app6string}ASCII strings and
  coordinatizations of 6-dim MMP
  hypergraphs given in Fig.~\ref{fig:6d-triangl} and
  Tables \ref{T:6dc} and \ref{T:6da}}

  \parindent=0pt

  {\bf 27-9} {\tt 123456,1789AB,27CEHR,3C8DGQ,4ED9FJ,5HGFAI,BIJKLM,MLKPON,NOPQR6.}\break {\tt 1}=(0,0,1,1,$\omega$,$\omega$), {\tt 2}=(0,1,0,$\omega$,1,$\omega$), {\tt 3}=(0,1,$\omega$,0,$\omega$,1), {\tt 4}=(0,$\omega$,1,$\omega$,0,1), {\tt 5}=(0,$\omega$,$\omega$,1,1,0),\break  {\tt 6}=(1,0,0,0,0,0), {\tt 7}=(1,0,0,$\omega$,$\omega$,1), {\tt 8}=(1,0,$\omega$,0,1,$\omega$), {\tt 9}=($\omega$,0,$\omega$,1,0,1), {\tt A}=($\omega$,0,1,$\omega$,1,0),\break  {\tt B}=(0,1,0,0,0,0), {\tt C}=($\omega$,$\omega$,0,0,1,1), {\tt D}=($\omega$,1,1,0,0,$\omega$), {\tt E}=(1,$\omega$,0,1,0,$\omega$), {\tt F}=(1,1,$\omega$,$\omega$,0,0),\break  {\tt G}=(1,$\omega$,1,0,$\omega$,0), {\tt H}=($\omega$,1,0,1,$\omega$,0), {\tt I}=(0,0,0,0,0,1), {\tt J}=(0,0,0,0,1,0), {\tt K}=($\omega$,0,1,1,0,0),\break  {\tt L}=(1,0,$\omega$,1,0,0), {\tt M}=(1,0,1,$\omega$,0,0), {\tt N}=(0,1,0,0,$\omega$,1),\quad{\tt O}=(0,$\omega$,0,0,1,1),\quad{\tt P}=(0,1,0,0,1,$\omega$),\break {\tt Q}=(0,0,0,1,0,0),\quad{\tt R}=(0,0,1,0,0,0),
  
\smallskip
  
{\bf 32-11} {\tt 123456,1789AB,CDE4AF,CGH96I,JGKHLM,JNO85P,QDKRSM,QN3RTB,\break UVWHTI,UVOESP,27WHLF.} {\tt 1}=($\omega$,1,1,1,1,1), {\tt C}=($\omega^2$,$\omega$,$\omega$,1,1,1), {\tt J}=($\omega^2$,$\omega$,1,$\omega$,1,1),\break  {\tt Q}=($\omega^2$,$\omega$,1,1,$\omega$,1),{\tt U}=(1,$\omega^2$,$\omega$,1,$\omega$,1),{\tt 2}=(1,$\omega^2$,$\omega$,1,1,$\omega$),{\tt D}=(1,$\omega^2$,1,$\omega^2$,1,1),{\tt 7}=(1,$\omega^2$,1,$\omega$,$\omega$,1),\break  {\tt V}=(1,$\omega^2$,1,$\omega$,1,$\omega$),{\tt G}=(1,$\omega^2$,1,1,$\omega^2$,1),{\tt N}=(1,$\omega^2$,1,1,1,$\omega^2$),{\tt W}=($\omega^2$,1,$\omega$,$\omega$,1,1),{\tt O}=($\omega^2$,1,$\omega$,1,1,$\omega$),\break {\tt 3}=(1,$\omega$,$\omega^2$,$\omega$,1,1),$\,${\tt K}=(1,$\omega$,$\omega^2$,1,1,$\omega$),$\,${\tt E}=($\omega^2$,1,1,$\omega$,$\omega$,1),$\,${\tt R}=($\omega^2$,1,1,$\omega$,1,$\omega$),$\,${\tt H}=($\omega^2$,1,1,1,$\omega$,$\omega$),\break {\tt 8}=(1,$\omega$,$\omega$,1,$\omega^2$,1),$\,${\tt 9}=(1,$\omega$,1,$\omega^2$,1,$\omega$),$\,${\tt 4}=(1,$\omega$,1,1,$\omega$,$\omega^2$),{\tt S}=(1,1,1,1,$\omega^2$,$\omega^2$),{\tt 5}=(1,1,1,$\omega$,$\omega^2$,$\omega$),\break {\tt L}=(1,1,1,$\omega^2$,1,$\omega^2$),{\tt T}=(1,1,1,$\omega^2$,$\omega^2$,1),{\tt A}=(1,1,$\omega$,$\omega$,1,$\omega^2$),{\tt 6}=(1,1,$\omega$,$\omega^2$,$\omega$,1),{\tt I}=(1,1,$\omega^2$,1,1,$\omega^2$),\break {\tt B}=(1,1,$\omega^2$,1,$\omega$,$\omega$),{\tt F}=(1,1,$\omega^2$,1,$\omega^2$,1),{\tt P}=(1,1,$\omega^2$,$\omega^2$,1,1),{\tt M}=(1,1,$\omega^2$,$\omega$,$\omega$,1)

  \medskip

{\bf 36-13} {\tt 123456,1789AB,CD34EF,CGHIJK,CGLMEA,C7NIOP,QRHS56,QRTUEF,\break VDWSXY,V2WUJK,VZL9XO,Va8MYP,ZaTNEB.} {\tt 1}=($\omega$,1,1,1,$\omega$,$\omega^2$), {\tt C}=($\omega^2$,$\omega$,$\omega$,1,1,1),\break  {\tt Q}=($\omega$,1,1,$\omega^2$,$\omega$,1), {\tt V}=($\omega$,1,1,$\omega$,1,$\omega^2$), {\tt R}=($\omega^2$,$\omega$,1,$\omega$,1,1), {\tt D}=($\omega$,1,$\omega^2$,1,$\omega$,1), {\tt 2}=($\omega^2$,$\omega$,1,1,1,$\omega$), {\tt Z}=(1,$\omega^2$,$\omega$,$\omega$,1,1), {\tt G}=($\omega$,1,$\omega^2$,$\omega$,1,1), {\tt a}=($\omega$,1,$\omega$,1,$\omega^2$,1), {\tt 7}=(1,$\omega^2$,1,1,$\omega^2$,1),{\tt H}=($\omega$,$\omega^2$,1,1,1,$\omega$),\break {\tt L}=($\omega$,$\omega^2$,1,1,$\omega$,1), {\tt W}=(1,$\omega$,$\omega^2$,$\omega$,1,1), {\tt 3}=($\omega$,$\omega^2$,1,$\omega$,1,1), {\tt T}=(1,$\omega$,$\omega^2$,1,$\omega$,1), {\tt N}=($\omega^2$,1,1,$\omega$,$\omega$,1), {\tt 8}=(1,$\omega$,$\omega$,1,1,$\omega^2$), {\tt U}=($\omega$,$\omega^2$,$\omega$,1,1,1), {\tt 4}=(1,$\omega$,1,$\omega^2$,$\omega$,1), {\tt M}=(1,$\omega$,1,$\omega$,$\omega^2$,1),{\tt 9}=($\omega^2$,1,$\omega^2$,1,1,1), {\tt I}=(1,$\omega$,1,$\omega$,1,$\omega^2$), {\tt S}=(1,$\omega$,1,1,$\omega$,$\omega^2$), {\tt E}=(1,1,1,1,1,$\omega$), {\tt J}=(1,1,1,1,$\omega$,1), {\tt 5}=(1,1,1,$\omega$,$\omega^2$,$\omega$),\break  {\tt X}=(1,1,1,$\omega^2$,$\omega^2$,1), {\tt 6}=(1,1,$\omega$,1,1,1), {\tt O}=(1,1,$\omega$,1,$\omega$,$\omega^2$), {\tt Y}=($\omega^2$,$\omega^2$,1,1,1,1), {\tt F}=(1,1,$\omega$,$\omega$,$\omega^2$,1), {\tt K}=(1,1,$\omega$,$\omega^2$,1,$\omega$), {\tt A}=(1,1,$\omega$,$\omega^2$,$\omega$,1), {\tt P}=(1,1,$\omega^2$,$\omega^2$,1,1), {\tt B}=($\omega$,$\omega$,1,$\omega^2$,1,1)

  \medskip

{\bf 39-13} {\tt 123456,1789AB,CDEFGH,CIJKH6,LIMKAB,LNOFPQ,RST95U,RVO4PW,\break XNYTZa,X7b8cW,DVY3Ga,dSEMcU,d2bJZQ.} {\tt 1}=($\omega$,1,1,1,1,1), {\tt C}=($\omega$,1,1,$\omega^2$,$\omega$,1),\break  {\tt L}=($\omega$,1,1,$\omega$,$\omega^2$,1), {\tt R}=(1,$\omega^2$,$\omega^2$,1,1,1), {\tt X}=($\omega$,1,$\omega^2$,1,1,$\omega$), {\tt D}=($\omega^2$,$\omega$,1,$\omega$,1,1),{\tt d}=(1,$\omega^2$,$\omega$,1,1,$\omega$), {\tt I}=($\omega$,1,$\omega$,1,1,$\omega^2$), {\tt S}=($\omega$,1,$\omega$,1,$\omega^2$,1), {\tt V}=($\omega$,1,$\omega$,$\omega^2$,1,1), {\tt 2}=(1,$\omega^2$,1,$\omega$,$\omega$,1), {\tt N}=(1,$\omega^2$,1,$\omega$,1,$\omega$), {\tt 7}=(1,$\omega^2$,1,1,$\omega$,$\omega$), {\tt Y}=($\omega$,$\omega^2$,1,1,$\omega$,1), {\tt b}=($\omega^2$,1,$\omega$,1,$\omega$,1), {\tt E}=(1,$\omega$,$\omega^2$,1,$\omega$,1), {\tt J}=($\omega^2$,1,1,$\omega$,1,$\omega$), {\tt M}=($\omega^2$,1,1,1,$\omega$,$\omega$), {\tt O}=($\omega^2$,1,1,1,1,$\omega^2$),{\tt 3}=(1,$\omega$,$\omega$,1,$\omega^2$,1),{\tt T}=($\omega^2$,1,1,$\omega^2$,1,1),{\tt 8}=(1,$\omega$,$\omega$,1,1,$\omega^2$), {\tt F}=($\omega$,$\omega^2$,$\omega$,1,1,1), {\tt 4}=(1,$\omega$,1,$\omega^2$,1,$\omega$), {\tt 9}=(1,$\omega$,1,$\omega$,$\omega^2$,1), {\tt G}=(1,1,1,1,1,$\omega$), {\tt Z}=(1,1,1,1,$\omega^2$,$\omega^2$), {\tt K}=(1,$\omega$,1,1,1,1), {\tt c}=(1,1,1,$\omega$,1,1), {\tt P}=(1,1,$\omega$,1,$\omega^2$,$\omega$), {\tt 5}=(1,1,$\omega$,$\omega$,1,$\omega^2$), {\tt H}=(1,1,$\omega$,$\omega$,$\omega^2$,1), {\tt A}=(1,1,$\omega$,$\omega^2$,$\omega$,1), {\tt U}=($\omega$,$\omega$,1,1,1,$\omega^2$), {\tt W}=($\omega$,$\omega$,1,1,$\omega^2$,1),{\tt 6}=(1,1,$\omega^2$,1,$\omega$,$\omega$),{\tt Q}=(1,1,$\omega^2$,$\omega^2$,1,1),\break {\tt B}=(1,1,$\omega^2$,$\omega$,1,$\omega$), {\tt a}=(1,1,$\omega^2$,$\omega$,$\omega$,1)

\bigskip

  {\bf 81-162 master} {\tt 123456,12789A,1BCD5E,1B7FGH,1ICJ9K,1I3LGM,1NODAP,1NQ4HR,\break 1SOJ6T,1SU8MR,1VQLET,1VUFKP,WXY45Z,WX7abA,Wcde5E,Wc7Ffg,WIdJbh,WIYifM,\break WjkeAP,WjQ4gl,WSkJZm,WSnaMl,WoQiEm,WonFhP,pqY89Z,pq3ab6,pcre9K,pc3Lsg,\break pBrDbh,pBYisH,pjte6T,pjU8gu,pNtDZm,pNnaHu,pvnLhT,pvUiKm,wxyD5Z,wx7azH,\break w!"e56,w!78\#g,wI"Lzh,wIyi\#K,w\$keHR,w\$ODgl,wVkLZ\%,wV\&aKl,woOi6\%,wo\&8hR,\break '(yDbA,'(Y4zH,'!)Jb6,'!Y8*M,'c)LzE,'cyF*K,'-kJHu,'-tDMl,'vkLA/,'v:4Kl,\break 'otF6/,'o:8Eu,;("e9A,;(34\#g,;x)J9Z,;x3a*M,;B)i\#E,;B"F*h,;\textless teMR,;\textless OJgu,\break ;vOiA=,aA/;v\textgreater 4hR,;VtFZ=,;V\textgreater aEu,?!reGM,?!CJsg,?qyFGZ,?qCazE,?2yisA,?2r4zh,\break ?\$:eET,?\$UFg/,?-\&JhT,?-UiM\%,?N:4Z\%,?N\&aA/,@(deGH,@(CDfg,@X)LGZ,\break @XCa*K,@2)if6,@2d8*h,@\textless :eKP,@\textless QLg/,@-\textgreater DhP,@-QiH=,@S:8Z=,@S\textgreater a6/,[xdJsH,\break [xrDfM,[X"LsA,[Xr4\#K,[q"Ff6,[qd8\#E,[\textless \&JKm,[\textless nLM\%,[\$\textgreater DEm,[\$nFH=,[j\textgreater 46\%,\break [j\&8A=,(S"Uzm,(Syn\#T,(VdUb\%,(VY\&fT,(oCn9\%,(o3\&Gm,xj)UzP,xjyQ*T,xvdU5/,\break xv7:fT,xorQ9/,xo3:sP,!N)n\#P,!N"Q*m,!vCn5=,!v7\textgreater Gm,!VrQb=,!VY\textgreater sP,X\$)UbR,\break X\$YO*T,X-"U5u,X-7t\#T,XorOGu,XoCtsR,q\textless yQbR,q\textless YOzP,q-"Q9l,q-3k\#P,qvdOGl,\break qvCkfR,2\textless yn5u,2\textless 7tzm,2\$)n9l,2\$3k*m,2Vdtsl,2Vrkfu,c-3\&5=,c-7\textgreater 9\%,cN)\&fR,\break cNdO*\%,cSy\textgreater sR,cSrOz=,B\textless Y\&5/,B\textless 7:b\%,Bj)\&Gl,BjCk*\%,BS":sl,BSrk\#/,I\$Y\textgreater 9/,\break I\$3:b=,Ijy\textgreater Gu,IjCtz=,IN":fu,INdt\#/.} {\tt 1}=($\omega$,1,1,1,1,1), {\tt W}=($\omega$,1,1,1,$\omega^2$,$\omega$),\break  {\tt p}=($\omega$,1,1,1,$\omega$,$\omega^2$), {\tt w}=($\omega$,1,1,$\omega^2$,1,$\omega$), {\tt '}=($\omega^2$,$\omega$,$\omega$,1,1,1), {\tt ;}=($\omega$,1,1,$\omega^2$,$\omega$,1),{\tt ?}=($\omega$,1,1,$\omega$,1,$\omega^2$), {\tt @}=($\omega$,1,1,$\omega$,$\omega^2$,1), {\tt [}=(1,$\omega^2$,$\omega^2$,1,1,1), {\tt (}=($\omega$,1,$\omega^2$,1,1,$\omega$),{\tt x}=($\omega^2$,$\omega$,1,$\omega$,1,1),{\tt !}=($\omega$,1,$\omega^2$,1,$\omega$,1), {\tt X}=($\omega^2$,$\omega$,1,1,$\omega$,1), {\tt q}=($\omega^2$,$\omega$,1,1,1,$\omega$), {\tt 2}=(1,$\omega^2$,$\omega$,$\omega$,1,1), {\tt c}=($\omega$,1,$\omega^2$,$\omega$,1,1),{\tt B}=(1,$\omega^2$,$\omega$,1,$\omega$,1), {\tt I}=(1,$\omega^2$,$\omega$,1,1,$\omega$), {\tt \textless }=($\omega$,1,$\omega$,1,1,$\omega^2$), {\tt \$}=($\omega$,1,$\omega$,1,$\omega^2$,1),{\tt -}=(1,$\omega^2$,1,$\omega^2$,1,1),{\tt j}=($\omega$,1,$\omega$,$\omega^2$,1,1), {\tt N}=(1,$\omega^2$,1,$\omega$,$\omega$,1), {\tt S}=(1,$\omega^2$,1,$\omega$,1,$\omega$),{\tt v}=(1,$\omega^2$,1,1,$\omega^2$,1),{\tt V}=(1,$\omega^2$,1,1,$\omega$,$\omega$),{\tt o}=(1,$\omega^2$,1,1,1,$\omega^2$),\break {\tt )}=($\omega$,$\omega^2$,1,1,1,$\omega$),{\tt "}=($\omega^2$,1,$\omega$,$\omega$,1,1),\break {\tt y}=($\omega$,$\omega^2$,1,1,$\omega$,1), {\tt d}=($\omega^2$,1,$\omega$,1,$\omega$,1), {\tt r}=($\omega^2$,1,$\omega$,1,1,$\omega$), {\tt C}=(1,$\omega$,$\omega^2$,$\omega$,1,1), {\tt Y}=($\omega$,$\omega^2$,1,$\omega$,1,1), {\tt 3}=(1,$\omega$,$\omega^2$,1,$\omega$,1), {\tt 7}=(1,$\omega$,$\omega^2$,1,1,$\omega$), {\tt k}=($\omega^2$,1,1,$\omega$,$\omega$,1), {\tt t}=($\omega^2$,1,1,$\omega$,1,$\omega$), {\tt O}=(1,$\omega$,$\omega$,$\omega^2$,1,1), {\tt :}=($\omega^2$,1,1,1,$\omega$,$\omega$),{\tt \textgreater }=($\omega^2$,1,1,1,1,$\omega^2$),{\tt \&}=($\omega^2$,1,1,1,$\omega^2$,1),{\tt Q}=(1,$\omega$,$\omega$,1,$\omega^2$,1),{\tt n}=($\omega^2$,1,1,$\omega^2$,1,1),\break {\tt U}=(1,$\omega$,$\omega$,1,1,$\omega^2$), {\tt a}=($\omega$,$\omega^2$,$\omega$,1,1,1), {\tt 8}=(1,$\omega$,1,$\omega^2$,$\omega$,1), {\tt 4}=(1,$\omega$,1,$\omega^2$,1,$\omega$), {\tt F}=(1,$\omega$,1,$\omega$,$\omega^2$,1), {\tt i}=($\omega^2$,1,$\omega^2$,1,1,1), {\tt L}=(1,$\omega$,1,$\omega$,1,$\omega^2$), {\tt D}=(1,$\omega$,1,1,$\omega^2$,$\omega$), {\tt J}=(1,$\omega$,1,1,$\omega$,$\omega^2$), {\tt *}=(1,1,1,1,1,$\omega$), {\tt z}=(1,1,1,1,$\omega$,1), {\tt \#}=(1,1,1,1,$\omega^2$,$\omega^2$), {\tt e}=(1,$\omega$,1,1,1,1), {\tt b}=(1,1,1,$\omega$,1,1), {\tt 5}=(1,1,1,$\omega$,$\omega$,$\omega^2$),\break  {\tt 9}=(1,1,1,$\omega$,$\omega^2$,$\omega$), {\tt f}=(1,1,1,$\omega^2$,1,$\omega^2$), {\tt G}=(1,1,1,$\omega^2$,$\omega$,$\omega$), {\tt s}=(1,1,1,$\omega^2$,$\omega^2$,1), {\tt Z}=(1,1,$\omega$,1,1,1), {\tt A}=(1,1,$\omega$,1,$\omega$,$\omega^2$),{\tt 6}=(1,1,$\omega$,1,$\omega^2$,$\omega$),{\tt H}=(1,1,$\omega$,$\omega$,1,$\omega^2$),{\tt g}=($\omega^2$,$\omega^2$,1,1,1,1),{\tt M}=(1,1,$\omega$,$\omega$,$\omega^2$,1),\break  {\tt E}=(1,1,$\omega$,$\omega^2$,1,$\omega$), {\tt K}=(1,1,$\omega$,$\omega^2$,$\omega$,1), {\tt h}=($\omega$,$\omega$,$\omega^2$,1,1,1), {\tt =}=($\omega$,$\omega$,1,1,1,$\omega^2$), {\tt \%}=($\omega$,$\omega$,1,1,$\omega^2$,1), {\tt l}=(1,1,$\omega^2$,1,1,$\omega^2$),{\tt R}=(1,1,$\omega^2$,1,$\omega$,$\omega$),{\tt u}=(1,1,$\omega^2$,1,$\omega^2$,1),{\tt /}=(1,1,$\omega^2$,$\omega^2$,1,1),{\tt m}=($\omega$,$\omega$,1,$\omega^2$,1,1),\break {\tt P}=(1,1,$\omega^2$,$\omega$,1,$\omega$), {\tt T}=(1,1,$\omega^2$,$\omega$,$\omega$,1)

\bigskip

  {\bf 34-16} {\tt 123456,789456,ABCDEF,GHIJEF,JCDF93,KLMI86,NOM256,PJ9356,QRSTUJ,\break OLIJE6,VWUPMD,XWTMHC,XRSOJ5,YKB756,YNA156,VQLGJE.} {\tt 1}=(0,1,0,0,0,1),\break  {\tt 2}=(1,0,0,0,1,0), {\tt 3}=(1,-1,0,0,-1,1), {\tt 4}=(1,1,0,0,-1,-1), {\tt 5}=(0,0,0,1,0,0), {\tt 6}=(0,0,1,0,0,0),\break  {\tt 7}=(1,0,0,0,0,1), {\tt 8}=(0,1,0,0,1,0), {\tt 9}=(1,-1,0,0,1,-1), {\tt A}=(1,-1,0,0,1,1), {\tt B}=(-1,1,0,0,1,1),\break  {\tt C}=(1,1,1,1,0,0), {\tt D}=(0,0,1,-1,0,0), {\tt E}=(0,0,0,0,1,-1), {\tt F}=(1,1,-1,-1,0,0), {\tt G}=(0,1,1,0,0,0),\break  {\tt H}=(1,-1,1,-1,0,0), {\tt I}=(1,0,0,1,0,0), {\tt J}=(0,0,0,0,1,1), {\tt K}=(0,1,0,0,-1,0), {\tt L}=(1,0,0,-1,0,0),\break  {\tt M}=(0,0,0,0,0,1), {\tt N}=(1,0,0,0,-1,0), {\tt O}=(0,1,0,0,0,0), {\tt P}=(1,1,0,0,0,0), {\tt Q}=(1,1,-1,1,0,0),\break  {\tt R}=(1,0,1,0,-1,1), {\tt S}=(1,0,1,0,1,-1), {\tt T}=(0,1,0,-1,0,0), {\tt U}=(-1,1,1,1,0,0), {\tt V}=(1,-1,1,1,0,0),\break  {\tt W}=(0,0,0,0,1,0), {\tt X}=(1,0,-1,0,0,0), {\tt Y}=(1,1,0,0,1,-1)

\bigskip

  {\bf 35-16} {\tt 123456,789AB6,CDEFB5,GHIJKA,LMEFA5,NJK9A4,ONIFAB,PQRSD5,TUH823,\break ULMGA1,VWSC15,XYQRM5,ZW7893,TOE723,ZYVL15,XP1356.} {\tt 1}=(0,0,0,0,1,-1),\break  {\tt 2}=(0,0,1,0,0,0), {\tt 3}=(0,0,0,1,0,0), {\tt 4}=(1,0,0,0,0,0), {\tt 5}=(0,0,0,0,1,1), {\tt 6}=(0,1,0,0,0,0),\break  {\tt 7}=(1,0,0,0,-1,0), {\tt 8}=(1,0,0,0,1,0), {\tt 9}=(0,0,0,0,0,1), {\tt A}=(0,0,1,1,0,0), {\tt B}=(0,0,1,-1,0,0),\break  {\tt C}=(1,1,-1,-1,0,0), {\tt D}=(1,1,1,1,0,0), {\tt E}=(1,-1,0,0,1,-1), {\tt F}=(1,-1,0,0,-1,1), {\tt G}=(1,-1,0,0,1,1),\break  {\tt H}=(1,1,0,0,-1,1), {\tt I}=(1,0,0,0,0,-1), {\tt J}=(0,1,-1,1,1,0), {\tt K}=(0,1,1,-1,1,0), {\tt L}=(1,1,1,-1,0,0),\break  {\tt M}=(1,1,-1,1,0,0), {\tt N}=(0,1,0,0,-1,0), {\tt O}=(1,1,0,0,1,1), {\tt P}=(1,0,-1,0,0,0), {\tt Q}=(0,1,0,-1,-1,1),\break  {\tt R}=(0,1,0,-1,1,-1), {\tt S}=(1,-1,1,-1,0,0), {\tt T}=(0,1,0,0,0,-1), {\tt U}=(-1,1,0,0,1,1), {\tt V}=(1,0,0,1,0,0),\break  {\tt W}=(0,1,1,0,0,0), {\tt X}=(1,0,1,0,0,0), {\tt Y}=(-1,1,1,1,0,0), {\tt Z}=(0,1,-1,0,0,0)

  \bigskip

  {\bf 37-16} {\tt 123456,789ABC,DEFGC6,HIJK56,LMNKFG,OPAB46,QRSTUN,VWXUJ5,PLMIE3,\break ONHI89,YZXTNH,aZWTN2,bYSTNK,baRJ15,JD7126,VQI345.} {\tt 1}=(0,1,1,-1,-1,0),\break  {\tt 2}=(0,1,1,1,1,0), {\tt 3}=(1,-1,1,0,0,-1), {\tt 4}=(1,1,-1,0,0,-1), {\tt 5}=(1,0,0,0,0,1), {\tt 6}=(0,0,0,1,-1,0),\break  {\tt 7}=(1,-1,1,0,0,1), {\tt 8}=(1,0,-1,1,-1,0), {\tt 9}=(1,0,-1,-1,1,0), {\tt A}=(0,1,0,1,1,1), {\tt B}=(0,1,0,-1,-1,1), \break {\tt C}=(1,1,1,0,0,-1), {\tt D}=(1,1,-1,0,0,1), {\tt E}=(0,0,1,0,0,1), {\tt F}=(1,-1,0,1,1,0), {\tt G}=(-1,1,0,1,1,0), \break {\tt H}=(0,1,0,0,0,0), {\tt I}=(0,0,0,1,1,0), {\tt J}=(1,0,0,0,0,-1), {\tt K}=(0,0,1,0,0,0),  {\tt L}=(1,1,0,1,-1,0), \break {\tt M}=(1,1,0,-1,1,0), {\tt N}=(0,0,0,0,0,1), {\tt O}=(1,0,1,0,0,0), {\tt P}=(1,-1,-1,0,0,1), {\tt Q}=(0,1,1,-1,1,0),\break  {\tt R}=(0,0,1,1,0,0), {\tt S}=(0,1,0,0,-1,0), {\tt T}=(1,0,0,0,0,0), {\tt U}=(0,1,-1,1,1,0),  {\tt V}=(0,1,1,1,-1,0), \break {\tt W}=(0,1,0,-1,0,0), {\tt X}=(0,0,1,0,1,0), {\tt Y}=(0,0,0,1,0,0), {\tt Z}=(0,0,1,0,-1,0), {\tt a}=(0,1,-1,1,-1,0),\break  {\tt b}=(0,1,0,0,1,0)

  \smallskip

  {\bf 37-17} {\tt 123456,789A56,BCDE34,FGE124,HIJKD3,LMJKC3,NOPQME,RSQIE3,TUVWB2,\break XYVWCD,SHEA13,ROPBE2,ZaUG9A,baYF9A,NLFGEA,ZT89A6,bX8245.} {\tt 1}=(1,1,0,0,0,0),\break  {\tt 2}=(0,0,0,0,0,1), {\tt 3}=(0,0,1,-1,0,0), {\tt 4}=(0,0,0,0,1,0), {\tt 5}=(1,-1,1,1,0,0),  {\tt 6}=(-1,1,1,1,0,0), \break {\tt 7}=(1,1,1,-1,0,0), {\tt 8}=(1,1,-1,1,0,0), {\tt 9}=(0,0,0,0,1,-1), {\tt A}=(0,0,0,0,1,1), {\tt B}=(1,0,0,0,0,0),\break  {\tt C}=(0,1,0,0,0,1), {\tt D}=(0,1,0,0,0,-1), {\tt E}=(0,0,1,1,0,0), {\tt F}=(1,-1,1,-1,0,0), {\tt G}=(1,-1,-1,1,0,0),\break  {\tt H}=(1,-1,0,0,1,-1), {\tt I}=(1,1,0,0,1,1), {\tt J}=(1,0,1,1,-1,0), {\tt K}=(-1,0,1,1,1,0), {\tt L}=(1,1,0,0,1,-1),\break  {\tt M}=(1,-1,0,0,1,1), {\tt N}=(1,1,0,0,-1,1), {\tt O}=(0,1,1,-1,1,0), {\tt P}=(0,1,-1,1,1,0),{\tt Q}=(1,0,0,0,0,-1),\break  {\tt R}=(0,1,0,0,-1,0), {\tt S}=(1,-1,0,0,-1,1), {\tt T}=(0,1,0,-1,0,0), {\tt U}=(0,1,0,1,0,0), {\tt V}=(0,0,1,0,-1,0),\break  {\tt W}=(0,0,1,0,1,0), {\tt X}=(1,0,0,-1,0,0), {\tt Y}=(1,0,0,1,0,0), {\tt Z}=(1,0,1,0,0,0), {\tt a}=(1,1,-1,-1,0,0),\break  {\tt b}=(0,1,1,0,0,0)

\smallskip

{\bf 37-18} {\tt 123456,789ABC,DEFGHI,JKLBC6,MNOHI5,PQLIBC,RSTUO4,VTUNG4,WXSUF4,\break XUME34,YUQ345,BC1246,ZKA235,ZYI935,abPJ8C,abU7C1,ZVDI9C,WRPLIC.}\break  {\tt 1}=(0,1,0,0,0,0), {\tt 2}=(1,0,1,0,0,0), {\tt 3}=(0,0,0,0,1,0), {\tt 4}=(0,0,0,1,0,0), {\tt 5}=(0,0,0,0,0,1),\break  {\tt 6}=(1,0,-1,0,0,0), {\tt 7}=(1,0,0,-1,0,0), {\tt 8}=(0,1,-1,0,0,0), {\tt 9}=(1,1,1,1,0,0), {\tt A}=(1,-1,-1,1,0,0),\break  {\tt B}=(0,0,0,0,1,-1), {\tt C}=(0,0,0,0,1,1), {\tt D}=(1,-1,0,0,-1,1), {\tt E}=(0,1,0,0,0,1), {\tt F}=(1,0,0,0,1,0),\break  {\tt G}=(1,1,0,0,-1,-1), {\tt H}=(0,0,1,1,0,0), {\tt I}=(0,0,1,-1,0,0), {\tt J}=(1,1,1,-1,0,0), {\tt K}=(0,1,0,1,0,0),\break  {\tt L}=(1,-1,1,1,0,0), {\tt M}=(1,0,0,0,0,0), {\tt N}=(0,1,0,0,1,0), {\tt O}=(0,1,0,0,-1,0), {\tt P}=(-1,1,1,1,0,0),\break  {\tt Q}=(1,1,0,0,0,0), {\tt R}=(1,1,0,0,1,-1), {\tt S}=(-1,1,0,0,1,1), {\tt T}=(1,0,0,0,0,1), {\tt U}=(0,0,1,0,0,0),\break  {\tt V}=(1,-1,0,0,1,-1), {\tt W}=(1,1,0,0,-1,1), {\tt X}=(0,1,0,0,0,-1), {\tt Y}=(1,-1,0,0,0,0), {\tt Z}=(1,1,-1,-1,0,0),\break  {\tt a}=(1,0,0,1,1,-1),\quad{\tt b}=(1,0,0,1,-1,1)

\smallskip

{\bf 38-18} {\tt 123456,789ABC,DEFGBC,HIJKG6,LMNOC5,POFGAC,QRS946,SJKF36,TN8ABC,\break UVWXO2,WXPMO2,XTR826,YVR234,ZXI156,ZQH7AB,aYUNEC,aLNDC1,bcXE25.}\break  {\tt 1}=(0,1,-1,-1,1,0), {\tt 2}=(1,0,0,0,0,1), {\tt 3}=(1,-1,0,0,1,-1), {\tt 4}=(1,1,0,0,-1,-1), {\tt 5}=(0,1,1,1,1,0), \break {\tt 6}=(0,0,1,-1,0,0), {\tt 7}=(1,0,0,0,-1,0), {\tt 8}=(0,1,0,0,0,0), {\tt 9}=(1,0,0,0,1,0), {\tt A}=(0,0,0,1,0,0),\break  {\tt B}=(0,0,1,0,0,0), {\tt C}=(0,0,0,0,0,1), {\tt D}=(0,0,0,1,1,0), {\tt E}=(0,0,0,1,-1,0),  {\tt F}=(1,1,0,0,0,0),\break  {\tt G}=(1,-1,0,0,0,0), {\tt H}=(1,1,0,0,1,-1), {\tt I}=(1,1,0,0,-1,1), {\tt J}=(0,0,1,1,1,1), {\tt K}=(0,0,1,1,-1,-1),\break  {\tt L}=(0,1,-1,1,-1,0), {\tt M}=(0,1,0,-1,0,0), {\tt N}=(1,0,0,0,0,0), {\tt O}=(0,0,1,0,-1,0), {\tt P}=(0,0,1,0,1,0),\break  {\tt Q}=(0,1,0,0,0,1), {\tt R}=(0,0,1,1,0,0), {\tt S}=(1,-1,0,0,-1,1), {\tt T}=(0,0,0,0,1,0), {\tt U}=(0,-1,1,1,1,0),\break  {\tt V}=(0,1,1,-1,1,0), {\tt W}=(0,1,0,1,0,0), {\tt X}=(1,0,0,0,0,-1), {\tt Y}=(0,1,-1,1,1,0),  {\tt Z}=(1,-1,0,0,1,1),\break {\tt a}=(0,1,1,0,0,0), {\tt b}=(0,1,1,-1,-1,0), {\tt c}=(0,1,-1,0,0,0))

\section{\label{app78string}ASCII strings and
  coordinatizations of 7- and 8-dim MMP
  hypergraphs given in Fig.~\ref{fig:78d} and
  Table \ref{T:78d}}

\centerline{\bf 7-dim}

  \parindent=0pt
  
{\bf 34-14}\hfil {\tt 4567231,19KHBL8,8WYVJPA,AJPRNSQ,QTU5674,189A5BC,189DE7F,189GHIJ,\break 2MNDOIP,2MNEOCL,2MNGK6F,RTV9567,WXMS567,XY3U567.} {\tt 1}=(0,0,0,1,0,0,0),\break  {\tt 2}=(0,0,1,0,0,0,0), {\tt 3}=(1,-1,0,0,0,0,0), {\tt 4}=(1,1,0,0,0,0,0), {\tt 5}=(0,0,0,0,0,0,1), {\tt 6}=(0,0,0,0,1,1,0),\break  {\tt 7}=(0,0,0,0,1,-1,0),$\,${\tt 8}=(0,1,-1,0,0,0,0),$\,${\tt 9}=(0,1,1,0,0,0,0),$\,${\tt A}=(0,0,0,0,1,0,0),{\tt B}=(1,0,0,0,0,-1,0),\break  {\tt C}=(1,0,0,0,0,1,0),{\tt D}=(1,0,0,0,1,1,-1),{\tt E}=(-1,0,0,0,1,1,1),{\tt F}=(1,0,0,0,0,0,1),$\,${\tt G}=(1,0,0,0,1,-1,-1),\break  {\tt H}=(1,0,0,0,1,1,1),{\tt I}=(1,0,0,0,-1,0,0),{\tt J}=(0,0,0,0,0,1,-1),{\tt K}=(1,0,0,0,-1,1,-1),{\tt L}=(0,0,0,0,1,0,-1),\break  {\tt M}=(0,1,0,1,0,0,0),{\tt N}=(0,1,0,-1,0,0,0),{\tt O}=(1,0,0,0,1,-1,1), {\tt P}=(0,0,0,0,0,1,1), {\tt Q}=(-1,1,1,1,0,0,0),\break  {\tt R}=(1,1,-1,1,0,0,0),{\tt S}=(1,0,1,0,0,0,0),{\tt T}=(1,-1,1,1,0,0,0),{\tt U}=(0,0,1,-1,0,0,0),$\,${\tt V}=(1,0,0,-1,0,0,0),\break  {\tt W}=(1,-1,-1,1,0,0,0), {\tt X}=(1,1,-1,-1,0,0,0), {\tt Y}=(1,1,1,1,0,0,0)

\bigskip

\centerline{\bf 8-dim}

\medskip
  \parindent=0pt

{\bf 34-9} {\tt 9ABC5687,78DE34GF,FGHILMKJ,JKVWUPA9,12345678,NOPQRMEC,STUQRLDB,\break XYWTOI28,XYVSNH17.}{\tt 1}=(0,0,1,1,1,1,0,0), {\tt 2}=(0,0,1,-1,1,-1,0,0), {\tt 3}=(0,0,0,1,0,-1,0,0),\break  {\tt 4}=(0,0,1,0,-1,0,0,0), {\tt 5}=(0,1,0,0,0,0,0,0), {\tt 6}=(1,0,0,0,0,0,0,0), {\tt 7}=(0,0,0,0,0,0,0,1),\break  {\tt 8}=(0,0,0,0,0,0,1,0), {\tt 9}=(0,0,0,1,0,0,0,0), {\tt A}=(0,0,1,0,0,0,0,0), {\tt B}=(0,0,0,0,0,1,0,0),\break  {\tt C}=(0,0,0,0,1,0,0,0), {\tt D}=(1,-1,1,0,1,0,0,0), {\tt E}=(1,1,0,1,0,1,0,0), {\tt F}=(1,1,0,-1,0,-1,0,0),\break  {\tt G}=(-1,1,1,0,1,0,0,0), {\tt H}=(0,1,-1,1,0,0,1,0), {\tt I}=(1,0,1,1,0,0,0,-1), {\tt J}=(1,0,0,0,1,1,0,1),\break  {\tt K}=(0,1,0,0,-1,1,-1,0), {\tt L}=(0,0,1,0,-1,0,1,1), {\tt M}=(0,0,0,1,0,-1,-1,1), {\tt N}=(1,0,1,0,0,-1,1,0),\break  {\tt O}=(0,-1,1,0,0,1,0,1), {\tt P}=(-1,1,0,0,0,0,1,1), {\tt Q}=(1,0,-1,-1,0,0,0,1), {\tt R}=(0,1,1,-1,0,0,-1,0),\break  {\tt S}=(1,0,0,1,-1,0,-1,0), {\tt T}=(0,1,0,1,1,0,0,1), {\tt U}=(1,1,0,0,0,0,1,-1), {\tt V}=(0,1,0,0,1,-1,-1,0),\break  {\tt W}=(1,0,0,0,-1,-1,0,1), {\tt X}=(1,1,0,-1,0,1,0,0), {\tt Y}=(1,-1,-1,0,1,0,0,0)

\medskip

{\bf 36-9 star}\hfil {\tt 12345678,89ABCDEF,FGHI4JKL,L7MNBOPQ,QERSI3TU,UK6VNAWX,XPDYSH2Z,\break ZTJ5VM9a,aWOCYRG1.}{\tt 1}=(0,0,0,0,0,0,0,1), {\tt 2}=(0,0,0,0,0,0,1,0), {\tt 3}=(0,0,0,0,0,1,0,0),\break  {\tt 4}=(0,0,0,0,1,0,0,0), {\tt 5}=(0,0,1,1,0,0,0,0), {\tt 6}=(0,0,1,-1,0,0,0,0), {\tt 7}=(1,1,0,0,0,0,0,0),\break  {\tt 8}=(1,-1,0,0,0,0,0,0), {\tt 9}=(1,1,0,0,0,0,-1,1), {\tt A}=(0,0,1,1,1,-1,0,0), {\tt B}=(0,0,0,0,0,0,1,1),\break  {\tt C}=(0,0,1,-1,1,1,0,0), {\tt D}=(0,0,0,1,0,1,0,0), {\tt E}=(0,0,1,0,-1,0,0,0), {\tt F}=(1,1,0,0,0,0,1,-1),\break  {\tt G}=(0,0,1,0,0,-1,0,0), {\tt H}=(1,0,0,0,0,0,0,1), {\tt I}=(0,0,0,1,0,0,0,0), {\tt J}=(1,-1,0,0,0,0,-1,-1),\break  {\tt K}=(0,1,0,0,0,0,-1,0), {\tt L}=(0,0,1,0,0,1,0,0), {\tt M}=(0,0,1,-1,1,-1,0,0), {\tt N}=(1,-1,0,0,0,0,-1,1),\break  {\tt O}=(0,0,0,1,1,0,0,0), {\tt P}=(0,0,1,1,-1,-1,0,0), {\tt Q}=(1,-1,0,0,0,0,1,-1), {\tt R}=(1,0,0,0,0,0,-1,0),\break  {\tt S}=(0,0,1,0,1,0,0,0), {\tt T}=(0,1,0,0,0,0,0,-1), {\tt U}=(1,1,0,0,0,0,1,1), {\tt V}=(0,0,0,0,1,1,0,0),\break  {\tt W}=(0,0,1,1,-1,1,0,0), {\tt X}=(1,0,0,0,0,0,0,-1), {\tt Y}=(0,1,0,0,0,0,0,0), {\tt Z}=(0,0,1,-1,-1,1,0,0),\break  {\tt a}=(1,0,0,0,0,0,1,0)

\medskip

{\bf 36-9 hexagon}\hfil {\tt 34125687,78XYVWKJ,JKHILMON,NOPQRSGF,FG9ADECB,BCZaTU43,\break TUVWLMDE,ZaRSHI56,XYPQ9A12.} {\tt 1}=(0,0,0,0,0,0,0,1), {\tt 2}=(0,0,0,0,0,1,0,0),\break  {\tt 3}=(0,0,1,1,0,0,0,0), {\tt 4}=(1,1,0,0,0,0,0,0), {\tt 5}=(0,0,1,-1,0,0,0,0), {\tt 6}=(1,-1,0,0,0,0,0,0),\break  {\tt 7}=(0,0,0,0,0,0,1,0), {\tt 8}=(0,0,0,0,1,0,0,0), {\tt 9}=(1,0,0,0,0,0,-1,0), {\tt A}=(0,0,1,0,1,0,0,0),\break  {\tt B}=(1,-1,0,0,0,0,1,-1), {\tt C}=(0,0,1,-1,-1,1,0,0), {\tt D}=(0,0,1,1,-1,-1,0,0), {\tt E}=(1,1,0,0,0,0,1,1),\break  {\tt F}=(0,0,0,1,0,1,0,0), {\tt G}=(0,1,0,0,0,0,0,-1), {\tt H}=(1,1,0,0,0,0,1,-1), {\tt I}=(0,0,1,1,-1,1,0,0),\break  {\tt J}=(0,0,1,0,0,-1,0,0), {\tt K}=(1,0,0,0,0,0,0,1), {\tt L}=(0,1,0,0,0,0,-1,0), {\tt M}=(0,0,0,1,1,0,0,0),\break  {\tt N}=(0,0,1,-1,1,1,0,0), {\tt O}=(1,-1,0,0,0,0,-1,-1), {\tt P}=(0,0,1,0,-1,0,0,0), {\tt Q}=(1,0,0,0,0,0,1,0),\break  {\tt R}=(1,1,0,0,0,0,-1,1), {\tt S}=(0,0,1,1,1,-1,0,0), {\tt T}=(0,0,1,-1,1,-1,0,0), {\tt U}=(1,-1,0,0,0,0,-1,1),\break  {\tt V}=(0,0,1,0,0,1,0,0), {\tt W}=(1,0,0,0,0,0,0,-1), {\tt X}=(0,0,0,1,0,0,0,0), {\tt Y}=(0,1,0,0,0,0,0,0),\break  {\tt Z}=(0,0,0,0,0,0,1,1), {\tt a}=(0,0,0,0,1,1,0,0)

\medskip
{\bf 37-11} {\tt 789A56CB,BCDEFIHG,GHWXYVRP,PROQ3487,12345678,JKLMNIAC,STUVQRMN,\break ZaYULF28,ZaXTKE17,bJDI9ABC,bWSORIAB.} {\tt 1}=(0,0,1,-1,1,0,0,1), {\tt 2}=(0,0,1,1,-1,0,0,1),\break  {\tt 3}=(0,0,0,1,1,0,0,0), {\tt 4}=(1,-1,0,0,0,0,0,0), {\tt 5}=(1,1,0,0,0,0,0,0), {\tt 6}=(0,0,1,0,0,0,0,-1),\break  {\tt 7}=(0,0,0,0,0,0,1,0), {\tt 8}=(0,0,0,0,0,1,0,0), {\tt 9}=(-1,1,1,0,0,0,0,1), {\tt A}=(0,0,0,0,1,0,0,0),\break  {\tt B}=(0,0,0,1,0,0,0,0), {\tt C}=(1,-1,1,0,0,0,0,1), {\tt D}=(0,0,1,0,0,1,-1,-1), {\tt E}=(1,0,0,0,1,-1,0,-1),\break  {\tt F}=(0,1,0,0,1,0,-1,1), {\tt G}=(0,1,1,0,-1,-1,0,0), {\tt H}=(-1,0,1,0,1,0,1,0), {\tt I}=(1,1,0,0,0,1,1,0),\break  {\tt J}=(0,0,1,0,0,-1,1,-1), {\tt K}=(0,1,0,1,0,-1,0,1), {\tt L}=(1,0,0,1,0,0,-1,-1), {\tt M}=(0,1,1,-1,0,0,-1,0),\break  {\tt N}=(-1,0,1,1,0,1,0,0), {\tt O}=(0,0,1,0,0,0,0,0), {\tt P}=(1,1,0,-1,1,0,0,0), {\tt Q}=(1,1,0,1,-1,0,0,0),\break  {\tt R}=(0,0,0,0,0,0,0,1), {\tt S}=(1,-1,0,0,0,1,-1,0), {\tt T}=(0,1,-1,0,1,1,0,0), {\tt U}=(1,0,1,0,1,0,1,0),\break  {\tt V}=(0,0,0,1,1,-1,-1,0), {\tt W}=(1,-1,0,0,0,-1,1,0), {\tt X}=(1,0,1,1,0,1,0,0), {\tt Y}=(0,1,-1,1,0,0,1,0),\break  {\tt Z}=(1,1,0,-1,-1,0,0,0), {\tt a}=(1,-1,-1,0,0,0,0,1), {\tt b}=(1,1,0,0,0,-1,-1,0)

\medskip
{\bf 52-13} {\tt 9ABCDEFG,GFIKHLMJ,JLMH41aY,YaQONUXZ,ZXRfgVih,hiopklnj,jkln52md,\break dTb3Pce9,12345678,KD67AINO,PQRSTUVW,EBogqpfe,CmS8Wcqb.} {\tt H}=(0,1,-1,1,0,0,0,1),\break  {\tt K}=(1,-1,-1,0,0,1,0,0), {\tt e}=(0,0,0,0,0,0,0,1), {\tt Q}=(1,0,0,1,0,1,0,1), {\tt R}=(0,1,0,0,1,1,0,-1),\break  {\tt E}=(0,0,0,0,0,1,0,0), {\tt q}=(0,0,0,0,0,0,1,0), {\tt 5}=(0,1,1,0,-1,0,1,0), {\tt 7}=(1,0,1,0,1,0,0,-1),\break  {\tt U}=(1,0,1,-1,0,0,1,0), {\tt i}=(0,0,0,0,0,1,0,1), {\tt o}=(0,0,0,0,1,0,0,0), {\tt A}=(1,1,0,0,0,0,1,1),\break  {\tt h}=(0,0,1,0,0,0,1,0), {\tt f}=(0,0,0,1,0,0,0,0), {\tt V}=(0,1,1,0,-1,0,-1,0), {\tt 3}=(1,0,-1,-1,0,0,1,0),\break  {\tt 4}=(0,-1,0,1,0,1,1,0), {\tt T}=(0,1,0,1,0,-1,1,0), {\tt S}=(0,1,-1,-1,0,0,0,1), {\tt P}=(1,0,0,0,1,-1,-1,0),\break  {\tt b}=(0,0,-1,1,1,1,0,0), {\tt k}=(0,0,1,0,0,0,-1,0), {\tt Y}=(1,0,-1,0,1,0,0,-1), {\tt m}=(0,1,0,0,1,-1,0,-1),\break  {\tt 1}=(1,0,0,0,-1,1,-1,0), {\tt B}=(0,0,1,0,0,0,0,0), {\tt c}=(1,1,1,0,0,1,0,0), {\tt M}=(1,0,0,0,-1,-1,1,0),\break  {\tt F}=(1,1,0,0,0,0,-1,-1), {\tt J}=(1,0,1,0,1,0,0,1), {\tt L}=(0,1,0,-1,0,1,1,0), {\tt 2}=(0,1,0,0,1,1,0,1),\break  {\tt p}=(0,1,0,0,0,0,0,0), {\tt X}=(0,1,0,0,1,-1,0,1), {\tt O}=(-1,0,0,0,1,1,1,0), {\tt n}=(0,0,0,0,0,1,0,-1),\break  {\tt W}=(-1,0,1,0,1,0,0,1), {\tt 8}=(1,0,0,1,0,-1,0,1), {\tt N}=(0,1,0,-1,0,1,-1,0), {\tt G}=(0,0,0,1,1,0,1,-1),\break  {\tt 9}=(1,-1,0,1,-1,0,0,0), {\tt 6}=(0,1,-1,1,0,0,0,-1), {\tt l}=(1,0,0,1,0,0,0,0), {\tt a}=(0,1,1,1,0,0,0,-1),\break  {\tt Z}=(0,1,-1,0,-1,0,1,0), {\tt D}=(0,0,0,1,1,0,-1,1), {\tt d}=(0,-1,1,0,1,0,1,0), {\tt j}=(1,0,0,-1,0,0,0,0),\break  {\tt I}=(0,0,1,1,-1,1,0,0), {\tt g}=(1,0,0,0,0,0,0,0), {\tt C}=(1,-1,0,-1,1,0,0,0)

\bigskip\bigskip

\bibliographystyle{quantum}



\end{document}